\newcommand{\jjm}[1]{{\color{purple}#1}}
\def\s{\mathbf{s}}
\def\br{\boldsymbol{r}}
\def\n{\mathbf{n}}
\def\br{\mathbf{r}}
\def\bm{\boldsymbol{m}}
\def\bq{\boldsymbol{q}}
\def\R{\mathbb{R}}
\def\H{\mathbf{H}}
\def\y{\mathbf{y}}
\def\x{\mathbf{x}}
\def\s{\mathbf{s}}
\def\h{\mathbf{h}}
\def\bH{\mathbf{H}}
\def\bw{\boldsymbol{w}}
\def\bz{\mathbf{z}}
\def\bb{\mathbf{b}}
\def\w{\mathbf{w}}
\def\z{\mathbf{z}}
\newtheorem{theorem}{Theorem}
\newtheorem{lemma}{Lemma}
\newtheorem{remark}{Remark}
\newtheorem{definition}{Definition}
\newtheorem{proposition}{Proposition}
\newtheorem{corollary}{Corollary}
\newtheorem{assumption}{Assumption}
\newcommand{\appendixlemma}{%
  \renewcommand{\thelemma}{\Alph{section}.\arabic{lemma}}%
  \setcounter{lemma}{0}%
  \@addtoreset{lemma}{section}  
}
\newcommand{\appendixpro}{%
  \renewcommand{\theproposition}{\Alph{section}.\arabic{proposition}}  
  \setcounter{proposition}{0}  
    \@addtoreset{proposition}{section} 
}
\newcommand{\appendixdef}{%
  \renewcommand{\thedefinition}{\Alph{section}.\arabic{definition}}  
  \setcounter{definition}{0}  
    \@addtoreset{definition}{section} 
}
\newtheoremstyle{noparens}%
  {}{}%
  {\itshape}{}%
  {\bfseries}{.}%
  { }%
  {\thmname{#1}\thmnumber{ #2}\mdseries\thmnote{ #3}}
\theoremstyle{noparens}
\title{Asymptotic  Analysis of Nonlinear One-Bit Precoding in Massive MIMO Systems via Approximate Message Passing}
	\author{\IEEEauthorblockN{Zheyu Wu, Junjie Ma, Ya-Feng Liu, and Bruno Clerckx,}

	\thanks{

	Z. Wu and and B. Clerckx are with the Department of Electrical and Electronic Engineering, Imperial College London, London, SW7 2AZ, U.K. (email: \{zheyu.wu, b.clerckx\}@imperial.ac.uk).  Junjie Ma is with the State Key Laboratory of Mathematical Sciences, Institute of Computational Mathematics and Scientific/Engineering Computing, Academy of Mathematics and Systems Science, Chinese Academy of Sciences, Beijing 100190, China (e-mail: majunjie@lsec.cc.ac.cn). 
Y.-F. Liu is with the Ministry of Education Key Laboratory of Mathematics and Information Networks, School of Mathematical Sciences, Beijing University of Posts and Telecommunications, Beijing 102206, China (email: yafengliu@bupt.edu.cn).
	
}
  }
\begin{document}
\maketitle
\begin{abstract}

Massive multiple-input multiple-output (MIMO) systems employing one-bit digital-to-analog converters offer a hardware-efficient solution for wireless communications. However, the one-bit constraint poses significant challenges for precoding design, as it transforms the problem into a discrete and nonconvex optimization task. In this paper, we investigate a widely adopted ``convex-relaxation-then-quantization" approach for nonlinear symbol-level one-bit precoding. Specifically, we first solve a convex relaxation of the discrete minimum mean square error precoding problem, and then quantize the solution to satisfy the one-bit constraint. Focusing on a real-valued system with an independently and identically distributed (i.i.d.)  Gaussian channel, we develop a novel analytical framework based on approximate message passing (AMP) to characterize the high-dimensional asymptotic performance of the  considered scheme. The key technical ingredient is an auxiliary AMP iteration that dedicatedly incorporates the nonlinear quantization function into the state evolution analysis. With the proposed framework, we derive a closed-form expression for the symbol error probability (SEP) at the receiver side in the large-system limit,  which provides a quantitative characterization of how model and system parameters affect the SEP performance. Our empirical results suggest that the $\ell_\infty^2$ regularizer, when paired with an optimally chosen regularization parameter, achieves optimal SEP performance within a broad class of convex regularization functions. As a first step towards a theoretical justification, we prove the optimality of the $\ell_\infty^2$ regularizer within the mixed $\ell_\infty^2$-$\ell_2^2$ regularization functions.
 
\end{abstract}
\begin{IEEEkeywords}
Approximate message passing,  asymptotic analysis, massive multiple-input multiple-output, nonlinear one-bit precoding.
\end{IEEEkeywords}


\section{Introduction}
Massive multiple-input multiple-output (MIMO) is a key enabling technology to meet the growing demands for spectral and energy efficiency in both 5G and future 6G wireless  communication systems \cite{massivemimo1}.  While massive MIMO offers significant performance gains, its practical implementation faces critical challenges due to the prohibitive hardware costs and energy consumption. In particular, digital-to-analog converters (DACs) and power amplifiers (PAs), which are the most power-hungry and costly components at the base station (BS), scale linearly with the number of antennas \cite{book:DAC,PAbook}. A promising and widely adopted solution is to replace high-resolution DACs with one-bit DACs, which not only minimizes cost and energy consumption  associated with DACs,  but also enables PAs to operate at peak efficiency thanks to the constant-envelope nature of one-bit signals. These advantages have spurred considerable research interest in one-bit precoding techniques \cite{MFrate,ZF,MMSE2,WSR,duality_onebit,SEPlinear,SQUID,CIfirst,CImodel,sep2,GEMM,PBB,onebit_CI,liu2024survey}. 

One-bit precoding involves designing transmit signals at the BS with each element constrained to one-bit resolution. Existing one-bit precoding schemes can be broadly categorized into two classes: linear-quantized precoding and nonlinear (symbol-level) precoding. Linear-quantized precoding applies a linear transformation to the data symbols, followed by quantization to meet the one-bit constraint~\cite{MFrate,ZF,MMSE2,WSR,duality_onebit,SEPlinear}. While this approach is computationally efficient, its performance is often limited by the coarse quantization. In contrast, nonlinear precoding designs the one-bit transmit signals on a symbol-by-symbol basis by solving dedicated optimization problems~\cite{SQUID,CIfirst,CImodel,sep2,GEMM,PBB,onebit_CI,liu2024survey}. Although nonlinear precoding generally offers better performance, it incurs higher computational complexity due to the need for symbol-rate processing and the difficulty of solving discrete optimization problems.

Despite extensive algorithmic developments in nonlinear one-bit precoding, analytical performance characterizations remain relatively scarce. This paper aims to bridge this gap by providing a rigorous asymptotic performance analysis of a class of nonlinear one-bit precoding schemes.

\subsection{Related Works}

The performance analysis of conventional linear precoding schemes, such as matched filter, zero-forcing, and regularized zero-forcing, in unquantized systems has been extensively studied~\cite{unquantized1,unquantized3,unquantized2,RMTbook}. For their quantized counterparts, the nonlinearity introduced by one-bit quantization complicates the analysis. The Bussgang decomposition has been a useful tool in this context, allowing the expression of a nonlinear transformation of a Gaussian input as a scaled version of the input plus an uncorrelated distortion term~\cite{Bussgang}. This approach has facilitated progress in analyzing and designing linear-quantized precoding schemes, including achievable-rate analysis under both perfect and estimated channel state information (CSI)~\cite{MFrate}, symbol error probability (SEP) analysis~\cite{ZF}, and precoder design based on minimizing the mean square error (MSE)~\cite{MMSE2}, maximizing the weighted sum rate~\cite{WSR}, and maximizing the minimum rate \cite{duality_onebit}. However, the Bussgang decomposition relies on heuristic assumptions, such as the Gaussianity of the input signal and the independence of the distortion term from the signal, which may not hold in all scenarios. Recent work has developed analytical frameworks based on random matrix theory to provide rigorous justifications for these analyses~\cite{SEPlinear}.

These analytical approaches for linear-quantized precoding leverage the ``linear-then-quantize" structure and the explicit expression of the precoding matrix. In contrast, nonlinear precoding schemes are typically obtained by solving dedicated optimization problems without closed-form solutions, rendering these approaches inapplicable. Some works have analyzed the performance of unquantized nonlinear precoding schemes, such as box-constrained precoding \cite{PAPR}, using tools like the convex Gaussian min-max theorem (CGMT) \cite{CGMT}. However, the one-bit constraint in nonlinear one-bit precoding introduces nonconvexity, which substantially complicates the analysis. 

Convex-relaxation-then-quantization (CRQ) has emerged as a widely adopted approximation approach for handling the discrete one-bit constraint, where the transmit signal is obtained by solving a convex  relaxation of the discrete model, followed by one-bit quantization \cite{SQUID,CIfirst,CImodel}. Although this approach has demonstrated strong empirical performance, its theoretical justification remains limited.  The main difficulty lies in that the quantization step alters the statistical properties of the continuous-valued relaxation solution. It is worth mentioning that the idea of handling binary-valued constraints via convex optimization has a long  history. Early works have shown that binary-valued solutions to linear systems can be recovered by solving convex optimization problems under Gaussian measurements \cite{mangasarian2011probability,bey2015sparsity}. However, these results focus on signal recovery and assume the existence of an exact binary solution, and are fundamentally different from the nonlinear one-bit precoding setting considered in this paper.

\subsection{Our Contributions}

This paper focuses on the asymptotic performance analysis of nonlinear one-bit precoding. We investigate a specific CRQ precoding scheme tailored to the one-bit minimum mean square error (MMSE) formulation under a real-valued system model with independently and identically distributed (i.i.d.) Gaussian channel coefficients. The considered scheme 
 encompasses SQUID~\cite{SQUID}, a widely adopted nonlinear one-bit precoding method, as a special case. Building on our analytical results, we further derive the optimal regularization parameters for the considered model. The key contributions of this paper are summarized as follows:

\begin{itemize}
\item \textit{Approximate Message Passing (AMP)-Based Analytical Framework}: We develop a novel analytical framework based on the AMP framework \cite{donoho2009message,AMP,feng2022unifying,LASSO,elasticnet} and its state evolution theory to rigorously characterize the asymptotic empirical distribution of $\mathbf{H} q(\hat{\mathbf{x}})$, where $\mathbf{H}$ is the channel matrix, $\hat{\mathbf{x}}$ is the solution to the relaxation model, and $q(\cdot)$ represents the one-bit quantization function. The key technical ingredient is an auxiliary AMP iteration that properly incorporates the nonlinear mapping $q(\cdot)$, which enables a state-evolution-based analysis of the desired distribution. The proposed framework applies to a broad class of nonlinear functions $q(\cdot)$, allowing the analysis of various transmitter-side nonlinear effects, such as multi-bit quantization and other hardware impairments. From a broader analytical perspective, the proposed technique opens a new  avenue for analyzing nonlinear post-processing of convex optimization solutions and may be extended to non-Gaussian measurement matrices.
\item \textit{Asymptotic SEP Analysis}: Utilizing the proposed AMP-based analytical framework, we derive the asymptotic SEP of the considered precoding scheme. Our analysis quantifies the impact of system parameters, including additive noise power and user-to-antenna ratio, as well as regularization parameters in the considered model, on SEP performance. As a notable special case, our analysis provides a theoretical performance characterization of the popular SQUID precoder, offering insights into its empirical effectiveness.

\item \textit{Optimal Regularization Parameters}: Our empirical results suggest that the $\ell_\infty^2$ regularizer, when paired with an optimally chosen regularization parameter, achieves optimal SEP performance within a broad class of convex regularization functions. To theoretically substantiate this observation, we {\color{black} derive the optimal regularization parameters for our considered relaxation model, which corresponds to the composite regularizer ${\lambda}\|\mathbf{x}\|_\infty^2 + \rho\|\mathbf{x}\|^2$.} We prove that the optimal choice of $\rho$ is zero, thereby establishing the optimality of the $\ell_\infty^2$ regularizer within this mixed regularization class.
\end{itemize}

During the preparation of this manuscript, we became aware of a concurrent work~\cite{ma2025} that extends the analysis of box-constrained precoding \cite{PAPR} to the one-bit case. While both studies focus on the performance analysis of nonlinear one-bit precoding, our work differs in both the optimization model and the analytical framework. Moreover, our study advances further by analytically deriving the optimal regularization parameters in the considered model. For a detailed comparison of these two works, please refer to Section~\ref{subsec:compare}.

 \subsection{Organization and Notations}
\textit{Organization}: The remaining parts of this paper are organized as follows. In Section \ref{sec:2}, we introduce the system model and the considered precoding scheme. In Section \ref{sec:3}, we give an asymptotic performance analysis of the considered scheme and give a brief introduction of our analytical framework. In Section \ref{subsubsec:parameter}, we derive the optimal regularization parameters in the considered model based on our asymptotic result.  A rigorous proof of the asymptotic result is provided in Section \ref{sec:proof}, and the paper is concluded in Section \ref{sec:conclusion}. 

\textit{Notations:}  Throughout this paper, we use boldface lower-case letters (e.g., $\x$), boldface upper-case letters (e.g., $\mathbf{X}$), and upper-case calligraphic letters (e.g., $\mathcal{X}$) to represent vectors, matrices, and sets, respectively. For a vector $\x\in\R^n$,  $\|\x\|_1$, $\|\x\|$, and $\|\x\|_{\infty}$ represent its $\ell_1$ norm, $\ell_2$ norm, and $\ell_\infty$ norm, respectively. The notation  $\left<\x\right>=\frac{1}{n}\sum_{i=1}^nx_i$  represents the pointwise average of $\x$. 
For vectors $\x,\y\in\R^n$,  we write $\left<\x,\y\right>=\frac{1}{n}\sum_{i=1}^nx_iy_i$ as their normalized inner product. For two functions $f, g: \R\to\R$, $f\circ g$ refers to the composition of $f$ and $g$, i.e., $f\circ g(x)=f(g(x))$; $f*g$ denotes the convolution of $f$ and $g$, i.e., $(f*g)(x)=\int_{-\infty}^{+\infty}f(y)g(x-y)dy$. The operators $\mathbb{E}[\cdot]$ and $\mathbb{P}(\cdot)$ return the expectation and the probability of their corresponding arguments, respectively. For a nonempty closed convex set $\mathcal{X}$, the operator $\mathcal{P}_{\mathcal{X}}(\cdot)$ denotes the projection onto  $\mathcal{X}$, i.e., 
$\mathcal{P}_{\mathcal{X}}(\x)=\arg\min_{\y\in\mathcal{X}}~\|\y-\x\|^2$.
The notation $\mathcal{N}({0},\sigma^2)$ represents the Gaussian distribution with zero mean and variance $\sigma^2$.   We use the notations $\phi(x)$, $\Phi(x)$, and $Q(x)$ to denote the probability  density function (PDF), cumulative distribution function (CDF), and tail distribution function of the standard Gaussian distribution, respectively, i.e.,
\begin{equation}
\phi(x)=\frac{1}{\sqrt{2\pi}}e^{-\frac{1}{2}x^2},  ~\Phi(x)=\int_{-\infty}^{x}\phi(t)dt,~Q(x)=1-\Phi(x).
\end{equation}
 A function $\psi:\R^n\rightarrow \R$ is said to be pseudo-Lipschitz (of order
2) if there exists a constant $L_\psi > 0$ such that for any $\x,\y\in\R^n$, 
\begin{equation}
|\psi(\x)-\psi(\y)|\leq L_{\psi}(1+\|\x\|+\|\y\|)\|\x-\y\|.
\end{equation}
We denote almost sure convergence by $\xrightarrow{a.s.}$ and almost sure equality by $\overset{a.s.}{=}$.

\section{Problem Formulation}\label{sec:2}

\subsection{System Model}
Consider a flat-fading downlink multiuser multiple-input single-output (MISO) system, where a BS equipped with $N$ antennas transmits datas to $K$ single-antenna users simultaneously.   The BS is equipped with one-bit DACs to reduce the hardware cost, while the ADCs at the user sides are assumed to have infinite resolution \cite{SQUID}.  In this paper, we focus on a real-valued baseband system model commonly adopted for theoretical analysis \cite{PAPR,ma2025}, as follows:
\begin{equation}\label{y=Hx+n}
\y=\H\x_T+\n.
\end{equation}
In \eqref{y=Hx+n}, $\y\in\R^K$ is the received signal vector at all $K$ users,  $\H\in\R^{K\times N}$ is the channel matrix between the BS and the users, $\n\sim\mathcal{N}(0,\sigma^2\mathbf{I}_K)$ is the additive white Gaussian noise (AWGN), and $\x_T\in\R^N$ is the transmit signals at the BS. We assume an i.i.d. Gaussian channel, i.e., the entries of $\mathbf{H}$ are i.i.d. Gaussian random variables. A summary of all modeling assumptions is provided in Assumption \ref{ass} stated later. Due to  one-bit DACs, the transmit signal vector must satisfy the following one-bit constraint:
\begin{equation}
\x_{T}\in\{-1,1\}^N,
\end{equation}
where for simplicity we have assumed unit transmit power at each antenna. 

Let $\s\in\mathbb{R}^K$ denote the data symbols intended for the users, and assume that perfect channel state information (CSI) is available at the BS. The transmit signal $\x_T=f(\mathbf{H},\s,\sigma^2)$  is allowed to depend not only on the channel matrix $\mathbf{H}$ and the noise variance $\sigma^2$, but also on  the instantaneous data symbols $\s$, hence the term ``symbol-level precoding''. Ideally, the precoding strategy, characterized by the mapping function $f$, is designed such that the received signal $\mathbf{y}=\mathbf{H}\mathbf{x}_T+\mathbf{n}$ (possibly after suitable scaling) closely approximates the intended data symbols $\s$ according to an appropriate distortion measure.

\subsection{CRQ Precoding}\label{precoder}

In this paper, we focus on a convex-relaxation-then-quantization (CRQ) precoding scheme, defined as follows. The rationale behind the proposed CRQ precoding is deferred to Section~\ref{precoder2}.

\begin{definition}[CRQ  precoding]\label{def:onebit}
The transmit signal is 
\begin{equation}\label{eq:onebit}
{\x}_T=q(\hat{\x}),
\end{equation}
where $q(\cdot)=\text{sgn}(\cdot)$ is the one-bit quantizer, and $\hat{\x}$ is the solution to the following problem:
\begin{equation}\label{opt:relaxation}
\begin{aligned}
\hat{\x}=\arg\min_{\x}~&\bigg\{\frac{1}{N}\|\s-\H\x\|^2+\frac{\rho}{N}\|\x\|^2+\lambda \|\x\|_\infty^2\bigg\}. 
\end{aligned}
\end{equation}
Here, $\rho>0$ and $\lambda>0$ are regularization parameters.
\end{definition}

\vspace{3pt}

\begin{remark}
Note that in \eqref{opt:relaxation}, the first two terms are scaled by $\frac{1}{N}$, while the last term does not. This specific scaling ensures that, for fixed values of $\rho > 0$ and $\lambda > 0$, the optimal value of the objective function converges to a well-defined limit as $N, K \to \infty$, under Assumption \ref{ass} stated later.
\end{remark}


\begin{remark}
The popular SQUID precoder \cite{SQUID} is a special case of the above scheme with $\rho=0$ and $\lambda=\frac{\sigma^2K}{N}$, specifically: $\x_T^{\text{\normalfont SQUID}}=q(\x^{\text{\normalfont SQUID}})$, where 
\begin{equation}
{\x}^{\text{\normalfont SQUID}}=\arg\min_{\x}~\left\{\frac{1}{N}\|\s-\H\x\|^2+\frac{\sigma^2K}{N} \|\x\|_\infty^2\right\}.
\end{equation}
\end{remark}

\vspace{5pt}

The performance of the CRQ precoder depends on the joint distribution of the received signal $\y=\H\x_T+\n$ and the target signal $\s$. We take the symbol error probability (SEP) as the performance metric, given by
\begin{equation}\label{def:SEP}
\text{SEP}=\frac{1}{K}\sum_{k=1}^K\mathbb{P}\left(\hat{s}_k\neq s_k\right),
\end{equation}
where $\hat{s}_k$ is a symbol detector for user $k$ based on  $y_k$, and ${\x}_T$ is defined in Definition \ref{def:onebit}.

\subsection{Discussions}\label{precoder2}
The CRQ precoding in Definition \ref{def:onebit} is inspired by the SQUID precoder originally proposed in~\cite{SQUID}. For clarity, we briefly summarize the key principles underlying the SQUID precoder. Following the standard practice in symbol-level nonlinear precoding literature, we start by considering a simple linear data estimator $\tilde{\s} = \xi \y$, where $\xi$ represents a suitably chosen scaling factor. The associated MSE of this estimator is given by:
\begin{equation}\label{eq:MSE}
\frac{1}{N}\mathbb{E}\left[\|\tilde{\s}-\s\|^2\right]=\frac{1}{N}\|\s-\xi\H\x_T\|^2+\frac{K\sigma^2}{N}\xi^2,
\end{equation}
with the expectation taken over the channel noise $\n \sim \mathcal{N}(\mathbf{0},\sigma^2\mathbf{I}_K)$. A natural criterion for designing the transmitted signal $\x_T$ is to minimize this MSE subject to the one-bit constraint, leading to the following optimization problem:
\begin{equation}\label{Eqn:formulation2}
\begin{aligned}
\min_{\x_T,\xi}~&\frac{1}{N}\|\s-\xi\H\x_T\|^2+\frac{K\sigma^2}{N}\xi^2\\
\text{s.t. }~&\x_T\in\{-1,1\}^N.
\end{aligned}
\end{equation}
To simplify this problem further, we introduce an auxiliary variable $\x = \xi\x_T$, thus eliminating the explicit scaling factor $\xi$:
\begin{equation}\label{opt:discrete}
\begin{aligned}
\min_{\x}~&\frac{1}{N}\|\s-\H\x\|^2+\frac{\sigma^2K}{N}\|\x\|^2_\infty\\
\text{s.t. }~&x_1^2 = x_2^2 = \dots = x_N^2.
\end{aligned}
\end{equation}
Here we utilize the relationship $\|\x\|^2_\infty = \xi^2$. However, globally solving~\eqref{opt:discrete} remains computationally prohibitive due to its nonconvexity and high dimensionality. A practical relaxation introduced in \cite{SQUID} is simply to drop the equality constraint, leading to the relaxed problem:
\begin{equation}\label{opt:discrete2}
\min_{\x}~\frac{1}{N}\|\s-\H\x\|^2+\frac{\sigma^2K}{N}\|\x\|^2_\infty,
\end{equation}
followed by quantizing the solution to satisfy the original one-bit constraint. This approach defines the SQUID precoder. Compared to \eqref{opt:discrete2}, the method introduced in \eqref{opt:relaxation} considers a general $\ell_\infty^2$ regularization parameter and incorporates an additional $\ell_2^2$ regularization term, providing additional flexibility for performance optimization.


\section{Asymptotic Performance Analysis}\label{sec:3}

In this section, we present an asymptotic performance analysis of the considered one-bit precoding scheme.  {We begin by stating the technical assumptions in Section \ref{Sec:assumptions}. In Section \ref{subsec:framework}, we provide an overview of the analytical framework. Then, Section \ref{subsec:mainresult}  introduces our main results. 
A heuristic derivation of the main results is given in  Section \ref{subsec:derivation}.} Finally, we give  a detailed comparison of our work with the state-of-the-art work \cite{ma2025} in Section \ref{subsec:compare}.

\subsection{Assumptions}\label{Sec:assumptions}

Our asymptotic analysis is based on the following assumptions.
\begin{assumption}\label{ass}
The following assumptions hold throughout the paper:
\begin{itemize}
\item[(i)] The number of transmit antennas $N$ and the number of users $K$ tend to infinity at the same rate, i.e., 
\begin{equation}
N, K\to \infty \text{ with } \lim_{N,K\to\infty}\frac{K}{N}=\delta\in(0,\infty).
\end{equation}

\item[(ii)] The entries of the  channel matrix $\bH$ are independently drawn from $\mathcal{N}(0,\frac{1}{K})$. 

\item[(iii)] The entries of the data symbol vector $\s$ are independently drawn from {a finite set $\mathcal{S}$, i.e., $s_i\sim{\normalfont \text{Unif}}(\mathcal{S})$.}

\item[(iv)] The quantization function $q:\R\rightarrow \R$ is differentiable except at a finite number of points $\{x_1,x_2,\dots, x_M\}$, where  $x_1,x_2,\dots, x_M\notin\{-a^*,a^*\}$ and $a^*$ is defined in Lemma \ref{lemma:astar} further ahead. In addition, there exists a constant $C>0$ such that $|q'(x)|\leq C$ for all $x\in[-a^*,a^*]\backslash\{x_1,x_2,\dots,x_M\}$.
\end{itemize}
\end{assumption}

Assumptions (i) and (ii) are standard in the literature of high-dimensional asymptotics. {The quantity $\delta>0$ characterizes the system load. Assumption (iii) is general and encompasses arbitrary signal constellations, e.g., pulse amplitude modulation (PAM).} 
In Assumption (iv), we consider a general quantization function $q(\cdot)$ beyond the one-bit quantizer, to highlight the generality of our technique for asymptotic analysis. 

\subsection{An Overview of the Analytical Framework}\label{subsec:framework}
 Our goal is to characterize the asymptotic performance of the one-bit precoding scheme in Definition \ref{def:onebit}, which ultimately depends on the joint distribution of $(\mathbf{H} q(\hat{\x}), \mathbf{s})$, where $\hat{\x}$ is the solution to the convex optimization problem in \eqref{opt:relaxation}. In this subsection, we give an overview of the proposed analytical framework. 

Our analysis is based on the AMP theory. AMP defines an iterative algorithm involving large random matrices in the following form: 
\begin{equation}\label{AMP_form}
\begin{aligned}
\x_{t+1}&=\eta_t(\x_{t}+\H^T\bz_t),\\
\bz_t&=\s-\H\x_t+\frac{1}{\delta}\left<\eta_{t-1}'(\x_{t-1}+\H^T\bz_{t-1})\right>\bz_{t-1},
\end{aligned}
\end{equation}
where $\eta_t:\R\rightarrow \R$ is known as  the denoising function, applied element-wise to its input vector, and $\bH$ satisfies Assumption \ref{ass} (ii). 
A key property of AMP is that, in the high-dimensional limit, its dynamic behavior (e.g., the empirical distribution of $\{\x_t\}_{t\geq 0}$ and $\{\z_t\}_{t\geq 0}$) can be precisely characterized by a one-dimensional iteration known as the state evolution (SE). Leveraging this property, AMP is widely used in the literature to characterize the asymptotic statistical properties of the solution to convex optimization problems, by tailoring the denoising function $\eta_t$ in \eqref{AMP_form} such that the sequence $\{\x_t\}_{t\geq 0}$ converges to the solution of the problem under consideration. Please see Appendix \ref{appendix:AMP} and \cite{donoho2009message,AMP,feng2022unifying,LASSO,elasticnet} for more detailed discussions on AMP. 

Compared to existing analyzes based on AMP,  our considered problem has two unique challenges. First, the regularizer $\|\x\|_\infty^2$ in \eqref{opt:relaxation} is not separable,  whereas the AMP algorithm in \eqref{AMP_form} uses a separable denoiser $\eta_t(\cdot)$ that acts independently on each entry of its input vector. This makes it difficult to establish a direct connection between the optimal solution to \eqref{opt:relaxation} and the limiting point of an AMP algorithm. Second, the solution $\hat{\x}$ to \eqref{opt:relaxation} is post-processed by a nonlinear function $q(\cdot)$,  and our goal is to analyze the distribution of $\H q(\hat{\x})$.  This requires information about the correlation between $\bH$ and $\hat{\x}$, and thus is  more complicated than typical AMP-based analyzes that focus on only the distribution of $\hat{\x}$. 

To address the above challenges, our analytical framework contains two main steps. First, we construct an auxiliary problem with separable regularizers and design an AMP algorithm tailored to it. We show that the auxiliary problem exhibits the same asymptotic statistical behavior as \eqref{opt:relaxation}. Second, we introduce an additional AMP iteration that incorporates the nonlinear function $q(\cdot)$, which enables us to characterize the distribution of $(\bH q(\hat{\x}),\s)$ via the SE theory. 
 
The proposed analytical framework is detailed below. 

\emph{Step 1: Formulate an auxiliary problem with separable regularizers.} 
First, we rewrite \eqref{opt:relaxation} as the following equivalent optimization problem over $(\x,a)$:
\begin{equation}\label{opt:relaxation2}
\begin{aligned}
\min_{a\geq 0}\min_{\x\in[-a,a]^N}~&\frac{1}{N}\|\s-\H\x\|^2+\frac{\rho}{N}\|\x\|^2+\lambda a^2.\\
\end{aligned}
\end{equation}
Problem \eqref{opt:relaxation2} is equivalent to \eqref{opt:relaxation} because, at optimality,  the constraint $\|\x\|_{\infty} \le a$ must be active i.e., $\|\x\|_{\infty}=a$; otherwise, reducing $a$ would strictly decrease the objective value. Consequently, $\x^*$ solves \eqref{opt:relaxation} if and only if  $(\x^*, \|\x^*\|_{\infty})$ solves \eqref{opt:relaxation2}, and both problems share the same optimal value.

Compared to the original formulation in \eqref{opt:relaxation}, problem \eqref{opt:relaxation2} is more technically tractable as, for a fixed $a$, the inner problem involves separable box constraints and can be analyzed using AMP.
Let 
\begin{equation}\label{def:fN}
f_N(a;\s,\H):=\min_{\x\in[-a,a]^N}~\left\{\frac{1}{N}\|\s-\H\x\|^2+\frac{\rho}{N}\|\x\|^2\right\}
\end{equation}
denote the inner problem over variable $\x$, and let 
\begin{equation}\label{def:xa}
{\x}_a:=\arg\min_{\x\in[-a,a]^N}~\left\{\frac{1}{N}\|\s-\H\x\|^2+\frac{\rho}{N}\|\x\|^2\right\}
\end{equation} be the solution to the inner problem for a given $a\geq 0$.  With the above notations, problem \eqref{opt:relaxation2} can be written as 
\begin{equation}\label{relaxation:a}
\min_{a\geq 0}f_N(a;\s;\H)+\lambda a^2, 
\end{equation}
and the optimal solution to \eqref{opt:relaxation} can be expressed as $\hat{\x}={\x}_{\hat{a}_N}$, where 
\begin{equation}\label{def:hata}
\hat{a}_N=\arg\min_{a\geq 0} \left\{f_N(a;\s,\H)+\lambda a ^2\right\}
\end{equation}
 is the solution to \eqref{relaxation:a}. 
 
For any given $a\geq 0$, the box-constrained problem in \eqref{def:xa} can be analyzed via AMP; see Appendix \ref{appD:heuristic} for a heuristic derivation of the AMP algorithm and \eqref{eq:etat} for the expression of the denoising function $\eta_t(x)$.   Let $\{(\x_{t+1},\z_t)\}_{t\geq 0}$ be the sequence generated by the corresponding AMP algorithm. 
Using basic properties of AMP,  we can prove that 
\begin{equation}\label{xttoxa}
\x_{t+1}\to\x_a~~\text{as }t,N\to\infty, (\text{in some sense; see Proposition \ref{prob:converge})}
\end{equation}
i.e., the  iterations generated by the AMP algorithm converge to the solution of the box-constrained problem in \eqref{def:xa}. In addition,
 \begin{equation}\label{eq:convergefN}
f_N(a;s,\H)+\lambda a^2\xrightarrow{a.s.} f(a),~\text{as }N\to\infty,~~~\text{(see Lemma \ref{convergefN})}
\end{equation} i.e., the objective function of problem \eqref{relaxation:a} converges to a deterministic function $f(a)$ in the high-dimensional limit; see Lemma   \ref{lemma:astar} for the expression of $f(a)$.  Based on this, we can further  establish the convergence of the solution to \eqref{relaxation:a}: 
 \begin{equation}\label{eq:aNtoastar}
\hat{a}_N\xrightarrow{a.s.}a^*:= \arg\min_{a\geq 0} f(a), ~~\text{as }N\to\infty.~~\text{(see Lemma \ref{converge:a})}
\end{equation}
The above result implies that $\hat{\x}=\hat{\x}_{a_N}$ exhibits the same statistical properties as  $\x_{a^*}$ in a limiting sense, which reduces our task of analyzing $\bH q(\hat{\x})$ to analyzing $\bH q(\x_{a^*})$; see Proposition \ref{the1} for a rigorous statement. 
 
\emph{Step 2: Introduce an additional AMP iteration that incorporates $q(\cdot)$ to analyze the distribution of $\bH q(\x_{a^*})$.} Based on Step 1, we now focus on the box-constrained problem in \eqref{def:xa}.   Our goal is to analyze the distribution of $\bH q(\x_{a^*})$. For notational simplicity, we omit the superscript ``$*$'' on $a$ in the subsequent discussions.

  We introduce the following post-processing step for the sequence $\{(\x_{t+1},\bz_t)\}_{t\geq 0}$ generated by the AMP algorithm corresponding to \eqref{def:xa}, where $\eta_t(x)$ denotes its denoising function:
 \begin{subequations}\label{xqzq}
 \begin{align}
 \tilde{\x}_{t+1}&=q(\x_{t+1}),\label{xq}\\
\tilde{\z}_{t+1}&=\s-\bH\tilde{\x}_{t+1}+\frac{1}{\delta}\left<(q\circ\eta_t)'(\x_t+\bH^T\z_t)\right>\z_t.\label{zq}
 \end{align}
 \end{subequations}
 Hereafter, we use the tilde symbol over scalars/vectors to denote constants/random variables related to post-processing steps. 
 It follows from  \eqref{xttoxa} that $\tilde{\x}_{t+1}\to q(\x_{a})$, and thus $\H\tilde{\x}_{t+1}\to\bH q(\x_{a})$. Hence, to analyze the distribution of $\bH q(\x_{a})$, we can study the limiting distribution of $\bH\tilde{\x}_{t+1}$. According to  \eqref{zq},  
 \begin{equation}\label{Hxq}
  \bH\tilde{\x}_{t+1}=\s-\tilde{\z}_{t+1}+\frac{1}{\delta}\left<(q\circ\eta_t)'(\x_t+\bH^T\z_t)\right>\z_t.
  \end{equation}
The motivation for defining the post-processing step as in \eqref{xqzq} is that  the pair $(\tilde{\x}_{t+1},\tilde{\z}_{t+1})$ can be viewed as the result of performing an additional AMP iteration after obtaining $(\x_t,\z_t)$, with the denoising  function set as the post-processing function $q\circ\eta_t(x).$   In other words, $(\x_{\leq t},\tilde{\x}_{t+1})$ and $(\bz_{\leq t},\tilde{\bz}_{t+1})$ are the resulting sequences   of  the AMP algorithm with \begin{equation}\label{etat}
\tilde{\eta}_i(x)=\left\{
\begin{aligned}
&\eta_i(x),~~~~~~~\text{if }~i< t;\\%
&q\circ\eta_t(x),~~~\text{if }~i=t.
\end{aligned}\right.
\end{equation}
 Therefore, for each finite $t$, we can calculate various quantities related to $\tilde{\x}_{t+1}$ based on the SE theory, and our prediction for $\bH q(\x_{a})$ can be obtained by taking the limit $t\to\infty$; see Section \ref{subsec:derivation} for a heuristic derivation based on this argument and Section \ref{sec:proof} for a rigorous proof.
  
\begin{remark}\label{remark:generaility}
 The above technique provides a general analytical framework for characterizing the large-system behavior of {nonlinearly post-processed} convex optimization solutions. In many applications, one first computes a solution $\x^*$ obtained from a convex optimization problem, whose asymptotic behavior can be characterized via AMP, while the actual  signal is obtained through a nonlinear mapping $q(\x^*)$. For example, in hardware-limited communication systems, $q(\cdot)$ may represent a general quantization or clipping operation that capture hardware constraints beyond one-bit DACs \cite{WFQ,CE,SEPlinear}; in signal detection and  recovery problems, $q(\cdot)$ may serve as a hard decision or thresholding operation to recover discrete-valued signals \cite{mangasarian2011probability,bey2015sparsity,thrampoulidis2018symbol}. In these cases, system performance depends on quantities involving $\bH q(\x^*)$, rather than on $\x^*$ itself.

Such nonlinear post-processing can be incorporated into the SE framework through an auxiliary AMP iteration that embeds $q(\cdot)$, as in \eqref{xqzq}, which enables asymptotic performance analysis for metrics involving $\bH q(\x^*)$. It is also worth mentioning that  this analytical approach is not restricted to i.i.d. Gaussian measurement matrices and may be extended to broader classes of random matrices by combining it with advanced AMP variants \cite{ma2017orthogonal,rangan2019vector,fan2022approximate,liu2024unifying,zhao2022asymptotic,dudeja2023universality,wang2024universality}.

  \end{remark}

\subsection{Main Results}\label{subsec:mainresult}
In this subsection, we present the main results of the paper. 
 Using the proposed analytical framework  in Section \ref{subsec:framework}, we first characterize the asymptotic statistical properties of the convex relaxation solution $\hat{\x}$ in \eqref{opt:relaxation}. The corresponding result is summarized in Theorem \ref{mainresult:ed} in Section \ref{subsubsec:hatx}.  Then, we give the asymptotic empirical distribution of $(\mathbf{H} q(\hat{\x}), \mathbf{s})$, which is formalized in Theorem \ref{mainresult} in Section \ref{subsubsec:SEP} and is the main technical contribution of the paper. Using Theorem \ref{mainresult}, we further derive a closed-form expression for the asymptotic SEP of the proposed scheme under binary phase shift keying (BPSK)  constellation, given as Theorem \ref{SEP} in Section \ref{subsubsec:SEP}.

 We begin with the following lemmas, which are important for presenting our main results. The function and quantities defined therein are associated with the AMP algorithm tailored to the convex optimization problem in \eqref{opt:relaxation}. 
 
\begin{lemma}\label{lemma:unique}
Given $\rho>0$,  $\delta>0$, and $a\geq 0$, define $\eta_a:\mathbb{R}\times\mathbb{R}_{>0}\to\mathbb{R}$ as
\begin{equation}\label{def:etaa}
\eta_a(x;\gamma)=\mathcal{P}_{[-a,\,a]}\left(\frac{x}{\gamma+1}\right).
\end{equation}
Then, there exists a unique solution $(\tau_a^2,\gamma_a)$ to the following equations:
\begin{subequations}\label{condition0}
\begin{align}
\tau^2&=1+\frac{1}{\delta}\mathbb{E}\left[\eta_a^2(\tau Z ;\gamma)\right],\label{conditiona}\\
\rho&=\gamma\left(1-\frac{1}{\delta(\gamma+1)}\mathbb{P}\left(\frac{\tau Z}{\gamma+1}\in[-a,a]\right)\right),\label{condition0b}
\end{align}
\end{subequations} 
where $Z\sim\mathcal{N}(0,1)$.  
\end{lemma}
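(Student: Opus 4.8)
The plan is to prove Lemma~\ref{lemma:unique} in two stages: first establish existence of a solution $(\tau_a^2,\gamma_a)$ to the coupled system~\eqref{condition0}, and then establish uniqueness. Throughout I would treat $a\geq0$, $\rho>0$, $\delta>0$ as fixed, and write the right-hand side of~\eqref{conditiona} as a function $F(\tau^2,\gamma)$ and the right-hand side of~\eqref{condition0b} as $G(\tau^2,\gamma)$; note that $\eta_a(x;\gamma)=\mathcal{P}_{[-a,a]}(x/(\gamma+1))$ is bounded by $a$ in absolute value, monotone and $1$-Lipschitz in $x$, and (for fixed $x$) nonincreasing in $\gamma$. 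These regularity properties of the soft-clip map are what make everything go through.

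First I would reduce the system to a single scalar fixed-point equation. For fixed $\gamma>0$, equation~\eqref{conditiona} reads $\tau^2 = 1 + \tfrac{1}{\delta}\mathbb{E}[\eta_a^2(\tau Z;\gamma)]$; since $\eta_a^2(\tau Z;\gamma)\in[0,a^2]$, the right-hand side maps $[1,\,1+a^2/\delta]$ into itself and is continuous in $\tau$, so by Brouwer (or the intermediate value theorem on $\tau^2 \mapsto 1+\tfrac1\delta\mathbb{E}[\eta_a^2(\tau Z;\gamma)] - \tau^2$) a solution $\tau^2(\gamma)$ exists. For uniqueness in $\tau$ at fixed $\gamma$, I would differentiate: $\tfrac{d}{d(\tau^2)}\mathbb{E}[\eta_a^2(\tau Z;\gamma)]$ can be bounded via $|\eta_a(x;\gamma)|\le |x|/(\gamma+1)$ and $|\partial_x \eta_a|\le 1/(\gamma+1)$, giving a contraction factor strictly less than $1$ when combined with the $1/\delta$ — actually the cleaner route is to show the scalar map $u\mapsto 1+\tfrac1\delta\mathbb{E}[\eta_a^2(\sqrt{u}Z;\gamma)]$ is a contraction, or that its derivative is $<1$, using $\mathbb{E}[\eta_a^2(\tau Z;\gamma)]/\tau^2$ is nonincreasing in $\tau$; either way one gets a unique $\tau^2(\gamma)$, which is moreover continuous and monotone in $\gamma$ (decreasing, since $\eta_a$ shrinks as $\gamma$ grows). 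Substituting into~\eqref{condition0b} yields a single equation $\rho = g(\gamma) := \gamma\bigl(1 - \tfrac{1}{\delta(\gamma+1)}\mathbb{P}(\tau(\gamma) Z/(\gamma+1)\in[-a,a])\bigr)$ in the scalar unknown $\gamma>0$.

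It then remains to show $g:(0,\infty)\to\mathbb{R}$ is a continuous bijection onto $(0,\infty)$ — or at least that $g(\gamma)=\rho$ has a unique solution. Continuity of $g$ follows from continuity of $\tau(\gamma)$ and of $\gamma\mapsto\mathbb{P}(|\tau(\gamma)Z|\le a(\gamma+1))$. For the limits: as $\gamma\to0^+$, the factor $\gamma\to0$ while the bracket stays bounded, so $g(\gamma)\to 0$ (one must check the bracket does not blow up — it is at most $1$ and at least $1-\tfrac{1}{\delta(\gamma+1)}$, which could be negative for small $\delta$, but $\gamma$ times it still $\to0$; a slightly more careful argument shows $g(\gamma)>0$ for $\gamma$ small once we also track that $\mathbb{P}(\cdot)\to \mathbb{P}(|\tau(0)Z|\le a)<1$ when $a<\infty$, handling the subtlety that the bracket might be negative by noting $g$ is eventually positive and using the intermediate value theorem on the relevant subinterval); as $\gamma\to\infty$, the bracket $\to1$ and $g(\gamma)\sim\gamma\to\infty$. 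Hence $g$ takes all values in $(0,\infty)$, giving existence of $\gamma_a$ with $g(\gamma_a)=\rho$. The main obstacle — and the step I expect to require the most care — is strict monotonicity of $g$, needed for uniqueness of $\gamma_a$: one must show $g'(\gamma)>0$, which means controlling the derivative of $\tau(\gamma)$ (via the implicit function theorem applied to~\eqref{conditiona}, using the contraction property to guarantee the relevant partial derivative is nonzero) and then showing the combined effect — $\gamma$ increasing directly, $\tau(\gamma)$ decreasing, and the probability term changing — still yields a net increase. I would organize this by writing $g(\gamma) = \gamma - \tfrac{1}{\delta}\cdot\tfrac{\gamma}{\gamma+1}\mathbb{P}(|\tau(\gamma)Z|\le a(\gamma+1))$, observing $\gamma/(\gamma+1)<1$ is increasing and $\le1$, so the subtracted term is bounded by $\tfrac1\delta$, and then arguing that its derivative is dominated by the $+1$ coming from $d\gamma/d\gamma$; more precisely, since the subtracted term lies in $[0,\tfrac1\delta\cdot\tfrac{\gamma}{\gamma+1}]$ and is Lipschitz with a controllable constant, one can show $g$ is strictly increasing, or — if a clean monotonicity bound proves elusive — fall back on the alternative that the full two-dimensional map defined by the RHS of~\eqref{condition0} is a contraction on the compact set $[1,1+a^2/\delta]\times(0,\infty)$ after a suitable reparametrization, which would deliver existence and uniqueness simultaneously via Banach's fixed-point theorem. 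Either route closes the proof.
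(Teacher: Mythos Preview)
Your existence argument follows essentially the same two-step structure as the paper: fix $\gamma$, show~\eqref{conditiona} has a unique solution $\tau^2(\gamma)$ (the paper does this via convexity of $F_1(\tau^2,\gamma;a)$ in $\tau^2$ together with $F_1(0,\cdot)=-1$ and $F_1(\infty,\cdot)=\infty$, rather than by a contraction bound, but either works), then apply the intermediate value theorem to the resulting scalar equation in $\gamma$. The paper phrases the second step as $F_2(\tau_\gamma^2,\gamma;a)=\rho$ at $\gamma=0$ and $F_2\to-\infty$ as $\gamma\to\infty$, which is exactly your $g(0^+)=0$, $g(\infty)=\infty$. So for existence you are aligned with the paper.

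For uniqueness the approaches diverge, and there is a genuine gap in yours. Your key step is strict monotonicity of $g(\gamma)=\gamma-\tfrac{1}{\delta}\tfrac{\gamma}{\gamma+1}\mathbb{P}(|\tau(\gamma)Z|\le a(\gamma+1))$, but the justification you offer --- that the subtracted term is bounded in $[0,1/\delta]$ and therefore has derivative dominated by $1$ --- is simply false: boundedness of a function says nothing about its derivative. Controlling $g'$ requires controlling $d\tau^2/d\gamma$ through the implicit function theorem and then the chain-rule contribution from the probability term, and you do not carry this out; the fallback Banach argument is likewise not verified. The paper avoids this difficulty altogether by a very different route: it first shows (via AMP state evolution and convergence of the AMP iterates to the unique optimizer $\x_a$ of the strongly convex box-constrained problem, Proposition~\ref{prob:converge}) that \emph{any} solution $\tau_a^2$ must equal $1+\tfrac{1}{\delta}\lim_N\|\x_a\|^2/N$, hence is unique. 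With $\tau_a^2$ now fixed, uniqueness of $\gamma_a$ follows from a direct partial-derivative computation showing $\partial F_2/\partial\gamma(\tau_a^2,\gamma;a)\le -\rho/\gamma_{a,1}<0$ for $\gamma>\gamma_{a,1}$ --- much easier than your composite $g'$ because $\tau^2$ is held constant. In short, your route is more elementary in spirit but incomplete at the crucial step; the paper's route invokes heavier machinery (AMP, the optimization problem~\eqref{def:xa}) precisely to sidestep the implicit-derivative headache you identify as the ``main obstacle.''
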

\begin{proof}
We prove the existence of $(\tau_a^2,\gamma_a)$ in Appendix \ref{app:exist}. The uniqueness of $(\tau_a^2,\gamma_a)$ is a corollary of Proposition \ref{prob:converge} in Section \ref{sec:proof}; see a detailed proof in Appendix \ref{app:unique}.\end{proof}
The AMP algorithm corresponding to problem \eqref{def:xa} is obtained by setting the denoising function as 
\begin{equation}\label{eq:etat}
\eta_t(x)=\eta_a(x;\gamma_a).
\end{equation} 
In particular, \eqref{conditiona} is the SE and \eqref{condition0b} guarantees that the limit point of the AMP algorithm corresponds to the solution of  \eqref{def:xa}. See Appendix \ref{appD:heuristic} for detailed discussions. 

\begin{lemma}\label{lemma:astar}
Given $\rho>0$, $\lambda>0$, and $\delta>0$, 
define the function $f: [0,\infty)\to\R$ as
\begin{equation}\label{phia}
{f}(a)=\delta\rho(\tau_a^2-1)+\frac{\delta\rho^2\tau_a^2}{\gamma_a^2}+\lambda a^2,
\end{equation} 
where $(\tau_a^2,\gamma_a)$ is the solution to \eqref{condition0}.
Then $f(a)$ is strongly convex and continuously differentiable on $[0,\infty)$. Furthermore, the unique minimizer of $f(a)$ over $[0,\infty)$,   defined by
 \begin{equation}\label{minfa}
a^*=\arg\min_{a\geq 0}\,f(a),
\end{equation} satisfies $a^*>0$.
\end{lemma}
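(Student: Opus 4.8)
The plan is to establish the two claims — strong convexity with $C^1$ smoothness of $f$ on $[0,\infty)$, and strict positivity of its minimizer — by first understanding how the implicitly defined quantities $\tau_a^2$ and $\gamma_a$ depend on $a$. The starting point is Lemma~\ref{lemma:unique}, which guarantees that $(\tau_a^2,\gamma_a)$ is the unique solution of the fixed-point system \eqref{condition0}. I would first argue that this solution depends smoothly (at least $C^1$) on the parameter $a$ via the implicit function theorem: the map $(a,\tau^2,\gamma)\mapsto(\text{RHS}-\text{LHS of \eqref{conditiona}}, \text{RHS}-\text{LHS of \eqref{condition0b}})$ is continuously differentiable in all arguments (the Gaussian integrals $\mathbb{E}[\eta_a^2(\tau Z;\gamma)]$ and the probability $\mathbb{P}(\tau Z/(\gamma+1)\in[-a,a])$ are smooth in $(a,\tau,\gamma)$ since $Z$ has a smooth density), and the Jacobian with respect to $(\tau^2,\gamma)$ is nonsingular — this nonsingularity is essentially what uniqueness buys us, and I would extract it from the monotonicity structure already used to prove Lemma~\ref{lemma:unique}. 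This yields $a\mapsto(\tau_a^2,\gamma_a)$ continuously differentiable, hence $f$ is $C^1$ on $(0,\infty)$, with a separate check of one-sided behavior at $a=0$ (where $\eta_0\equiv 0$ gives $\tau_0^2=1$, $\gamma_0=\rho/(1-\tfrac{1}{\delta(\rho+1)})$ explicitly, and $f(0)=0$).

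**Next I would address strong convexity.** The cleanest route is to identify $f(a)$, up to the explicit $\lambda a^2$ term, with a value function of an underlying convex optimization or potential. A natural candidate: the AMP state-evolution / CGMT-type scalar potential whose stationarity in the "box level" $a$ reproduces the fixed-point equations \eqref{condition0}, and whose value at the stationary point equals $\delta\rho(\tau_a^2-1)+\delta\rho^2\tau_a^2/\gamma_a^2$. If $f$ arises as a partial minimization (or a Legendre-type transform) of a jointly convex function, convexity in $a$ is automatic, and strong convexity then follows from the presence of the $\lambda a^2$ term (strictly convex with modulus $2\lambda>0$) provided the remaining part is at least convex. Concretely, I expect the term $\delta\rho(\tau_a^2-1)+\delta\rho^2\tau_a^2/\gamma_a^2$ to be monotone nondecreasing and convex in $a$: intuitively, enlarging the box $[-a,a]$ enlarges $\mathbb{E}[\eta_a^2]$, hence raises $\tau_a^2$, while $\gamma_a$ moves so as to keep \eqref{condition0b} balanced; quantifying this requires differentiating the system and showing the resulting expression for $f''(a)$ is bounded below by $2\lambda$. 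If a clean variational representation is available it short-circuits this computation entirely, so I would look for that first.

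**For the claim $a^*>0$,** since $f$ is strongly convex and $C^1$ on $[0,\infty)$ with a unique minimizer, it suffices to show $f'(0)<0$. At $a=0$ we have $\tau_0^2=1$ and $\gamma_0$ as above, so $f(0)=0$, and the contribution $\lambda a^2$ has zero derivative at $a=0$; thus $f'(0)$ equals the derivative of $\delta\rho(\tau_a^2-1)+\delta\rho^2\tau_a^2/\gamma_a^2$ at $a=0$. I would compute $\tfrac{d}{da}\tau_a^2\big|_{0}$ and $\tfrac{d}{da}\gamma_a\big|_{0}$ from the linearized fixed-point system. The key observation is that for small $a>0$, $\eta_a(x;\gamma)$ clips at level $a$ only on a vanishing-probability set, so $\mathbb{E}[\eta_a^2(\tau Z;\gamma)]$ is \emph{smaller} than the unclipped $\mathbb{E}[(\tau Z/(\gamma+1))^2]$, and this deficit \emph{decreases} as $a$ grows — so $\tau_a^2$ is increasing in $a$ near $0$, forcing $f'(0)<0$ unless the $\gamma_a$-dependence reverses the sign. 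Verifying that the net effect is strictly negative is the crux; one safeguard is that $f(0)=0$ while $f(a)>0$ is impossible for all small $a$ only if $f'(0)\le 0$, and strong convexity upgrades $\le$ to the needed strict inequality as long as $f$ is not identically $0$ near $0$, which it is not since the clipping deficit strictly changes with $a$.

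**The main obstacle** I anticipate is the convexity (indeed strong convexity) claim: unlike $C^1$-smoothness, which follows routinely from the implicit function theorem, convexity of the composite map $a\mapsto\delta\rho(\tau_a^2-1)+\delta\rho^2\tau_a^2/\gamma_a^2$ is not obvious from the fixed-point equations alone and genuinely seems to require either (a) a variational/duality representation of $f$ as a value function of a jointly convex problem — which I would try to read off from the AMP state evolution or the CGMT scalarization behind the analysis — or (b) an explicit, somewhat delicate second-derivative estimate using the linearized system \eqref{condition0}. I would pursue route (a) first and fall back on (b) only if no clean representation materializes.
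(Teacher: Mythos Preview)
Your overall plan matches the paper's proof. For strong convexity, route (a) is exactly what the paper does: $f(a)-\lambda a^2$ is shown (Lemma~\ref{convergefN}) to be the almost-sure pointwise limit of $f_N(a;\s,\H)=\min_{\x\in[-a,a]^N}\{\frac{1}{N}\|\s-\H\x\|^2+\frac{\rho}{N}\|\x\|^2\}$, which is convex in $a$ as a partial minimization of a jointly convex function over the convex set $\{(\x,a):\|\x\|_\infty\le a\}$; adding $\lambda a^2$ then gives strong convexity with modulus $2\lambda$. The $C^1$ part is likewise via the implicit function theorem, though the paper verifies nonsingularity of the Jacobian by an explicit determinant computation (showing $t_0(a)<0$) rather than by inferring it from uniqueness.

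Two corrections at $a=0$. Since $\eta_0\equiv 0$, the probability in \eqref{condition0b} vanishes and one gets $\gamma_0=\rho$ (not $\rho/(1-\tfrac{1}{\delta(\rho+1)})$), hence $f(0)=\delta\rho^2\cdot 1/\rho^2=\delta$, not $0$. More importantly, your mechanism for $f'(0)<0$ is off: for small $a$ the map $\eta_a$ clips \emph{almost everywhere} (not on a vanishing-probability set), so $\mathbb{E}[\eta_a^2]\sim a^2$ and $(\tau_a^2)'(0)=0$. Increasing $\tau_a^2$ would in any case push $f$ \emph{up}, not down. The negative sign actually comes from $\gamma_a'(0)>0$ acting through the $-2\delta\rho^2\tau_a^2\gamma_a^{-3}\gamma_a'$ term; the paper's explicit calculation gives $f'(0)=-2\sqrt{2/\pi}$. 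So your strategy ``show $f'(0)<0$'' is right, but the heuristic you gave would need to be reworked.
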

\begin{proof}
See Appendix \ref{proof:astar}.
\end{proof}
For illustration, we plot $f(a)$ in Fig. \ref{fig:fa}.  The function $f(a)$  corresponds to the large-system limit of the objective function in problem \eqref{relaxation:a}; see \eqref{eq:convergefN}. 

\begin{figure}
\includegraphics[width=0.45\textwidth]{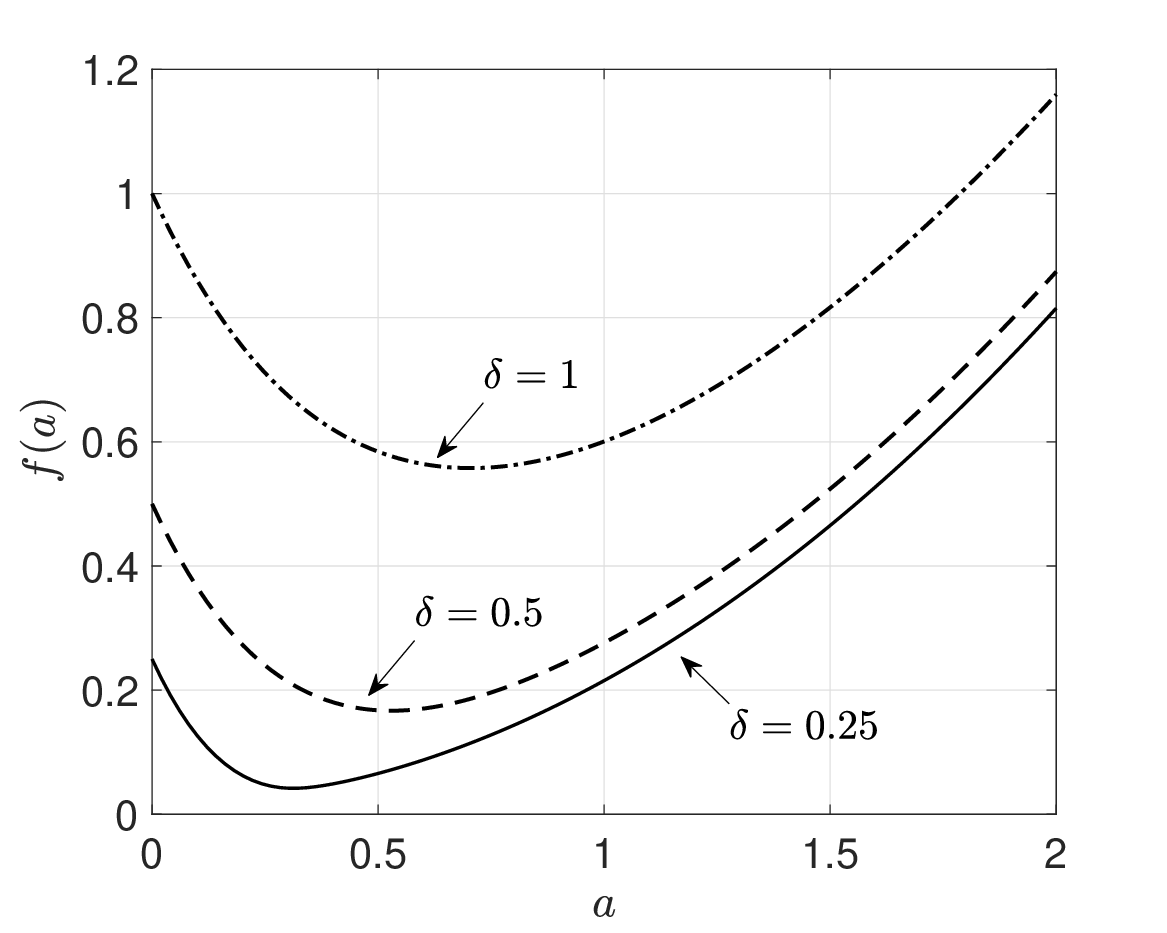}
\centering
\caption{An illustration of $f(a)$, where $\lambda=\rho=0.2$.}
\label{fig:fa}
\end{figure}

We are now ready to present our main results. 
\subsubsection{Asymptotic Statistical Properties of $\hat{\x}$}\label{subsubsec:hatx} 
We first present the asymptotic statistical properties of the optimal solution $\hat{\x}$ to the convex relaxation model in \eqref{opt:relaxation}. The following results are intermediate results obtained from Step 1 of the proposed analytical framework.

\begin{proposition}[Convergence of $\|\hat{\x}\|_{\infty}$]\label{x_infty}
Under Assumption \ref{ass} (i) -- (iii), the following result holds:
\begin{equation}
\lim_{N\to\infty}\|\hat{\x}\|_{\infty}\overset{a.s.}{=} a^*,
\end{equation}
where $\hat{\x}$ and $a^*$ are defined in \eqref{opt:relaxation} and \eqref{minfa}, respectively.
\end{proposition}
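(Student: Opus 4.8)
The plan is to establish Proposition~\ref{x_infty} by analyzing the optimization problem \eqref{opt:relaxation} through its reformulation in terms of the $\ell_\infty$ norm as a scalar variable, combined with the AMP state evolution machinery that underlies this paper. Specifically, I would fix $a \geq 0$ and consider the \emph{restricted} problem
\begin{equation*}
\hat{\x}(a) = \arg\min_{\x:\,\|\x\|_\infty \leq a}\left\{\frac{1}{N}\|\s-\H\x\|^2 + \frac{\rho}{N}\|\x\|^2\right\},
\end{equation*}
which is a box-constrained ridge regression. The key observation is that the original objective decomposes as: minimizing over all $\x$ equals minimizing over $a \geq 0$ of $\big[\min_{\|\x\|_\infty \leq a}(\text{first two terms})\big] + \lambda a^2$, with the inner minimization value being exactly the quantity whose limit is captured by $f(a)$ in \eqref{phia}. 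Thus the first main step is to show that for each fixed $a$, the optimal value of the box-constrained problem converges almost surely to $\delta\rho(\tau_a^2-1) + \frac{\delta\rho^2\tau_a^2}{\gamma_a^2}$; this is precisely where Proposition~\ref{prob:converge} (the AMP convergence result referenced in Lemma~\ref{lemma:unique}'s proof) enters, since the proximal operator $\eta_a$ in \eqref{def:etaa} is the proximal operator associated with the box constraint plus ridge penalty, and the fixed-point equations \eqref{condition0} are the corresponding state evolution equations.

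The second step is to upgrade pointwise (in $a$) convergence of the restricted optimal value to convergence of the outer minimizer. Here I would use the strong convexity and continuous differentiability of $f$ established in Lemma~\ref{lemma:astar}: the finite-$N$ objective as a function of $a$, namely $g_N(a) := \min_{\|\x\|_\infty\leq a}(\text{first two terms}) + \lambda a^2$, is convex in $a$ (as a partial minimization of a jointly convex function plus $\lambda a^2$), and converges pointwise almost surely to $f(a)$. Convexity plus pointwise convergence of convex functions on $\mathbb{R}$ implies uniform convergence on compacts, which in turn forces the minimizers to converge: $\arg\min_a g_N(a) \to \arg\min_a f(a) = a^*$ almost surely. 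Since $\|\hat{\x}\|_\infty$ is exactly $\arg\min_a g_N(a)$ (the value of $a$ at which the restricted problem's optimum is achieved without the constraint binding in a wasteful way), we conclude $\|\hat{\x}\|_\infty \xrightarrow{a.s.} a^*$. I would need to be slightly careful to confirm that the unconstrained minimizer $\hat{\x}$ of \eqref{opt:relaxation} indeed has $\ell_\infty$ norm equal to the outer-minimizing $a$ — this follows because at the optimal $a^*$, the box constraint $\|\x\|_\infty \leq a^*$ is active (otherwise decreasing $a$ would strictly decrease $\lambda a^2$ without affecting the inner minimum), using $a^* > 0$ from Lemma~\ref{lemma:astar}.

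The main obstacle I anticipate is rigorously justifying the almost-sure convergence of the box-constrained ridge regression's optimal value to the state-evolution prediction, uniformly enough in $a$ to be usable. This requires either (a) invoking an AMP-based argument where the proximal operator $\eta_a(\cdot;\gamma)$ with a box constraint is shown to satisfy the regularity conditions (Lipschitz continuity, which holds since projection onto $[-a,a]$ is $1$-Lipschitz) needed for state evolution to hold, and then relating the fixed point of the AMP iteration to the optimizer via a convexity/uniqueness argument (this is the content of Proposition~\ref{prob:converge}); or (b) a CGMT-style argument. Given the paper's stated framework, route (a) is intended, so the technical work is in verifying that the AMP recursion with denoiser $\eta_a$ converges and that its fixed point characterizes $\hat{\x}(a)$ — essentially a special case of the general machinery in Section~\ref{sec:proof} with the quantization function $q$ not yet applied. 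A secondary subtlety is handling the interchange of the outer minimization over $a$ with the $N\to\infty$ limit; the convexity argument sketched above is the clean way to do this, but one must ensure the pointwise a.s. convergence holds on a common probability-one event for a dense set of $a$ values and then extend by convexity, and also rule out escape of the minimizer to $+\infty$ (which is prevented by the $\lambda a^2$ term growing while the inner minimum is bounded below by the limit of $\frac{1}{N}\|\s-\H\hat\x(a)\|^2 \geq 0$, forcing $g_N(a)\to\infty$ as $a\to\infty$ uniformly in $N$ up to lower-order terms).
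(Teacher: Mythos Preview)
Your proposal is correct and follows essentially the same route as the paper: the paper rewrites \eqref{opt:relaxation} as the nested problem \eqref{opt:relaxation2}, shows via AMP and Proposition~\ref{prob:converge} that the inner optimal value $f_N(a;\s,\H)$ converges pointwise a.s.\ to $\bar f(a)=f(a)-\lambda a^2$ (Lemma~\ref{convergefN}), and then invokes a standard convex-analysis epi-convergence result (Lemma~\ref{lem:convergeastar}) together with the strong convexity of $f$ and $a^*>0$ from Lemma~\ref{lemma:astar} to conclude $\hat a_N\to a^*$ (Lemma~\ref{converge:a}). Your extra care about the box constraint being active at the outer minimizer (so that $\|\hat\x\|_\infty=\hat a_N$), about extending pointwise a.s.\ convergence to a common event via a countable dense set, and about ruling out escape to $+\infty$ are all legitimate points that the paper handles either implicitly or through the cited Lemma~\ref{lem:convergeastar}.
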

\begin{proof}
Proposition \ref{x_infty} follows directly from \eqref{eq:aNtoastar} and the definition of $\hat{a}_N$. See Lemma \ref{converge:a} in Section \ref{sec:proof} for a rigorous statement of \eqref{eq:aNtoastar} and Appendix \ref{proof:Lemma4} for the proof.
\end{proof}
The following theorem characterizes the limiting empirical distribution of $\hat{\x}$.

\begin{theorem}[Limiting Empirical Distribution of $\hat{\x}$]\label{mainresult:ed}
Denote   $(\tau_*^2,\gamma_*):=(\tau_{a^*}^2,\gamma_{a^*})$ as the solution to \eqref{condition0} with $a=a^*$. 
For any pseudo-Lipschitz function $\varphi:\R\rightarrow\R$, the following result holds under Assumption \ref{ass} (i) -- (iii):
\begin{equation}
\lim_{N\to\infty}\frac{1}{N}\sum_{i=1}^N \varphi(\hat{x}_i)\overset{a.s.}{=}\mathbb{E}\left[\varphi(\hat{X})\right],
\end{equation}
where 
\begin{equation}\label{def:hatX}
\hat{X}=\eta_{a^*}\left({\tau_*Z};{\gamma_*}\right),~~Z\sim\mathcal{N}(0,1).\end{equation}\end{theorem}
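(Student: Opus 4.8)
\textbf{Proof proposal for Theorem~\ref{mainresult:ed}.}

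The plan is to instantiate the general AMP machinery that the paper develops in Section~\ref{sec:proof} and specialize it to the relaxation model \eqref{opt:relaxation}. First I would recognize that \eqref{opt:relaxation}, for a fixed value of $\|\x\|_\infty$, decouples: the $\ell_\infty^2$ term only constrains the magnitude of the largest coordinate, so conditioned on $\|\hat\x\|_\infty = a$, the solution $\hat\x$ coincides with the minimizer of $\frac1N\|\s-\H\x\|^2 + \frac{\rho}{N}\|\x\|^2$ over the box $\{\x : \|\x\|_\infty \le a\}$. This is a box-constrained ridge-regularized least-squares problem, which is exactly of the form whose solution admits an AMP characterization. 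The associated AMP iteration has a proximal/denoiser map that is precisely the clipped linear shrinkage $\eta_a(\cdot;\gamma)$ defined in \eqref{def:etaa}, and its state evolution fixed point is the pair $(\tau_a^2,\gamma_a)$ solving \eqref{condition0} (this is what Proposition~\ref{prob:converge}, invoked in the proof of Lemma~\ref{lemma:unique}, must be establishing). Standard AMP state-evolution theory (in the style of \cite{AMP,feng2022unifying,LASSO,elasticnet}) then gives, for each fixed $a$, that the empirical distribution of the box-constrained solution $\hat\x^{(a)}$ converges to that of $\eta_a(\tau_a Z;\gamma_a)$ in the pseudo-Lipschitz sense, i.e. $\frac1N\sum_i \varphi(\hat x_i^{(a)}) \xrightarrow{a.s.} \mathbb{E}[\varphi(\eta_a(\tau_a Z;\gamma_a))]$.

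Next I would reconnect the box-constrained reformulation to the original problem by optimizing over $a = \|\x\|_\infty$. The key observation is that the optimal value of \eqref{opt:relaxation} equals $\min_{a\ge 0}\{ V(a) + \lambda a^2\}$, where $V(a)$ is the optimal value of the box-constrained ridge problem; and AMP state evolution delivers the asymptotic formula $V(a) \xrightarrow{a.s.} \delta\rho(\tau_a^2-1) + \delta\rho^2\tau_a^2/\gamma_a^2$ (up to the normalization conventions in \eqref{opt:relaxation}), so that the asymptotic objective as a function of $a$ is exactly the function $f(a)$ of \eqref{phia}. By Lemma~\ref{lemma:astar}, $f$ is strongly convex with unique minimizer $a^* > 0$. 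Combining this with Proposition~\ref{x_infty} (which identifies $\lim \|\hat\x\|_\infty = a^*$), the solution $\hat\x$ of \eqref{opt:relaxation} asymptotically coincides with the box-constrained solution $\hat\x^{(a^*)}$ at the level $a=a^*$. Feeding $a=a^*$ into the state-evolution conclusion of the previous paragraph then yields $\frac1N\sum_i\varphi(\hat x_i)\xrightarrow{a.s.}\mathbb{E}[\varphi(\eta_{a^*}(\tau_* Z;\gamma_*))] = \mathbb{E}[\varphi(\hat X)]$ with $\hat X$ as in \eqref{def:hatX}, which is the claim.

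The main obstacle is making the passage from ``for each fixed $a$'' to ``at the random, data-dependent level $a = \|\hat\x\|_\infty$'' rigorous. One cannot directly substitute a random $a$ into an almost-sure statement that was proved for each deterministic $a$. The clean way around this is a uniformity/continuity argument: show that $a \mapsto \hat\x^{(a)}$ and the relevant pseudo-Lipschitz averages are uniformly (in $N$) Lipschitz-continuous in $a$ on a compact interval around $a^*$ — this follows because $\hat\x^{(a)}$ is the projection of a fixed ridge solution onto the box $[-a,a]^N$, hence $1$-Lipschitz in $a$ coordinatewise, and because $\|\hat\x\|_\infty \to a^*$ confines the relevant $a$ to a compact set eventually. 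Then a standard $\varepsilon$-net argument over $a$, combined with the per-$a$ almost-sure convergence and the strong convexity of $f$ (which gives quantitative control: $|\|\hat\x\|_\infty - a^*|$ small forces the box-constrained problems at $\|\hat\x\|_\infty$ and at $a^*$ to have nearby solutions), upgrades the pointwise-in-$a$ result to the desired statement at the random level. A secondary technical point is verifying that the pseudo-Lipschitz test function $\varphi$ is controlled by the second-moment convergence that AMP provides — i.e. that $\frac1N\|\hat\x\|^2$ also converges, which again comes out of the same state evolution — so that the error terms in the $\varepsilon$-net argument are genuinely negligible. I expect the bulk of Section~\ref{sec:proof} to be devoted to these continuity-in-$a$ and uniform-integrability estimates, while the AMP state evolution itself is invoked as a black box from the cited references after constructing the appropriate iteration.
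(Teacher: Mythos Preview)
Your overall strategy matches the paper's: rewrite \eqref{opt:relaxation} as an outer minimization over $a=\|\x\|_\infty$ and an inner box-constrained ridge problem \eqref{def:xa}; use AMP state evolution (Proposition~\ref{prob:converge}) to characterize $\x_a$ for each fixed $a$; show the random level $\hat{a}_N=\|\hat\x\|_\infty$ converges to $a^*$ (Lemma~\ref{converge:a}); and transfer the per-$a$ result to $\hat\x=\x_{\hat{a}_N}$.

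There is, however, one incorrect step. You claim that ``$\hat\x^{(a)}$ is the projection of a fixed ridge solution onto the box $[-a,a]^N$, hence $1$-Lipschitz in $a$ coordinatewise.'' This is false: the minimizer of a strongly convex quadratic over a box is \emph{not} the Euclidean projection of the unconstrained minimizer unless $\H^T\H$ is diagonal; the optimality condition is the fixed-point relation $\x_a=\mathcal{P}_{[-a,a]^N}(\x_a-\omega\nabla g(\x_a))$, not a one-shot projection. The paper obtains the needed Lipschitz bound $\|\x_{a_1}-\x_{a_2}\|\le C\sqrt{N}\,|a_1-a_2|$ (see \eqref{eq:diffxa1xa2}) by subtracting the two optimality conditions, using $\|\mathcal{P}_{[-a_1,a_1]^N}(\y)-\mathcal{P}_{[-a_2,a_2]^N}(\y)\|\le\sqrt{N}\,|a_1-a_2|$ together with non-expansiveness of the projection, and choosing $\omega$ so that $\|(1-\omega\rho)\mathbf{I}-\omega\H^T\H\|<1$ via Lemma~\ref{lem:eigenvalue}. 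Once this is in place, your $\varepsilon$-net is unnecessary: since $\hat\x=\x_{\hat{a}_N}$, the bound gives directly $N^{-1/2}\|\hat\x-\x_{a^*}\|\le C|\hat{a}_N-a^*|\xrightarrow{a.s.}0$, and pseudo-Lipschitzness of $\varphi$ plus the uniform bound $|\hat{x}_i|,|x_{a^*,i}|\le\max(\hat{a}_N,a^*)$ finishes the argument in two lines (cf.\ \eqref{phixhat}).
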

\begin{proof}
Theorem \ref{mainresult:ed} is obtained by analyzing the distribution of the solution  $\x_a$  to the box-constrained problem  in \eqref{def:xa} via AMP and invoking Proposition \ref{x_infty}, which implies that $\hat{\x}$ exhibits the same statistical properties as $\x_{a^*}$; see also the discussions below \eqref{eq:aNtoastar}. 
The detailed proof is given in Appendix \ref{proof:theorem1}.
\end{proof}
Theorem \ref{mainresult:ed} illustrates that in the high-dimensional limit,  the elements of $\hat{\x}$ follow a truncated Gaussian distribution. An interesting implication of Proposition \ref{x_infty} and Theorem \ref{mainresult:ed} is the following remark, which provides insights into why the convex model in \eqref{opt:relaxation} is a good  relaxation of \eqref{opt:discrete}. 
\begin{remark}\label{remark1}
Given any $\epsilon> 0$, by Proposition \ref{x_infty} and Theorem \ref{mainresult:ed},  the following holds for sufficiently large $N$: 
\begin{equation}\label{eq:proportion}
\begin{aligned}
\frac{\#\left\{|\hat{x}_i|\geq \|\hat{\x}\|_{\infty}(1-\epsilon)\right\}}{N}\approx&\frac{\#\left\{|\hat{x}_i|\geq a^*(1-\epsilon)\right\}}{N}
\xrightarrow{a.s.}\mathbb{P}\left(\frac{\tau_*|Z|}{\gamma_*+1}\geq a^*(1-\epsilon)\right),
\end{aligned}
\end{equation}
where $\#$  denotes the cardinality of the corresponding set, and the convergence is obtained by specifying $\varphi=\mathbf{1}_{\{|x|\geq a^*-\epsilon\}}$ in Theorem \ref{mainresult:ed}\footnote{A subtle point here is that Theorem \ref{mainresult:ed} is not directly applicable since the indicator function is not pseudo-Lipschitz. Nevertheless, the set of discontinuous points $\{\pm(a^*-\epsilon)\}$ has zero measure with respect to the truncated Gaussian distribution. Thus, using a standard argument to approximate the indicator function by a Lipschitz function we can show the desired  convergence; see Appendix \ref{app:nonsmooth}.
}.
This means that the proportion of elements in $\hat{\x}$ that cluster around $\pm\|\x\|_{\infty}$ can be characterized by the probability given on the right-hand side of \eqref{eq:proportion}.  In Fig. \ref{fig:proportion}, we plot the values of the left- and right-hand sides of
\eqref{eq:proportion} as a function of $\epsilon$. As can be observed, a large amount of entries of $\hat{\x}$ cluster around $\pm\|\hat{\x}\|_{\infty}$, especially for small $\delta$. This suggests that a large fraction of the entries of $\hat{\x}$ tend to have similar magnitudes,  indicating that the nonconvex constraint in \eqref{opt:discrete} remains largely preserved in the relaxation model \eqref{opt:relaxation}. Consequently, \eqref{opt:relaxation} serves as a good relaxation of \eqref{opt:discrete}.
\begin{figure}
\includegraphics[width=0.45\textwidth]{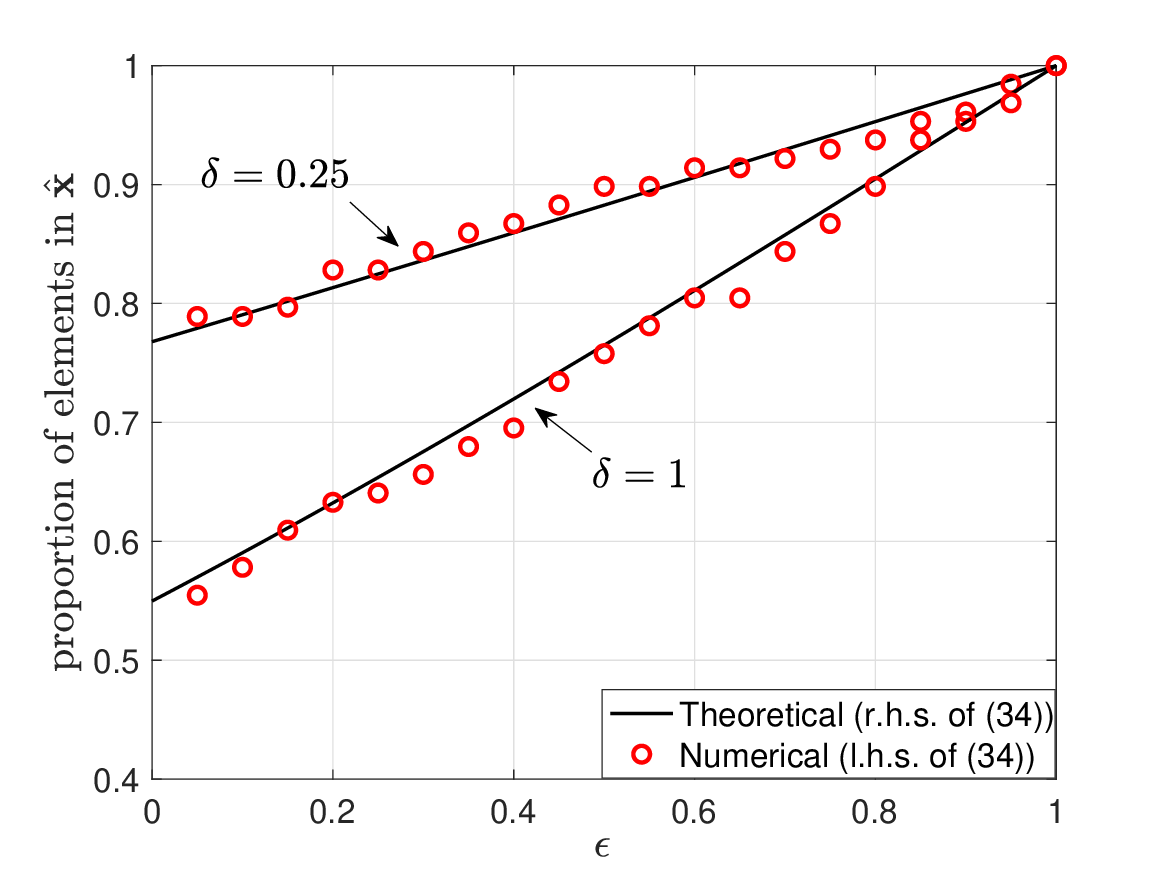}
\centering
\caption{An illustration of the values in Eq. \eqref{eq:proportion} as a function of $\epsilon$. The red circles represent the proportion of elements in  $\hat{\x}$ that cluster around $\pm\|\hat{\x}\|_{\infty}$, i.e., the value on the l.h.s. of \eqref{eq:proportion}. The number of transmit antennas and users are $N=128$ and $K=\delta N$, respectively. The black lines represent the theoretical prediction, i.e., the probability on the right-hand side of \eqref{eq:proportion}. The regularization parameters are set to their optimal values specified in Theorem \ref{mainresult:lambda_rho}.}
\label{fig:proportion}
\end{figure}
\end{remark}

\subsubsection{Limiting Empirical Distribution of $(\bH q(\hat{\x}),\s)$ and Convergence of SEP}\label{subsubsec:SEP}

The following theorem {is the main technical result of this paper, which characterizes  the limiting empirical  distribution of $(\bH q(\hat{\x}),\s)$ for a general quantization function $q(\cdot)$ that satisfies Assumption \ref{ass} (iv). }
\begin{theorem}[Limiting Empirical Distribution of $(\bH q(\hat{\x}),\s)$]\label{mainresult}
 Under Assumption \ref{ass}, the following result holds for any pseudo-Lipschitz  function $\psi:\R^2\rightarrow \R$: 
  \begin{equation}\label{h_kq}
  \lim_{K\to\infty}\frac{1}{K}\sum_{k=1}^K\psi(\h_k^T q(\hat{\x}),s_k)\overset{a.s.}{=}\mathbb{E}\big[\psi(\bar{\alpha}_*S+\bar{\beta}_*^{\frac{1}{2}}Z,S)\big],
  \end{equation}
 where 
\begin{equation}
 \begin{aligned}
 \bar{\alpha}_*=&\frac{1}{\delta\tau_*}\mathbb{E}\left[Z\, q\left(\eta_{a^*}(\tau_*Z;\gamma_*)\right)\right],\\
  \bar{\beta}_*=&\frac{1}{\delta}\,\mathbb{E}\bigg[\big(\bar{\alpha}_*\hat{X}-q(\hat{X})\big)^2\bigg],
 \end{aligned}
 \end{equation}
  $S\sim\text{\normalfont{Unif}}({\mathcal{S}})$, ${Z}\sim\mathcal{N}(0,1)$ is independent of $S$, and  $\hat{X}$ is defined in \eqref{def:hatX}. 
\end{theorem}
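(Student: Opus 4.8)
The plan is to reduce Theorem~\ref{mainresult} to an application of the state evolution machinery that was already set up to prove Proposition~\ref{x_infty} and Theorem~\ref{mainresult:ed}, extending it to track the additional statistic $\bH q(\hat{\x})$. Concretely, I would proceed as follows. First, I would recall the AMP iteration whose fixed point characterizes $\hat{\x}$: the relaxation problem~\eqref{opt:relaxation} has a separable objective (a quadratic data-fit plus the $\ell_\infty^2$ term plus the $\ell_2^2$ term), and its proximal structure is exactly captured by the thresholding map $\eta_a(\cdot;\gamma)$ of~\eqref{def:etaa}. By Proposition~\ref{prob:converge} (which the excerpt defers to Section~\ref{sec:proof}), the AMP iterates converge to $\hat{\x}$ and their empirical law converges to that of $\hat{X}=\eta_{a^*}(\tau_* Z;\gamma_*)$, with the scalar parameters $(\tau_*^2,\gamma_*)$ solving the state-evolution fixed-point equations~\eqref{condition0} at $a=a^*$. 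The key added ingredient is the joint behavior of $(\bH\hat{\x}, \s)$ and, more importantly, $(\bH q(\hat{\x}),\s)$: in the AMP analysis the "residual" direction $\s-\bH\x^t$ is asymptotically Gaussian, which pins down the correlation structure between $q(\hat{\x})$ and the rows of $\bH$.

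Second, I would establish the Gaussianity/decoupling statement for $\bH q(\hat{\x})$. Write $\h_k^T q(\hat{\x})$ and decompose $q(\hat{\x})$ along $\hat{\x}$ and orthogonal to it: $q(\hat{\x}) = c\,\hat{\x} + w$, where $c := \langle \hat{\x}, q(\hat{\x})\rangle / \langle \hat{\x},\hat{\x}\rangle$ asymptotically and $w$ is the residual. The point is that, conditioned on $\hat{\x}$, the vector $\bH q(\hat{\x})$ is governed by how $\bH$ acts on a direction that is essentially a deterministic Lipschitz function of $\hat{\x}$; since $\hat{\x}$ itself is (asymptotically) a function of the AMP state, one can invoke the conditioning lemma / Gaussian-channel argument underlying AMP state evolution to conclude that $\h_k^T q(\hat{\x})$ behaves like $\bar\alpha_* s_k + \bar\beta_*^{1/2} Z_k$ with $Z_k$ standard Gaussian independent of $s_k$. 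The scalar $\bar\alpha_*$ should come out as the "Onsager-corrected" overlap between $q(\hat{X})$ and the signal direction $Z$ (which itself encodes $\s$ through the channel), giving $\bar\alpha_* = \frac{1}{\delta\tau_*}\mathbb{E}[Z\,q(\eta_{a^*}(\tau_* Z;\gamma_*))]$, and $\bar\beta_*$ as the variance of the orthogonal residual, $\bar\beta_* = \frac{1}{\delta}\mathbb{E}[(\bar\alpha_*\hat{X}-q(\hat{X}))^2]$. Assumption~\ref{ass}(iv) — that $q$ is bounded-derivative away from finitely many points none of which equal $\pm a^*$, which is where the truncated-Gaussian law of $\hat{X}$ puts its atoms — is exactly what makes $q(\hat{\x})$ "pseudo-Lipschitz enough" for the state-evolution convergence to pass through and for the expectations defining $\bar\alpha_*,\bar\beta_*$ to be continuous in the underlying parameters.

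Third, I would combine the pieces: apply the extended state-evolution result to the empirical average $\frac1K\sum_k \psi(\h_k^T q(\hat{\x}), s_k)$ for pseudo-Lipschitz $\psi$, using (a) $\|\hat{\x}\|_\infty \to a^*$ (Proposition~\ref{x_infty}) to replace the unknown threshold by $a^*$ and freeze the scalar parameters at $(\tau_*^2,\gamma_*)$, (b) the empirical-distribution convergence of $\hat{\x}$ (Theorem~\ref{mainresult:ed}) to identify the limiting law of $\hat{X}$ inside the expectations for $\bar\alpha_*,\bar\beta_*$, and (c) the Gaussian decoupling above to replace $\h_k^T q(\hat{\x})$ by $\bar\alpha_* s_k + \bar\beta_*^{1/2} Z_k$. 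A standard uniform-integrability / pseudo-Lipschitz truncation argument (as already used around~\eqref{eq:proportion} for non-smooth $\varphi$) handles the fact that $q=\mathrm{sgn}$ is not itself Lipschitz at $0$.

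The main obstacle, I expect, is the rigorous justification of step two — the precise form and self-consistency of the Onsager correction that produces $\bar\alpha_*$. One has to be careful that $q(\hat{\x})$ is a \emph{nonlinear} function of the AMP fixed point (not of a fresh iterate), so the usual conditioning argument must be applied along the converging AMP trajectory and then passed to the limit; controlling the error terms uniformly, and showing that the discontinuity of $q$ at $0$ does not create an anomalous contribution (again leaning on the fact that the truncated-Gaussian law of $\hat{X}$ is continuous at $0$ unless $a^*=0$, which Lemma~\ref{lemma:astar} rules out), is the delicate part. Everything downstream — plugging into~\eqref{def:SEP} with $\psi$ chosen to be (a smoothed version of) the mismatch indicator to get the closed-form SEP — is then routine.
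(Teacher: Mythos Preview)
Your high-level plan is on the right track, but there are two concrete issues.

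First, and most importantly: the objective in~\eqref{opt:relaxation} is \emph{not} separable --- the term $\lambda\|\x\|_\infty^2$ couples all coordinates. Consequently $\eta_a(\cdot;\gamma)$ with a fixed $a$ is not the proximal map of the original problem, and Proposition~\ref{prob:converge} does not assert that AMP iterates converge to $\hat{\x}$; it asserts convergence to $\x_a$, the solution of the \emph{box-constrained} problem~\eqref{def:xa} with $a$ held fixed. The paper's Step~1 (Section~\ref{sec:auxiliary}) resolves this by rewriting~\eqref{opt:relaxation} as the bilevel problem~\eqref{opt:relaxation2}: for each $a$ the inner problem is box-constrained and hence separable, so AMP applies and yields $\x_a$ with state evolution $(\tau_a^2,\gamma_a)$; a separate argument (Lemmas~\ref{convergefN} and~\ref{converge:a}) shows the outer optimizer satisfies $\hat a_N\to a^*$, whence $\hat{\x}=\x_{\hat a_N}\approx \x_{a^*}$. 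Proposition~\ref{the1} then transfers the analysis of $\bH q(\hat{\x})$ to that of $\bH q(\x_{a^*})$. Your proposal elides this reduction, which is where most of the work behind Proposition~\ref{x_infty} and Theorem~\ref{mainresult:ed} actually lives.

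Second, for your Step~2, the paper does not use an ad hoc orthogonal decomposition $q(\hat{\x})=c\,\hat{\x}+w$ followed by a conditioning argument. Instead it runs one \emph{additional} AMP iteration after $(\x_t,\z_t)$ with the composite denoiser $q_\epsilon\circ\eta_a$ (where $q_\epsilon$ is a mollification of $q$, needed because $q$ is not Lipschitz): the resulting pair $(\tilde{\x}_{t+1},\tilde{\z}_{t+1})$ in~\eqref{AMP2} is then a bona fide AMP sequence to which the full state evolution (Proposition~\ref{AMP_property}) applies directly, yielding the joint Gaussian law of $(\bb_t,\tilde{\bb}_{t+1},\s)$ and hence the decomposition $\bH\tilde{\x}_{t+1}=\tilde\alpha_t\s-\tilde\alpha_t\bb_t+\tilde{\bb}_{t+1}$. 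The constants $\bar\alpha_*,\bar\beta_*$ drop out of the state-evolution covariances, with Stein's lemma converting the Onsager derivative $\frac{1}{\delta}\mathbb{E}[(q_\epsilon\circ\eta_a)']$ into the correlation form $\frac{1}{\delta\tau_*}\mathbb{E}[Z\,q(\eta_{a^*}(\tau_*Z;\gamma_*))]$; finally one sends $t\to\infty$ (via Lemma~\ref{difference}) and $\epsilon\to0$ (via Lemma~\ref{lem:difference}). Your intuition about the decomposition is correct, but the mechanism that makes it rigorous --- an extra AMP step with the composite denoiser, plus smoothing --- is the missing ingredient.
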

\begin{proof}
{The main idea for proving Theorem \ref{mainresult} is presented in Section \ref{subsec:framework}. See Section \ref{subsec:derivation} for a heuristic derivation and Section \ref{sec:proof} for the rigorous proof}.
\end{proof}

Note that $q(x)=\text{sgn}(x)$ satisfies Assumption \ref{ass} (iv) as  $a^*>0$. By specifying $q(x)=\text{sgn}(x)$ in Theorem \ref{mainresult}, we can {characterize the limiting empirical distribution of  $(\bH {\x}_T,\s)$, where ${\x}_T=q(\hat{\x})$  is the one-bit transmit signal defined in Definition \ref{def:onebit}. }
\begin{corollary}[Limiting Empirical Distribution: $q(\cdot)=\text{sgn}(\cdot)$]\label{Corollary1}
 Under Assumptions  \ref{ass} (i) -- (iii), the following result holds for any pseudo-Lipschitz  function $\psi:\R^2\rightarrow \R$: 
  \begin{equation}\label{mainresult:sgnfunction}
  \lim_{K\to\infty}\frac{1}{K}\sum_{k=1}^K\psi(\h_k^T {\x}_T,s_k)\overset{a.s.}{=}\mathbb{E}\big[\psi(\bar{\alpha}_0S+\bar{\beta}_0^{\frac{1}{2}}Z,S)\big],
  \end{equation}
  where $\x_T$ is defined  in Definition \ref{def:onebit} and
  \begin{subequations}\label{Eqn:alpha_beta}
\begin{align}
\bar{\alpha}_0&=\sqrt{\frac{2}{\pi}}\frac{1}{\delta\tau_*},\\
\bar{\beta}_0&=\frac{1}{\delta}\mathbb{E}\left[(\bar{\alpha}_0|\hat{X}|-1)^2\right],
\end{align}
\end{subequations}
  $S\sim\text{\normalfont{Unif}}({\mathcal{S}})$, ${Z}\sim\mathcal{N}(0,1)$ is independent of $S$, and $\hat{X}$ is defined in \eqref{def:hatX}. 
\end{corollary}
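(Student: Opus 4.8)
The plan is to obtain Corollary~\ref{Corollary1} as a direct specialization of Theorem~\ref{mainresult} to $q(\cdot)=\text{sgn}(\cdot)$, so that the only work is (a) checking this choice of $q$ is admissible and (b) evaluating the two constants $\bar\alpha_*,\bar\beta_*$ in closed form. First I would verify that $q(x)=\text{sgn}(x)$ satisfies Assumption~\ref{ass}~(iv) whenever (i)--(iii) hold. Since $\rho>0$ and $\lambda>0$ in the model~\eqref{opt:relaxation}, Lemma~\ref{lemma:astar} gives $a^*>0$; hence $q$ is differentiable on $\R\setminus\{0\}$ with $q'\equiv 0$ there, its single non-differentiability point $0$ lies outside $\{-a^*,a^*\}$, and $|q'(x)|\le C$ holds trivially on $[-a^*,a^*]\setminus\{0\}$ for any $C>0$. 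Thus Theorem~\ref{mainresult} applies under (i)--(iii) alone, and since $\x_T=q(\hat{\x})$ by Definition~\ref{def:onebit}, its left-hand side~\eqref{h_kq} is exactly the left-hand side of~\eqref{mainresult:sgnfunction}.

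Next I would compute $\bar\alpha_0$. The key elementary observation is that projection onto a symmetric interval is sign-preserving: for $a^*>0$ and any $t\in\R$, $\text{sgn}\big(\mathcal{P}_{[-a^*,a^*]}(t)\big)=\text{sgn}(t)$ (with $\text{sgn}(0)=0$). Applying this with $t=\tau_*Z/(\gamma_*+1)$ and noting $\tau_*>0$, $\gamma_*+1>0$ (from Lemma~\ref{lemma:unique}, $\tau_*\ge 1$ and $\gamma_*>0$ by the domain of $\eta_{a^*}$), we get $q\big(\eta_{a^*}(\tau_*Z;\gamma_*)\big)=\text{sgn}(Z)$ almost surely, the event $\{Z=0\}$ having probability zero. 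Hence $\bar\alpha_*=\frac{1}{\delta\tau_*}\mathbb{E}[Z\,\text{sgn}(Z)]=\frac{1}{\delta\tau_*}\mathbb{E}[|Z|]=\sqrt{\tfrac{2}{\pi}}\,\frac{1}{\delta\tau_*}$, using $\mathbb{E}|Z|=\sqrt{2/\pi}$; this is precisely $\bar\alpha_0$.

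For $\bar\beta_0$, the same sign-preservation identity gives $\hat X=\eta_{a^*}(\tau_*Z;\gamma_*)=\text{sgn}(Z)\,|\hat X|$ and $q(\hat X)=\text{sgn}(\hat X)=\text{sgn}(Z)$ almost surely, so $\bar\alpha_0\hat X-q(\hat X)=\text{sgn}(Z)\big(\bar\alpha_0|\hat X|-1\big)$, and squaring removes the sign factor: $(\bar\alpha_0\hat X-q(\hat X))^2=(\bar\alpha_0|\hat X|-1)^2$ almost surely. Therefore $\bar\beta_*=\frac{1}{\delta}\mathbb{E}\big[(\bar\alpha_0|\hat X|-1)^2\big]=\bar\beta_0$. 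Substituting $\bar\alpha_*=\bar\alpha_0$ and $\bar\beta_*=\bar\beta_0$ into~\eqref{h_kq} yields~\eqref{mainresult:sgnfunction}, completing the proof. I do not anticipate any serious obstacle here, since Theorem~\ref{mainresult} does the heavy lifting; the only points needing a line of care are the admissibility check for $q=\text{sgn}$ via $a^*>0$ and the almost-sure handling of the measure-zero event $\{Z=0\}$ when invoking sign-preservation.
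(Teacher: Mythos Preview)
Your proposal is correct and follows exactly the paper's approach: the paper states that $q(x)=\text{sgn}(x)$ satisfies Assumption~\ref{ass}~(iv) because $a^*>0$ and then simply specializes Theorem~\ref{mainresult}, leaving the evaluation of $\bar\alpha_0,\bar\beta_0$ implicit. Your sign-preservation argument for $\eta_{a^*}$ and the resulting computations $\mathbb{E}[Z\,\text{sgn}(Z)]=\sqrt{2/\pi}$ and $(\bar\alpha_0\hat X-\text{sgn}(\hat X))^2=(\bar\alpha_0|\hat X|-1)^2$ a.s.\ correctly supply those details.
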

Corollary \ref{Corollary1} implies that the statistical properties of the considered system model, i.e., $\y=\bH{\x}_T+\n$, with the precoding scheme in Definition \ref{def:onebit}, can be asymptotically characterized by the following scalar model:
\begin{equation}\label{asymptotic}
\bar{y}=\bar{\alpha}_0S+\sqrt{\bar{\beta}_0+\sigma^2}Z,
\end{equation}
where $\sigma^2$ is the variance of the channel noise, and $(\bar{\alpha}_0,\bar{\beta}_0)$ are deterministic constants given by \eqref{Eqn:alpha_beta}. 

Based on Corollary \ref{Corollary1}, we are ready to characterize the asymptotic SEP of the proposed one-bit precoder. {To enable a closed-form expression for the SEP, we focus on the BPSK constellation in the following discussions.}

\begin{theorem}[Convergence of SEP]\label{SEP}
{Assume a BPSK constellation, i.e., $\mathcal{S}=\{-1,1\}$, and consider} the symbol detector $\hat{s}_k=\text{sgn}\left(\h_k^T{\x}_T+n_k\right),$~$k=1,2,\dots, K$. The following result holds under Assumption \ref{ass}  (i) -- (iii):
\begin{equation}
\lim_{K\to\infty}\frac{1}{K}\sum_{k=1}^K\mathbb{P}\left(\hat{s}_k\neq s_k\right)\overset{a.s.}{=}Q\left(\sqrt{\overline{\text{\normalfont{SNR}}}}\right),
\end{equation}
where $s_k$ and $\hat{s}_k$ are the intended data symbol and the detected symbol at user $k$, respectively, $Q(x)$ is the  tail distribution function of the standard Gaussian distribution, and $\overline{\text{\normalfont SNR}}$ is the signal to noise ratio (SNR) of the asymptotic model in \eqref{asymptotic}, given by
\begin{equation}\label{barSNR}
\overline{\text{\normalfont SNR}}=\frac{\bar{\alpha}_0^2}{{\bar{\beta}_0+\sigma^2}},
\end{equation}
where $\bar{\alpha}_0$ and $\bar{\beta}_0$ are defined in \eqref{Eqn:alpha_beta}.
\end{theorem}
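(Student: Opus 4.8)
The plan is to derive the asymptotic SEP directly from Corollary~\ref{Corollary1}, which reduces the vector system $\y = \bH\x_T + \n$ to the scalar model $\bar y = \bar\alpha_0 S + \sqrt{\bar\beta_0}\,Z'$ in the sense of limiting empirical distributions of $(\h_k^T\x_T, s_k)$. The key observation is that the detector $\hat s_k = \mathrm{sgn}(\h_k^T\x_T + n_k)$ makes an error precisely when $s_k(\h_k^T\x_T + n_k) < 0$, so that $\mathbb{P}(\hat s_k \neq s_k) = \mathbb{P}\big(s_k(\h_k^T\x_T + n_k) < 0 \,\big|\, \h_k, \x_T\big)$, where the probability is taken over $n_k \sim \mathcal{N}(0,\sigma^2)$ only. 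I would first write this conditional error probability in closed form: since $n_k \sim \mathcal N(0,\sigma^2)$, we have $\mathbb{P}\big(s_k(\h_k^T\x_T + n_k) < 0 \mid \h_k,\x_T\big) = Q\!\left(\frac{s_k\,\h_k^T\x_T}{\sigma}\right) = Q\!\left(\frac{|\h_k^T\x_T|}{\sigma}\right) \cdot \mathbf{1}\{s_k\h_k^T\x_T>0\} + \big(1-Q(\cdot)\big)\cdot\mathbf 1\{\cdot\}$; more cleanly, $\mathbb{P}(\hat s_k \neq s_k \mid \h_k,\x_T) = Q\!\left(\frac{s_k\h_k^T\x_T}{\sigma}\right)$ exactly, using that $Q(-t) = 1-Q(t)$ handles both signs uniformly.

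Next I would average over $k$ and invoke Corollary~\ref{Corollary1} with the test function $\psi(u,v) = Q(uv/\sigma)$. One must check this is pseudo-Lipschitz: $Q$ is bounded with bounded derivative $-\phi$, and $v = s_k \in \{-1,1\}$ is bounded, so $u \mapsto Q(uv/\sigma)$ is globally Lipschitz in $u$ uniformly in $v$, and the product structure $uv$ is pseudo-Lipschitz of order $2$; hence $\psi$ satisfies the required bound. Applying the corollary gives
\[
\lim_{K\to\infty}\frac{1}{K}\sum_{k=1}^K \mathbb{P}(\hat s_k \neq s_k) \overset{a.s.}{=} \mathbb{E}\left[Q\!\left(\frac{S(\bar\alpha_0 S + \bar\beta_0^{1/2}Z)}{\sigma}\right)\right] = \mathbb{E}\left[Q\!\left(\frac{\bar\alpha_0 + \bar\beta_0^{1/2}SZ}{\sigma}\right)\right],
\]
using $S^2 = 1$. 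Since $S$ is symmetric and independent of $Z$, the product $SZ$ has the same law as $Z$, so this equals $\mathbb{E}_Z\!\left[Q\!\left(\frac{\bar\alpha_0 + \bar\beta_0^{1/2}Z}{\sigma}\right)\right]$. The final step is the standard Gaussian identity $\mathbb{E}_Z\big[Q(a + bZ)\big] = Q\!\big(a/\sqrt{1+b^2}\big)$ for constants $a$ and $b$: writing $Q(a+bZ) = \mathbb{P}(Z' > a + bZ)$ for an independent $Z' \sim \mathcal N(0,1)$, the event is $Z' - bZ > a$, and $Z' - bZ \sim \mathcal N(0, 1+b^2)$, giving $Q(a/\sqrt{1+b^2})$. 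With $a = \bar\alpha_0/\sigma$ and $b = \bar\beta_0^{1/2}/\sigma$ this yields $Q\!\left(\frac{\bar\alpha_0/\sigma}{\sqrt{1 + \bar\beta_0/\sigma^2}}\right) = Q\!\left(\frac{\bar\alpha_0}{\sqrt{\bar\beta_0 + \sigma^2}}\right) = Q\big(\sqrt{\overline{\mathrm{SNR}}}\big)$, as claimed.

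The main subtlety — not a deep obstacle but the point requiring the most care — is the interchange of the limit in $K$ with the $n_k$-expectation, and more precisely justifying that one may apply the almost-sure convergence of Corollary~\ref{Corollary1} (which is a statement about $(\h_k^T\x_T, s_k)$) to obtain the stated convergence of the SEP, which involves an extra averaging over the channel noise $\n$. One clean way is to note that the detector error indicator $\mathbf 1\{\hat s_k \neq s_k\}$ is a function of $(\h_k^T\x_T, s_k, n_k)$, and conditioning on everything but $\n$ collapses the $n_k$-average into the deterministic function $\psi(u,v) = Q(sv u/\sigma)$ evaluated at $(u,v)=(\h_k^T\x_T,s_k)$ — so the SEP as defined in \eqref{def:SEP} is exactly $\frac1K\sum_k \psi(\h_k^T\x_T, s_k)$ before taking any limit, and Corollary~\ref{Corollary1} applies verbatim. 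I would also remark that, strictly, $\x_T$ depends on $\s$ and $\bH$ through the optimization \eqref{opt:relaxation}, but this dependence is already fully accounted for inside the statement of Corollary~\ref{Corollary1}, so nothing further is needed. The remaining ingredients are the two elementary Gaussian computations above, which I would state as a one-line lemma or simply inline.
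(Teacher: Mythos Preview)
Your proposal is correct and follows essentially the same approach as the paper: express the per-user error probability as $Q(s_k\h_k^T\x_T/\sigma)$, apply Corollary~\ref{Corollary1} with the pseudo-Lipschitz test function $\psi(u,v)=Q(uv/\sigma)$, use $SZ\overset{d}{=}Z$, and evaluate $\mathbb{E}[Q(a+bZ)]=Q(a/\sqrt{1+b^2})$ via an auxiliary independent Gaussian. The only minor difference is that the paper treats $\mathbb{P}(\hat s_k\neq s_k)$ as the unconditional probability, so after obtaining the almost-sure limit of $\frac{1}{K}\sum_k Q(s_k\h_k^T\x_T/\sigma)$ it invokes the dominated convergence theorem (using boundedness of $Q$) to pass to convergence of the expectation; your reading of $\mathbb{P}$ as conditional on $(\bH,\s)$ sidesteps that step and is arguably more consistent with the $\overset{a.s.}{=}$ in the theorem statement.
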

\begin{proof}
{The SEP of the symbol estimator $\hat{s}_k=\text{sgn}\left(\h_k^T{\x}_T+n_k\right)$ is 
\begin{equation}\label{sep:1}
\begin{aligned}
\mathbb{P}(\hat{s}_k\neq s_k)&=\mathbb{E}\left[\mathbb{P}\left(\hat{s}_k\neq s_k\mid\h_k^T{\x}_T,s_k\right)\right]\\
&=\mathbb{E}\left[\mathbb{P}\left(s_k(\h_k^T{\x}_T+n_k)\leq 0\mid\h_k^T{\x}_T,s_k\right)\right]\\
&=\mathbb{E}\left[\mathbb{P}\left( s_kn_k\leq-s_k\h_k^T{\x}_T \mid\h_k^T{\x}_T,s_k\right)\right]\\
&=\mathbb{E}\left[Q\left(\frac{s_k\h_k^T{\x}_T}{\sigma}\right)\right],
\end{aligned}
\end{equation}
where the first equality applies the law of total probability and the expectation is taken over $\h_k^T{\x}_T$ and $s_k$,}  and the last equality holds since $s_kn_k\sim\mathcal{N}(0,\sigma^2)$ as 
$s_k\sim\text{Unif}(\{-1,1\})$ and $n_k\sim\mathcal{N}(0,\sigma^2)$ are independent. It is simple to check that the function $Q(xs/\delta)$ is pseudo-Lipschitz in $(x,s)$. Hence, by specifying $\psi(x,s)=Q(xs/\delta)$ in Corollary \ref{Corollary1}, we get
\begin{equation}
\begin{aligned}
\lim_{K\to\infty}\frac{1}{K}\sum_{k=1}^KQ\bigg(\frac{s_k\h_k^T{\x}_T}{\sigma}\bigg)&\overset{a.s.}{=}\mathbb{E}\bigg[Q\bigg(\frac{\bar{\alpha}_0+\bar{\beta}_0^{\frac{1}{2}}ZS}{\sigma}\bigg)\bigg]=\mathbb{E}\bigg[Q\bigg(\frac{\bar{\alpha}_0+\bar{\beta}_0^{\frac{1}{2}}Z}{\sigma}\bigg)\bigg],
\end{aligned}
\end{equation}
where the last equality holds since $SZ\sim\mathcal{N}(0,1)$, i.e., $SZ$ follows the same distribution as 
$Z$, as $S\sim\text{Unif}(\{-1,1\})$ and $Z\sim\mathcal{N}(0,1)$ are independent. {Note that the random variable 
$\frac{1}{K}\sum_{k=1}^K Q\left(\frac{s_k\h_k^T{\x}_T}{\sigma}\right)$ is uniformly bounded since $Q(\cdot)$ is bounded. Hence, according to the dominated convergence theorem, we get
\begin{equation}
\hspace{-0.3cm}\lim_{K\to\infty}\mathbb{E}\bigg[\frac{1}{K}\sum_{k=1}^K Q\bigg(\frac{s_k\h_k^T{\x}_T}{\sigma}\bigg)\bigg]=\mathbb{E}\bigg[Q\bigg(\frac{\bar{\alpha}_0+\bar{\beta}_0^{\frac{1}{2}}Z}{\sigma}\bigg)\bigg].
 \end{equation}
 Combining this with \eqref{sep:1} gives 
\begin{equation}\lim_{K\to\infty}\frac{1}{K}\sum_{k=1}^K\mathbb{P}(\hat{s}_k\neq s_k)=\mathbb{E}\left[Q\left(\frac{\bar{\alpha}_0+\bar{\beta}_0^{\frac{1}{2}}Z}{\sigma}\right)\right].
 \end{equation}}
The remaining task is to calculate the expectation in the above equation. By the definition of $Q$-function, we have
\begin{equation}
Q\left(\frac{\bar{\alpha}_0+\bar{\beta}_0^{\frac{1}{2}}Z}{\sigma}\right)=\mathbb{P}\left(W>\frac{\bar{\alpha}_0+\bar{\beta}_0^{\frac{1}{2}}Z}{\sigma}\right),
\end{equation}
where $W\sim\mathcal{N}(0,1)$ is independent of $Z$.  It follows that 
\begin{equation}
\begin{aligned}\mathbb{E}\bigg[Q\bigg(\frac{\bar{\alpha}_0+\bar{\beta}_0^{\frac{1}{2}}Z}{\sigma}\bigg)\bigg]&=\mathbb{E}\bigg[\mathbb{P}\bigg(W>\frac{\bar{\alpha}_0+\bar{\beta}_0^{\frac{1}{2}}z}{\sigma}\bigg)\mid Z=z\bigg]\\&=\mathbb{P}\left(W>\frac{\bar{\alpha}_0+\bar{\beta}_0^{\frac{1}{2}}Z}{\sigma}\right)\\
&=\mathbb{P}\left(\sigma W-\bar{\beta}_0^{\frac{1}{2}}Z>{\bar{\alpha}_0}\right)\\
&=Q\left(\frac{\bar{\alpha}_0}{\sqrt{\bar{\beta}_0+\sigma^2}}\right),
\end{aligned}
\end{equation}
where the last equality is due to $\sigma W-\bar{\beta}_0^{\frac{1}{2}}Z\sim\mathcal{N}(0,\bar{\beta}_0+\sigma^2)$. This completes the proof.

\end{proof}
In Fig. \ref{fig_SEP}, we validate Theorem \ref{SEP} under different system and regularization parameters.  As can be observed, the actual SEP closely  matches the theoretical result given in  Theorem \ref{SEP}.
\begin{figure}
\subfigure[$\rho=0.2,\lambda=0.2$.]{\includegraphics[width=0.45\textwidth]{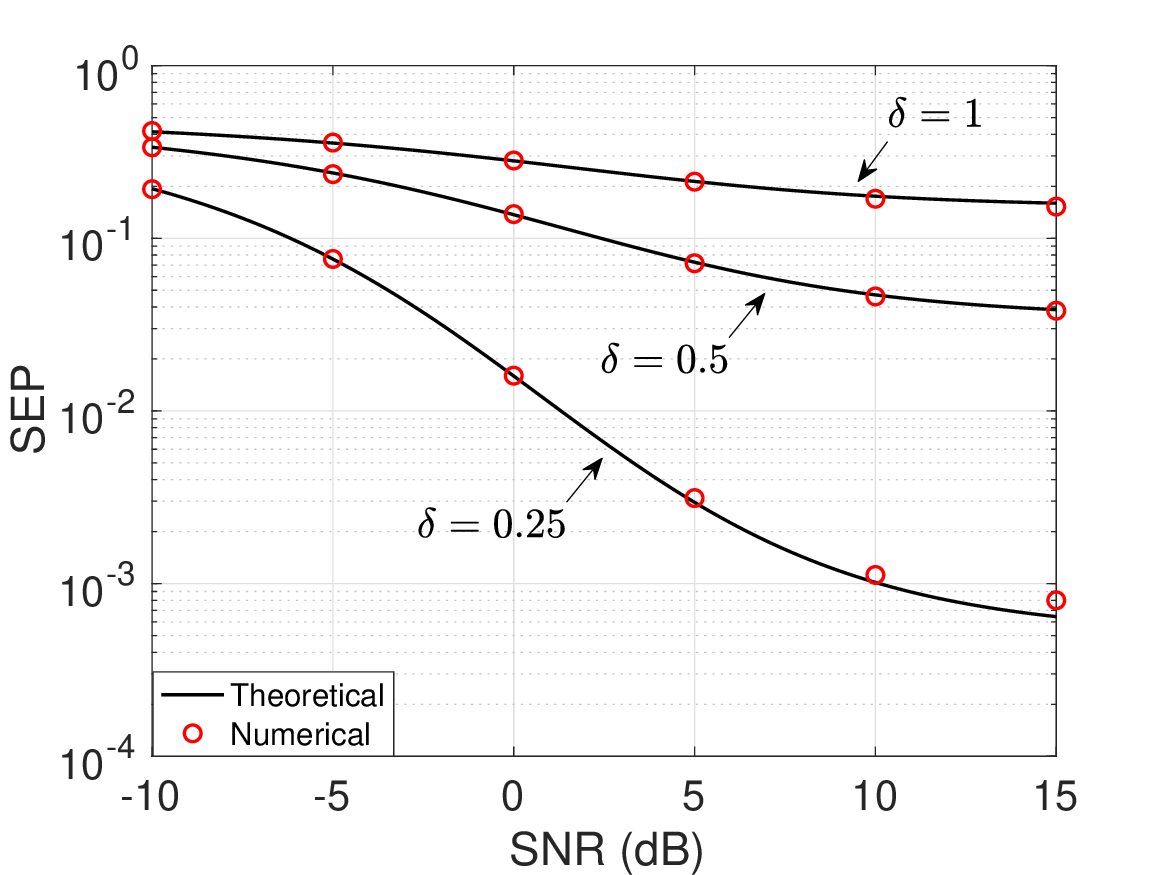}}
\subfigure[$\rho=0,\lambda=0.3$.]{\includegraphics[width=0.45\textwidth]{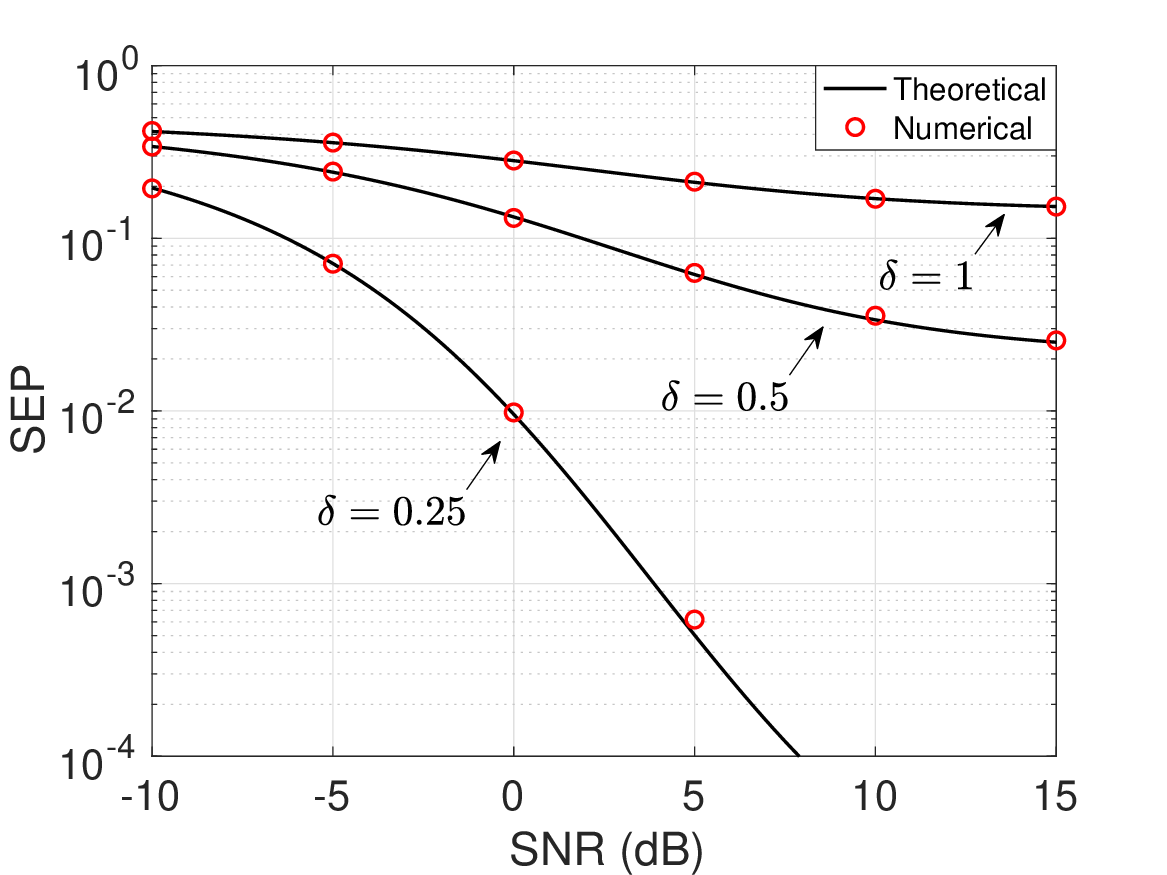}\label{SEPb}}
\centering
\caption{Theoretical and numerical SEP, where the number of transmit antennas is fixed as $N=128$. The number of users is set to  $K=\delta N$. The numerical results are averaged over $10^4$ channel realizations.}
\label{fig_SEP}
\end{figure}

A well-known phenomenon in one-bit precoding is the presence of an SEP error floor at high SNRs. This behavior is rigorously characterized by Theorem \ref{SEP}. In particular, as the SNR tends to infinity (i.e., $\sigma^2\to 0$), the resulting asymptotic SEP given by Theorem \ref{SEP} remains strictly positive for any positive system load $\delta>0$. This nonzero error floor arises because of the Gaussian distortion term $\bar{\beta}_0Z$ in \eqref{asymptotic}, which is induced by one-bit quantization and is nonzero for any $\delta>0$. Nevertheless, the SEP can be extremely small for light system loads, as shown in Fig. \ref{fig:SEP_sigma0}.  Our results provide insights into how large the system load $\delta$ can be to achieve a prescribed SEP requirement.
  For example, according to Fig. \ref{fig:SEP_sigma0}, achieving an SEP below $10^{-4}$ requires $\delta\leq 0.3$, i.e., the number of served users $K$ does not exceed $0.3N$.
\begin{figure}
\includegraphics[width=0.4\textwidth]{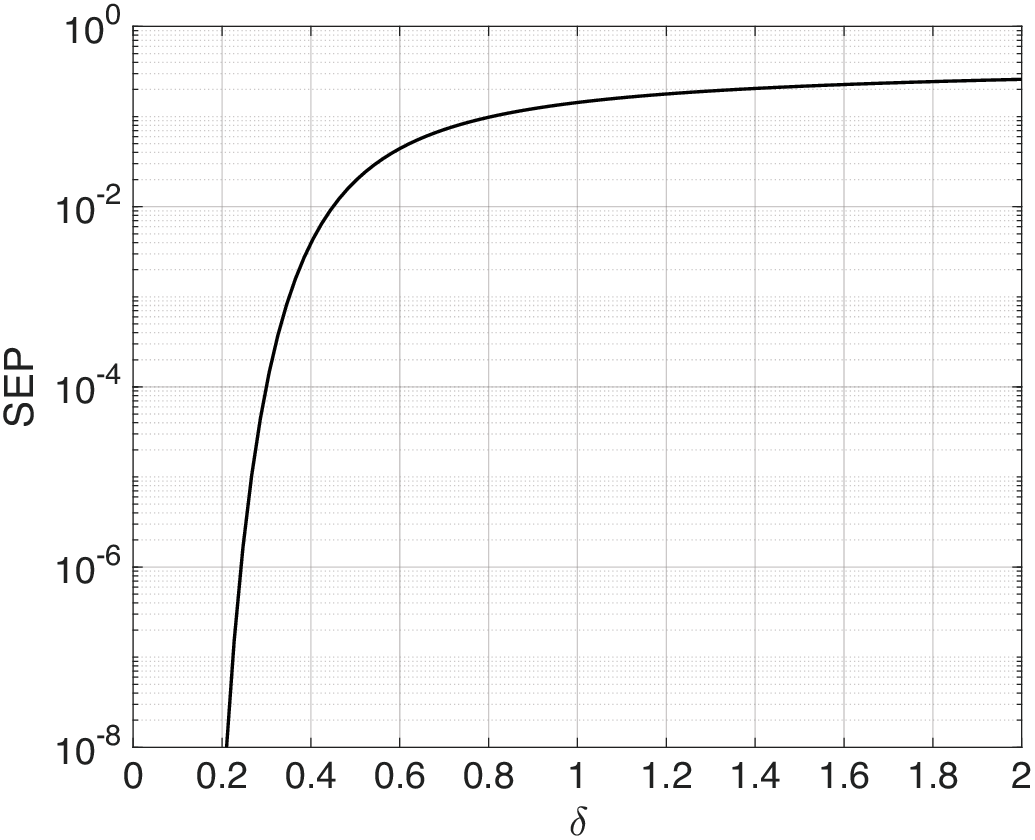}
\centering
\caption{Asymptotic SEP given by Theorem \ref{SEP} versus $\delta$ in the high-SNR limit, i.e., $\sigma=0$. The parameters $(\rho,\lambda)$ are set to their optimal values specified in Theorem \ref{mainresult:lambda_rho}.}
\label{fig:SEP_sigma0}
\end{figure}

\subsection{A Heuristic Derivation of Theorem \ref{mainresult}}\label{subsec:derivation}
In this subsection, we provide a heuristic derivation of our main result in Theorem \ref{mainresult} to illustrate the key insights. A rigorous proof is provided in Section \ref{sec:proof}.

Based on the analytical framework developed in Section \ref{subsec:framework}, analyzing $\H q(\hat{\x})$ transforms to analyzing the asymptotic empirical distribution of the following vector  with $a=a^*$:
\begin{equation}\label{Hxq2}
  \bH\tilde{\x}_{t+1}=\s-\tilde{\z}_{t+1}+\frac{1}{\delta}\left<(q\circ\eta_{a})'(\x_t+\bH^T\z_t;\gamma_{a})\right>\z_t,
  \end{equation}
  where $\eta_{a}(x,\gamma_{a})$ is the denoising function of the AMP algorithm corresponding to problem \eqref{def:xa}; see Lemmas \ref{lemma:unique} and \ref{lemma:astar} for the definitions of $a^*, \gamma_a,$  and $ \eta_a(\cdot;\gamma)$. As discussed in Section \ref{subsec:framework}, $(\x_{\leq t},\tilde{\x}_{t+1})$ and $(\z_{\leq t},\tilde{\z}_{t+1})$ are the resulting sequence of the AMP algorithm with 
  \begin{equation}\label{etat2}
\tilde{\eta}_i(x)=\left\{
\begin{aligned}
&\eta_{a}(x;\gamma_{a}),~~~~~~~\text{if }~i< t;\\%
&q\circ\eta_{a}(x;\gamma_{a}),~~~\text{if }~i=t.
\end{aligned}\right.
\end{equation}
In the following, we characterize the asymptotic empirical distribution of $\bH\tilde{\x}_{t+1}$ using the SE theory of AMP. 

Following the convention of AMP, let ${\bb}_t=\s-{\bz}_t$ and $\tilde{\bb}_{t+1}=\s-\tilde{\bz}_{t+1}$ (see \eqref{AMP_relation}). According to the AMP theory, ${\bb}_t$ and $\tilde{\bb}_{t+1}$ are approximately Gaussian distributed and are independent of $\s$, and 
\begin{equation}\label{productst}
\begin{aligned}
\lim_{N \to \infty} \langle {\bb}_t, \tilde{\bb}_{t+1} \rangle
   &\overset{a.s.}{=}\frac{1}{\delta} \lim_{N \rightarrow \infty} \langle {\x}_t, \tilde{\x}_{t+1}   \rangle,\\
   \lim_{N \to \infty} \langle {\bb}_t, {\bb}_{t} \rangle
   &\overset{a.s.}{=}\frac{1}{\delta} \lim_{N \rightarrow \infty} \langle {\x}_t, {\x}_{t}   \rangle,\\
   \lim_{N \to \infty} \langle \tilde{\bb}_{t+1}, \tilde{\bb}_{t+1} \rangle
   &\overset{a.s.}{=}\frac{1}{\delta} \lim_{N \rightarrow \infty} \langle \tilde{\x}_{t+1}, \tilde{\x}_{t+1}   \rangle;
   \end{aligned}
   \end{equation}
   see Proposition \ref{AMP_property} (ii) and \eqref{AMP_relation}. 
  With the above notations,  \eqref{Hxq2} can be rewritten as 
  \begin{equation}\label{Htildex}
\begin{aligned}
\H \tilde{\x}_{t+1}&=\frac{1}{\delta}\left<(q\circ\eta_a)'({\x}_{t}+\H^T{\bz}_{t};\gamma_a)\right>\s+\underbrace{\tilde{\bb}_{t+1}-\frac{1}{\delta}\left<(q\circ\eta_{a})'({\x}_{t}+\H^T{\bz}_{t};\gamma_a)\right>{\bb}_t}_{\text{Gaussian noise}}.
\end{aligned}
\end{equation}
Applying \eqref{productst}, we have
\begin{equation}
\begin{aligned}
  \frac{1}{K} \left\|\tilde{\bb}_{t + 1}-\frac{1}{\delta}\left<(q\circ\eta_a)'({\x}_{t}+\H^T{\bz}_{t};\gamma_a)\right>\bb_t \right\|^2 \approx  \frac{1}{\delta N}\left\|\tilde{\x}_{t+1}-\frac{1}{\delta}\left< (q\circ\eta_a)' \left({\x}_t + \bH^T{\bz}_t;\gamma_a
  \right) \right>  {\x}_t
  \right\|^2.
  \end{aligned}
\end{equation}
Now let ${\br}_{t+1}= -({\x}_{t} +\bH^T{\bz}_{t})$ (see \eqref{AMP_relation}). Then 
\begin{equation}
\begin{aligned}
{\x}_t&=\eta_a(-{\br}_{t};\gamma_a),\\
\tilde{\x}_{t+1}&=q\circ\eta_a(-{\br}_{t+1};\gamma_a).
\end{aligned}
\end{equation}
It can be shown that  the AMP algorithm with $\eta_t(x)=\eta_a(x;\gamma_a)$ converges, and thus 
$\br_{t+1}\approx\br_t$ for sufficiently large $t$; this follows from Lemma \ref{difference}. Moreover, for large $t$,   $\br_t$ is approximately distributed as $\mathcal{N}(0,\tau_a^2\mathbf{I})$, which 
 can be rigorously established by combining Proposition \ref{pro:4}  with Lemma \ref{R1:1.1}. 
   Based on the above discussions, the scaling factor in front of $\s$ in model \eqref{Htildex} satisfies
    \begin{equation}
   \begin{aligned}
  \frac{1}{\delta}\left<(q\circ\eta_a)'({\x}_{t}+\H^T{\bz}_{t};\gamma_a)\right>&=\frac{1}{\delta}\left<(q\circ\eta_a)'(-{\br}_{t+1};\gamma_a)\right>\\
   &\rightarrow\frac{1}{\delta}\mathbb{E}\left[(q\circ\eta_a)'(\tau_a Z;\gamma_a)\right]\\
   &\overset{(a)}{=}\frac{1}{\delta\tau_a}\mathbb{E}\left[Z\, q\left(\eta_{a}(\tau_aZ;\gamma_a)\right)\right]=\bar{\alpha}_*,
   \end{aligned}
   \end{equation}
   where $Z\sim\mathcal{N}(0,1)$ and (a) follows from the Stein's lemma. The variance of the Gaussian noise in \eqref{Htildex} can be calculated as  
  \begin{equation}
   \begin{aligned}
  & \frac{1}{\delta N}\left\|\tilde{\x}_{t+1}-\frac{1}{\delta}\left< (q\circ\eta_a)' \left({\x}_t + \bH^T{\bz}_t;\gamma_a
  \right) \right>  {\x}_t
  \right\|^2\\
  &= \frac{1}{\delta N}\big\|q\circ\eta_a(-\br_{t+1};\gamma_a)-\frac{1}{\delta}\left< (q\circ\eta_a)' \left(-\br_{t+1};\gamma_a\right) \right>  \eta_a(-\br_t;\gamma_a) \big\|^2\\
&\rightarrow \frac{1}{\delta}\mathbb{E}\left[\left(q\circ\eta_a(\tau_a Z;\gamma_a)-\bar{\alpha}_*\eta_a(\tau_aZ;\gamma_a)\right)^2\right]=\bar{\beta}_*.
\end{aligned}
  \end{equation}
  Combining the above, for sufficiently large $t$, we obtain 
  \begin{equation}
  \bH q(\x_a)\approx\bH \tilde{\x}_{t+1}\approx\bar{\alpha}_*\, \s+\bar{\beta}_*^{\frac{1}{2}}\, \mathbf{z},
  \end{equation}
  where $\mathbf{z}\sim\mathcal{N}(\mathbf{0},\mathbf{I})$ is independent of $\s$. This, together with the fact that $\bH q(\hat{\x})\approx\bH q(\x_a)$, gives the result in Theorem \ref{mainresult}.

     \subsection{Comparison with \cite{ma2025}}\label{subsec:compare}
A recent work \cite{ma2025} also provides an asymptotic performance analysis of nonlinear one-bit precoding, which is closely related to this paper. However, our work differs from \cite{ma2025} in the following three key aspects. 
\begin{itemize}
\item[(i)] \textit{Different optimization models:} In \cite{ma2025}, the authors focus on one-bit box-constrained precoding scheme, where the one-bit transmit signal is obtained by quantizing the solution to the following box-constrained problem: 
\begin{equation}\label{boxprecoder}
\min_{\x\in[-a,a]^N}\,\frac{1}{N}\|\s-\H\x\|^2+\frac{\rho}{N}\|\x\|^2.
\end{equation}
In this formulation, the threshold $a$ is treated as a fixed parameter. As shown in \cite{ma2025}, the performance of the resulting one-bit precoder is highly sensitive to the choice of $a$, which must be specified as a priori and carefully tuned. In practical systems, however, prior knowledge of the optimal threshold $a$ is typically unavailable.
 In contrast, our approach adopts a penalized formulation in \eqref{opt:relaxation} that incorporates the box constraint implicitly through an $\ell_{\infty}^2$-regularization term. These two relaxation models are closely related: for any $\lambda>0$, there exists a corresponding $a(\lambda)$ such that the two problems yield identical solutions, and vice versa. 

\item[(ii)] \textit{Different analytical frameworks:} The analytical tool in \cite{ma2025} is the convex Gaussian min-max theorem (CGMT), where the authors innovatively connect $\bH\text{sgn}(\x_a)$, i.e., the variable under investigation, to the solution of a min-max optimization problem and dedicatedly tailor a variant of CGMT to analyze its distribution. However, the Gaussian process proposed in \cite{ma2025} is tailored explicitly to the  sign function, and its  analytical framework is not readily  generalizable to a broader class of nonlinear functions $q(\cdot)$, as each choice of $q(\cdot)$ typically requires redesigning the associated Gaussian process.   In contrast, our analysis is based on the AMP theory, which provides a more versatile and systematic approach for handling general nonlinear functions.  As discussed in Remark \ref{remark:generaility},  our appraoch provides a general framework for analyzing the effects of nonlinear post-processing applied to convex optimization solutions and thus support broader potential applications, such as analyzing other nonlinear effects at the transmitter sides (e.g.,   multi-bit quantization and hardware impairments).

\item[(iii)] \textit{Optimization of the regularization parameters:} {The performance of the proposed precoder and that of the precoder in \cite{ma2025} is governed by the parameters $(\rho, \lambda)$ in \eqref{opt:relaxation} and $(\rho, a)$ in \eqref{boxprecoder}, respectively. In particular, when $(\rho,\lambda)$ and $(\rho,a)$ are both set to their optimal values, the two convex relaxations achieve identical performance. While \cite{ma2025} explores the impact of $(\rho, a)$ on system performance through numerical simulations, it does not provide an analytical characterization of their optimal values. In contrast, our approach yields closed-form expressions for the optimal $(\rho, \lambda)$ that minimize the SEP, derived from the asymptotic analysis; see Section \ref{sec:4a} for details. Notably, our analysis reveals that the optimal $\rho$ is $\hat{\rho} = 0$, highlighting the optimality of the $\ell_\infty^2$ regularization. We further provide numerical evidence to show that the regularizer $\lambda\|\mathbf{x}\|_\infty^2$ is potentially optimal among a wide class of convex regularization functions; see Section  \ref{Conjecture}.}

\end{itemize}
\section{Optimal Design based on Asymptotic Analysis}\label{subsubsec:parameter}

In this section, we first investigate the optimal selection of regularization parameters $(\rho,\lambda)$ for the relaxation model in \eqref{opt:relaxation} by leveraging the asymptotic analysis in Section \ref{sec:4a}. Then, in Section \ref{Conjecture}, we go beyond the current model and discuss the optimal design under a more general convex relaxation framework. We specifically focus on the case of one-bit quantization, namely, $q(\cdot)=\text{sgn}(\cdot)$, {with BPSK constellation }in this section. 

\subsection{Optimization of the Regularization Parameters}\label{sec:4a}
As shown in Theorem \ref{SEP}, the asymptotic SEP depends only on the system parameters (i.e., the user-antenna ratio $\delta$ and the variance of channel noise $\sigma^2$) and the regularization parameters $(\rho,\lambda)$ in model \eqref{opt:relaxation}. Therefore, for a given system, $(\rho,\lambda)$ can be optimized to minimize the  SEP by solving the following optimization problem:

\begin{equation}\label{opt:parameter}
\begin{aligned}
\min_{\rho\geq 0, \,\lambda\geq 0}\,&Q\left(\sqrt{\overline{\text{SNR}}}\right),
\end{aligned}
\end{equation}
where $\overline{\text{SNR}}$ is defined in \eqref{barSNR}.
\begin{remark}
Rigorously speaking, our analysis builds on the assumption that both $\rho$ and $\lambda$ are  positive. In problem \eqref{opt:parameter}, however, we consider the closed feasible set $\{(\rho,\lambda)\mid\rho\geq 0,\,\lambda\geq 0\}$ to ensure that the optimal solution is attainable. We remark that our asymptotic analysis can be directly extended to  the case $\lambda=0$, which corresponds to $a^*=\infty$. 
However,  for technical reasons, we cannot remove the positive assumption on $\rho$. Nevertheless, numerical results suggest that our analytical result still holds true for $\rho=0$; see Fig. \ref{SEPb}. 
\end{remark}
The following theorem gives the optimal solution to \eqref{opt:parameter}.
{
\begin{theorem}[Optimal Regularization Parameters for {One-Bit Precoding}]\label{mainresult:lambda_rho}
Let 
$\overline{\text{\normalfont SNR}}$ be defined as in \eqref{barSNR}. The optimal solution to \eqref{opt:parameter} is given by
\begin{equation}
\hat{\rho}=0
\end{equation}
and 
\begin{equation}
\hat{\lambda}=\frac{2\big(1-\frac{2\Phi(\hat{z})-1}{\delta}\big)\bigg(\phi(\hat{z})-\hat{a}\hat{\tau}(1-\frac{2\Phi(\hat{z})-1}{\delta})(1-\Phi(\hat{z}))\bigg)}{\hat{z}\left(1+\frac{2\hat{a}^2}{\delta}(1-\Phi(\hat{z}))\right)},
\end{equation}
where $\hat{a}=\left({\hat{z}^{-2}}-h(\hat{z})\right)^{-\frac{1}{2}}$ and $\hat{\tau}=(1-\hat{z}^{2}h(\hat{z}))^{-\frac{1}{2}}$. Here, $\hat{z}$ is the unique solution to $\zeta(z)=0$ in $(0,z_0)$, where 

\begin{equation}\label{def:zetaz}
\begin{aligned}
\zeta(z)=&\sqrt{\frac{\pi}{2}}(2\Phi(z)-1)+\sqrt{2\pi}z^2\bigg(\Phi(z)+\frac{\phi(z)}{z}-1\bigg)+\left(\frac{\pi}{2}\delta(1+\sigma^2\delta)-1\right)z-\sqrt{\frac{\pi}{2}}\delta;
\end{aligned}
\end{equation}
$z_0$ is defined as the unique positive solution\footnote{The uniqueness of $z_0$ and $\hat{z}$ is proved in Appendix \ref{proof:lambda_rho}.} to $z^2h(z)-1=0$ when $\delta\in(0,1)$, 
where 
\begin{equation}\label{def:hz}
h(z)=\frac{2\Phi(z)-1}{\delta z^2}-\frac{2}{\delta}\left(\frac{\phi(z)}{z}+\Phi(z)-1\right), 
\end{equation}
and  $z_0=\infty$ when $\delta\in[1,\infty)$. 
\end{theorem}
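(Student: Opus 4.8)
The plan is to reduce the two-variable optimization in \eqref{opt:parameter} to an equivalent problem over a single scalar variable and then identify the stationary point. The key observation is that, by Theorem \ref{SEP}, minimizing $Q(\sqrt{\overline{\text{SNR}}})$ is equivalent to maximizing $\overline{\text{SNR}}=\bar\alpha_0^2/(\bar\beta_0+\sigma^2)$, where the quantities $\bar\alpha_0$, $\bar\beta_0$, and the truncation level $a^*$ all depend on $(\rho,\lambda)$ only through the fixed-point system \eqref{condition0} and the minimization \eqref{minfa}. First I would introduce a convenient change of variables: instead of parametrizing by $(\rho,\lambda)$, parametrize directly by $(a,\gamma)$ (equivalently by $(a,\tau)$), which is a bijection onto the relevant range since \eqref{condition0b} determines $\rho$ from $(a,\gamma,\tau)$ and \eqref{conditiona} determines $\tau$ from $(a,\gamma)$, while $\lambda$ can then be recovered from the first-order stationarity condition $f'(a^*)=0$ of Lemma \ref{lemma:astar}. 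Writing $z:=a^*/\tau_*$ as in the statement turns out to be the natural single scalar: one can express $\bar\alpha_0$, $\bar\beta_0$, $a^*$, $\tau_*$, and $\rho$ all as explicit functions of $z$ (and $\delta,\sigma^2$) using the closed forms $\mathbb{E}[\eta_{a^*}^2(\tau_*Z;\gamma_*)]$, $\mathbb{P}(\tau_*Z/(\gamma_*+1)\in[-a^*,a^*])$, and $\mathbb{E}[|\hat X|]$ for truncated Gaussians, which produce the functions $h(z)$ and the residual $\zeta(z)$ appearing in \eqref{def:hz} and \eqref{def:zetaz}.

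The main steps, in order, would be: (1) use Stein's lemma and the truncated-Gaussian moment formulas to write $\bar\alpha_0=\sqrt{2/\pi}/(\delta\tau_*)$ and $\bar\beta_0$ explicitly in terms of $(a^*,\tau_*,\gamma_*)$; (2) eliminate $\gamma_*$ in favor of $(a^*,\tau_*)$ via \eqref{conditiona}, and then show the ratio $\overline{\text{SNR}}$, as a function of the two remaining free parameters, can be written purely in terms of the scalar $z=a^*/\tau_*$ together with one ``scale'' degree of freedom; (3) prove that maximizing over the scale degree of freedom forces $\rho\to 0$ — this is where one shows $\hat\rho=0$, by checking that $\partial\,\overline{\text{SNR}}/\partial\rho<0$ (equivalently that shrinking the $\ell_2$ penalty always improves the effective SNR, since $\rho>0$ only adds a bias term $\delta\rho^2\tau_*^2/\gamma_*^2$-type contribution without a compensating gain); (4) with $\rho=0$ fixed, reduce to the one-variable problem in $z$, compute $\frac{d}{dz}\overline{\text{SNR}}$, and show the stationarity condition is exactly $\zeta(z)=0$; (5) establish that $\zeta$ has a unique root $\hat z$ in $(0,z_0)$ — using monotonicity/convexity of $\zeta$ on that interval and a sign check at the endpoints — and that this root is a maximizer; (6) back-substitute to recover $\hat a=(\hat z^{-2}-h(\hat z))^{-1/2}$, $\hat\tau=(1-\hat z^2 h(\hat z))^{-1/2}$, and finally $\hat\lambda$ from $f'(\hat a)=0$, which after simplification gives the displayed formula.

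I expect Step (3) — rigorously proving $\hat\rho=0$ — to be the main obstacle, because $\rho$ enters the fixed-point system \eqref{condition0} implicitly through both equations, so one cannot simply differentiate a closed form; instead one must argue via the monotone dependence of $(\tau_a^2,\gamma_a)$ on $\rho$ (using the uniqueness from Lemma \ref{lemma:unique} and an implicit-function/monotonicity argument) and then track how $\bar\alpha_0^2/(\bar\beta_0+\sigma^2)$ responds. A clean way around the implicitness is to note that the feasible set of achievable triples $(a^*,\tau_*,\bar\beta_0)$ as $(\rho,\lambda)$ range over $\mathbb{R}_{\ge 0}^2$ coincides with a region whose boundary $\rho=0$ is where, for each fixed $z$, the value $\bar\beta_0$ is smallest and $\bar\alpha_0$ largest; pushing to this boundary can only increase $\overline{\text{SNR}}$. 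Step (5), the uniqueness of $\hat z$ and $z_0$, is the other delicate point but is essentially a one-dimensional calculus exercise deferred to Appendix \ref{proof:lambda_rho}: one shows $z^2 h(z)-1$ is strictly decreasing (when $\delta<1$) so $z_0$ is well defined, and that $\zeta$ changes sign exactly once on $(0,z_0)$ by analyzing $\zeta'$ and the limits $\zeta(0^+)<0$, $\zeta(z_0^-)>0$. The remaining back-substitution in Step (6) is routine algebra.
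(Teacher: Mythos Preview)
Your overall roadmap --- reduce to a single scalar $z$, locate the stationary point via $\zeta(z)=0$, then back-substitute for $(\hat a,\hat\tau,\hat\lambda)$ --- matches the paper's, and Steps (4)--(6) are essentially what the paper does. However, there is a genuine gap in how you propose to handle Step (3), proving $\hat\rho=0$, and a related mis-specification of the scalar variable.

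The paper does \emph{not} attempt to show $\partial\,\overline{\text{SNR}}/\partial\rho<0$ or to track monotone dependence of $(\tau_a^2,\gamma_a)$ on $\rho$ through the implicit system; your own remarks correctly flag that this would be painful, and your ``boundary'' heuristic (that at $\rho=0$, $\bar\beta_0$ is smallest and $\bar\alpha_0$ largest for each fixed $z$) is not obviously true and is never established. Instead, the paper uses a relaxation trick: it rewrites \eqref{opt:parameter} as a constrained problem in $(\lambda,\rho,\tau^2,\gamma,a)$ with three equality constraints \eqref{34a}--\eqref{34c}, observes that $(\lambda,\rho)$ appear only in \eqref{34b}--\eqref{34c}, and \emph{drops} those two constraints to obtain a relaxed problem over $(\tau^2,\gamma,a)$ subject only to \eqref{34a} and $\gamma\ge0$, $a>0$. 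The natural scalar is then $z:=a(\gamma+1)/\sqrt{\tau^2}$ (not $a/\tau$; they coincide only after $\gamma=0$ is known). After eliminating $a$ via \eqref{34a}, one shows (i) the objective is strictly decreasing in $z$ for fixed $\tau^2$, and (ii) $z^2h(z)$ is strictly \emph{increasing} (your proposal says ``$z^2h(z)-1$ is strictly decreasing,'' which is the wrong sign). Together these force the constraint $\gamma\ge0$, rewritten as $\tau^2 z^2 h(z)\ge \tau^2-1$, to be \emph{active} at the relaxed optimum, i.e., $\hat\gamma=0$. Then $\hat\rho=0$ drops out immediately from \eqref{34b}, and one only has to verify a posteriori that the recovered $\hat\lambda$ from \eqref{34c} is nonnegative (done via convexity of $\bar f(a)$ and $\lim_{a\to\infty}\bar f'(a)=0$). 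Since the relaxed optimum is feasible for the original problem and attains the relaxed lower bound, it is globally optimal.

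So the missing idea is precisely this relax-then-verify maneuver, which sidesteps all implicit differentiation in $\rho$; without it, your Step (3) does not go through as written.
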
}
\begin{proof}
See Appendix \ref{proof:lambda_rho}. 
\end{proof}
\begin{figure}
\subfigure[Theoretical and numerical SEP versus $\lambda$, where $\rho=0$.]{\includegraphics[width=0.43\textwidth]{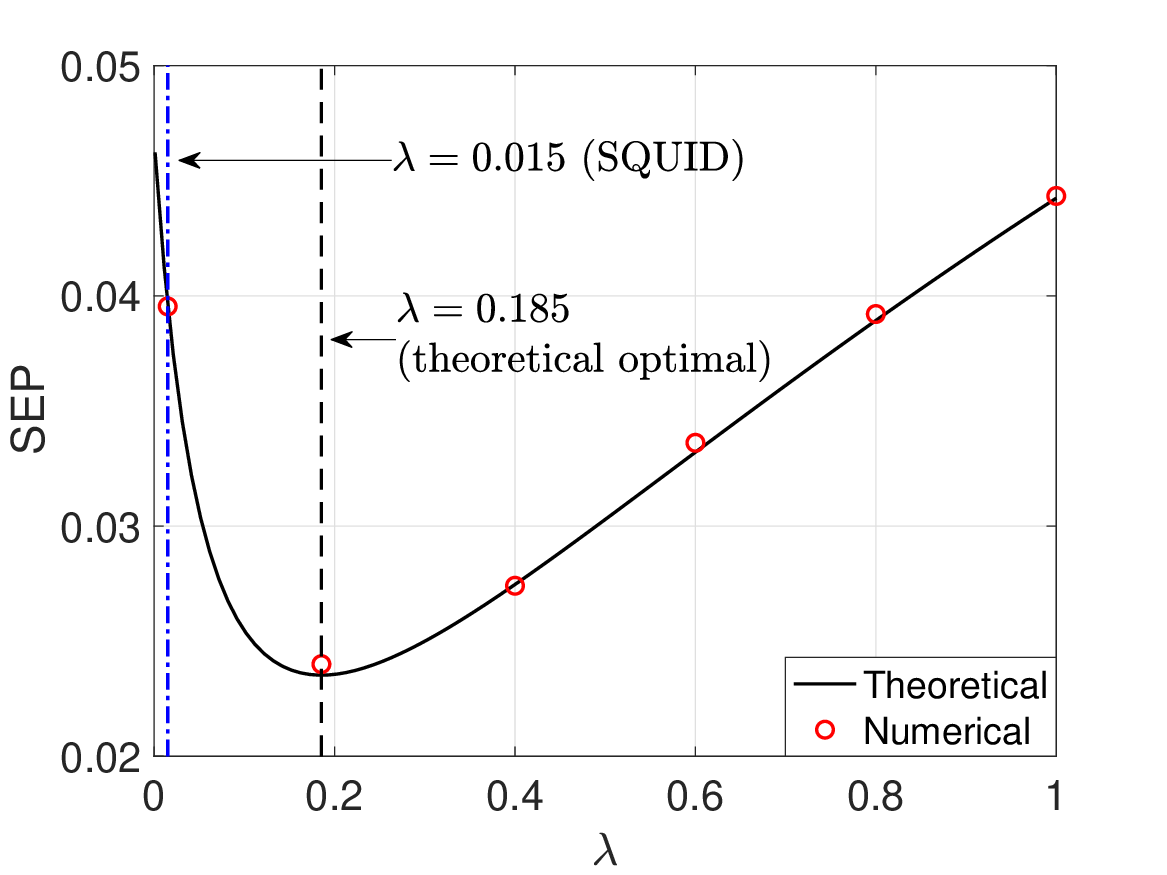}\label{lambda}}
\subfigure[Theoretical and numerical SEP versus $\rho$, where $\lambda=0.185$.]{\includegraphics[width=0.43\textwidth]{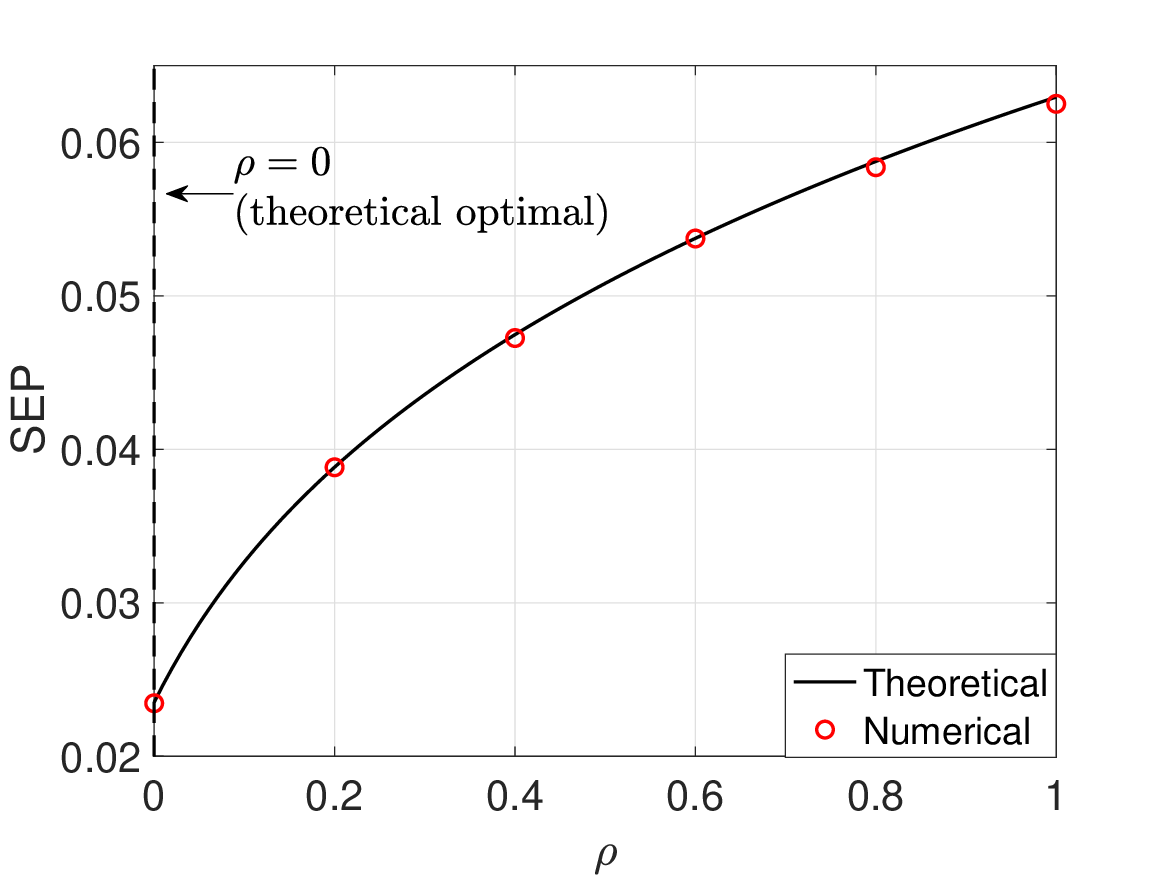}\label{rho}}
\centering
\caption{Theoretical and numerical SEP versus the regularization parameters $(\rho,\lambda)$, where  $N=128$, $K=64$ (i.e., $\delta=0.5)$, and SNR$=15$\,dB.The numerical results are averaged over $10^4$ channel realizations.}
\label{fig_parameter}
\end{figure}

In Fig. \ref{fig_parameter}, we {demonstrate} the optimal parameters given in Theorem \ref{mainresult:lambda_rho}. Specifically, in Fig. \ref{lambda}, we fix $\rho$ to its optimal value and depict the SEP versus $\lambda$; in Fig. \ref{rho}, we fix $\lambda$ to its optimal value and vary $\rho$. We also mark the regularization parameters of the SQUID precoder in Fig. \ref{lambda} for comparison.  As expected, the theoretical SEP gives an accurate prediction of the actual SEP as the regularization parameters vary. In addition, the optimal parameters specified in Theorem \ref{mainresult:lambda_rho} yield the minimum SEP and demonstrate a performance gain over the classical SQUID precoder.

\begin{figure}
\includegraphics[width=0.48\textwidth]{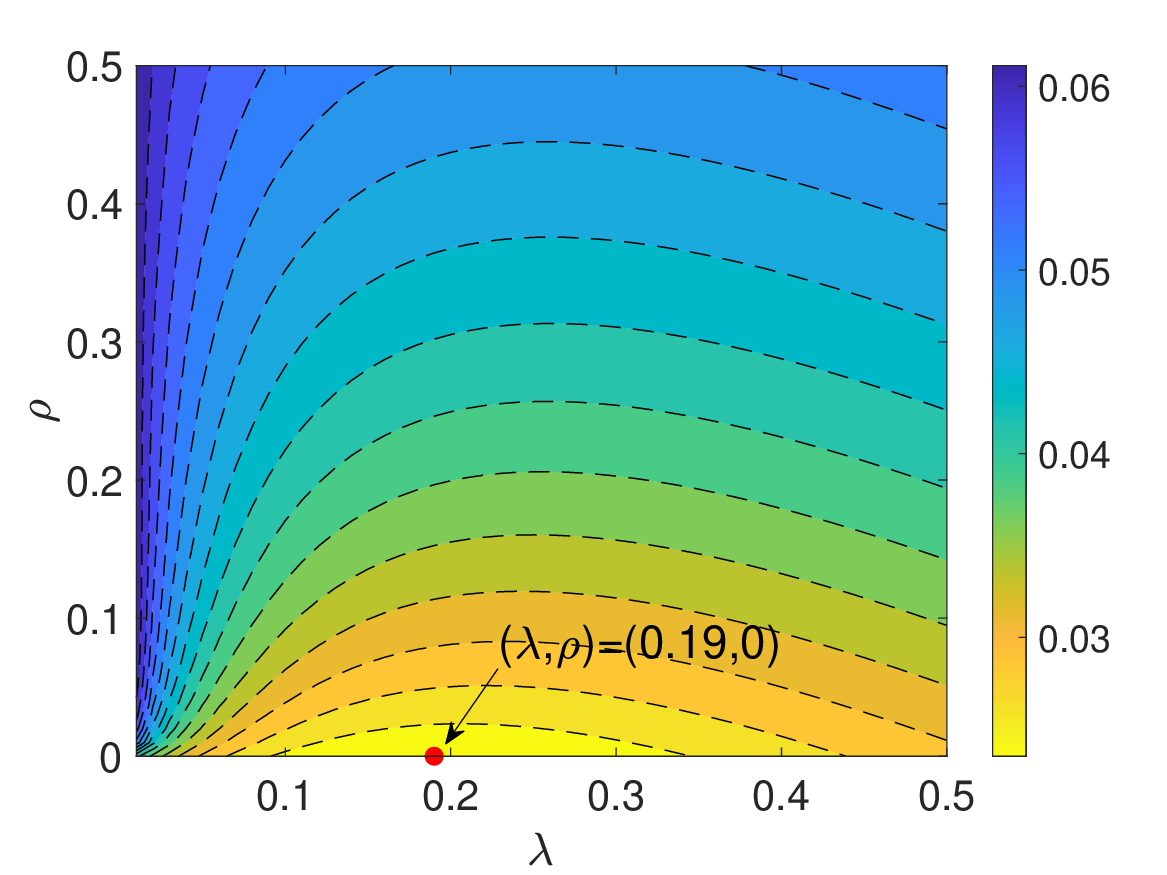}
\centering
\caption{Contour plot of the theoretical SEP   over the $(\rho,\lambda)$ space, where $\delta=0.5$ and SNR$=15$\,dB. The plot is generated by grid search with a step size of $0.01$.}
\label{J_parameter_2dim}
\end{figure}
\begin{figure}
\includegraphics[width=0.42\textwidth]{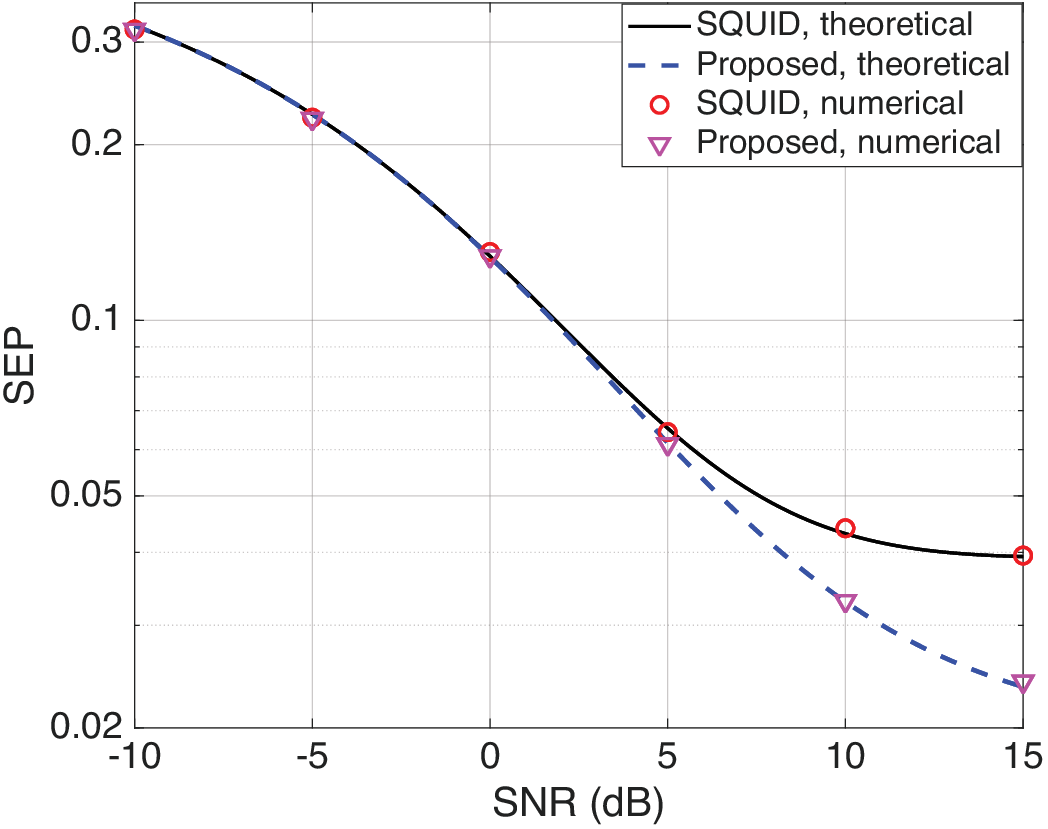}
\centering
\caption{{Theoretical and numerical SEP of SQUID and the proposed precoder (with parameters specified by  Theorem \ref{mainresult:lambda_rho}) versus the SNR, where $N=128$ and $K=64$ (i.e., $\delta=0.5$).  The numerical results are averaged over $10^4$ channel realizations.}}
\label{SQUID_SNR}
\end{figure}

 {For a more complete view, Fig. \ref{J_parameter_2dim} depicts the contour plot of the theoretical SEP over the two-dimensional $(\rho,\lambda)$ parameter space. As shown in the figure, the theoretical SEP is minimized exactly at the optimal parameters given in Theorem \ref{mainresult:lambda_rho}, further validating our analysis.}

In Fig. \ref{SQUID_SNR}, we further compare  SQUID with the proposed optimal precoder, where $\rho$ and $\lambda$ are set to their optimal values according to Theorem \ref{mainresult:lambda_rho}, over a range of SNRs. It can be observed that SQUID achieves near-optimal performance in the low-SNR regime. As the SNR increases, the proposed precoder yields increasingly larger performance gains.

\subsection{Conjecture on the Optimal Regularizer}\label{Conjecture}
Theorem  \ref{mainresult:lambda_rho} gives the optimal regularization parameters for the  convex relaxation model in  \eqref{opt:relaxation} {in the context of one-bit precoding}.  This naturally leads to a further question: can we identify the optimal form of the regularization function?
 Answering this  question is  important  for both theoretical understanding and algorithm design of one-bit precoding. We provide some preliminary discussions in the following. 

{Consider the following one-bit precoding scheme:  the transmit signal is 
\begin{equation}
\mathbf{x}_T = \operatorname{sgn}(\hat{\mathbf{x}}),
\end{equation}
where $\hat{\mathbf{x}}$ is the solution to a convex relaxation of the discrete optimization problem in \eqref{opt:discrete}, specifically:  
\begin{equation}
\hat{\x}=\arg\min_{\mathbf{x}}~\frac{1}{N}\|\mathbf{s} - \mathbf{H}\mathbf{x}\|^2 + R(\mathbf{x}).
\end{equation}
Here, $R(\mathbf{x})$ is a convex regularization function that plays a central role in shaping the performance of the precoding scheme. Identifying an optimal choice of $R(\mathbf{x})$ is thus of fundamental importance in the design of one-bit precoders. Our main result, Theorem~\ref{mainresult:lambda_rho}, establishes that the regularizer $R(\mathbf{x}) = \hat{\lambda}\|\mathbf{x}\|_\infty^2$ is optimal within the class of functions of the form $R(\mathbf{x}) = \rho \frac{\|\mathbf{x}\|^2}{N} + \lambda\|\mathbf{x}\|_\infty^2$. To explore the generality of this result, we further evaluate in Fig. \ref{fig_l1l4} the SEP performance of two extended regularizers of the form $R(\mathbf{x}) = \rho\, r(\mathbf{x}) + \lambda\|\mathbf{x}\|_\infty^2$, where $r(\mathbf{x})=\frac{1}{N}\|\mathbf{x}\|_4^4$ and $r(\mathbf{x})=\frac{1}{N}\sum_{i=1}^N h_\mu(x_i)$, with $h_\mu(x)$ denoting the Huber function with parameter $\mu$. 

Interestingly, for both cases, the SEP is minimized at $\rho=0$. Moreover, this observation remains consistent across all tested configurations of $\sigma>0$ and $\delta>0$. These findings support the conjecture that $R(\mathbf{x}) = \hat{\lambda}\|\mathbf{x}\|_\infty^2$ remains optimal over a considerably broader class of convex regularizers satisfying symmetry and (asymptotic) separability conditions. A rigorous theoretical justification of this conjecture is left for future work.}

 \begin{figure}
\subfigure[$r(\x)=\frac{1}{N}\sum_{i=1}^Nh_\mu(x_i)$,~$\mu=0.1$.]{\includegraphics[width=0.45\textwidth]{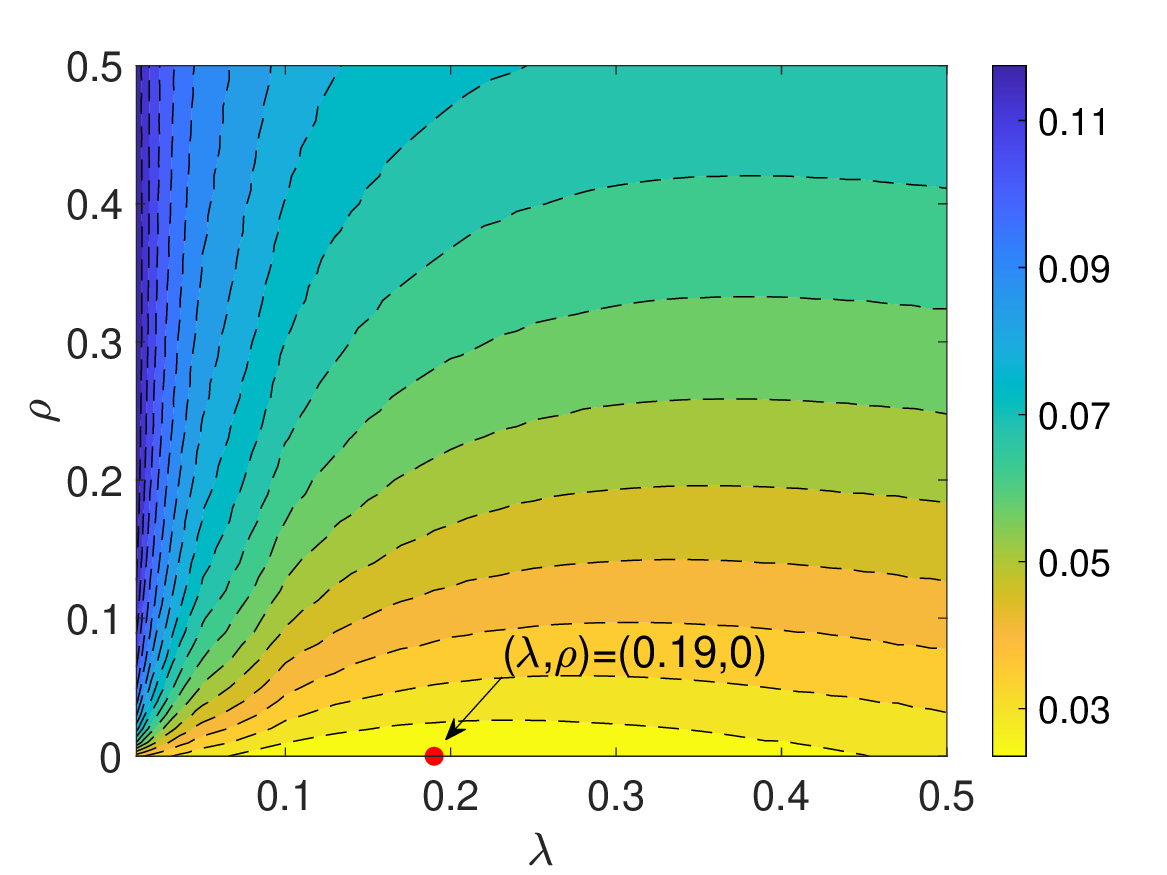}\label{J_parameter_2dim_huber}
}
\subfigure[$r(\x)=\frac{1}{N}\|\x\|_4^4$.]{\includegraphics[width=0.45\textwidth]{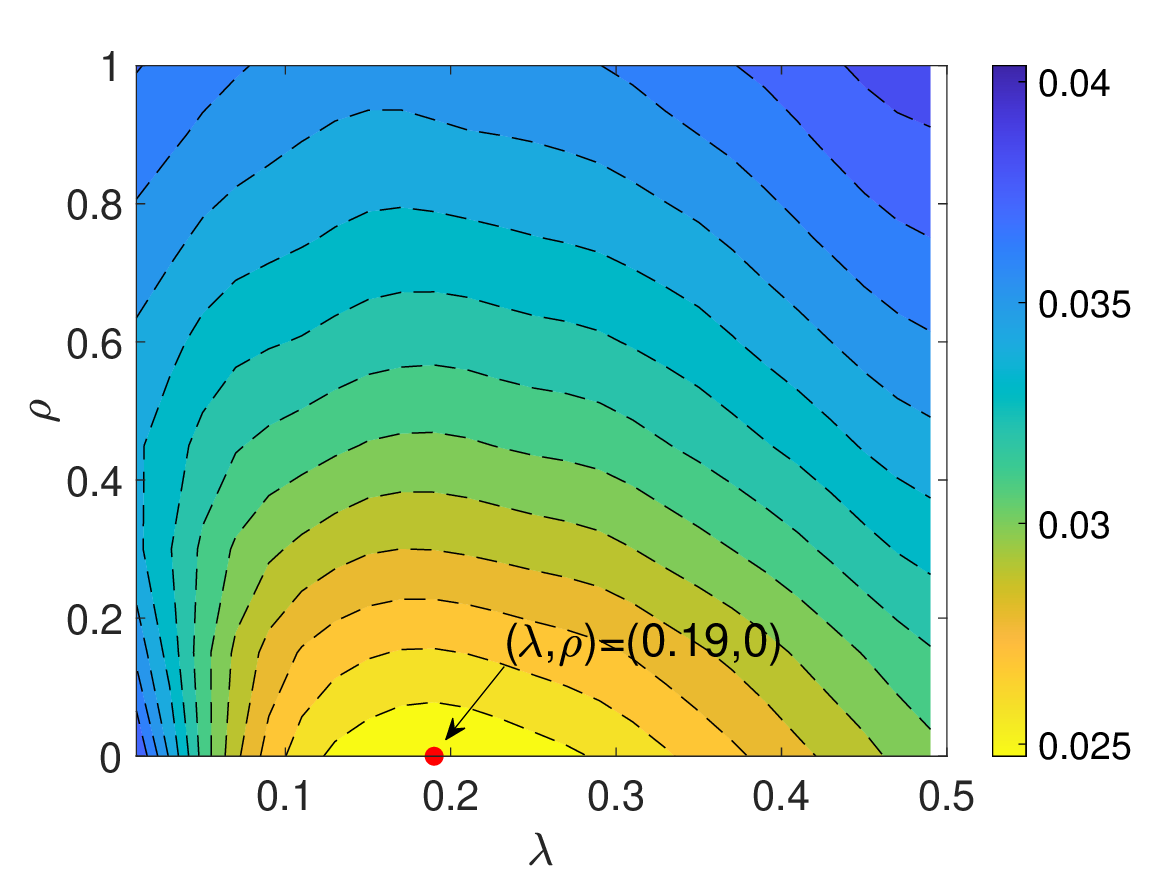}\label{J_parameter_2dim_l4}}
\centering
\caption{Contour plot of the SEP over the $(\rho,\lambda)$ space for regularizers of the form $R(\x)={\rho}\,r(\x)+\lambda\|\x\|_\infty^2$, where $\delta=0.5$ and SNR$=15$\,dB.  Fig. \ref{J_parameter_2dim_huber} shows the theoretical asymptotic SEP for the Huber-regularized model, which is obtained via the state evolution theory of AMP (which closely matches empirical performances). The plot is generated by grid search with a step size of $0.01$ for both $\lambda$- and $\rho$-axes. Fig. \ref{J_parameter_2dim_l4} shows the empirical SEP for the $\ell_4$-regularized model,  averaged over $10^4$ channel realizations with $N=128$; here, the state evolution is not included due to the lack of a closed-form proximal operator (although a numerical evaluation is possible). The grid step is $0.01$ for $\lambda$-axis and $0.2$ for $\rho$-axis.}
\label{fig_l1l4}
\end{figure}

   \section{Proof of Theorem \ref{mainresult}}\label{sec:proof}
In this section, we give the proof of our main result, namely, Theorem \ref{mainresult}. As discussed in Section \ref{subsec:framework}, our proof contains two main steps. In Section \ref{sec:auxiliary}, we introduce an auxiliary problem to facilitate the analysis. Then we analyze the auxiliary problem in Section \ref{sec:AMP}.
\subsection{Auxiliary Problem}\label{sec:auxiliary}
Building on the analytical framework discussed in Section \ref{subsec:framework}, we first analyze the AMP algorithm corresponding to the box-constrained problem in \eqref{def:xa}, {defined as follows:}
\begin{equation}\label{AMP_copy}
\begin{aligned}
\x_{t+1}&=\eta_a(\x_{t}+\H^T\bz_t;\gamma_a),\\
\bz_t&=\s-\H\x_t+\frac{1}{\delta}\left<\eta_a'(\x_{t-1}+\H^T\bz_{t-1};\gamma_a)\right>\bz_{t-1},
\end{aligned}
\end{equation}
where $\eta_a(x;\gamma)$ is given in Lemma \ref{lemma:unique}, $\gamma_a$ is a solution\footnote{In Appendix \ref{app:exist}, we have shown the existence of $\gamma_a$,  and thus the AMP algorithm in \eqref{AMP_copy} is well-defined. We remark that till now, the uniqueness of $\gamma_a$ has not yet be proven.} to \eqref{condition0},  {and the initial condition is set as $\x_0=\mathbf{0}$ (any variable with negative index is also assumed to be zero).} 
Following a  procedure  similar to that  in \cite{LASSO, elasticnet}, we establish the following result.
\begin{proposition}\label{prob:converge}
The following results hold under Assumption \ref{ass}  (i) -- (iii):
\begin{enumerate}
\item[(i)] The sequence $\{\x_t\}_{t\geq 0}$ generated by the AMP algorithm in \eqref{AMP_copy} satisfies 
\begin{equation}
\lim_{t\to\infty}\lim_{N\to\infty}\frac{\|\x_t-\x_a\|^2}{N}\overset{a.s.}{=}0,
\end{equation}
 where $\x_a$ {denotes the solution to the box-constrained optimization problem in} \eqref{def:xa}.
\item[(ii)] For any pseudo-Lipschitz function $\varphi:\R\rightarrow \R$, 
\begin{equation}
\lim_{N\to\infty}\frac{1}{N}\sum_{i=1}^N\varphi(x_{a,i})\overset{a.s.}{=}\mathbb{E}\left[\varphi(X_a)\right],
\end{equation}
where $x_{a,i}$ denotes the $i$-th element of $\x_a$, 
\begin{equation}\label{def:Xa}
X_a=\eta_a(\tau_aZ;\gamma_a), ~Z\sim\mathcal{N}(0,1),
\end{equation}
and $(\tau_a^2,\gamma_a)$ is a solution to \eqref{condition0}.
\end{enumerate}
\end{proposition}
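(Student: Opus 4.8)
\textbf{Proof proposal for Proposition \ref{prob:converge}.}

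The plan is to follow the now-standard two-layer argument used for AMP-based analyses of convex problems (as in \cite{LASSO,elasticnet}): first invoke the state evolution theorem to control the AMP iterates $\x_t$ for each fixed $t$, then show that the AMP sequence converges (as $t\to\infty$) to the minimizer $\x_a$ of the box-constrained problem \eqref{def:xa}. For part (ii), the key observation is that $\eta_a(\cdot;\gamma_a)$ is a Lipschitz (in fact $1/(\gamma_a+1)$-Lipschitz) separable denoiser, so the classical state evolution result of Bayati--Montanari applies directly: for any pseudo-Lipschitz $\varphi$ one has $\frac{1}{N}\sum_i\varphi(x_{t+1,i})\xrightarrow{a.s.}\mathbb{E}[\varphi(\eta_a(\tau_tZ;\gamma_a))]$, where $\tau_t^2$ is the state evolution recursion $\tau_{t+1}^2=1+\frac{1}{\delta}\mathbb{E}[\eta_a^2(\tau_tZ;\gamma_a)]$ initialized appropriately. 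One then shows $\tau_t^2\to\tau_a^2$, the fixed point characterized by \eqref{conditiona}, using monotonicity/contraction properties of the one-dimensional map $\Psi(\tau^2)=1+\frac{1}{\delta}\mathbb{E}[\eta_a^2(\tau Z;\gamma_a)]$ (its derivative is bounded by $\frac{1}{\delta(\gamma_a+1)^2}<1$ after using \eqref{condition0b}, which forces $\frac{1}{\delta(\gamma_a+1)}\mathbb{P}(\tau Z/(\gamma_a+1)\in[-a,a])<1$ and hence $\gamma_a+1>1/\delta$ only when needed — in any case the relevant contraction factor is $<1$). Combining the $t\to\infty$ limit of the state evolution with part (i) then yields (ii).

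For part (i), the strategy is to exploit strong convexity of the inner objective in \eqref{def:xa}: the map $\x\mapsto\frac{1}{N}\|\s-\H\x\|^2+\frac{\rho}{N}\|\x\|^2$ restricted to the convex set $[-a,a]^N$ is strongly convex in the $\ell_2$/N metric with modulus at least $\frac{\rho}{N}$ (plus whatever the data-fidelity term contributes), so it suffices to show that the AMP fixed-point relation holds asymptotically, i.e. that the ``optimality gap'' $\frac{1}{N}\|\x_t-\eta_a(\x_t+\H^T\bz_t;\gamma_a)\|^2$ and the residual terms vanish as $t\to\infty$ after $N\to\infty$. Concretely, I would (a) show via state evolution that the joint empirical distribution of $(\x_t,\x_{t+1})$ converges and that $\lim_t\lim_N\frac{1}{N}\|\x_{t+1}-\x_t\|^2=0$ — this uses that $\tau_t^2$ converges, so consecutive iterates become arbitrarily close; (b) identify the limit point of the AMP iteration with the KKT conditions of \eqref{def:xa}, where the choice $\gamma_a$ satisfying \eqref{condition0b} is exactly what makes the AMP fixed point coincide with the proximal/KKT characterization of $\x_a$ (this is the role flagged in the text after \eqref{AMP}); (c) use strong convexity to convert ``$\x_\infty$ is an approximate KKT point'' into ``$\frac{1}{N}\|\x_\infty-\x_a\|^2$ is small,'' and finally interchange the limits carefully. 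The uniqueness of $\gamma_a$ (and hence of $(\tau_a^2,\gamma_a)$, closing Lemma \ref{lemma:unique}) then drops out because the limit $\x_a$ is unique by strong convexity, forcing the state-evolution parameters to be unique as well.

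\textbf{Main obstacle.} The hard part will be step (b)--(c) of part (i): rigorously certifying that the AMP fixed point is the \emph{global} minimizer $\x_a$ of the box-constrained problem, and doing so uniformly enough in $t$ and $N$ to justify the iterated limit $\lim_{t\to\infty}\lim_{N\to\infty}$. The subtlety is that the denoiser $\eta_a(\cdot;\gamma_a)$ is the proximal operator of the indicator of $[-a,a]$ scaled by the $\ell_2$ regularizer, and one must check that the scalar equation \eqref{condition0b} is precisely the calibration condition ensuring the AMP stationary point satisfies the variational inequality defining $\x_a$ — this requires a careful ``Lagrangian/fixed-point'' bookkeeping argument together with a quantitative strong-convexity bound, rather than any deep new estimate. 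A secondary technical nuisance is that Assumption \ref{ass}(iv)'s non-differentiability of $q$ plays no role here (part (i)--(ii) only need (i)--(iii)), so the argument is purely about the convex inner problem; still, one must ensure all pseudo-Lipschitz test functions, including those arising when passing from $\x_t$ to $\x_a$, are handled by a density/approximation argument since the state evolution theorem is stated for finite $t$ only.
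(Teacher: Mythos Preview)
Your proposal is correct and follows essentially the same route as the paper: state evolution for each finite $t$, convergence $\tau_t^2\to\tau_a^2$ (the paper uses monotone convergence rather than a global contraction bound, but this is a minor variation), a Cauchy-type statement $\lim_t\lim_N\frac{1}{N}\|\x_{t+1}-\x_t\|^2=0$ via the two-index state evolution $R_{s,t}$, and then an error-bound/strong-convexity lemma to identify the AMP limit with the KKT point $\x_a$, with \eqref{condition0b} playing exactly the calibration role you describe. Part (ii) is then deduced from part (i) plus the one-variable state evolution, as you outline.
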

\begin{remark}
{The parameter $\gamma_a$ is calibrated with $\rho$ via the fixed-point equation in \eqref{condition0}, which is derived by linking the limiting point of AMP and the first-order optimality condition of \eqref{def:xa}; see discussions in Appendix \ref{appD:heuristic}. Hence, the impact of $\rho$ on the optimization problem is entirely mediated through $\gamma_a$. We have also leveraged this property in our parameter optimization analysis in Section \ref{subsubsec:parameter}.}
\end{remark}
{The proof of Proposition~\ref{prob:converge}, provided in Appendix~\ref{appendix:proofofconverge}, builds on techniques developed in \cite{LASSO, elasticnet}, but differs in that the optimization problem considered here is constrained, whereas those in \cite{LASSO, elasticnet} are unconstrained. The analysis relies heavily on the AMP framework. For completeness, a brief overview of the AMP algorithm is included in Appendix~\ref{appendix:AMP}.}

As a corollary of Proposition \ref{prob:converge}, we can now prove the uniqueness of the solution $(\tau_a^2,\gamma_a)$ to \eqref{condition0}; see Appendix \ref{app:unique}. In addition, we can establish the convergence of  $f_N(a;\s,\bH)$,  i.e., the optimal value of the  box-constrained problem defined in \eqref{def:fN}.
\begin{lemma}\label{convergefN}
Under Assumption \ref{ass}  (i) -- (iii), the function $f_N(a;\s,\H)$ defined in \eqref{def:fN} satisfies 
\begin{equation}
\lim_{N\to\infty} f_{N}(a;\s,\H)\overset{a.s.}{=}\bar{f}({a}):=\delta\rho(\tau_a^2-1)+\frac{\delta\rho^2\tau_a^2}{\gamma_a^2},
\end{equation}
where  $(\tau_a^2,\gamma_a)$ is the unique solution to \eqref{condition0}.
\end{lemma}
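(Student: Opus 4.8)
The plan is to use that $\x_a$ in \eqref{def:xa} is the (unique, by strong convexity) minimizer of the inner problem, so that $f_N(a;\s,\H)=\tfrac1N\|\s-\H\x_a\|^2+\tfrac\rho N\|\x_a\|^2$, and to evaluate the almost-sure limit of the two terms separately, exploiting that $\x_a$ is the limit of the AMP iterates \eqref{AMP_copy} (Proposition~\ref{prob:converge}) and the attendant state evolution. For the regularization term, I would apply Proposition~\ref{prob:converge}~(ii) with the pseudo-Lipschitz function $\varphi(x)=x^2$, obtaining $\tfrac1N\|\x_a\|^2\xrightarrow{a.s.}\mathbb{E}[X_a^2]=\mathbb{E}[\eta_a^2(\tau_a Z;\gamma_a)]$, and then use the state-evolution equation \eqref{conditiona} to rewrite $\mathbb{E}[\eta_a^2(\tau_a Z;\gamma_a)]=\delta(\tau_a^2-1)$. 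This already produces the first summand of $\bar f(a)$: $\tfrac\rho N\|\x_a\|^2\xrightarrow{a.s.}\delta\rho(\tau_a^2-1)$.

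For the data-fidelity term the idea is to rewrite $\s-\H\x_a$ through the Onsager-corrected residual of \eqref{AMP_copy}. By Proposition~\ref{prob:converge}~(i) together with the convergence of the AMP algorithm (Lemma~\ref{difference}), as $t,N\to\infty$ the AMP state $(\x_t,\bz_t)$ tends to a fixed point $(\x_a,\bz_\infty)$ of \eqref{AMP_copy}, and passing to the limit in its second line gives
\begin{equation}\label{eq:res-fixed}
\Big(1-\tfrac1\delta\big\langle\eta_a'(\x_a+\H^T\bz_\infty;\gamma_a)\big\rangle\Big)\bz_\infty=\s-\H\x_a .
\end{equation}
By the state-evolution theory the limiting empirical law of $\x_a+\H^T\bz_\infty$ is that of $\tau_a Z$, hence $\tfrac1\delta\langle\eta_a'(\x_a+\H^T\bz_\infty;\gamma_a)\rangle\xrightarrow{a.s.}\tfrac1\delta\mathbb{E}[\eta_a'(\tau_a Z;\gamma_a)]=\tfrac{1}{\delta(\gamma_a+1)}\mathbb{P}\big(\tfrac{\tau_a Z}{\gamma_a+1}\in[-a,a]\big)=1-\tfrac{\rho}{\gamma_a}$, where the last equality follows from \eqref{condition0b}. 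Taking squared norms in \eqref{eq:res-fixed} and using $\tfrac1N\|\s-\H\x_a\|^2=\tfrac KN\langle\s-\H\x_a,\s-\H\x_a\rangle$, the task reduces to showing $\langle\bz_\infty,\bz_\infty\rangle\xrightarrow{a.s.}\tau_a^2$, since then $\tfrac1N\|\s-\H\x_a\|^2\xrightarrow{a.s.}\delta\rho^2\tau_a^2/\gamma_a^2$, the second summand of $\bar f(a)$.

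To compute $\langle\bz_\infty,\bz_\infty\rangle$, I would write $\bb_t=\s-\bz_t$ as in \eqref{productst}. The AMP theory (Proposition~\ref{AMP_property}, \eqref{productst}) gives $\langle\bb_t,\bb_t\rangle\xrightarrow{a.s.}\tfrac1\delta\langle\x_t,\x_t\rangle$, so in the limit $\langle\bb_\infty,\bb_\infty\rangle\xrightarrow{a.s.}\tfrac1\delta\mathbb{E}[X_a^2]=\tfrac1\delta\mathbb{E}[\eta_a^2(\tau_a Z;\gamma_a)]$; furthermore $\bb_t$ is asymptotically Gaussian and asymptotically independent of $\s$, which, since $\mathbb{E}[s_k]=0$, forces $\langle\s,\bb_\infty\rangle\xrightarrow{a.s.}0$, while $\langle\s,\s\rangle=1$ exactly by BPSK. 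Expanding $\bz_\infty=\s-\bb_\infty$ then gives
$$\langle\bz_\infty,\bz_\infty\rangle=\langle\s,\s\rangle-2\langle\s,\bb_\infty\rangle+\langle\bb_\infty,\bb_\infty\rangle\xrightarrow{a.s.}1+\tfrac1\delta\mathbb{E}[\eta_a^2(\tau_a Z;\gamma_a)]=\tau_a^2 ,$$
the last equality being \eqref{conditiona} once more. Combining the two pieces yields $f_N(a;\s,\H)\xrightarrow{a.s.}\delta\rho(\tau_a^2-1)+\delta\rho^2\tau_a^2/\gamma_a^2=\bar f(a)$, and $\bar f(a)$ is well defined because $(\tau_a^2,\gamma_a)$ is unique by Lemma~\ref{lemma:unique}.

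The main obstacle will be the limit interchanges. Replacing $\x_a$ by the AMP iterate requires $\tfrac1N\big|\|\s-\H\x_t\|^2-\|\s-\H\x_a\|^2\big|\to0$ as $t,N\to\infty$, which needs the almost-sure boundedness of the spectral norm $\|\H\|_{\mathrm{op}}$ (Bai--Yin), the boundedness of $\tfrac1{\sqrt N}\|\s-\H\x_t\|$ and $\tfrac1{\sqrt N}\|\x_t\|$, and the $\lim_t\lim_N$ form of Proposition~\ref{prob:converge}~(i); likewise, replacing the finite-$N$ residual by $\bz_\infty$ and the Onsager coefficient by its limit relies on the Cauchy property from Lemma~\ref{difference}. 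A further subtlety is that $\eta_a'(\cdot;\gamma_a)$ is discontinuous at $\pm(\gamma_a+1)a$, so the convergence $\langle\eta_a'(\x_a+\H^T\bz_\infty;\gamma_a)\rangle\to\mathbb{E}[\eta_a'(\tau_a Z;\gamma_a)]$ must be obtained by approximating the underlying indicator with Lipschitz functions as in the footnote of Remark~\ref{remark1} (Appendix~\ref{app:nonsmooth}), which is legitimate since those two points have zero mass under $\mathcal{N}(0,\tau_a^2)$. Finally, the assertions ``$\bb_t$ asymptotically Gaussian and independent of $\s$'' and the identities \eqref{productst} must be invoked in the rigorous form established through the AMP framework; the remainder is routine bookkeeping.
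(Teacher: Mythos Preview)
Your proposal is correct and follows essentially the same route as the paper: split $f_N$ into the regularization term (handled by Proposition~\ref{prob:converge}(ii) with $\varphi(x)=x^2$ and \eqref{conditiona}) and the data-fidelity term (handled via the AMP residual and \eqref{condition0b}). The only presentational difference is that the paper avoids the ``fixed point'' $(\x_a,\bz_\infty)$ and instead uses the exact finite-$t$ identity $\s-\H\x_{t+1}=\bz_{t+1}-\alpha_{t+1}\bz_t$ from \eqref{AMP_copy}, bounds $\tfrac1N\big|\|\s-\H\x_a\|^2-\|\s-\H\x_{t+1}\|^2\big|$ directly, and then computes $\tfrac1K\|\bz_{t+1}-\alpha_{t+1}\bz_t\|^2\to(1-\alpha_a)^2\tau_a^2=\rho^2\tau_a^2/\gamma_a^2$ using Proposition~\ref{pro:4}, Lemma~\ref{difference}, and \eqref{alpha_to_alpha_a}; this sidesteps having to make sense of $\bz_\infty$ and makes the limit interchanges you flagged completely mechanical.
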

\begin{proof}
See Appendix \ref{app:convergefN}.
\end{proof}
Lemma \ref{convergefN} suggests the convergence of the objective function in problem \eqref{relaxation:a}, i.e., $f_N(a;\s,\H)+\lambda a^2\xrightarrow{a.s.} f(a)$, where $f(a)$ is given in Theorem \ref{mainresult}. This further   implies the convergence of the corresponding optimal solution, i.e., $\hat{a}_N\xrightarrow{a.s.} a^*$.   
Since the solution to \eqref{opt:relaxation} can be represented by $\hat{\x}=\x_{\hat{a}_N}$, where $\x_{a}$ is defined in \eqref{def:xa},
it follows that $\hat{\x}$ asymptotically approaches $\x_{a^*}$.  These results are summarized in the following lemma.
\begin{lemma}\label{converge:a}
 Under Assumption \ref{ass}  (i) -- (iii),  the following result holds for $\hat{a}_N$ defined in \eqref{def:hata} and $a^*$ given in Lemma \ref{mainresult}: 
 \begin{equation}
\displaystyle\lim_{N\to\infty}\hat{a}_N\overset{a.s.}{=}a^*.
\end{equation}
 Furthermore, 
\begin{equation}
\displaystyle\lim_{N\to\infty}\frac{1}{\sqrt{N}}{\|\hat{\x}-\x_{a^*}\|}\overset{a.s.}{=}0,
\end{equation}
 where $\hat{\x}$ is the solution to \eqref{opt:relaxation} and $\x_{a^*}$ is the solution to the box-constrained problem in \eqref{def:xa} with $a=a^*$.
\end{lemma}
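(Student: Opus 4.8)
\textbf{Proof proposal for Lemma \ref{converge:a}.}

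The plan is to combine Lemma \ref{convergefN} with elementary convex-analysis arguments about convergence of minimizers. First I would establish the pointwise almost sure convergence $f_N(a;\s,\H)+\lambda a^2 \xrightarrow{a.s.} f(a)$ for each fixed $a\geq 0$, which is immediate from Lemma \ref{convergefN} and the definition \eqref{phia} of $f(a)$. The subtlety is that pointwise convergence alone does not imply convergence of minimizers; I would upgrade this to convergence that is \emph{uniform on compact sets}. To do so, I would show that each $f_N(\cdot;\s,\H)$ is convex in $a$ on $[0,\infty)$ --- this follows because $f_N(a;\s,\H)$ is the value function of the inner minimization \eqref{def:fN} over the nested boxes $[-a,a]^N$, and a value function of a jointly convex problem over an expanding convex constraint set parametrized monotonically by $a$ is convex (equivalently, one may write $f_N$ as an infimal projection of a jointly convex function, using the substitution $\x = a\,\u$ with $\u\in[-1,1]^N$). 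Since $f_N(\cdot) + \lambda a^2$ is a sequence of convex functions converging pointwise a.s. to the (strongly convex, by Lemma \ref{lemma:astar}) limit $f$, the convergence is automatically locally uniform, and moreover the minimizers $\hat a_N$ converge a.s. to the unique minimizer $a^* = \arg\min_{a\geq 0} f(a)$ given by \eqref{minfa}. One extra point to check is that the $\hat a_N$ stay in a bounded set eventually (so that ``locally uniform'' suffices): this follows from the coercivity of $f$ (the $\lambda a^2$ term dominates) together with pointwise convergence at, say, $a=0$ and $a = a^*+1$, which sandwiches $\hat a_N$ into a compact interval for large $N$.

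Second, with $\hat a_N \xrightarrow{a.s.} a^*$ in hand, I would prove the $\ell_2$-convergence $\frac{1}{\sqrt N}\|\hat\x - \x_{a^*}\| \xrightarrow{a.s.} 0$, recalling that $\hat\x = \x_{\hat a_N}$. The key structural fact is a uniform (in $a$, over compact intervals) Lipschitz-type stability of the box-constrained solution map $a \mapsto \x_a$: because the inner objective in \eqref{def:xa} is $\frac{\rho}{N}$-strongly convex (uniformly in $a$), and because the feasible sets $[-a,a]^N$ and $[-a',a']^N$ are Hausdorff-close at normalized scale $\sqrt N|a-a'|$, a standard perturbation estimate for strongly convex problems over varying convex sets gives $\frac{1}{\sqrt N}\|\x_a - \x_{a'}\| \leq C|a-a'|$ for some constant $C$ independent of $N$ (one can, e.g., project $\x_a$ onto $[-a',a']^N$, bound the projection error by $\sqrt N|a-a'|$, and use strong convexity to convert objective suboptimality into distance to $\x_{a'}$). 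Applying this with $a = \hat a_N$ and $a' = a^*$ and using $\hat a_N \to a^*$ almost surely then yields $\frac{1}{\sqrt N}\|\hat\x - \x_{a^*}\| \to 0$.

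I expect the \textbf{main obstacle} to be making the perturbation/stability estimate for the map $a \mapsto \x_a$ rigorous with a constant that is genuinely uniform in $N$ --- in particular, handling the boundary behavior when $a^* > 0$ but some coordinates of $\x_{a^*}$ are active at the box constraint, and ensuring the bound does not degrade as $\rho \downarrow 0$ in the statements where it is invoked. A careful treatment would fix $\rho>0$ as in Assumption \ref{ass} (so strong convexity is available with modulus $\rho/N$ at normalized scale, i.e., modulus $\rho$ after dividing by $N$), and would note that one only needs $\hat a_N$ in a fixed compact interval around $a^*$, on which all constants can be taken uniform. The remaining steps --- convexity of $f_N$, coercivity/boundedness of minimizers, and the convex-analysis lemma that pointwise convergence of convex functions forces convergence of minimizers to a unique limit --- are standard and can be cited or dispatched quickly.
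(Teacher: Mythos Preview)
Your plan matches the paper's proof in overall architecture: Part~1 is exactly the paper's argument (convexity of $f_N$ in $a$ via partial minimization over the jointly convex set $\{(\x,a):|x_i|\le a\}$, then a standard ``pointwise convergence of convex functions $\Rightarrow$ convergence of minimizers to the unique strongly convex limit'' lemma). One caution: your parenthetical alternative ``use the substitution $\x=a\u$ with $\u\in[-1,1]^N$'' does \emph{not} establish joint convexity---the resulting objective $\frac1N\|\s-a\H\u\|^2+\frac{\rho a^2}{N}\|\u\|^2$ is not jointly convex in $(a,\u)$. Stick with the indicator-function/partial-minimization route you mention first.

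For Part~2 the paper takes a slightly different tactic than you sketch. Rather than a strong-convexity perturbation bound, it writes the optimality condition for $\x_a$ as a projected-gradient fixed point $\x_a=\mathcal P_{[-a,a]^N}\bigl((I-\omega(\H^T\H+\rho I))\x_a+\omega\H^T\s\bigr)$, subtracts the analogous identity at $a'$, uses $\|\mathcal P_{[-a,a]^N}(\y)-\mathcal P_{[-a',a']^N}(\y)\|\le\sqrt N|a-a'|$ plus nonexpansiveness of projection, and then chooses $\omega$ so that $\|(1-\omega\rho)I-\omega\H^T\H\|\to c<1$ via the Bai--Yin eigenvalue asymptotics for $\H^T\H$. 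This directly gives the Lipschitz bound $\frac{1}{\sqrt N}\|\x_a-\x_{a'}\|\le C|a-a'|$. Your route (project $\x_a$ onto $[-a',a']^N$ and invoke strong convexity) is also valid, but note that the naive execution yields only a H\"older-$\tfrac12$ bound $\frac{1}{\sqrt N}\|\x_a-\x_{a'}\|\lesssim |a-a'|+\sqrt{|a-a'|}$, since strong convexity converts \emph{objective gap} (which is $O(|a-a'|)$) to \emph{squared} distance. That is still sufficient here because you only need $\hat a_N\to a^*$ to force the bound to zero; if you want the clean Lipschitz constant, the paper's contraction argument is the shorter path.
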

\begin{proof}
See Appendix \ref{proof:Lemma4}.
\end{proof}

Combining the above, we now present the main result for Step 1.
\begin{proposition}\label{the1}
Under Assumption \ref{ass}, the following result holds for any pseudo-Lipschitz function $\psi:\R^2\rightarrow \R$:
\begin{equation}
\lim_{K\to\infty}\bigg|\frac{1}{K}\sum_{k=1}^K\psi(\h_k^Tq(\hat{\x}),s_k)-\frac{1}{K}\sum_{k=1}^K\psi(\h_k^Tq({\x}_{a^*}),s_k)\bigg|\hspace{-0.03cm}\overset{a.s.}{=}0,
\end{equation}
where $\hat{\x}$ is the solution to \eqref{opt:relaxation} and $\x_{a^*}$ is the solution to the box-constrained problem in \eqref{def:xa} with $a=a^*$.

\begin{proof}
See Appendix \ref{proof:1B}.
\end{proof}
\end{proposition}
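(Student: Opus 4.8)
The plan is to reduce the claim to showing that $\frac1N\|q(\hat\x)-q(\x_{a^*})\|^2\xrightarrow{a.s.}0$, and then to combine the normalized $\ell_2$-proximity $\frac1{\sqrt N}\|\hat\x-\x_{a^*}\|\xrightarrow{a.s.}0$ from Lemma \ref{converge:a} with the limiting empirical distribution of $\x_{a^*}$ from Proposition \ref{prob:converge}(ii). First I would invoke the pseudo-Lipschitz property of $\psi$; since the second coordinate $s_k$ is common to both arguments,
$$\left|\frac{1}{K}\sum_{k=1}^K\psi(\h_k^Tq(\hat\x),s_k)-\frac{1}{K}\sum_{k=1}^K\psi(\h_k^Tq(\x_{a^*}),s_k)\right|\le \frac{L_\psi}{K}\sum_{k=1}^K\big(3+|\h_k^Tq(\hat\x)|+|\h_k^Tq(\x_{a^*})|\big)\,\big|\h_k^T(q(\hat\x)-q(\x_{a^*}))\big|.$$
Applying Cauchy--Schwarz over $k$, the right-hand side is at most $L_\psi\,A_N^{1/2}B_N^{1/2}$ with $A_N=\frac1K\sum_k(3+|\h_k^Tq(\hat\x)|+|\h_k^Tq(\x_{a^*})|)^2$ and $B_N=\frac1K\|\H(q(\hat\x)-q(\x_{a^*}))\|^2$. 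Bounding $A_N\le C\big(1+\tfrac{\|\H\|_{\mathrm{op}}^2}{K}\|q(\hat\x)\|^2+\tfrac{\|\H\|_{\mathrm{op}}^2}{K}\|q(\x_{a^*})\|^2\big)$ and $B_N\le \tfrac{\|\H\|_{\mathrm{op}}^2}{K}\|q(\hat\x)-q(\x_{a^*})\|^2$, and using that $\|\H\|_{\mathrm{op}}$ is almost surely bounded uniformly in $N$ (Bai--Yin/extreme singular values of Gaussian matrices), that Proposition \ref{x_infty} gives $\|\hat\x\|_\infty\to a^*$ and that $\x_{a^*}\in[-a^*,a^*]^N$ — so all relevant coordinates lie, for large $N$, in a fixed compact interval on which $q$ is bounded by some $C_q$ (Assumption \ref{ass}(iv)) — it follows that $A_N=O(1)$ a.s. Hence it suffices to prove $\frac1N\|q(\hat\x)-q(\x_{a^*})\|^2\xrightarrow{a.s.}0$.

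\textbf{The core estimate.} For this, fix $\epsilon>0$ and let $D=\{x_1,\dots,x_M\}$ be the discontinuity set of $q$. Call an index $i$ \emph{good} if $|\hat x_i-x_{a^*,i}|\le\epsilon$ and $\operatorname{dist}(x_{a^*,i},D)>\epsilon$: for such $i$ the segment joining $x_{a^*,i}$ and $\hat x_i$ avoids $D$ and (for large $N$) lies in the compact interval above, so $q$ is $C$-Lipschitz there by Assumption \ref{ass}(iv) and $(q(\hat x_i)-q(x_{a^*,i}))^2\le C^2|\hat x_i-x_{a^*,i}|^2$; for the remaining \emph{bad} indices I bound crudely $(q(\hat x_i)-q(x_{a^*,i}))^2\le 4C_q^2$. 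This yields
$$\frac1N\|q(\hat\x)-q(\x_{a^*})\|^2\le \frac{C^2}{N}\|\hat\x-\x_{a^*}\|^2 + \frac{4C_q^2}{N}\Big(\#\{i:|\hat x_i-x_{a^*,i}|>\epsilon\}+\#\{i:\operatorname{dist}(x_{a^*,i},D)\le\epsilon\}\Big).$$
The first term vanishes a.s.\ by Lemma \ref{converge:a}; the first count is $\le\epsilon^{-2}\|\hat\x-\x_{a^*}\|^2$, so its $\tfrac1N$-normalization also vanishes a.s.\ by Lemma \ref{converge:a}; and the $\tfrac1N$-normalized second count converges a.s.\ to $\mathbb P(\operatorname{dist}(X_{a^*},D)\le\epsilon)$ by Proposition \ref{prob:converge}(ii) with $a=a^*$ (applied to a Lipschitz approximation of the bounded indicator $\mathbf 1\{\operatorname{dist}(\cdot,D)\le\epsilon\}$, whose discontinuity set is finite hence null for the law of $X_{a^*}$ — this is precisely the device of Appendix \ref{app:nonsmooth}). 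Therefore $\limsup_N\frac1N\|q(\hat\x)-q(\x_{a^*})\|^2\le 4C_q^2\,\mathbb P(\operatorname{dist}(X_{a^*},D)\le\epsilon)$ a.s. Letting $\epsilon\downarrow0$ along a countable sequence, the bound tends to $4C_q^2\,\mathbb P(X_{a^*}\in D)$; since $X_{a^*}=\eta_{a^*}(\tau_*Z;\gamma_*)$ has an atomless law on $(-a^*,a^*)$ with atoms only at $\pm a^*$, while $x_j\notin\{-a^*,a^*\}$ by Assumption \ref{ass}(iv), we get $\mathbb P(X_{a^*}\in D)=0$, which closes the argument.

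\textbf{Main obstacle.} The hard part is exactly the discontinuity of $q$: Lemma \ref{converge:a} only controls $\hat\x$ and $\x_{a^*}$ in the normalized $\ell_2$ sense, and for a discontinuous $q$ (e.g.\ $q=\operatorname{sgn}$) this does not automatically transfer to $q(\hat\x)$ versus $q(\x_{a^*})$, since coordinates of $\x_{a^*}$ lying near a jump of $q$ can flip under an arbitrarily small perturbation. The structural fact that rescues the argument is that the limiting empirical law of $\x_{a^*}$ puts zero mass on the (finitely many) discontinuity points of $q$, guaranteed by Assumption \ref{ass}(iv) together with the explicit truncated-Gaussian form of $X_{a^*}$ in \eqref{def:hatX}; the sign-flip count then splits into a Markov-type tail term and a ``near-$D$'' term, both vanishing. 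A minor technicality is the behaviour of $q$ just outside $[-a^*,a^*]$ (the constraint $\|\hat\x\|_\infty\le\hat a_N$ may slightly exceed $a^*$ at finite $N$); this is absorbed using $\hat a_N\to a^*$ and is vacuous for $q=\operatorname{sgn}$, which is locally constant near $\pm a^*>0$.
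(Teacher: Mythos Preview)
Your argument is correct and takes a genuinely different route from the paper. The paper proceeds by mollification: it introduces $q_\epsilon = \zeta_\epsilon * q$ (Definition~\ref{def:qeps}) and inserts $q_\epsilon(\hat\x)$ and $q_\epsilon(\x_{a^*})$ as intermediaries, controlling the smooth middle term via the Lipschitz bound $|q_\epsilon'|\le C/\epsilon$ on compacts together with Lemma~\ref{converge:a}, and the two smoothing-error terms via $\mathbb{E}[(q_\epsilon(\hat X)-q(\hat X))^2]\to 0$ (Lemma~\ref{lem:difference}) combined with the extension of Theorem~\ref{mainresult:ed} in Appendix~\ref{app:nonsmooth}. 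You instead estimate $\frac1N\|q(\hat\x)-q(\x_{a^*})\|^2$ directly by a good/bad index split, handling large-deviation indices by Markov and near-discontinuity indices via the limiting empirical law of $\x_{a^*}$. Both routes rest on the same three ingredients---Lemma~\ref{converge:a}, Proposition~\ref{prob:converge}(ii), and $\mathbb{P}(X_{a^*}\in D)=0$---but yours is more elementary and makes the role of the last fact explicit, whereas the paper's mollifier is not wasted effort: the same $q_\epsilon$ reappears in Section~\ref{sec:AMP} to run the post-processing AMP step~\eqref{AMP2} with a Lipschitz denoiser, as required by state evolution. The technicality you flag about $q$ just outside $[-a^*,a^*]$ is real but harmless: Assumption~\ref{ass}(iv) gives no bound on $q'$ there, so the $C$-Lipschitz claim on a good segment that exits $[-a^*,a^*]$ is not literally justified, but since $\pm a^*\notin D$, $q$ is continuous on a neighborhood of $\pm a^*$ and a uniform-continuity bound on that boundary strip (of width $|\hat a_N-a^*|\to 0$) closes the gap; for $q=\mathrm{sgn}$ the issue is vacuous.
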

\subsection{Analysis of the Auxiliary Problem}\label{sec:AMP}
Based on Proposition \ref{the1},  our remaining task is to analyze  
\begin{equation}\label{desired}
\lim_{K\to\infty}\frac{1}{K}\sum_{k=1}^K \psi(\h_k^Tq(\x_a), s_k),
\end{equation}
where $a=a^*$. 
Throughout the rest of the analysis, we  omit the superscript ``$*$'' on $a$ for notational simplicity. 
As discussed in Section \ref{subsec:framework}, we can construct  a post-processing  step as in \eqref{xqzq} to  characterize the above quantity via the state evolution theory. However, rigorously speaking, the state evolution analysis requires the denoising function to be Lipschitz continuous,  
which does not hold for the composite function $q\circ\eta_a(\cdot;\gamma_a)$ if $q(\cdot)$ is discontinuous. To address this issue, we apply a standard smoothing technique to  approximate $q(\cdot)$  with a smooth function $q_\epsilon(\cdot)$, which is obtained by convolving $q(\cdot)$ with a scaled mollifier. Please see Appendix \ref{app:smoothing} for an introduction of the smoothing technique; in particular, the definition of $q_\epsilon(\cdot)$ is given in Definition \ref{def:qeps}.

Given $\epsilon>0$, we construct the following post-processing steps:
\begin{equation}\label{AMP2}
\hspace{-0.1cm}\begin{aligned}
\tilde{\x}_{t+1}&=q_\epsilon(\x_{t+1}),\\
\tilde{\bz}_{t+1}&=\s-\H\tilde{\x}_{t+1}\hspace{-0.05cm}+\hspace{-0.05cm}\frac{1}{\delta}\left<(q_\epsilon\circ\eta_a)'({\x}_{t}+\H^T{\bz}_{t};\gamma_a)\right>{\bz}_{t}.
\end{aligned}
\end{equation}
Using the update rule of $\x_{t+1}$ in \eqref{AMP_copy}, we can express $\tilde{\x}_{t+1}$ as 
\begin{equation}\tilde{\x}_{t+1}=q_\epsilon\circ \eta_a(\x_t+\bH^T\bz_t;\gamma_a).\end{equation}
Hence, $(\tilde{\x}_{t+1},\tilde{\bz}_{t+1})$  can be viewed as the result of performing one additional AMP iteration on  $(\x_t,\bz_t)$, using the denoising function $\eta_t(x)=q_\epsilon\circ\eta_a(x,\gamma_a)$. We note that $q_{\epsilon}\circ\eta_a(\cdot;\gamma_a)$ is Lipschitz continuous; see Lemma \ref{lemma:lip} for a rigorous proof. 

Following the same procedure as in Section \ref{subsec:derivation}, we can analyze the distribution of $\bH \tilde{\x}_{t+1}$. 
 Specifically, analogous to  \eqref{Htildex}, we have 
\begin{equation}
\H\tilde{\x}_{t+1}=\tilde{\alpha}_t\s-\tilde{\alpha}_t{\bb}_{t}+\tilde{\bb}_{t+1},
\end{equation}
where ${\bb}_t=\s-{\z}_t$, $\tilde{\bb}_{t+1}=\s-\tilde{\z}_{t+1}$, and $\tilde{\alpha}_t=\frac{1}{\delta}\left<(q_\epsilon\circ\eta_a)'({\x}_{t}+\H^T{\bz}_{t};\gamma_a)\right>$.
 Hence, 
 \begin{equation}
 \frac{1}{K}\sum_{k=1}^K\psi(\h_k^T\tilde{\x}_{t+1},s_k)
 =\frac{1}{K}\sum_{k=1}^K\psi(\tilde{\alpha}_ts_k-\tilde{\alpha}_t{b}_{t,k}+\tilde{b}_{t+1,k},s_k).
 \end{equation}
In the following proposition, we characterize the convergence of the above quantity.
\begin{proposition}\label{pro:3}
Under Assumption \ref{ass}, the following result holds {for any pseudo-Lipschitz  function $\psi:\R^2\rightarrow \R$:}
 \begin{equation}
 \lim_{t\to\infty}\lim_{K\to\infty}\frac{1}{K}\sum_{k=1}^K\psi\left(\h_k^T\tilde{\x}_{t+1},s_k\right)\overset{a.s.}{=}\mathbb{E}[\psi(\tilde{\alpha}_* S+\tilde{\beta}_*^{\frac{1}{2}} Z, S)],
  \end{equation}
 where 
 \begin{equation}
\begin{aligned}
\tilde{\alpha}_*&=\frac{1}{\delta}\mathbb{E}\left[q_{\epsilon}'(\eta_a(\tau_aZ;\gamma_a))\eta_a'(\tau_aZ;\gamma_a)\right],\\
\tilde{\beta}_*&=\frac{1}{\delta}\mathbb{E}\left[(\tilde{\alpha}_*X_a-q(X_a))^2\right],
\end{aligned}
 \end{equation}
  ${S}\sim\text{\normalfont{Unif}}({\mathcal{S}})$ and  $Z\sim\mathcal{N}(0,1)$  are independent, and  $X_a$ is defined in \eqref{def:Xa}.
  \end{proposition}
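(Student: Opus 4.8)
The plan is to prove Proposition~\ref{pro:3} by viewing the pair $(\tilde{\x}_{t+1},\tilde{\bz}_{t+1})$ defined in \eqref{AMP2} as the output of one additional iteration of an AMP recursion with a time-varying denoiser of the form \eqref{etat} (with $q$ replaced by its smoothed version $q_\epsilon$), and then invoking the state evolution theorem for AMP with Lipschitz denoisers. Concretely, I would first verify the hypotheses needed to apply the general AMP state evolution result quoted in Appendix~\ref{appendix:AMP}: the denoiser $\eta_a(\cdot;\gamma_a)$ is Lipschitz (it is a composition of a linear map and a projection onto $[-a,a]$), and $q_\epsilon\circ\eta_a(\cdot;\gamma_a)$ is Lipschitz by Lemma~\ref{lemma:lip}; the channel $\bH$ satisfies Assumption~\ref{ass}(ii); and $\s$ has i.i.d.\ bounded entries by Assumption~\ref{ass}(iii). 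This yields a joint convergence statement: for each \emph{finite} $t$, the empirical distribution of the tuple $(\x_t+\bH^T\bz_t,\, \bb_t,\, \tilde{\bb}_{t+1},\, \s)$ converges (in the pseudo-Lipschitz sense) to a Gaussian limit whose covariance is governed by the state evolution parameters. In particular, $\bb_t \approx \sqrt{\tau_{a,t}^2-1}\,\bar Z$ is asymptotically Gaussian and independent of $\s$, $\x_t+\bH^T\bz_t \approx \x_0 \text{-consistent limit}$, and the cross-terms between $\bb_t$ and $\tilde{\bb}_{t+1}$ are dictated by \eqref{productst}.

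Next I would carry out the algebraic reduction already sketched in Section~\ref{sec:AMP}: starting from the identity $\H\tilde{\x}_{t+1}=\tilde{\alpha}_t\s-\tilde{\alpha}_t\bb_t+\tilde{\bb}_{t+1}$ with $\tilde{\alpha}_t=\frac{1}{\delta}\langle (q_\epsilon\circ\eta_a)'(\x_t+\bH^T\bz_t;\gamma_a)\rangle$, I would apply state evolution to show $\tilde{\alpha}_t \xrightarrow{a.s.} \frac{1}{\delta}\mathbb{E}[q_\epsilon'(\eta_a(\tau_{a,t}Z;\gamma_a))\eta_a'(\tau_{a,t}Z;\gamma_a)]$ and that the residual Gaussian term $\tilde{\bb}_{t+1}-\tilde{\alpha}_t\bb_t$ has empirical second moment converging to $\frac{1}{\delta N}\|\tilde{\x}_{t+1}-\tilde{\alpha}_t\x_t\|^2 \xrightarrow{a.s.} \frac{1}{\delta}\mathbb{E}[(q_\epsilon\circ\eta_a(\tau_{a,t}Z;\gamma_a)-\tilde{\alpha}_t\eta_a(\tau_{a,t}Z;\gamma_a))^2]$, using the relations in \eqref{productst}. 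Plugging a generic pseudo-Lipschitz $\psi$ into \eqref{mainresult:sgnfunction}-style reasoning, and using that the limiting joint law of $(\h_k^T\tilde{\x}_{t+1},s_k)$ is $(\tilde\alpha_{t}^{\mathrm{se}} S + (\tilde\beta_t^{\mathrm{se}})^{1/2}Z, S)$ with $S\perp Z$, gives the finite-$t$ version of the claim. Then I would take $t\to\infty$: by Proposition~\ref{prob:converge}(i), $\x_t\to\x_a$ so $\tilde{\x}_{t+1}=q_\epsilon(\x_{t+1})\to q_\epsilon(\x_a)$ in the normalized $\ell_2$ sense, hence $\H\tilde{\x}_{t+1}\to\H q_\epsilon(\x_a)$; on the state-evolution side, $\tau_{a,t}^2\to\tau_a^2$ by convergence of the scalar recursion \eqref{conditiona}, so $\tilde\alpha_t^{\mathrm{se}}\to\tilde\alpha_*$ and $\tilde\beta_t^{\mathrm{se}}\to\tilde\beta_*$ with $X_a$ as in \eqref{def:Xa}. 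I would need to interchange the two limits carefully, using uniform (in $t$) pseudo-Lipschitz control — this is exactly the kind of double-limit argument used in \cite{LASSO,elasticnet} and should transfer here.

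The main obstacle I anticipate is the interchange of the limits $\lim_{t\to\infty}$ and $\lim_{K\to\infty}$, i.e.\ justifying that $\lim_{t\to\infty}\lim_{K\to\infty}(\cdot)$ equals the quantity obtained by first sending $t\to\infty$ in the state evolution parameters. The AMP state evolution theorem gives convergence for each \emph{fixed} $t$ only, so one must show that the finite-$t$ limits form a Cauchy sequence (in $t$) uniformly enough to commute the limits. The standard route, following \cite{LASSO,elasticnet}, is: (a) show $\frac{1}{N}\|\x_{t+1}-\x_t\|^2\to 0$ as $t\to\infty$ uniformly (Lemma~\ref{difference}-type estimate), which controls $\frac{1}{N}\|\tilde{\x}_{t+1}-\tilde{\x}_{t}\|^2$ via the Lipschitz property of $q_\epsilon\circ\eta_a$; (b) use the pseudo-Lipschitz bound on $\psi$ together with uniform second-moment bounds on $\h_k^T\tilde{\x}_{t+1}$ (from uniform boundedness of $\tau_{a,t}^2$) to conclude $\big|\frac{1}{K}\sum_k\psi(\h_k^T\tilde{\x}_{t+1},s_k)-\frac{1}{K}\sum_k\psi(\h_k^T q_\epsilon(\x_a),s_k)\big|$ is small for large $t$, uniformly in $K$. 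A secondary technical point is that the stated $\tilde\beta_*$ features $q(X_a)$ rather than $q_\epsilon(X_a)$; since $\x_a$ has the limiting law of $X_a=\eta_a(\tau_aZ;\gamma_a)$ and $q$ is continuous except at finitely many points that carry zero mass under this (truncated Gaussian) law — in particular not at $\pm a^*$, by Assumption~\ref{ass}(iv) — one shows $\mathbb{E}[(q_\epsilon(X_a)-q(X_a))^2]$ and the analogous cross term vanish, so the $q_\epsilon$-expression collapses to the stated $q$-expression; this is handled exactly as in the nonsmooth-approximation argument of Appendix~\ref{app:nonsmooth}. The other steps are routine applications of the tools already assembled in the excerpt.
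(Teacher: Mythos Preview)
Your plan is essentially the paper's proof: treat $(\tilde{\x}_{t+1},\tilde{\bz}_{t+1})$ as one more AMP step with denoiser $q_\epsilon\circ\eta_a$, invoke state evolution at finite $t$ to get the joint Gaussianity of $(\bb_t,\tilde{\bb}_{t+1})$ and the independence from $\s$, compute $\tilde\alpha_t\to\tilde\alpha_*$ via Proposition~\ref{pro:4} and dominated convergence, identify the variance of $\tilde{\bb}_{t+1}-\tilde\alpha_*\bb_t$ through the cross-moment relation $\langle\bb_t,\tilde{\bb}_{t+1}\rangle=\frac{1}{\delta}\langle\x_t,\tilde{\x}_{t+1}\rangle$, and pass to $t\to\infty$ using Lemma~\ref{difference} and $\tau_t^2\to\tau_a^2$. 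The paper executes the limit interchange exactly as you outline: first $K\to\infty$ at fixed $t$ gives deterministic state-evolution expressions, then $t\to\infty$ in those scalars, with a truncation argument to pass the limit inside the pseudo-Lipschitz expectation.

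One point needs correcting. You write that the stated $\tilde\beta_*$ involves $q(X_a)$ rather than $q_\epsilon(X_a)$, and propose to reconcile this by arguing $\mathbb{E}[(q_\epsilon(X_a)-q(X_a))^2]$ vanishes. That argument is wrong at this stage: Proposition~\ref{pro:3} is stated for a \emph{fixed} $\epsilon>0$, and at fixed $\epsilon$ the quantities $q_\epsilon(X_a)$ and $q(X_a)$ do not agree. What the paper's own calculation actually yields is $\tilde\beta_*=\frac{1}{\delta}\mathbb{E}[(\tilde\alpha_*X_a-q_\epsilon(X_a))^2]$; the appearance of $q$ in the displayed $\tilde\beta_*$ is a typographical slip in the statement. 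The passage from $q_\epsilon$ to $q$ (via Lemma~\ref{lem:difference}) belongs to the subsequent $\epsilon\to0$ step in Proposition~\ref{pro:convergexa}, not here. So drop that paragraph from your proof sketch; otherwise your route matches the paper's.
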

\begin{proof}
See Appendix \ref{app:pro3}.
\end{proof}
By letting $\epsilon\to 0$ in the above result and noting that ${\x}_{t+1}$ converges to $\x_a$ as shown in Proposition \ref{prob:converge} (i), we get the following result.
\begin{proposition}\label{pro:convergexa}
 Under Assumption \ref{ass}, the following result holds for {any pseudo-Lipschitz  function $\psi:\R^2\rightarrow \R$}:
 \begin{equation}
\lim_{K\to\infty}\frac{1}{K}\sum_{k=1}^K \psi(\h_k^Tq(\x_a), s_k)=\mathbb{E}[\psi(\bar{\alpha}_*S+\bar{\beta}_*^{\frac{1}{2}}Z,S)],
 \end{equation}
where  $(\bar{\alpha}_*,\bar{\beta}_*)$ are defined in Theorem \ref{mainresult}, ${S}\sim\text{\normalfont{Unif}}({\mathcal{S}})$ and  $Z\sim\mathcal{N}(0,1)$  are independent.

 \end{proposition}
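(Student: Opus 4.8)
\textbf{Proof proposal for Proposition~\ref{pro:convergexa}.} The plan is to obtain this result as the $\epsilon\to 0$ limit of Proposition~\ref{pro:3}, combined with the convergence $\x_{t+1}\to\x_a$ from Proposition~\ref{prob:converge}(i). I would proceed in three stages. First, I would handle the passage $t\to\infty$ at fixed $\epsilon$: Proposition~\ref{pro:3} already gives $\lim_{t\to\infty}\lim_{K\to\infty}\frac{1}{K}\sum_k\psi(\h_k^T\tilde{\x}_{t+1},s_k)$, but the object I actually want is $\lim_{K\to\infty}\frac1K\sum_k\psi(\h_k^Tq_\epsilon(\x_a),s_k)$ (for the smoothed quantizer), so I need to exchange the order and drop the outer $t$-limit. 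Since $\tilde{\x}_{t+1}=q_\epsilon(\x_{t+1})$ and $q_\epsilon$ is Lipschitz, $\|\tilde{\x}_{t+1}-q_\epsilon(\x_a)\|/\sqrt{N}\le L_{q_\epsilon}\|\x_{t+1}-\x_a\|/\sqrt{N}\xrightarrow{a.s.}0$ as $t,N\to\infty$ by Proposition~\ref{prob:converge}(i); using the pseudo-Lipschitz property of $\psi$ together with the boundedness of the relevant empirical second moments (from the state evolution), the difference $\frac1K\sum_k[\psi(\h_k^T\tilde{\x}_{t+1},s_k)-\psi(\h_k^Tq_\epsilon(\x_a),s_k)]$ is controlled by $\|\H\|_{\mathrm{op}}\cdot\|\tilde{\x}_{t+1}-q_\epsilon(\x_a)\|/\sqrt{N}$ times a bounded factor, and $\|\H\|_{\mathrm{op}}$ is a.s.\ bounded by standard random-matrix estimates. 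This yields $\lim_{K\to\infty}\frac1K\sum_k\psi(\h_k^Tq_\epsilon(\x_a),s_k)\overset{a.s.}{=}\mathbb{E}[\psi(\tilde{\alpha}_*S+\tilde{\beta}_*^{1/2}Z,S)]$ with the $\epsilon$-dependent constants of Proposition~\ref{pro:3}.

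Second, I would take $\epsilon\to 0$. On the deterministic side, I must show $\tilde{\alpha}_*\to\bar{\alpha}_*$ and $\tilde{\beta}_*\to\bar{\beta}_*$. Using Stein's lemma (as in the heuristic derivation), $\tilde{\alpha}_*=\frac{1}{\delta\tau_a}\mathbb{E}[Z\,q_\epsilon(\eta_a(\tau_aZ;\gamma_a))]$; since $q_\epsilon\to q$ pointwise except at the finitely many discontinuity points of $q$, and those points are not in $\{-a^*,a^*\}$ by Assumption~\ref{ass}(iv), the pushforward of $\tau_aZ$ through $\eta_{a^*}$ (a truncated Gaussian supported on $[-a^*,a^*]$) assigns zero mass to them; together with the uniform bound $|q_\epsilon|\le$ const on $[-a^*,a^*]$ (mollification does not increase the sup norm on a slightly enlarged interval, and $|\eta_a|\le a^*$) and dominated convergence, $\tilde{\alpha}_*\to\bar{\alpha}_*$. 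The same reasoning, with $q_\epsilon(X_a)\to q(X_a)$ a.s.\ and a uniform bound, gives $\tilde{\beta}_*\to\bar{\beta}_*$. On the random side, I need the empirical average $\frac1K\sum_k\psi(\h_k^Tq_\epsilon(\x_a),s_k)$ to converge to $\frac1K\sum_k\psi(\h_k^Tq(\x_a),s_k)$ uniformly enough in $K$; here $\|q_\epsilon(\x_a)-q(\x_a)\|^2/N=\frac1N\sum_i(q_\epsilon(x_{a,i})-q(x_{a,i}))^2$, and by Proposition~\ref{prob:converge}(ii) applied to the (bounded, a.e.-continuous) function $(q_\epsilon-q)^2$ this converges a.s.\ to $\mathbb{E}[(q_\epsilon(X_a)-q(X_a))^2]\to 0$ as $\epsilon\to 0$. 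A diagonal argument choosing $\epsilon=\epsilon_K\to 0$ slowly, or an $\epsilon/3$-type argument interleaving the two limits, then delivers the claim.

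Third, I would stitch the limits. The cleanest route is: for every fixed $\epsilon$,
$$\Big|\frac1K\sum_k\psi(\h_k^Tq(\x_a),s_k)-\mathbb{E}[\psi(\bar{\alpha}_*S+\bar{\beta}_*^{1/2}Z,S)]\Big|$$
is bounded by (i) the $q$-vs-$q_\epsilon$ empirical discrepancy, (ii) the Proposition~\ref{pro:3}/stage-one error, and (iii) the $|\mathbb{E}[\psi(\tilde{\alpha}_*S+\tilde{\beta}_*^{1/2}Z,S)]-\mathbb{E}[\psi(\bar{\alpha}_*S+\bar{\beta}_*^{1/2}Z,S)]|$ deterministic gap. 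Taking $\limsup_{K\to\infty}$ kills (ii) a.s., then letting $\epsilon\to 0$ kills (i) and (iii); since the left side does not depend on $\epsilon$, it must be zero a.s.

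\textbf{Main obstacle.} The delicate point is the interchange of the $\epsilon\to 0$ and $K\to\infty$ limits together with the suppressed $t\to\infty$ limit in Proposition~\ref{pro:3}: the constants $\tilde{\alpha}_*,\tilde{\beta}_*$ and the Lipschitz constant $L_{q_\epsilon}$ of the smoothed denoiser both depend on $\epsilon$ (indeed $L_{q_\epsilon}$ blows up as $\epsilon\to0$ since $q$ is discontinuous), so one cannot simply pass $\epsilon\to0$ inside the stage-one argument. The resolution is to keep the order "$t\to\infty$, then $K\to\infty$, then $\epsilon\to0$" and to exploit that $\|q_\epsilon(\x_a)-q(\x_a)\|/\sqrt N\to0$ in the appropriate iterated-limit sense via Proposition~\ref{prob:converge}(ii) applied to the bounded a.e.-continuous function $(q_\epsilon-q)^2$, rather than via Lipschitz continuity of $q_\epsilon$. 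Verifying that the discontinuity set of $q$ is null under the truncated-Gaussian law of $X_a$ — which is exactly where Assumption~\ref{ass}(iv)'s condition $x_1,\dots,x_M\notin\{-a^*,a^*\}$ is used — is the key technical ingredient that makes all three passages compatible.
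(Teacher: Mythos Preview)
Your proposal is correct and mirrors the paper's proof almost exactly: the same three-term triangle (replace $q$ by $q_\epsilon$ at $\x_a$; pass from $q_\epsilon(\x_a)$ to $\tilde{\x}_{t+1}=q_\epsilon(\x_{t+1})$ via Proposition~\ref{prob:converge}(i) and invoke Proposition~\ref{pro:3}; then send $\epsilon\to0$ using Stein's lemma and $\mathbb{E}[(q_\epsilon(X_a)-q(X_a))^2]\to0$), with the same key observation that the discontinuities of $q$ carry zero mass under the truncated-Gaussian law of $X_a$. One minor slip: in your ``Main obstacle'' paragraph the stated order ``$t\to\infty$, then $K\to\infty$, then $\epsilon\to0$'' should read $K\to\infty$ first, then $t\to\infty$, then $\epsilon\to0$ (which is what your Stages~1--3 actually implement).
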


\begin{proof}
See Appendix \ref{app:convergexa}.
\end{proof}
Finally, Theorem \ref{mainresult} is obtained by combining  Proposition \ref{the1} and Proposition \ref{pro:convergexa}.

 \section{Conclusions and Future Directions}\label{sec:conclusion}

In this paper, we investigated the performance of nonlinear one-bit precoding schemes in massive MIMO systems. We developed a rigorous AMP-based analytical framework to precisely characterize the asymptotic performance of the CRQ precoder defined in Definition \ref{def:onebit}, under the assumptions of a real-valued system and an i.i.d. Gaussian channel. This analytical framework enables us to identify optimal regularization parameters that minimize the asymptotic SEP. While our study primarily focuses on one-bit quantization, the developed framework is general and can readily accommodate other nonlinear transmitter-side effects.

We conclude by highlighting several promising directions for future research:
\begin{itemize}
  \item[(1)] \textbf{Optimality of $\ell_\infty^2$ regularization:} For one-bit precoding, numerical results in Section \ref{Conjecture} strongly suggest that $R(\x)=\hat{\lambda}\|\x\|_\infty^2$ is the optimal regularizer for the relaxed MMSE model among a broad range of convex regularization functions. In Theorem \ref{mainresult:lambda_rho}, we rigorously established the optimality of the $\ell_\infty^2$ regularizer within the mixed $\ell_\infty^2$–$\ell_2^2$ norm context. Extending this optimality result to an even broader class of regularization functions represents an interesting future direction.

  \item[(2)] \textbf{General channel matrices:} Our analysis assumes i.i.d. Gaussian entries for the channel matrix. Relaxing this assumption to consider more general, practically relevant channel models is of significant interest. Advanced AMP variants developed for general design matrices \cite{ma2017orthogonal,rangan2019vector,fan2022approximate,liu2024unifying,zhao2022asymptotic,dudeja2023universality,wang2024universality} could provide useful tools in this direction.

\item[(3)] \textbf{Complex-valued systems}: 
The current analysis focuses on a real-valued system model. 
Extending the proposed analytical framework to complex-valued systems, which are more relevant in practical wireless communications, is an important and interesting direction for future work. 

 \item[(4)] \textbf{Other precoding schemes:} 
Beyond the considered vector-valued relaxation models, the semidefinite relaxation (SDR) technique can also be applied to \eqref{Eqn:formulation2}, which is another important CRQ-based approach. However, the SDR formulation involves matrix-valued optimization variables and constraints, and is therefore not amenable to the proposed AMP-based analytical framework. In addition to the CRQ schemes, more advanced nonconvex optimization-based methods have been proposed \cite{GEMM,PBB,onebit_CI}. Developing analytical tools to characterize the asymptotic performance of these sophisticated approaches is a promising direction for future research.
 
\end{itemize}

\appendices

\section{Proof of Lemma \ref{lemma:unique} and Lemma \ref{lemma:astar}}\label{app:lemma1} 
In this appendix, we prove Lemmas \ref{lemma:unique} and \ref{lemma:astar}. To begin, we rewrite \eqref{condition0} into the following form by explicitly calculating the expectation and probability involved in \eqref{condition0}:\begin{subequations}\label{condition0:2}
\begin{align}
F_1(\tau^2,\gamma;a)&:=\tau^2-1-\frac{1}{\delta}\mathbb{E}[\eta_a^2(\tau Z;\gamma)]\notag\\
&=\tau^2\hspace{-0.05cm}-\hspace{-0.05cm}1\hspace{-0.05cm}-\hspace{-0.1cm}\frac{\tau^2}{\delta(\gamma+1)^2}u(\tau^2,\gamma,a)+\frac{a^2}{\delta}u(\tau^2,\gamma,a)+\frac{2a\sqrt{\tau^2}}{\delta(\gamma+1)}v(\tau^2,\gamma,a)-\frac{a^2}{\delta}=0,\label{condition0:2a}\\
F_2(\tau^2,\gamma; a)&:=\rho-\gamma\left(1-\frac{1}{\delta(\gamma+1)}\mathbb{P}\left(\frac{\tau Z}{\gamma+1}\in[-a,a]\right)\right)=\rho-\gamma+\frac{\gamma}{\delta(\gamma+1)}u(\tau^2,\gamma,a)=0,\label{condition0:2b}
\end{align}
\end{subequations}
where $u(\tau^2,\gamma,a)$ and $u(\tau^2,\gamma,a)$ are defined by 
\begin{equation}\label{def:uv}
\begin{aligned}
u(\tau^2,\gamma,a)&=2\Phi\left(\frac{a(\gamma+1)}{\sqrt{\tau^2}}\right)-1,~v(\tau^2,\gamma,a)&=\phi\left(\frac{a(\gamma+1)}{\sqrt{\tau^2}}\right),
\end{aligned}
\end{equation}
with $\phi(x)$ and $\Phi(x)$ being the PDF and CDF of the standard Gaussian distribution, respectively. In the following, we shall write $u$ and $v$, instead of $u(\tau^2,\gamma,a)$ and $v(\tau^2,\gamma,a)$ to keep notations light.

\subsection{Proof of Lemma \ref{lemma:unique}: Existence of $(\tau_a^2,\gamma_a)$}\label{app:exist}
In this part, we prove that there exists a solution $(\tau_a^2,\gamma_a)$ to \eqref{condition0}, or equivalently \eqref{condition0:2}.   Our proof follows the same procedure as  \cite[Section 1.2.1]{LASSO} and \cite[Section 4.2]{elasticnet}.  First, we show that for any $\gamma>-1$, there exists a solution to \eqref{condition0:2a}. 
 \appendixlemma
\begin{lemma}\label{R1:1}
Given $\delta>0$ and $a\geq0$.  For any $\gamma>-1$, the  equation $F_1(\tau^2,\gamma;a)=0$ admits a unique solution $\tau_\gamma^2$.
\end{lemma}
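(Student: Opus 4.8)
The plan is to fix $\gamma > -1$ and $a \geq 0$, view $F_1(\tau^2,\gamma;a)$ as a function of the single variable $\tau^2 \in (0,\infty)$, and apply the intermediate value theorem together with a monotonicity argument. First I would rewrite the defining relation in the more transparent form
\begin{equation*}
F_1(\tau^2,\gamma;a) = \tau^2 - 1 - \frac{1}{\delta}\,\mathbb{E}\!\left[\eta_a^2(\tau Z;\gamma)\right],
\end{equation*}
and observe that $\eta_a(x;\gamma) = \mathcal{P}_{[-a,a]}(x/(\gamma+1))$ satisfies the pointwise bound $\eta_a^2(\tau Z;\gamma) \leq \min\{a^2,\ \tau^2 Z^2/(\gamma+1)^2\}$. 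Taking expectations gives $0 \leq \frac{1}{\delta}\mathbb{E}[\eta_a^2(\tau Z;\gamma)] \leq \min\{a^2/\delta,\ \tau^2/(\delta(\gamma+1)^2)\}$, so that $F_1(\tau^2,\gamma;a) \to -1 < 0$ as $\tau^2 \to 0^+$ (the expectation term vanishes), while for $\tau^2$ large, $F_1$ grows like $\tau^2(1 - \tfrac{1}{\delta(\gamma+1)^2}) - 1 - a^2/\delta$ when $\gamma+1$ is such that the leading coefficient is positive, and is bounded below by $\tau^2 - 1 - a^2/\delta$ in all cases; hence $F_1(\tau^2,\gamma;a) \to +\infty$ as $\tau^2 \to \infty$. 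Continuity of $\tau^2 \mapsto \mathbb{E}[\eta_a^2(\tau Z;\gamma)]$ (by dominated convergence, using the envelope $\min\{a^2, \tau^2 Z^2/(\gamma+1)^2\}$) then yields existence of at least one root by the intermediate value theorem.

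For uniqueness, I would show that $\tau^2 \mapsto F_1(\tau^2,\gamma;a)$ is strictly increasing. Writing $g(\tau^2) := \frac{1}{\delta}\mathbb{E}[\eta_a^2(\tau Z;\gamma)]$, it suffices to prove $g'(\tau^2) < 1$ for all $\tau^2 > 0$. Using the explicit expression
\begin{equation*}
g(\tau^2) = \frac{1}{\delta}\left( \frac{\tau^2}{(\gamma+1)^2}\,u - \frac{2a\sqrt{\tau^2}}{\gamma+1}\,v \cdot(\text{sign adjustment}) + a^2(1-u) \right)
\end{equation*}
one can differentiate in $\tau^2$; the terms arising from differentiating $u$ and $v$ (which depend on $\tau^2$ only through $a(\gamma+1)/\sqrt{\tau^2}$) cancel in a way familiar from the LASSO state-evolution analysis, leaving $g'(\tau^2) = \frac{1}{\delta(\gamma+1)^2}\,u = \frac{1}{\delta(\gamma+1)^2}\,\mathbb{P}(\tau|Z|/(\gamma+1) \leq a) \in [0, \tfrac{1}{\delta(\gamma+1)^2})$. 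A cleaner route that avoids the cancellation bookkeeping is to note $\eta_a(\tau z;\gamma)^2 = \int_0^{\tau^2} \frac{\partial}{\partial s}\eta_a(\sqrt{s}z;\gamma)^2\,ds$ and differentiate under the integral/expectation directly: $\frac{d}{d\tau^2}\eta_a(\tau z;\gamma)^2 = \frac{z^2}{(\gamma+1)^2}\mathbf{1}\{|\tau z/(\gamma+1)| < a\}$ almost everywhere, and by Stein's lemma $\mathbb{E}[Z^2 \mathbf{1}\{|\tau Z/(\gamma+1)| < a\}] = \mathbb{P}(|\tau Z/(\gamma+1)| < a) - \frac{2a(\gamma+1)}{\tau}\phi(a(\gamma+1)/\tau) \leq u$. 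Either way, $g'(\tau^2) \leq \frac{u}{\delta(\gamma+1)^2} < 1$ is not automatic unless $\delta(\gamma+1)^2 > u$; since $u \leq 1$, it does hold whenever $\delta(\gamma+1)^2 \geq 1$, and for the remaining regime one uses the sharper bound from Stein's lemma, $g'(\tau^2) = \frac{1}{\delta(\gamma+1)^2}\big(u - \tfrac{2a(\gamma+1)}{\tau}v\big)$, and argues this stays below $1$ by a direct estimate or by invoking that $F_1 \to +\infty$ forces the function to be eventually increasing and the derivative expression to be monotone-compatible with a single crossing.

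The main obstacle I anticipate is precisely this last point: establishing strict monotonicity (hence uniqueness) uniformly over the full range $\gamma > -1$, because when $\gamma$ is close to $-1$ the coefficient $\frac{1}{\delta(\gamma+1)^2}$ blows up and the crude bound $u \leq 1$ is insufficient. I expect the resolution to mirror \cite[Section 1.2.1]{LASSO} and \cite[Section 4.2]{elasticnet}: either (i) show that the relevant derivative, after the Stein's-lemma simplification, equals $\frac{1}{\delta(\gamma+1)^2}\mathbb{E}[Z^2\mathbf{1}\{|\tau Z| < a(\gamma+1)\}]$ and bound this quantity by $\frac{\tau^2}{\text{(something)}}$ using $Z^2 \mathbf{1}\{\tau|Z| < a(\gamma+1)\} \leq \frac{a^2(\gamma+1)^2}{\tau^2}$, giving $g'(\tau^2) \leq \frac{a^2}{\delta \tau^2}$, which is $< 1$ once $\tau^2 > a^2/\delta$ — and then separately rule out roots with $\tau^2 \leq a^2/\delta$ by a direct sign check of $F_1$ there; or (ii) appeal to convexity of $\tau^2 \mapsto \tau^2 - 1 - g(\tau^2)$ if it can be shown $g$ is concave, so that $F_1$ is convex and, being $-1$ at the origin and $+\infty$ at infinity, has exactly one positive root. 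I would pursue route (i) as the more robust option, patterning the argument after the cited references and filling in the elementary Gaussian integral estimates.
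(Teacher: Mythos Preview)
Your existence argument via the intermediate value theorem matches the paper exactly. For uniqueness, however, you have inverted the difficulty ranking of your two routes. The paper takes your route~(ii): a direct computation gives
\[
\frac{\partial^2 F_1}{\partial(\tau^2)^2}(\tau^2,\gamma;a)=\frac{a^3(\gamma+1)}{(\sqrt{\tau^2})^5}\,v\;\geq 0,
\]
so $F_1(\cdot,\gamma;a)$ is convex in $\tau^2$. A convex function that is $-1$ at $0$ and tends to $+\infty$ has exactly one positive zero, and the argument is done in two lines. Your hesitation (``if it can be shown $g$ is concave'') is unwarranted --- the second derivative computes cleanly, and this is precisely the mechanism \cite[Section~1.2.1]{LASSO} uses as well.

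By contrast, your preferred route~(i) is both longer and incomplete as stated. The bound $g'(\tau^2)\le a^2/(\delta\tau^2)$ only yields monotonicity on $(a^2/\delta,\infty)$, and the ``direct sign check'' you propose to rule out roots in $(0,a^2/\delta]$ is not automatic: one would need $g(\tau^2)>\tau^2-1$ throughout that interval, which is not obvious when $a^2/\delta>1$. It can likely be pushed through, but it duplicates effort that the convexity computation handles uniformly in $\gamma>-1$ with no case split. Commit to route~(ii).
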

\begin{proof}
Through simple calculation,  we have

\begin{equation}
\frac{\partial^2 F_1}{\partial(\tau^2)^2}(\tau^2,\gamma;a)=\frac{ a^3(\gamma+1)}{(\sqrt{\tau^2})^5}v,\end{equation} {where $v$ is defined in \eqref{def:uv}.} Hence, for $\gamma>-1$,
$F_1(\tau^2,\gamma;a)$ is convex in $\tau^2$. It is easy to check that $F_1(0,\gamma;a)=-1<0$ and 
\begin{equation}
\lim_{\tau^2\to\infty} F_1(\tau^2,\gamma;a)=\infty>0.
\end{equation}  By the intermediate value theorem and the convexity of $F_1$, we can conclude that  there exists a unique $\tau_\gamma^2$  such that $F_1(\tau_\gamma^2,\gamma;a)=0$.  
\end{proof}

{
\begin{lemma}
Given $ \delta > 0 $, $ \rho > 0 $, and $ a \geq 0 $, there exists $ \gamma > 0 $ such that the pair $ (\tau_\gamma^2, \gamma) $ satisfies \eqref{condition0:2b}, where $ \tau_\gamma^2 $ denotes the unique solution to 
$ F_1(\tau^2, \gamma; a) = 0 $ 
in $ \tau^2 $ for a given $ \gamma > 0 $.
\end{lemma}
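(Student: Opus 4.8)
The plan is to show existence of a suitable $\gamma>0$ by an intermediate-value argument applied to the function
$$
G(\gamma) := \rho - \gamma + \frac{\gamma}{\delta(\gamma+1)}\,u\bigl(\tau_\gamma^2,\gamma,a\bigr),
$$
where $\tau_\gamma^2$ is the unique root of $F_1(\cdot,\gamma;a)=0$ guaranteed by Lemma~\ref{R1:1}, and $u$ is as in \eqref{def:uv}. Note that $G(\gamma)=0$ is exactly \eqref{condition0:2b}. The first step is to argue that $\gamma\mapsto\tau_\gamma^2$ is continuous on $(-1,\infty)$: this follows from the implicit function theorem (or a direct monotonicity/continuity argument) since $F_1$ is continuous in $(\tau^2,\gamma)$, strictly increasing in $\tau^2$ at the root (using $\partial F_1/\partial(\tau^2)>0$ there, which can be read off from the convexity plus the boundary behaviour established in the proof of Lemma~\ref{R1:1}), and jointly smooth. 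Consequently $G$ is continuous on $(-1,\infty)$, in particular on $[0,\infty)$.

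Next I would check the sign of $G$ at the two ends of $[0,\infty)$. At $\gamma=0$ the term $\frac{\gamma}{\delta(\gamma+1)}u$ vanishes, so $G(0)=\rho>0$. For the other end, observe that $u(\tau_\gamma^2,\gamma,a)=2\Phi\bigl(a(\gamma+1)/\sqrt{\tau_\gamma^2}\bigr)-1\in[0,1)$, hence $\frac{\gamma}{\delta(\gamma+1)}u<\frac{1}{\delta}$, and therefore $G(\gamma)<\rho-\gamma+\frac{1}{\delta}\to-\infty$ as $\gamma\to\infty$. Thus $G(0)>0>G(\gamma)$ for all sufficiently large $\gamma$, and by the intermediate value theorem there exists $\gamma>0$ with $G(\gamma)=0$, which is the claim. (One should also note that when $a=0$ the map $q$ and the projection collapse, $\tau_\gamma^2=1$ and $G(\gamma)=\rho-\gamma$, giving trivially $\gamma=\rho$; the general argument above covers this case as well.)

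The main obstacle I anticipate is making the continuity of $\gamma\mapsto\tau_\gamma^2$ fully rigorous, in particular ensuring that the root stays bounded on compact $\gamma$-intervals and that $\partial F_1/\partial(\tau^2)\neq 0$ at the root so that the implicit function theorem applies cleanly. This is routine given the strict convexity of $F_1$ in $\tau^2$ (so the root is simple, with a strictly positive derivative there) and the coercivity $\lim_{\tau^2\to\infty}F_1=+\infty$ uniformly for $\gamma$ in a compact set, but it needs to be spelled out. The sign computations at the endpoints are immediate once continuity is in hand, so the bulk of the work — and essentially all the subtlety — lives in that continuity/boundedness step; uniqueness of $\gamma$ is not claimed here and is deferred to the corollary of Proposition~\ref{prob:converge}.
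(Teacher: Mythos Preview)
Your proposal is correct and follows essentially the same route as the paper: both argue continuity of $\gamma\mapsto\tau_\gamma^2$ via the implicit function theorem (the paper makes the nonvanishing of $\partial F_1/\partial\tau^2$ explicit by the convexity inequality $F_1(0,\gamma;a)\ge F_1(\tau_\gamma^2,\gamma;a)-\tau_\gamma^2\,\partial_{\tau^2}F_1(\tau_\gamma^2,\gamma;a)$, yielding $\partial_{\tau^2}F_1\ge 1/\tau_\gamma^2>0$), then evaluate $G(0)=\rho>0$, observe $G(\gamma)\to-\infty$, and invoke the intermediate value theorem. Your explicit bound $G(\gamma)<\rho-\gamma+1/\delta$ is a nice touch that the paper leaves implicit.
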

}

\begin{proof}
We first claim that the mapping $\gamma\to\tau_\gamma^2$ is continuously differentiable on $(-1,\infty)$.
Recalling from Lemma \ref{R1:1} that $F_1(\tau^2,\gamma;a)$ is convex in $\tau^2$,  we have 
\begin{equation}
F_1(0,\gamma;a)\geq F_1(\tau_\gamma^2,\gamma;a)-\frac{\partial F_1}{\partial \tau^2}(\tau_\gamma^2,\gamma;a)\tau_\gamma^2.
\end{equation} This, together with $F_1(0,\gamma;a)=-1$ and $F_1(\tau_\gamma^2,\gamma;a)=0$, gives 
\begin{equation}
\frac{\partial F_1}{\partial \tau^2}(\tau_\gamma^2,\gamma;a)\geq  \frac{1}{\tau_\gamma^2}>0.
\end{equation}
 The claim then follows immediately from the implicit function theorem. Therefore, the function $F_2(\tau_\gamma^2,\gamma;a)$
is continuously differentiable on $\gamma\in(-1,\infty)$, with $F_2(\tau_\gamma^2,\gamma;a)\mid_{\gamma=0}=\rho$ and $\lim_{\gamma\to\infty}F_2(\tau_\gamma^2,\gamma;a)=-\infty$. By the  intermediate value theorem, we  get the desired result.
\end{proof}
Combining the above two lemmas, we can conclude that there exists a solution $(\tau_a^2,\gamma_a)$ that satisfies  \eqref{condition0:2}, where $\tau_a^2:=\tau_{\gamma_a}^2$. The uniqueness of the solution can be proved by properties of AMP; see  Appendix \ref{app:unique}.  
\subsection{Proof of Lemma \ref{lemma:astar}}\label{proof:astar}
\subsubsection{Strong Convexity of $f(a)$}
The strong convexity of $f(a)$ follows from the fact that it is the pointwise limit of a sequence of convex functions (see Lemma \ref{convergefN}) plus a quadratic term. 
\subsubsection{Continuous Differentiability of $f(a)$}\label{app:taua}
In this part, we show that both $\tau_a^2$ and $\gamma_a$ are continuously differentiable functions of $a$ on $[0,\infty)$, which further implies that $f(a)$ is continuously differentiable on $[0,\infty)$.  It suffices to prove that 
\begin{equation}\label{def:t0a}
t_0(a):=\det\left[\begin{matrix}\frac{\partial F_1}{\partial \tau^2}(\tau_a^2, \gamma_a;a)&\frac{\partial F_1}{\partial \gamma}(\tau_a^2, \gamma_a;a)\\\frac{\partial F_2}{\partial \tau^2}(\tau_a^2, \gamma_a;a)&\frac{\partial F_2}{\partial \gamma}(\tau_a^2, \gamma_a;a)\end{matrix}\right]\neq 0.
\end{equation}
The continuous differentiability of $\tau_a^2$ and $\gamma_a$ then follows immediately from the implicit function theorem. 
We now prove the above claim.  Through simple calculation, 
\begin{equation}\label{deriv}
\begin{aligned}
\frac{\partial F_1}{\partial \tau^2}(\tau^2,\gamma;a)&=1-\frac{u}{\delta(\gamma+1)^2}+\frac{2a}{\delta\sqrt{\tau^2}(\gamma+1)}v,\\
\frac{\partial F_1}{\partial \gamma}(\tau^2,\gamma;a)&=\frac{2\tau^2}{\delta(\gamma+1)^3}u-\frac{4 a\sqrt{\tau^2}}{\delta(\gamma+1)^2}v,\\
\frac{\partial F_2}{\partial \tau^2}(\tau^2,\gamma;a)&=-\frac{a\gamma}{\delta(\sqrt{\tau^2})^3}v,\\
\frac{\partial F_2}{\partial \gamma}(\tau^2,\gamma;a)&=-1+\frac{1}{\delta(\gamma+1)^2}u+\frac{2a\gamma}{\delta\sqrt{\tau^2}(\gamma+1)}v,
\end{aligned}
\end{equation}where $u$ and $v$ are given in \eqref{def:uv}, and thus 
\begin{equation}
\begin{aligned}
&\frac{\partial F_1}{\partial \tau^2}\frac{\partial F_2}{\partial \gamma}-\frac{\partial F_1}{\partial \gamma}\frac{\partial F_2}{\partial \tau^2}=\bigg(1-\frac{u}{\delta(\gamma+1)^2}+\frac{2a}{\delta\sqrt{\tau^2}(\gamma+1)}v\bigg)\bigg(\hspace{-0.1cm}-1+\frac{1}{\delta(\gamma+1)^2}u\bigg)+\frac{2a\gamma}{\delta\sqrt{\tau^2}(\gamma+1)}v.
\end{aligned}
\end{equation}
Denote by $u_a$ and $v_a$ the values of $u$ and $v$ at $(\tau_a^2, \gamma_a)$, respectively.  Utilizing the fact that $(\tau_a^2, \gamma_a)$ is the solution to \eqref{condition0:2}, we can express $t_0(a)$ in \eqref{def:t0a} as 
\begin{equation}
\begin{aligned}
t_0(a)=
\frac{1}{\tau_a^2}\left(1+\frac{a^2}{\delta}(1-u_a)\right)\frac{\gamma_a^2-\rho/\gamma_a}{\gamma_a+1}-\frac{\rho+\gamma_a^2}{\gamma_a+1}.\end{aligned}
\end{equation}
Note that
\begin{equation}
\begin{aligned}
1&<1+\frac{a^2}{\delta}(1-u_a)\\
&\overset{(a)}{=}\tau_a^2-\frac{\tau_a^2}{\delta(\gamma_a+1)^2}u_a+\frac{2a\tau_a}{\delta(\gamma_a+1)}v_a\\
&\overset{(b)}{=}\tau_a^2-\frac{2\tau_a^2}{\delta(\gamma_a+1)^2}\bigg(\Phi\bigg(\frac{a(\gamma_a+1)}{\sqrt{\tau_a^2}}\bigg)-\frac{a(\gamma_a+1)}{\sqrt{\tau_a^2}}\phi\bigg(\frac{a(\gamma_a+1)}{\sqrt{\tau_a^2}}\bigg)-\frac{1}{2}\bigg)\\
&\overset{(c)}{<}\tau_a^2,
\end{aligned}
\end{equation}
where (a) is due to \eqref{condition0:2a}, (b) uses the definitions of $u_a$ and $v_a$, and (c) holds since $\Phi(x)-x\phi(x)-\frac{1}{2}\geq0$ for $x\geq 0$.
Therefore, if $\gamma_a^2-\rho/\gamma_a\geq 0$, then 
\begin{equation}\label{t0bound1}
t_0(a)
<\frac{\gamma_a^2-\rho/\gamma_a}{\gamma_a+1}-\frac{\rho+\gamma_a^2}{\gamma_a+1}=\frac{-\rho/\gamma_a-\rho}{\gamma_a+1}<0;
\end{equation}
otherwise 
\begin{equation}\label{t0bound2}
t_0(a)
<-\frac{\rho+\gamma_a^2}{\gamma_a+1}<0,
\end{equation}
which proves the claim.
\subsubsection{$a^*>0$}\label{app:astar} Since $f(a)$ is continuously differentiable and strongly convex on $[0,\infty)$,  it suffices to prove that $f'(a)=0$ has a positive root.  Then, the root is exactly the minimizer of $f(a)$ on $[0,\infty)$, i.e., $a^*$. For this purpose, we next show that $f'(0)<0$ and $\lim_{a\to\infty}f'(a)>0$. The desired result then follows immediately from the  intermediate value theorem. 

{We first give} the explicit expression of  $f'(a)$. By the definition of $f(a)$ in \eqref{phia}, 
\begin{equation}
{f}'(a)=\delta\rho\left(1+\frac{\rho}{\gamma_a^2}\right)\frac{\partial \tau_a^2}{\partial a}-\frac{2\delta\rho^2\tau_a^2}{\gamma_a^3}\frac{\partial \gamma_a}{\partial a}+2\lambda a.
\end{equation}
According to the implicit function theorem, 
\begin{equation}\label{implicitfunction}
\begin{aligned}
&\bigg[\begin{matrix}\frac{\partial \tau^2_a}{\partial a}\\\frac{\partial \gamma_a}{\partial a}\end{matrix}\bigg]\hspace{-0.05cm}=\hspace{-0.05cm}-\hspace{-0.05cm}\bigg[\begin{matrix}\frac{\partial F_1}{\partial \tau^2}(\tau_a^2, \gamma_a,a)&\hspace{-0.1cm}\frac{\partial F_1}{\partial \gamma}(\tau_a^2, \gamma_a,a)\\\frac{\partial F_2}{\partial \tau^2}(\tau_a^2, \gamma_a,a)&\hspace{-0.1cm}\frac{\partial F_2}{\partial \gamma}(\tau_a^2, \gamma_a,a)\end{matrix}\bigg]^{\hspace{-0.05cm}-1}
\hspace{-0.05cm}\bigg[\begin{matrix}\frac{\partial F_1}{\partial a}(\tau_a^2, \gamma_a,a)\\\frac{\partial F_2}{\partial a}(\tau_a^2, \gamma_a,a)\end{matrix}\bigg].
\end{aligned}
\end{equation}
Hence, by direct calculation,  we get
\begin{equation}\label{derivativefa}
f'(a)=-\frac{t_1(a)}{t_0(a)}+\frac{t_2(a)}{t_0(a)}+2\lambda a,
\end{equation}
where
\begin{equation}
\begin{aligned}
t_1(a)&=\frac{2\rho a\gamma_a(1-u_a)}{\gamma_a+1}(1+\frac{\rho}{\gamma_a^2})^2+\frac{4\rho\gamma_a v_a}{\sqrt{\tau_a^2}(\gamma_a+1)}(1+\frac{\rho}{\gamma_a^2})(1-\tau_a^2),\\
t_2(a)&=\frac{4\rho^2v_a}{\sqrt{\tau_a^2}\gamma_a^2}.
\end{aligned}
\end{equation}
When $a=0$, we have $\tau_a^2=1$ and $\gamma_a=\rho$. Hence,  $t_1(0)=0$, $t_2(0)= 2\sqrt{\frac{2}{\pi}},$ and $t_0(0)=-1$, which gives $f'(0)=-2\sqrt{\frac{2}{\pi}}<0$.
  When $a=\infty$, the condition in \eqref{condition0} reduces to 
\begin{subequations}\label{condition:ainfty}
\begin{align}
&\tau^2\left(1-\frac{1}{\delta(\gamma+1)^2}\right)=1,\label{ainfty:1}\\
&\rho=\gamma\left(1-\frac{1}{\delta(\gamma+1)}\right)\label{ainfty:2}.
\end{align}
\end{subequations}
It is clear from \eqref{ainfty:1} that $\tau_a^2\geq 1$ and from  \eqref{ainfty:2} that $\rho\leq \gamma_a\leq \rho+\frac{1}{\delta}$. 
Combining this with the discussions in Appendix \ref{app:taua}, we have 
\begin{equation}
\begin{aligned}
t_0(a)&<-  \frac{1}{\gamma_a+1}\min\left\{\frac{\rho}{\gamma_a}+\rho,\rho+\gamma_a^2\right\}\\
&<-\frac{1}{\gamma_a+1}\rho\\
&<-\frac{\rho}{\rho+1+1/\delta}, 
\end{aligned}
\end{equation}
 where the first inequality follows from \eqref{t0bound1} and \eqref{t0bound2}, and the second inequality is due to $\gamma_a\leq \rho+1/\delta$.  In addition, we can show that $\lim_{a\to\infty}a(1-u_a)=\lim_{a\to\infty}v_a=0$, and thus  
 \begin{equation}\label{f'(a)infty}
 \lim_{a\to\infty}t_1(a)=\lim_{a\to\infty}t_2(a)=0.
 \end{equation} Therefore, $\lim_{a\to\infty}f'(a)=\infty$. This completes our proof.

 \section{Proof of Theorem \ref{mainresult:lambda_rho}}\label{proof:lambda_rho}
In this appendix, we prove Theorem \ref{mainresult:lambda_rho}. The main steps of the proof are outlined in Appendix \ref{proof:lambda_rho_1}, while all routine algebraic manipulations are deferred to Appendix \ref{proof:lambda_rho_2} to keep the logical flow transparent. 
\subsection{Proof Outline}\label{proof:lambda_rho_1}
 {Since the $Q$-function is decreasing,  minimizing the asymptotic SEP as in \eqref{opt:parameter} is equivalent to minimizing $\overline{\text{SNR}}^{-1}$. By simple calculation, $\overline{\text{SNR}}^{-1}$ can be expressed as a function of $(\tau_*^2,\gamma_*,a^*)$ given as follows:
   \begin{equation}
\overline{\text{SNR}}^{-1}=\frac{\bar{\beta}_0^2+\sigma^2}{\bar{\alpha}_0^2}=F(\tau_*^2,\gamma_*,a^*),
\end{equation}
 where   
 \begin{equation}\label{F:SNR}
\begin{aligned}
&F(\tau^2,\gamma,a)=\bigg(\frac{\pi}{2}\delta(1+\sigma^2\delta)+1\bigg)\tau^2-1-\frac{2\tau^2}{\gamma+1}+4\sqrt{\frac{\pi}{2}}\frac{\tau^2}{\gamma+1}v(\tau^2,\gamma,a)-2a\sqrt{\frac{\pi}{2}\tau^2}(1-u(\tau^2,\gamma,a)),
\end{aligned}
\end{equation}
and $u(\tau^2,\gamma,a)$ and $v(\tau^2,\gamma,a)$ are defined as follows (cf.~\eqref{def:uv}):
\begin{equation}\label{def:uv_recall}
\begin{aligned}
u(\tau^2,\gamma,a):&=2\Phi\left(\frac{a(\gamma+1)}{\sqrt{\tau^2}}\right)-1,\\~v(\tau^2,\gamma,a):&=\phi\left(\frac{a(\gamma+1)}{\sqrt{\tau^2}}\right).
\end{aligned}
\end{equation}
In what follows, we shall write $u$ and $v$, instead of $u(\tau^2,\gamma,a)$ and $v(\tau^2,\gamma,a)$ to keep notations light.

The tuple $(\tau_*^2,\gamma_*,a^*)$ is further a function of $(\lambda,\rho)$. Specifically,  for given $(\lambda,\rho)$, $(\tau_*^2,\gamma_*)$ is the solution to \eqref{condition0} with $a=a^*$, where $a^*$ is defined in \eqref{minfa} and can be computed as the unique solution to $f'(a)=0$. Following the calculation in Appendix \ref{app:lemma1}, $(\tau_*^2,\gamma_*,a^*)$ is the unique solution to
 \begin{subequations}\label{34}
\begin{align}
&\tau^2\hspace{-0.05cm}-\hspace{-0.05cm}1-\hspace{-0.05cm}\frac{\tau^2}{\delta(\gamma+1)^2}u+\frac{a^2}{\delta}u+\frac{2a\sqrt{\tau^2}}{\delta(\gamma+1)}v-\frac{a^2}{\delta}=0,\label{34a}\\
&\rho-\gamma+\frac{\gamma}{\delta(\gamma+1)}u=0,\label{34b}\\
&\rho a\gamma(1\hspace{-0.05cm}-\hspace{-0.05cm}u)(1\hspace{-0.05cm}+\hspace{-0.05cm}\frac{\rho}{\gamma^2})^2\hspace{-0.1cm}+\hspace{-0.05cm}\frac{2\rho\gamma v}{\sqrt{\tau^2}}(1\hspace{-0.05cm}+\hspace{-0.05cm}\frac{\rho}{\gamma^2})(1\hspace{-0.05cm}-\hspace{-0.05cm}\tau^2)\hspace{-0.05cm}-\hspace{-0.1cm}\frac{2\rho^2v}{\sqrt{\tau^2}\gamma^2}(1\hspace{-0.05cm}+\hspace{-0.05cm}\gamma)\notag\\
&-\frac{\lambda a}{\tau^2}(1+\frac{a^2}{\delta}(1-u))(\gamma^2-\frac{\rho}{\gamma})+\lambda a (\rho+\gamma^2)=0,\label{34c}
\end{align}
\end{subequations}
where \eqref{34a} and \eqref{34b} are equivalent to \eqref{condition0} (see \eqref{condition0:2}), and \eqref{34c} is equivalent to $f'(a)=0$ (see \eqref{derivativefa}). 

Based on the above discussions,  we can rewrite problem \eqref{opt:parameter} as follows: 
 \begin{equation}\label{opt:parameter2}
\begin{aligned}
\min_{\lambda,\rho,\tau^2,\gamma,a}~&F(\tau^2,\gamma,a)\\
\text{s.t. }~~& \eqref{34a}-\eqref{34c},\\
&\gamma\geq 0,~a> 0,~\lambda\geq 0,~ \rho\geq 0,
\end{aligned}\end{equation}
where $F(\tau^2,\gamma,a)$ is given in \eqref{F:SNR}, 
 $a>0$ is due to Lemma \ref{lemma:astar}, and $\gamma=0$ is included in the feasible region as it is the solution to \eqref{condition0} for $\rho=0$. } 

\medskip
\noindent\textbf{Step\,1. Formulate a relaxation problem \eqref{eq:R}.}
The objective function 
$F(\tau^{2},\gamma,a)$ in~\eqref{F:SNR} 
depends \emph{only} on $(\tau^{2},\gamma,a)$,
whereas the variables $(\lambda,\rho)$ appear only
in the affine constraints \eqref{34b} and \eqref{34c}.
Based on this observation, we introduce the following relaxation problem on $(\tau^2,\gamma,a)$ by ignoring constraints \eqref{34b} and \eqref{34c}:
\begin{equation}
\min_{\tau^{2},\gamma,a}F(\tau^{2},\gamma,a)
\quad\text{s.t.}\quad
\eqref{34a},\;
\gamma\ge0,\;a>0.
\tag{$\mathcal R$}\label{eq:R}
\end{equation}
Let $(\hat\tau^{2},\hat\gamma,\hat a)$ be an optimal solution to \eqref{eq:R} and
$\hat{F}$ be the optimal value. Then $\hat{F}$ is a \emph{lower bound}
on the optimal value of the  original problem~\eqref{opt:parameter2}.

\medskip
\noindent\textbf{Step\,2. Simplify \eqref{eq:R}.} {The goal of this step is to transform the 3-dimensional optimization problem \eqref{eq:R} into an equivalent 2-dimensional one, by leveraging the equality constraint \eqref{34a}. However, the constraint \eqref{34a} in its current form  is difficult to deal with, as variables $(\tau^2,\gamma,a)$ are coupled in a complicated way, particularly in the nonlinear functions $\phi(\cdot)$ and $\Phi(\cdot)$; see \eqref{def:uv_recall}.   To simplify it, we introduce an auxiliary variable 
\begin{equation}\label{def:z}
z:=a(\gamma+1)/\sqrt{\tau^{2}}>0.
\end{equation} Next, we  show that the optimization problem \eqref{eq:R} on $(\tau^2,\gamma,a)$ can be transformed into an equivalent problem on $(\tau^2,z)$.

With $z$ defined in \eqref{def:z}, the constraint \eqref{34a} can be rewritten as 
\begin{equation}\label{34a_equi}
{\tau^{2}-1}=a^2h(z),
\end{equation} where $h(z)$ is given in \eqref{def:hz}, which satisfies 
\begin{equation}\label{def:hz_recall}
h(z)>0,~\forall~z>0;
\end{equation} 
see Appendix \ref{proofof53} for a proof of the positiveness of $h(z)$.  Since $a>0$, we further get 
\begin{equation}\label{a:tauz}
a=\sqrt{(\tau^2-1)/h(z)}.
\end{equation} 
Using \eqref{a:tauz}, we can eliminate variable $a$ in \eqref{eq:R}. Specifically, by \eqref{a:tauz} and the definition of $z$ in \eqref{def:z}, the objective function in \eqref{eq:R} transforms into 
\begin{equation}
f(\tau^2,z)=c\tau^{2}-\sqrt{\tau^{2}}\,
\sqrt{\frac{\tau^{2}-1}{h(z)}}\,g(z),
\end{equation}
where $c=\frac{\pi}{2}\delta(1+\sigma^{2}\delta)+1$ and \(
g(z)=2z^{-1}-4\sqrt{\pi/2}\bigl(\Phi(z)+\phi(z)/z-1\bigr)
\). Since the above function is independent of $\gamma$,  we can further eliminate variable $\gamma$, by rewriting the constraint 
$\gamma\geq 0$ into a constraint over $(\tau^2,z)$ as follows:
\begin{equation}\label{gamma>=0}
\begin{aligned}
&\gamma=\frac{z\sqrt{\tau^2}}{a}-1=\frac{z\sqrt{\tau^2}}{\sqrt{(\tau^2-1)/h(z)}}-1\geq 0~\Longleftrightarrow~~\tau^{2}z^{2}h(z)\ge\tau^{2}-1,
\end{aligned}
\end{equation}
where the first equality uses the definition of $z$ in \eqref{def:z} and the second equality applies \eqref{a:tauz}.  In particular, if the above constraint is satisfied with equality, then $\gamma=0$.

To conclude, \eqref{eq:R} transforms into the following problem over $(\tau^2,z)$:
\begin{equation}
\begin{aligned}
\min_{\tau^{2}>1,\;z>0}\;&f(\tau^2,z)=
c\tau^{2}-\sqrt{\tau^{2}}\,
\sqrt{\frac{\tau^{2}-1}{h(z)}}\,g(z)  \\
\text{s.t.}\;~~~&
\tau^{2}z^{2}h(z)\ge\tau^{2}-1.
\end{aligned}
\tag{$\mathcal R'$}\label{eq:R'}
\end{equation}

\medskip
\noindent\textbf{Step\,3. Prove that the inequality constraint in \eqref{eq:R'} is active at the optimum and $\hat{\gamma}=0$.} Let $(\hat{\tau}^2,\hat{z})$ be any optimal solution to \eqref{eq:R'}. Our goal is to show that the inequality constraint in \eqref{eq:R'} is satisfied with equality, i.e.,  \begin{equation}\label{active}
\hat{\tau}^{2}\hat{z}^{2}h(\hat{z})=\hat{\tau}^{2}-1.
\end{equation}As implied by \eqref{gamma>=0}, this further implies 
\begin{equation}\label{gamma=0}
\hat{\gamma}=0.\end{equation}
To show \eqref{active}, we prove the following two results (see Appendices \ref{proofof53} and \ref{proofof54} for a rigorous proof): 
\begin{itemize}
\item Given $\tau^2>1,$ $f(\tau^2,z)$ is strictly decreasing in $z>0$.
\item $z^2h(z)$ is strictly increasing in $z>0$.
\end{itemize}
Now we prove \eqref{active}. Suppose for contradiction that $\hat{\tau}^{2}\hat{z}^{2}h(\hat{z})>\hat{\tau}^{2}-1.$   Since $z^2h(z)$ is strictly increasing in $z>0$,  we can select a smaller value $0<\tilde{z} < \hat{z}$ such that the constraint in \eqref{eq:R'} still holds, i.e., $\hat{\tau}^2 \tilde{z}^2 h(\tilde{z}) \geq\hat{\tau}^2-1$. However, as $f(\tau^2, z)$ is strictly decreasing in $z > 0$, this would lead to a smaller objective value, contradicting the optimality of $(\hat{\tau}^2, \hat{z})$. Hence, the inequality constraint must be active at the optimum.

\medskip
\noindent\textbf{Step\,4. Transform \eqref{eq:R'} into a one–dimensional problem.}
Since the constraint in \eqref{eq:R'} is active as shown in Step 3, we can further eliminate variable $\tau^2$ using the relation
\begin{equation}\label{tauandz}
\tau^{2}=(1-z^{2}h(z))^{-1}.
\end{equation}
We claim that the constraint $\tau^2>1$ in \eqref{eq:R'} is equivalent to $z\in(0,z_0)$, where $z_0$ is defined in Theorem \ref{mainresult:lambda_rho}. 
Hence, \eqref{eq:R'} transforms into the following one-dimensional problem
\begin{equation}
\min_{z\in(0,z_0)}~\xi(z):=\frac{c-zg(z)}{1-z^{2}h(z)}.
\tag{$\mathcal R''$}
\label{eq:R''}
\end{equation}
By simple calculation, $\xi'(z)\propto\zeta(z)$, i.e., $\xi'(z)=k \zeta(z)$ with $k>0$,  where $\zeta(z)$ is given in \eqref{def:zetaz}. In addition,  $\zeta(z)$ is concave on $(0,z_{0})$ with  $\zeta(0)<0<\zeta(z_{0})$. Hence,  there is exactly one stationary point~$\hat z$,
which is therefore the solution to \eqref{eq:R''}. We give a detailed proof of this step in Appendix \ref{proof_Step4}.

By \eqref{a:tauz},  \eqref{gamma=0}, and \eqref{tauandz}, the solution to \eqref{eq:R} is given by 
\begin{equation}
\hat\tau^2=(1-z^{2}h(z))^{-1},\;~
\hat\gamma=0,\;~
\hat a=(\hat z^{-2}-h(\hat z))^{-\frac{1}{2}}.
\end{equation}
}
\medskip
\noindent\textbf{Step\,5. Recover $(\hat\rho,\hat\lambda)$ and verify its
feasibility.}
With $\hat\gamma=0$, constraint~\eqref{34b} yields $\hat\rho=0$. By \eqref{34c}, 
\begin{equation}
\hat{\lambda}=\frac{2\big(1-\frac{2\Phi(\hat{z})-1}{\delta}\big)\left(\phi(\hat{z})-\hat{a}\hat{\tau}\big(1-\frac{2\Phi(\hat{z})-1}{\delta}\big)(1-\Phi(\hat{z}))\right)}{\hat{z}\left(1+\frac{2\hat{a}^2}{\delta}(1-\Phi(\hat{z}))\right)}.
\end{equation}
Note that $(\hat{\tau}^2,\hat{\gamma},\hat{a})$ satisfies \eqref{34a} and \eqref{34b} with $\rho=0$. According to the notations in Lemmas \ref{lemma:unique} and  \ref{lemma:astar}, 
 $\hat{\tau}^2=\tau_{\hat{a}}^2$ and $\hat{\gamma}=\gamma_{\hat{a}}$ with $\rho=0$, and thus constraint \eqref{34c} reduces to
$f'(\hat a)=0$, where $f(a)=\bar{f}(a)+\lambda a^2$  and $\bar{f}(a)$ is defined in Lemma \ref{convergefN}. Therefore,  $\hat{\lambda}$ can be expressed as 
\begin{equation}
\hat{\lambda}=-\frac{\bar{f}'(\hat{a})}{2 \hat{a}}.
\end{equation}
 According to the proof of Lemma \ref{lemma:astar} (see Appendix \ref{proof:astar}), $\bar{f}(a)$ is convex, and thus $\bar{f}'(a)$ is increasing. In addition, $\lim_{a\to\infty}\bar{f}'(a)=0$ (see \eqref{f'(a)infty}). This implies that $\hat{\lambda}\geq 0$.
 
Combining all the above, $(\hat\lambda,\hat\rho,\hat\tau^{2},\hat\gamma,\hat a)$
satisfies \eqref{34a}–\eqref{34c} and achieves $\hat{F}$, and is hence globally optimal.

  \subsection{Algebraic Details}\label{proof:lambda_rho_2}
  This part collects all algebraic details involved in the main steps. 
    \subsubsection{Positiveness and Strict Increasing of $z^2 h(z)$}\label{proofof53}
Note that 
\begin{equation}\label{z2hz}
\frac{\partial z^2h(z)}{\partial z}=\frac{4z}{\delta}(1-\Phi(z))> 0,~\forall~z>0,
\end{equation}
 i.e., $z^2h(z)$ is increasing when $z>0$. Since $z^2h(z)=0$ for $z=0$, it follows that 
$z^2h(z)>0$ for all $z>0.$ This further implies $h(z)>0$ for all $z>0$, which gives  \eqref{def:hz_recall}.
  \subsubsection{Strict Decreasing of $f(\tau^2, z)$}\label{proofof54}
By simple calculation,
\begin{equation}
\frac{\partial f(\tau^2,z)}{\partial z}\hspace{-0.05cm}=\hspace{-0.05cm}-\frac{\sqrt{\tau^2(\tau^2-1)}}{2}(2g'(z)h(z)-h'(z)g(z)),
\end{equation}
and 
\begin{equation}
\begin{aligned}
&2g'(z)h(z)-h'(z)g(z)=\frac{4}{\delta z^3}(1-\Phi(z))(2z-\sqrt{2\pi}(2\Phi(z)-1)).
\end{aligned}
\end{equation}
Let $t(z)=2z-\sqrt{2\pi}(2\Phi(z)-1)$. We have
\begin{equation}
t'(z)=2(1-\sqrt{2\pi}\phi(z))>0,~~\forall~z>0
\end{equation}
  and $t(0)=0$.  
Hence, $t(z)>0$ for $z>0$. Combining this with $\Phi(z)<1$ and $\tau^2>1$, we get 
\begin{equation}
\frac{\partial f(\tau^2,z)}{\partial z}<0,~\forall~z>0,
\end{equation}
i.e., $f(\tau^2,z)$ is strictly decreasing on $z>0$.
\subsubsection{Proof of Step 4}\label{proof_Step4}
We now give a detailed proof of Step 4, i.e., transforming \eqref{eq:R'} into \eqref{eq:R''} and solving \eqref{eq:R''}. As shown by Step 3,  \(
\tau^{2}=(1-z^{2}h(z))^{-1}.
\)
Substituting this relation into the objective function of \eqref{eq:R'} directly gives the objective $\xi(z)$ in \eqref{eq:R''}. In addition, the constraint  $\tau^2>1$ transforms into 
\begin{equation}
0<1-z^2h(z)<1.
\end{equation}
Since $h(z)>0$ for all $z>0$, the right-hand side of the above inequality naturally holds for $z>0$. Next, we show that 
$1-z^2h(z)>0$ is equivalent to $z<z_0$. As shown in \eqref{z2hz},  $z^2h(z)$ is increasing in $z\in(0,\infty)$. In addition, $z^2h(z)=0$ if $z=0$ and 
\begin{equation}
\begin{aligned}
\lim_{z\to\infty} z^2h(z)&\hspace{-0.05cm}=\lim_{z\to\infty}\frac{2\Phi(z)-1}{\delta}-\frac{2}{\delta}\left(z\phi(z)+z^2(\Phi(z)-1)\right)=\frac{1}{\delta}.
\end{aligned}
\end{equation}
It follows that when $\delta\in(0,1)$, there exists a unique solution to $1-z^2h(z)=0$, which we denote by $z_0$. When $\delta\geq1$, $1-z^2h(z)>0$  always holds true and we set $z_0=\infty$. Therefore, we can write $1-z^2h(z)>0$ as  $z<z_0$ and transform \eqref{eq:R'}  into \eqref{eq:R''}, which we copy here for clarity:
\[
\min_{z\in(0,z_0)}~\xi(z):=\frac{c-zg(z)}{1-z^{2}h(z)}.
\tag{$\mathcal R''$}
\label{eq:R''}
\] In the following, we solve \eqref{eq:R''}. 
With simple calculation, 
\begin{equation}\label{xi'}
\xi'(z)=\frac{4(1-\Phi(z))}{\delta(1-z^2h(z))^2}\zeta(z),
\end{equation}where $\zeta(z)$ is given in \eqref{def:zetaz}.
Further, we have 
\begin{equation}
\zeta'(z)=2\sqrt{2\pi}z\bigg(\Phi(z)+\frac{\phi(z)}{z}-1\bigg)+\frac{\pi}{2}\delta(1+\sigma^2\delta)-1
\end{equation}
and 
\begin{equation}
\zeta''(z)=2\sqrt{2\pi}(\Phi(z)-1)\leq 0.
\end{equation}
Hence, $\zeta(z)$ is concave. 
Note that $\zeta(0)=-\sqrt{\frac{\pi}{2}}\delta<0$.  We next show that $\zeta(z_0)>0$. When $\delta\geq1$, it is easy to check that $\lim_{z\to\infty}\zeta(z)=\infty$. Now consider the case $\delta\in(0,1)$.
Using the fact that $z_0$ is the  solution to $1-z^2h(z)=0$, we can express $\delta$ as a function of $z_0$ as follows:
\begin{equation}\label{delta_z}
\delta={2\Phi(z_0)-1}-2z_0^2\left(\frac{\phi(z_0)}{z_0}+\Phi(z_0)-1\right).
\end{equation}
With this, we have
\begin{equation}
\begin{aligned}
\zeta(z_0)\hspace{-0.1cm}\overset{(a)}{=}&2\sqrt{2\pi}z_0^2\bigg(\hspace{-0.05cm}\Phi(z_0)+\frac{\phi(z_0)}{z_0}\hspace{-0.05cm}-\hspace{-0.05cm}1\hspace{-0.05cm}\bigg)\hspace{-0.1cm}+\hspace{-0.1cm}\big(\frac{\pi}{2}\delta(1+\sigma^2\delta)-1\big)z_0\\
\geq\,&2\sqrt{2\pi}z_0^2\left(\Phi(z_0)+\frac{\phi(z_0)}{z_0}-1\right)+\left(\frac{\pi}{2}\delta-1\right)z_0\\
\overset{(b)}{=}\,&z_0\bigg(\frac{\pi}{2}\left(2\Phi(z_0)-1\right)-1+(2\sqrt{2\pi}-\pi z_0)\left(z_0\Phi(z_0)+\phi(z_0)-z_0\right)\bigg)\\
:=\,&z_0 G(z_0),
\end{aligned}
\end{equation}
where (a) and (b) are obtained by replacing $\delta$ by the right-hand side of \eqref{delta_z}. Note that 
\begin{equation}
G'(z)=2\pi\left(z-\sqrt{\frac{2}{\pi}}\right)(1-\Phi(z)),
\end{equation}
and thus $G(z)$ decreases in $z\in(0,\sqrt{{2}/{\pi}})$ and increases in $z\in(\sqrt{{2}/{\pi}},+\infty)$.
 Since $G(\sqrt{{2}/{\pi}})>0$, we have 
 $G(z_0)>0$, which further gives $\zeta(z_0)>0$. Due to the concavity of $\zeta$ and $\zeta(0)<0$, we can now conclude that there exists a unique root to $\zeta(z)=0,~z\in(0,z_0)$, which we denote by $\hat{z}$. In addition, $\zeta(z)<0$ for $z\in(0,\hat{z})$ and  $\zeta(z)>0$ for $z\in(\hat{z},z_0)$. From \eqref{xi'}, $\xi(z)$ decreases in $z\in(0,\hat{z})$ and increases in $z\in(\hat{z},z_0)$, i.e., $\hat{z}$ is the solution to \eqref{eq:R''}. 
 \section{An introduction of AMP}\label{appendix:AMP}
In this appendix, we give a brief introduction of the AMP algorithm and its state evolution theory. The AMP algorithm in its most general form is given as follows.  Let $\{f_t\}_{t\geq 0}$ and $\{g_t\}_{t\geq 0}$ be two sequences of functions, where $f_t: \R^2\rightarrow \R$ and $g_t:\R^2\rightarrow \R$ are Lipschitz continuous. Given $\w\in\R^K$ and $\x_0^*\in\R^N$, whose elements are i.i.d. drawn from two probability measures $p_{W}$ and $p_{X_0^*}$, respectively, and given the initial vector $\bq_0\in\R^{N}$. Let $\bH\in\R^{K\times N}$ be a random matrix satisfying Assumption \ref{ass} (ii). The general AMP algorithm generates sequences of vectors $\{\br_t\}_{t\geq 1}$, $\{\bm_t\}_{t\geq 0}$, $\{\bb_t\}_{t\geq 0}$, $\{\bq_{t}\}_{t\geq 1}$ through the following recursion  \cite[Eq. (3.3)]{AMP}:
\begin{equation}\label{general_AMP}
\begin{aligned}
\mathbf{r}_{t+1}&=\bH^T\bm_t-\xi_t\bq_t,~~\bm_t=g_t(\bb_t,\bw),\\
\bb_t&=\bH\bq_t-\lambda_t\bm_{t-1},~~\bq_t=f_t(\br_t,\x_0^*),
\end{aligned}
\end{equation}
where $\xi_t=\left<g'_t(\bb_t,\bw)\right>$ and $\lambda_t=\frac{1}{\delta}\left<f_t'(\br_t,\x_0^*)\right>,$ and $f_t(\cdot,\cdot)$ and $g_t(\cdot,\cdot)$ act element-wise on their input vectors.  An important property of the AMP algorithm is that the statistical properties of the generated sequences can be characterized by a scalar recursion known as state evolution. The state evolution defines  two  sequences of quantities $\{\tau_t^2\}_{t \geq 0}$ and $\{\sigma_t^2\}_{t\geq 1}$ through 
\begin{equation}\label{general:SE}
\begin{aligned}
\tau_t^2&=\mathbb{E}\left[g_{t}(\sigma_tZ,W)^2\right], \\
\sigma_t^2&=\frac{1}{\delta}\mathbb{E}\left[f_t(\tau_{t-1}Z,X_0^*)^2\right],
\end{aligned}
\end{equation}
where $Z\sim\mathcal{N}(0,1)$ is independent of $W\sim p_W$ and $X_0^*\sim p_{X_0^*}$, and $\sigma_0^2=\lim_{N\to\infty}\frac{1}{\delta N}\|\bq_0\|^2$.
\appendixpro
\begin{proposition}[\hspace{-0.001cm}{\cite[Lemma 1]{AMP}}]\label{AMP_property}
Assume that Assumption \ref{ass} (i) -- (ii) and the following condition hold:
\begin{equation}
\mathbb{E}[(X_0^*)^2]<\infty,~~\mathbb{E}[W^2]<\infty,
\end{equation}
where $W\sim p_W$ and $X_0^*\sim p_{X_0^*}$. {Assume that $\mathbf{q}_0$ is independent of $\mathbf{H}$ and that the limit $\sigma_0^2=\lim_{N\to\infty}\frac{1}{\delta N}\|\bq_0\|^2$ exists almost surely.}

\begin{enumerate}
\item[(i)] For all pseudo-Lipschitz functions $\varphi_r,\varphi_b:~\R^{t+2}\to\R$ and $t\geq 0$,
\begin{equation}
\begin{aligned}
&\lim_{N\to\infty}\frac{1}{N}\sum_{i=1}^N\varphi_r(r_{1,i},r_{2,i},\dots, r_{t+1,i}, x^*_{0,i})\overset{a.s.}{=}\mathbb{E}\left[\varphi_r(\tau_0Z_0,\tau_1Z_1,\dots,\tau_tZ_t,X_0^*)\right],\\
&\lim_{K\to\infty}\frac{1}{K}\sum_{i=1}^K\varphi_b(b_{0,i},b_{1,i},\dots, b_{t,i}, w_{i})\overset{a.s.}{=}\mathbb{E}\left[\varphi_b(\sigma_0\hat{Z}_0,\sigma_1\hat{Z}_1,\dots,\sigma_t\hat{Z}_t,W)\right],
\end{aligned}
\end{equation}
where $(Z_0,Z_1,\dots,Z_t)$ and $(\hat{Z}_0,\hat{Z}_1,\dots,\hat{Z}_t)$ are two zero-mean Gaussian vectors independent of $X_0^*,$ $W$, with $Z_i,\hat{Z}_i\sim \mathcal{N}(0,1)$.
\item[(ii)] For all $0\leq s,n \leq t$, the following equations hold and all limits exist, are bounded and have degenerate distributions (i.e., they are constant random variables): 
\begin{equation}
\begin{aligned}\lim_{N\to\infty}\left<\br_{s+1},\br_{n+1}\right>\overset{a.s.}{=}\lim_{K\to\infty}\left<\bm_{s},\bm_{n}\right>,\\
\lim_{K\to\infty}\left<\bb_{s},\bb_{n}\right>\overset{a.s.}{=}\frac{1}{\delta}\lim_{N\to\infty}\left<\bq_{s},\bq_{n}\right>.
\end{aligned}
\end{equation}
\end{enumerate}
\end{proposition}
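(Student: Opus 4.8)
Proposition~\ref{AMP_property} is the state-evolution theorem for the general asymmetric AMP recursion~\eqref{general_AMP}; in the paper it is quoted from \cite[Lemma~1]{AMP}, so the cleanest route is simply to invoke that reference, but let me describe how the argument runs if reconstructed from scratch via the conditioning technique. The plan is an induction on the iteration index $t$ carrying two coupled hypotheses: $\mathcal{B}(t)$, which asserts the pseudo-Lipschitz convergence of the empirical law of $(\bb_0,\dots,\bb_t,\bw)$ to the Gaussian-plus-$W$ limit in part~(i) together with the inner-product identities of part~(ii) restricted to the $\bb$-sequence; and $\mathcal{H}(t)$, the analogous statement for $(\br_1,\dots,\br_t,\x_0^*)$. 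The base case $\mathcal{B}(0)$ is immediate: since $\bq_0$ is independent of $\bH$, the vector $\bb_0=\bH\bq_0$ is Gaussian with i.i.d.\ entries of variance $\sigma_0^2=\lim_{N\to\infty}\frac{1}{\delta N}\|\bq_0\|^2$, and convergence of (pseudo-)Lipschitz statistics of i.i.d.\ Gaussians is classical. The engine is then the implication chain $\mathcal{B}(t)\Rightarrow\mathcal{H}(t)\Rightarrow\mathcal{B}(t+1)$.

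The core of each implication is a Gaussian conditioning lemma. I would let $\mathcal{F}_t$ be the $\sigma$-algebra generated by $\bw$, $\x_0^*$, $\bq_0$, and every iterate computed through step $t$, and observe that $\mathcal{F}_t$ constrains $\bH$ only through finitely many linear equalities $\bH\bq_s=\bb_s+\lambda_s\bm_{s-1}$ and $\bH^T\bm_s=\br_{s+1}+\xi_s\bq_s$. Since $\bH$ has i.i.d.\ $\mathcal{N}(0,1/K)$ entries, its conditional law given $\mathcal{F}_t$ equals that of a deterministic matrix $\mathbf{E}_t$ (the minimum-norm solution of those equalities, an explicit bilinear expression in the past $\bq$'s, $\bm$'s, $\bb$'s and $\br$'s) plus an independent fresh copy of $\bH$ sandwiched between the orthogonal projections onto $(\mathrm{span}\{\bq_s\})^{\perp}$ and $(\mathrm{span}\{\bm_s\})^{\perp}$. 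Substituting this into the formula for the newly produced iterate (e.g.\ $\br_{t+1}=\bH^T\bm_t-\xi_t\bq_t$, and symmetrically for $\bb_{t+1}$), the deterministic part expands as a linear combination of the past $\bq$'s (resp.\ $\bm$'s); by the definition $\xi_t=\langle g_t'(\bb_t,\bw)\rangle$ (resp.\ $\lambda_t=\tfrac{1}{\delta}\langle f_t'(\br_t,\x_0^*)\rangle$), Stein's lemma, and the induction hypothesis, the coefficient on the most recent vector equals precisely the Onsager coefficient, so the Onsager correction cancels it; the inner-product identities are then used to control the residual Gram matrices and show the remaining coefficients contribute vanishing $\ell_2$-norm. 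What survives is the fresh-Gaussian contribution, whose entries are asymptotically i.i.d.\ $\mathcal{N}(0,\tau_t^2)$ with $\tau_t^2$ as in~\eqref{general:SE}; coupling it with the already-controlled history and with $\x_0^*$ (resp.\ $\bw$) yields part~(i) at the next level, and the accompanying variance/covariance bookkeeping yields part~(ii).

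To upgrade convergence in probability to the almost-sure statements, I would replace each empirical-average step by a concentration estimate: the relevant quantities are (pseudo-)Lipschitz functions of the fresh Gaussian matrix, so Gaussian concentration (Borell--TIS / log-Sobolev) yields sub-exponential deviation bounds, and Borel--Cantelli closes the argument along $N,K\to\infty$ with $K/N\to\delta$. One also needs the state-evolution covariance matrices of $(Z_0,\dots,Z_t)$ and $(\hat Z_0,\dots,\hat Z_t)$ to be well-defined and the ``degenerate distribution'' claim in part~(ii) to hold; this is handled by the standard device of perturbing the recursion by an infinitesimal independent Gaussian component to keep all Gram matrices non-singular, proving the statement there, and then passing to the limit.

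The step I expect to be the main obstacle is the Gaussian conditioning computation together with the verification that the Onsager correction removes precisely the leading deterministic term while all lower-order deterministic corrections vanish: this is where the algebra of the projections, the Gram--Schmidt orthogonalization of $\{\bq_s\}$ and $\{\bm_s\}$, the induction hypotheses, and the non-degeneracy of the state-evolution covariances must all be combined, and it is also the most delicate point for propagating the exponential concentration bounds through the $t$ iterations.
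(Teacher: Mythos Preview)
Your proposal is correct. The paper does not give its own proof of Proposition~\ref{AMP_property}; it simply cites \cite[Lemma~1]{AMP}, exactly as you note at the outset. The sketch you provide---induction on $t$ via the Gaussian conditioning lemma, with the Onsager correction cancelling the leading deterministic term and Gaussian concentration upgrading to almost-sure convergence---is precisely the Bolthausen--Bayati--Montanari argument underlying that reference, so there is nothing to compare.
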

 The AMP algorithm in \eqref{AMP_copy} is a special case of \eqref{general_AMP} with $\x_0^*=\mathbf{0},$ $\bw=\s$, $f_t(s,x_0^*)=\eta_a(x_0^*-s;\gamma_a)-x_0^*$, and   $g_t(s,w)=s-w$ \cite{AMP},  where  
  $(\x_t,\bz_t)$ and $(\br_t,\bq_t,\bb_t,\bm_t)$ are related as follows: 
\begin{equation}\label{AMP_relation}
\begin{aligned}
\br_{t+1}&=-(\bH^T\bz_t+\x_t),\\
\bq_t&=\x_t,\\
\bb_t&=\s-\bz_t,\\
\bm_t&=-\bz_t.\\
\end{aligned}
\end{equation}
{The initial condition is $\bq_0=\mathbf{0}$}. In this case, the state evolution reduces to 
\begin{equation}\label{state:AMP}
\tau_{t+1}^2=1+\frac{1}{\delta}\mathbb{E}\left[\eta_a^2(\tau_tZ;\gamma_a)\right],\end{equation}
where $Z\sim\mathcal{N}(0,1)$ and $\tau_0^2=1$. As a direct corollary of Proposition~\ref{AMP_property}, we state the following result, which will be used frequently in our analysis.
\begin{proposition}[\hspace{-0.001cm}{\cite[Theorem 1]{AMP}}]\label{pro:4}
Consider the AMP algorithm defined in \eqref{AMP_copy}.  For all  pseudo-Lipschitz function $\varphi:\R\rightarrow \R$ and all $t\geq 0$, the following results hold under Assumption \ref{ass} (i) -- (iii):
\begin{itemize}
{\item[(i)]For $\br_{t+1}$ and $\bb_{t}$ defined in \eqref{AMP_relation}, 
\begin{equation}\begin{aligned}
\lim_{N\to\infty}\frac{1}{N}\sum_{i=1}^N\varphi(r_{t+1,i})&\overset{a.s.}{=}\mathbb{E}\left[\varphi(\tau_{t}Z)\right],\\
\lim_{K\to\infty}\frac{1}{K}\sum_{i=1}^K\varphi(b_{t,i})&\overset{a.s.}{=}\mathbb{E}\left[\varphi(\sigma_{t}Z)\right],
\end{aligned}
\end{equation}
where $\tau_t^2$ is obtained by the state evolution in \eqref{state:AMP}, 
$\sigma_t^2=\frac{1}{\delta}\mathbb{E}[\eta_a^2(\tau_{t-1}Z;\gamma_a)], $ and $Z\sim\mathcal{N}(0,1)$.}
\item[(ii)] For $\x_{t+1}$ and $\z_t$, 
\begin{equation}
\lim_{N\to\infty}\frac{1}{N}\sum_{i=1}^N\varphi(x_{t+1,i})\overset{a.s.}{=}\mathbb{E}\left[\varphi\circ\eta_a(\tau_{t}Z;\gamma_a)\right],
\end{equation}
\begin{equation}
\lim_{K\to\infty}\frac{1}{K}\sum_{i=1}^K\varphi(z_{t,i})\overset{a.s.}{=}\mathbb{E}\left[\varphi(S+\sigma_{t}Z)\right],
\end{equation}
where $S\sim\text{\normalfont Unif}(\mathcal{S})$ is independent of $Z$.

\end{itemize}
\end{proposition}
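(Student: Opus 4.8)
The plan is to derive Proposition \ref{pro:4} as a direct specialization of the general AMP result Proposition \ref{AMP_property}, via the identifications behind \eqref{AMP_relation}. First I would verify that \eqref{AMP_copy} is an instance of the general recursion \eqref{general_AMP}. Taking $\x_0^*=\mathbf{0}$, $\bw=\s$, $f_t(s,x_0^*)=\eta_a(x_0^*-s;\gamma_a)-x_0^*$ and $g_t(s,w)=s-w$, a direct substitution into \eqref{general_AMP} together with the change of variables \eqref{AMP_relation} reproduces \eqref{AMP_copy}: the coefficient $\xi_t=\langle g_t'\rangle\equiv 1$, and $\lambda_t=\frac{1}{\delta}\langle f_t'(\br_t,\mathbf{0})\rangle=-\frac{1}{\delta}\langle\eta_a'(-\br_t;\gamma_a)\rangle$, so that $-\lambda_t\bm_{t-1}$ collapses (using $\bm_{t-1}=-\bz_{t-1}$ and $-\br_t=\x_{t-1}+\H^T\bz_{t-1}$) to the Onsager term $\frac{1}{\delta}\langle\eta_a'(\x_{t-1}+\H^T\bz_{t-1};\gamma_a)\rangle\bz_{t-1}$ appearing in \eqref{AMP_copy}. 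The hypotheses of Proposition \ref{AMP_property} then hold: the box projection $\eta_a(\cdot;\gamma_a)=\mathcal{P}_{[-a,a]}(\cdot/(\gamma_a+1))$ is $\frac{1}{\gamma_a+1}$-Lipschitz (a linear rescaling composed with a projection onto a convex set), hence $f_t,g_t$ are Lipschitz; under Assumption \ref{ass} (iii), $\mathbb{E}[W^2]=\mathbb{E}[S^2]=1<\infty$ and $\mathbb{E}[(X_0^*)^2]=0<\infty$; and the deterministic initialization $\bq_0=\mathbf{0}$ is independent of $\bH$ with $\sigma_0^2=\lim_{N\to\infty}\frac{1}{\delta N}\|\bq_0\|^2=0$. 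Inserting these $f_t,g_t$ into the general state evolution \eqref{general:SE}, and using $\mathbb{E}[S]=0$ together with the oddness of $\eta_a(\cdot;\gamma_a)$, collapses it to the scalar recursion \eqref{state:AMP}, $\sigma_t^2=\frac{1}{\delta}\mathbb{E}[\eta_a^2(\tau_{t-1}Z;\gamma_a)]$ and $\tau_0^2=1$.

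Part (i) then follows by invoking Proposition \ref{AMP_property} (i) with the coordinate-projection test functions $\varphi_r(r_1,\dots,r_{t+1},x_0^*)=\varphi(r_{t+1})$ and $\varphi_b(b_0,\dots,b_t,w)=\varphi(b_t)$, which are pseudo-Lipschitz whenever $\varphi$ is; since $X_0^*\equiv 0$ and we only keep the last coordinate, the joint Gaussian law reduces to its marginal at iteration $t$, giving $\frac{1}{N}\sum_i\varphi(r_{t+1,i})\xrightarrow{a.s.}\mathbb{E}[\varphi(\tau_t Z)]$ and $\frac{1}{K}\sum_i\varphi(b_{t,i})\xrightarrow{a.s.}\mathbb{E}[\varphi(\sigma_t Z)]$. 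For part (ii), I would use \eqref{AMP_relation} once more: since $\br_{t+1}=-(\H^T\bz_t+\x_t)$, the update $\x_{t+1}=\eta_a(\x_t+\H^T\bz_t;\gamma_a)$ reads $x_{t+1,i}=\eta_a(-r_{t+1,i};\gamma_a)$, so applying part (i) with the test function $x\mapsto\varphi(\eta_a(-x;\gamma_a))$ — which is Lipschitz because $\eta_a$ is bounded and Lipschitz — and using $Z\overset{d}{=}-Z$ yields $\frac{1}{N}\sum_i\varphi(x_{t+1,i})\xrightarrow{a.s.}\mathbb{E}[\varphi\circ\eta_a(\tau_t Z;\gamma_a)]$. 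Likewise $z_{t,i}=s_i-b_{t,i}$ from \eqref{AMP_relation}, so applying Proposition \ref{AMP_property} (i) with $\varphi_b(b_0,\dots,b_t,w)=\varphi(w-b_t)$ and using that $w_i=s_i\in\{\pm1\}$ is independent of the limiting Gaussian vector gives $\frac{1}{K}\sum_i\varphi(z_{t,i})\xrightarrow{a.s.}\mathbb{E}[\varphi(S-\sigma_t Z)]=\mathbb{E}[\varphi(S+\sigma_t Z)]$, the last equality again by symmetry and independence of $Z$.

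The bulk of the argument is \emph{bookkeeping}: translating between the two parametrizations of the recursion and selecting the right coordinate projections (and their Lipschitz compositions) as test functions. The only load-bearing checks — and the place I would be most careful — are that $\eta_a(\cdot;\gamma_a)$, hence $f_t$ and $g_t$, is genuinely Lipschitz so that the cited Proposition \ref{AMP_property} applies, and that the composite test functions used for part (ii) remain pseudo-Lipschitz; both are immediate from the boundedness and $\frac{1}{\gamma_a+1}$-Lipschitz continuity of the projection onto $[-a,a]$. No deeper obstacle is anticipated, since Proposition \ref{pro:4} is essentially the restatement of \cite[Theorem 1]{AMP} for this particular denoiser.
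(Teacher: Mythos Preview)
Your proposal is correct and takes exactly the approach the paper indicates: the paper simply states that Proposition~\ref{pro:4} is ``a direct corollary of Proposition~\ref{AMP_property}'' via the identifications $\x_0^*=\mathbf{0}$, $\bw=\s$, $f_t(s,x_0^*)=\eta_a(x_0^*-s;\gamma_a)-x_0^*$, $g_t(s,w)=s-w$ and the change of variables \eqref{AMP_relation}, without writing out the details you supply. Your verification of the hypotheses (Lipschitzness of $f_t,g_t$, the moment conditions, the initialization), the collapse of the general state evolution \eqref{general:SE} to \eqref{state:AMP}, and the choice of coordinate-projection and composite test functions for parts (i) and (ii) are all in order.
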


\section{Proof of Proposition \ref{prob:converge}}\label{appendix:proofofconverge}
In this appendix, we give the proof of Proposition \ref{prob:converge}. The proof contains three main steps. First, we show that  $\tau_t^2$ defined by the state evolution in \eqref{state:AMP} converges to $\tau_a^2$ as $t\to\infty$.
\begin{lemma}\label{R1:1.1}
Let $\{\tau_t^2\}_{t\geq 0}$ be defined through the state evolution in \eqref{state:AMP}. Then $\lim_{t\to\infty}\tau_t^2=\tau_a^2$, where $\tau_a^2$ is the unique solution to \eqref{conditiona} with $\gamma=\gamma_a$.
\end{lemma}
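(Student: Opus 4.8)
The plan is to recognize the state-evolution recursion \eqref{state:AMP} as a scalar monotone fixed-point iteration and then invoke the uniqueness already established in Lemma~\ref{R1:1}. Define $G(s) := 1 + \frac{1}{\delta}\mathbb{E}\big[\eta_a^2(\sqrt{s}\,Z;\gamma_a)\big]$ for $s \ge 0$, where $Z\sim\mathcal{N}(0,1)$, so that \eqref{state:AMP} reads $\tau_{t+1}^2 = G(\tau_t^2)$ with $\tau_0^2 = 1$. Since $\eta_a(x;\gamma_a)^2 = \min\{x^2/(\gamma_a+1)^2,\,a^2\}$ is nondecreasing in $x^2$, the integrand $\eta_a^2(\sqrt{s}\,Z;\gamma_a) = \min\{sZ^2/(\gamma_a+1)^2,\,a^2\}$ is nondecreasing in $s$ for each fixed $Z$; hence $G$ is nondecreasing on $[0,\infty)$. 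Moreover this integrand is bounded by $a^2$ and continuous in $s$, so dominated convergence shows $G$ is continuous on $[0,\infty)$.

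Next I would sandwich the iterates between $1$ and $\tau_a^2$. By definition $\tau_a^2$ solves \eqref{conditiona} with $\gamma=\gamma_a$, i.e.\ $\tau_a^2 = G(\tau_a^2)$, and in particular $\tau_a^2 \ge 1 = \tau_0^2$. An induction using monotonicity of $G$ then gives $\tau_t^2 \le \tau_a^2$ for all $t$: if $\tau_t^2 \le \tau_a^2$ then $\tau_{t+1}^2 = G(\tau_t^2) \le G(\tau_a^2) = \tau_a^2$. Simultaneously $\tau_1^2 = G(1) \ge 1 = \tau_0^2$ because the expectation term is nonnegative, and by monotonicity $\tau_{t+1}^2 = G(\tau_t^2) \ge G(\tau_{t-1}^2) = \tau_t^2$, so $\{\tau_t^2\}_{t\ge 0}$ is nondecreasing. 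A bounded nondecreasing sequence converges; call the limit $\tau_\infty^2 \in [1,\tau_a^2]$.

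Finally I would identify the limit. Passing to the limit in $\tau_{t+1}^2 = G(\tau_t^2)$ and using continuity of $G$ yields $\tau_\infty^2 = G(\tau_\infty^2)$, i.e.\ $\tau_\infty^2$ satisfies \eqref{conditiona} with $\gamma=\gamma_a$, equivalently $F_1(\tau_\infty^2,\gamma_a;a) = 0$ in the notation of \eqref{condition0:2a}. By Lemma~\ref{R1:1} this equation has a unique root, namely $\tau_{\gamma_a}^2 = \tau_a^2$, so $\tau_\infty^2 = \tau_a^2$, which is the claim.

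There is no serious obstacle: the argument is a standard monotone-convergence-plus-uniqueness scheme. The only points needing (routine) care are verifying that $G$ is monotone and continuous — both immediate from the explicit truncated form of $\eta_a$ together with dominated convergence — and confirming $\tau_0^2 = 1 \le \tau_a^2$, which is built into \eqref{conditiona}. One could alternatively bound the iterates crudely by $1 + a^2/\delta$ instead of by $\tau_a^2$, but using $\tau_a^2$ keeps the sandwich especially clean.
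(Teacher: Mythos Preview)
Your proof is correct and follows essentially the same approach as the paper: define the state-evolution map, show it is monotone and continuous, deduce that the iterates form a bounded monotone sequence sandwiched between $1$ and $\tau_a^2$, and identify the limit via the uniqueness result of Lemma~\ref{R1:1}. Your write-up is in fact somewhat more careful than the paper's in justifying monotonicity and continuity explicitly (via the truncated form of $\eta_a$ and dominated convergence), whereas the paper simply appeals to the calculation in Lemma~\ref{R1:1}.
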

\begin{proof}
See Appendix \ref{app:convergeA}. 
\end{proof}
 \hspace{-0.4cm}The second step is to show that the difference between successive iterates vanishes as $t,N\to\infty$.
\begin{lemma}\label{difference}
The following result holds for the AMP iterates in \eqref{AMP_copy}: 
\begin{equation}
\begin{aligned}
&\lim_{t\to\infty}\lim_{N\to\infty}\frac{1}{N}\|\x_{t+1}-\x_t\|^2\overset{a.s.}{=}0,\\
&\lim_{t\to\infty}\lim_{N\to\infty}\frac{1}{N}\|\z_{t+1}-\z_{t}\|^2\overset{a.s.}{=}0.
\end{aligned}
\end{equation}
\end{lemma}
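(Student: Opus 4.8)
The plan is to exploit the state-evolution theory of the AMP recursion \eqref{AMP_copy} — viewed as the instance of the general recursion \eqref{general_AMP} identified in the discussion after Proposition \ref{AMP_property} — together with the convergence $\tau_t^2\to\tau_a^2$ already established in Lemma \ref{R1:1.1}, following the route of \cite{LASSO,elasticnet}.

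First I would carry out two reductions. Using the identifications in \eqref{AMP_relation} we have $\bz_t=\s-\bb_t$, hence $\bz_{t+1}-\bz_t=\bb_t-\bb_{t+1}$ and $\tfrac1K\|\bz_{t+1}-\bz_t\|^2=\langle\bb_{t+1},\bb_{t+1}\rangle-2\langle\bb_{t+1},\bb_t\rangle+\langle\bb_t,\bb_t\rangle$; by Proposition \ref{AMP_property}(ii) these normalized inner products converge a.s.\ and $\lim_K\langle\bb_s,\bb_n\rangle\overset{a.s.}{=}\tfrac1\delta\lim_N\langle\bq_s,\bq_n\rangle=\tfrac1\delta\lim_N\langle\x_s,\x_n\rangle$, so $\lim_K\tfrac1K\|\bz_{t+1}-\bz_t\|^2\overset{a.s.}{=}\tfrac1\delta\lim_N\tfrac1N\|\x_{t+1}-\x_t\|^2$ and the $\bz$-statement follows from the $\x$-statement. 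In turn, since $\bq_t=\x_t=\eta_a(-\br_t;\gamma_a)$ and $\eta_a(\cdot;\gamma_a)$ is $\kappa$-Lipschitz with $\kappa=1/(1+\gamma_a)<1$ (here $\gamma_a>0$ because $\rho>0$), we have $\tfrac1N\|\x_{t+1}-\x_t\|^2\le\kappa^2\tfrac1N\|\br_{t+1}-\br_t\|^2$, so it suffices to show $\lim_{t\to\infty}\lim_{N\to\infty}\tfrac1N\|\br_{t+1}-\br_t\|^2\overset{a.s.}{=}0$.

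Next I would set up the two-index state evolution. Proposition \ref{AMP_property}(i)--(ii) yields that all limits $R_{s,n}:=\lim_N\langle\br_{s+1},\br_{n+1}\rangle$ exist a.s., with $R_{s,s}=\tau_s^2$, and obey the recursion
\[
R_{s+1,n+1}=1+\tfrac1\delta\,\mathbb{E}\big[\eta_a(W_s;\gamma_a)\,\eta_a(W_n;\gamma_a)\big],
\]
where $(W_s,W_n)$ is a centered Gaussian pair with $\mathbb{E}[W_s^2]=\tau_s^2$, $\mathbb{E}[W_n^2]=\tau_n^2$, $\mathbb{E}[W_sW_n]=R_{s,n}$ (which reduces to the scalar state evolution \eqref{state:AMP} when $s=n$). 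Consequently $\lim_N\tfrac1N\|\br_{t+1}-\br_t\|^2\overset{a.s.}{=}e_{t-1,t}$, where in general $e_{s,n}:=\tau_s^2+\tau_n^2-2R_{s,n}=\mathbb{E}[(W_s-W_n)^2]\ge0$ and, crucially, $\delta\,e_{s+1,n+1}=\mathbb{E}\big[(\eta_a(W_s;\gamma_a)-\eta_a(W_n;\gamma_a))^2\big]$. The target thus reduces to showing $e_{s,n}\to0$ as $s,n\to\infty$. The mechanism is a contraction for $e_{s,n}$: writing $\eta_a(W_s;\gamma_a)-\eta_a(W_n;\gamma_a)=(W_s-W_n)\int_0^1\eta_a'\big(W_n+t(W_s-W_n);\gamma_a\big)\,dt$, using Jensen's inequality, and invoking $\tau_t^2\to\tau_a^2$ (so that $W_n+t(W_s-W_n)$ is asymptotically $\mathcal{N}(0,\tau_a^2)$ and asymptotically uncorrelated from the increment $W_s-W_n$), one bounds $\delta\,e_{s+1,n+1}$ by $\big(\mathbb{E}[\eta_a'(\tau_aZ;\gamma_a)^2]+o(1)\big)\,e_{s,n}$ with $o(1)\to0$ as $s,n\to\infty$, exactly as in \cite{LASSO,elasticnet}. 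The decisive fact is that this effective factor is $<1$: rewriting \eqref{condition0b} as $\rho=\gamma_a\big(1-\tfrac1\delta\mathbb{E}[\eta_a'(\tau_aZ;\gamma_a)]\big)$ and using $\rho>0$, $\gamma_a>0$ gives $\tfrac1\delta\mathbb{E}[\eta_a'(\tau_aZ;\gamma_a)]<1$, and since $0\le\eta_a'(\cdot;\gamma_a)\le\kappa$ we get $\tfrac1\delta\mathbb{E}[\eta_a'(\tau_aZ;\gamma_a)^2]\le\kappa\cdot\tfrac1\delta\mathbb{E}[\eta_a'(\tau_aZ;\gamma_a)]<\kappa<1$. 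Hence there are $c\in(0,1)$ and $T$ with $e_{s+1,n+1}\le c\,e_{s,n}$ for all $s,n\ge T$; since $\sup_{s,n}e_{s,n}<\infty$ (because $\sup_t\tau_t^2<\infty$ by Lemma \ref{R1:1.1}), iterating along $(s,n)=(t-1,t)$ gives $e_{t-1,t}\le c^{\,t-1-T}e_{T,T+1}\to0$, which together with the two reductions proves the lemma.

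I expect the main obstacle to be the contraction estimate. The denoiser $\eta_a(\cdot;\gamma_a)$ is globally only $\kappa$-Lipschitz, and $\kappa^2/\delta$ need not be below $1$ when $\delta$ is small, so the naive bound $\delta\,e_{s+1,n+1}\le\kappa^2e_{s,n}$ is inadequate; one must use that $\eta_a'(\cdot;\gamma_a)$ vanishes outside the active interval $\big(-a(\gamma_a+1),a(\gamma_a+1)\big)$, so the mean-square contraction is governed by $\mathbb{E}[\eta_a'(\tau_aZ;\gamma_a)^2]=\kappa^2\,\mathbb{P}\big(|\tau_aZ|<a(\gamma_a+1)\big)$ rather than $\kappa^2$, and this is strictly below $\delta$ precisely because of the calibration of $\gamma_a$ to $\rho$ in \eqref{condition0b}. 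The delicate part is making this quantitative uniformly in $(s,n)$ — i.e.\ controlling the discrepancy between $\mathbb{E}[(\eta_a(W_s;\gamma_a)-\eta_a(W_n;\gamma_a))^2]$ and $\mathbb{E}[\eta_a'(\tau_aZ;\gamma_a)^2]\,e_{s,n}$ once the variances $\tau_s^2,\tau_n^2$ are close to $\tau_a^2$, which rests on the near-Gaussian and nearly-decorrelated structure of the pair $\big(W_n+t(W_s-W_n),\,W_s-W_n\big)$ — and this is handled along the lines of \cite{LASSO,elasticnet}.
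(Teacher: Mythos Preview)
Your reductions and the setup of the two-index state evolution $R_{s,n}$ are correct and coincide with the paper's route: both arguments reduce the lemma to showing that $R_{t,t-1}\to\tau_a^2$ (equivalently $e_{t-1,t}\to0$). You also correctly identify that the decisive contraction constant is $\tfrac{1}{\delta}\mathbb{E}[\eta_a'(\tau_aZ;\gamma_a)^2]<1$, and your derivation of this inequality from \eqref{condition0b} is right.

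Where the proposal has a genuine gap is in the \emph{mechanism} for the contraction bound. Your argument writes $\eta_a(W_s)-\eta_a(W_n)=(W_s-W_n)\int_0^1\eta_a'(V_t)\,dt$ with $V_t:=W_n+t(W_s-W_n)$, and then claims that $V_t$ is ``asymptotically $\mathcal{N}(0,\tau_a^2)$ and asymptotically uncorrelated from $W_s-W_n$''. But even after freezing $\tau_s^2=\tau_n^2=\tau_a^2$, one computes $\mathrm{Cov}(V_t,W_s-W_n)=(t-\tfrac12)e_{s,n}$ and $\mathrm{Var}(V_t)=\tau_a^2-t(1-t)e_{s,n}$: both the decorrelation and the convergence of $\mathrm{Var}(V_t)$ to $\tau_a^2$ require $e_{s,n}\to0$, which is precisely what you are trying to prove. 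Your claimed bound $\delta e_{s+1,n+1}\le(\mathbb{E}[\eta_a'(\tau_aZ)^2]+o(1))e_{s,n}$ with an $o(1)$ that depends only on $|\tau_s^2-\tau_a^2|+|\tau_n^2-\tau_a^2|$ therefore does not follow from this argument. This is also not what \cite{LASSO,elasticnet} do, so deferring to those references does not close the gap.

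The paper avoids the circularity by a different device. It introduces the ``frozen'' map $l(y):=1+\tfrac{1}{\delta}\mathbb{E}[\eta_a(Z_1(y);\gamma_a)\eta_a(Z_2(y);\gamma_a)]$, where $(Z_1(y),Z_2(y))$ is Gaussian with variances $\tau_a^2$ and covariance $y$, and computes via Gaussian integration by parts the exact formula
\[
l'(y)=\tfrac{1}{\delta}\,\mathbb{E}\bigl[\eta_a'(Z_1(y);\gamma_a)\,\eta_a'(Z_2(y);\gamma_a)\bigr]=\tfrac{1}{\delta(\gamma_a+1)^2}\,\mathbb{P}\bigl(|Z_1(y)|\le a(\gamma_a+1),\ |Z_2(y)|\le a(\gamma_a+1)\bigr).
\]
A Gaussian correlation inequality (the paper's Lemma~\ref{lem:D3}) then gives that $l'$ is \emph{increasing} on $[1,\tau_a^2]$, so $l'(y)\le l'(\tau_a^2)=\tfrac{1}{\delta}\mathbb{E}[\eta_a'(\tau_aZ;\gamma_a)^2]<1$ \emph{uniformly} in $y$; this yields the contraction $\tau_a^2-\bar R_{t+1,t}\le l'(\tau_a^2)(\tau_a^2-\bar R_{t,t-1})$ for the frozen recursion without any smallness assumption on $e_{s,n}$. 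The actual recursion $R_{t,t-1}$ is then shown to track $\bar R_{t,t-1}$ by a separate Lipschitz/perturbation argument using only $\tau_t^2\to\tau_a^2$. Replacing your mean-value/Jensen step by this $l'$-monotonicity argument fixes the gap; the rest of your plan stands.
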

\begin{proof}
See Appendix \ref{app:convergeB}. 
\end{proof}
Proposition \ref{prob:converge} can then be proved by connecting the first-order optimality condition of problem \eqref{def:xa} with the AMP iterations in \eqref{AMP_copy} and utilizing Lemmas \ref{R1:1.1} and \ref{difference}; see Appendix \ref{app:probconverge}.

{Before delving into the detailed proof, we first provide a heuristic derivation in the following subsection to discuss the relation between the AMP iterations in \eqref{AMP_copy} and the solution to \eqref{def:xa}.

\subsection{Heuristic Derivation}\label{appD:heuristic}
Assume that $(\x,\z)$ is a fixed point of the AMP iterations in \eqref{AMP_copy}, and let $\alpha=\delta^{-1}\langle \eta_a'(\x+\bH^T\z;\gamma_a)\rangle$. 
According to the formula of the AMP iterations in \eqref{AMP_copy}, $(\x,\z)$ satisfies 
\begin{equation}
\begin{aligned}
\x&=\eta_a(\x+\bH^T\z;\gamma_a),\\
\z&=\s-\H\x+\alpha\z,
\end{aligned}
\end{equation}
which further gives
\begin{equation}\label{AMP_fixpoint}
\begin{aligned}
\x&=\eta_a\left(\x-\frac{\bH^T(\bH\x-\s)}{1-\alpha};\gamma_a\right)=\mathcal{P}_{[-a,a]^N}\left(\frac{\x}{\gamma_a+1}-\frac{\bH^T(\bH\x-\s)}{(\gamma_a+1)(1-\alpha)}\right),
\end{aligned}
\end{equation}
where the second equality uses the definition of $\eta_a(\cdot;\gamma_a)$ in Lemma \ref{lemma:unique}. On the other hand, the following first-order optimality condition holds for the solution to     \eqref{def:xa} with any $\omega>0$:
\begin{equation}\label{OC}
\begin{aligned}
\x&=\mathcal{P}_{[-a,a]^N}\left({\x}-\omega\left(\bH^T(\bH\x-\s)+\rho\x\right)\right)=\mathcal{P}_{[-a,a]^N}\left((1-\omega\rho){\x}-\omega\bH^T(\bH\x-\s)\right).
\end{aligned}
\end{equation}
Comparing \eqref{AMP_fixpoint} and \eqref{OC}, the fixed point of the AMP algorithm in \eqref{AMP_copy} is the solution to \eqref{def:xa} if the following condition is satisfied: 
\begin{equation}\label{fixpoint_OC}
1-\omega\rho=(\gamma_a+1)^{-1}~\text{ and }~~\omega=\left((\gamma_a+1)(1-\alpha)\right)^{-1}.
\end{equation}
This condition is guaranteed by \eqref{condition0b}. Specifically, by Proposition \ref{pro:4} and Lemma \ref{R1:1.1},
$\alpha\approx\delta^{-1}\langle \eta_a'(\tau_aZ;\gamma_a)\rangle$.  It then follows from  \eqref{condition0b} and the definition of $\eta_a(\cdot;\gamma_a)$ that $1-\alpha\approx\rho\gamma^{-1}_a$,  yielding \eqref{fixpoint_OC}. 

The above derivation explains why the iterations generated by \eqref{AMP_copy} converge to the solution of \eqref{def:xa}, and also provides the motivation for the condition in \eqref{condition0b}, which ensures consistency between the AMP fixed point and the optimality condition of \eqref{def:xa}.}

 \subsection{Proof of Lemma \ref{R1:1.1}}\label{app:convergeA}
Let $F(\tau^2):=1+\frac{1}{\delta}\mathbb{E}\left[(\eta_a(\tau Z;\gamma_a))^2\right], $ then the state evolution can be written as  $\tau_{t+1}^2=F(\tau_t^2)$ and $\tau_a^2$ is the unique solution to $\tau^2=F(\tau^2)$ (see Lemma \ref{R1:1}). 
Similar to the calculation as in Lemma \ref{R1:1},  
we have $F(\tau^2)$ is increasing in $\tau^2$. Hence, the sequence $\{\tau_t^2\}$ determined by the fixed point iterations $\tau_{t+1}^2=F(\tau_t^2)$ is monotonic and bounded with $\tau_0^2=1\leq \tau_t^2\leq \tau_a^2$. It follows that $\tau_t^2$ is convergent. Due to the continuity of $F(\tau^2)$ and the uniqueness of the solution to $\tau^2=F(\tau^2)$, we can conclude that $\tau_t^2\to\tau_a^2$. 
\subsection{Proof of Lemma \ref{difference}}\label{app:convergeB}
We first introduce two auxiliary lemmas  that are important to the proof.
\begin{lemma}[\hspace{-0.001cm}{\cite[Lemma C.1]{LASSO}}]\label{lem:D3}
 Let $Z_1$ and $Z_2$ be jointly Gaussian random variables with $\mathbb{E}[Z_1^2]=\mathbb{E}[Z_2^2]=1$ and
$\mathbb{E}[Z_1Z_2] = c\geq 0$. Let $\mathcal{I}$ be a measurable subset of the real line. Then $\mathbb{P}(Z_1 \in\mathcal{I}, Z_2\in\mathcal{I})$  is an increasing function of $c\in[0,1]$.
\end{lemma}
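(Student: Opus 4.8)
The plan is to write the quantity of interest as a power series in $c$ with nonnegative coefficients, so that monotonicity on $[0,1]$ becomes immediate. Let $\gamma$ denote the standard Gaussian measure on $\mathbb{R}$, $\phi$ its density, and $\{H_n\}_{n\ge 0}$ the (probabilists') Hermite polynomials, which form an orthogonal basis of $L^2(\gamma)$ with $\langle H_m,H_n\rangle_{L^2(\gamma)}=n!\,\delta_{mn}$. Since $(Z_1,Z_2)$ is jointly Gaussian with zero means, unit variances, and correlation $c$, Mehler's formula expresses its joint density as $p_c(x,y)=\phi(x)\phi(y)\sum_{n\ge 0}\frac{c^n}{n!}H_n(x)H_n(y)$. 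Set $\Psi(c):=\mathbb{P}(Z_1\in\mathcal{I},\,Z_2\in\mathcal{I})=\int_{\mathcal{I}}\int_{\mathcal{I}}p_c(x,y)\,dx\,dy$. Integrating Mehler's series term by term over $\mathcal{I}\times\mathcal{I}$ gives
$$\Psi(c)=\sum_{n\ge 0}\frac{c^n}{n!}\,a_n^2,\qquad a_n:=\int_{\mathcal{I}}H_n(x)\phi(x)\,dx=\langle \mathbf{1}_{\mathcal{I}},H_n\rangle_{L^2(\gamma)}.$$
Because every coefficient $a_n^2/n!$ is nonnegative, $\Psi$ is a power series with nonnegative coefficients, hence nondecreasing on $[0,1]$ — exactly the assertion of the lemma (and one sees it is strictly increasing unless $\gamma(\mathcal{I})\in\{0,1\}$).

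So the key steps, in order, are: (1) invoke Mehler's formula for $p_c$; (2) justify termwise integration to obtain the series for $\Psi$; (3) read off monotonicity from the nonnegativity of the squared Hermite coefficients. Step (2) is the only place requiring care. For $c\in[-1,1)$ the Mehler series converges absolutely in $L^1(\phi(x)\phi(y)\,dx\,dy)$ — using the classical bound $|H_n(x)|\,e^{-x^2/4}\lesssim \sqrt{n!}$ together with $\sum_n |c|^n<\infty$ — so termwise integration is legitimate and $\Psi$ is real-analytic on $(-1,1)$. At the endpoint $c=1$ one has $Z_2=Z_1$ almost surely, hence $\Psi(1)=\gamma(\mathcal{I})$ directly, while Parseval's identity gives $\sum_{n\ge 0}a_n^2/n!=\|\mathbf{1}_{\mathcal{I}}\|_{L^2(\gamma)}^2=\gamma(\mathcal{I})<\infty$; thus the series also represents $\Psi$ at $c=1$ (Abel's theorem), and continuity and monotonicity extend to the closed interval $[0,1]$.

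If one prefers to avoid Mehler's formula, an alternative route is: (i) reduce to $\mathcal{I}$ a finite union of open intervals, which is legitimate because such sets approximate any measurable $\mathcal{I}$ in $\gamma$-symmetric-difference while $|\Psi_{\mathcal{I}}(c)-\Psi_{\mathcal{J}}(c)|\le 2\gamma(\mathcal{I}\,\triangle\,\mathcal{J})$ uniformly in $c$, so monotonicity survives the pointwise limit; (ii) for $\mathcal{I}=\bigsqcup_{j=1}^{J}(a_j,b_j)$ use Plackett's/Price's identity $\partial_c p_c=\partial_x\partial_y p_c$ (valid for $|c|<1$) to obtain $\Psi'(c)=\sum_{j,k=1}^{J}\bigl(p_c(b_j,b_k)-p_c(b_j,a_k)-p_c(a_j,b_k)+p_c(a_j,a_k)\bigr)$ and show it is $\ge 0$. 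For $J=1$ this reduces, via AM--GM applied to $p_c(a,a)+p_c(b,b)$ together with $p_c(a,a)p_c(b,b)\ge p_c(a,b)^2$ (equivalently $c(a-b)^2\ge 0$), to a triviality; but for general $J$ the nonnegativity of the double sum is the genuine obstacle — it amounts to the total positivity of order two of the Gaussian kernel $p_c$ when $c\ge 0$. This is precisely what the identity $\Psi'(c)=\sum_{n\ge 1}\frac{c^{n-1}}{(n-1)!}a_n^2\ge 0$ bypasses, which is why I would present the Mehler-series argument as the main proof and dispatch only the mild convergence bookkeeping of step (2) as the sole technical point.
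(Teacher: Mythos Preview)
Your Mehler--series argument is correct and complete: expanding the bivariate Gaussian density via Mehler's formula and integrating against $\mathbf{1}_{\mathcal{I}}\otimes\mathbf{1}_{\mathcal{I}}$ yields $\Psi(c)=\sum_{n\ge 0}\frac{c^n}{n!}a_n^2$ with $a_n=\langle\mathbf{1}_{\mathcal{I}},H_n\rangle_{L^2(\gamma)}$, and nonnegativity of the coefficients gives monotonicity on $[0,1]$. The termwise-integration justification you sketch is fine; in fact it can be streamlined by noting that $\Psi(c)=\langle\mathbf{1}_{\mathcal{I}},P_c\mathbf{1}_{\mathcal{I}}\rangle_{L^2(\gamma)}$ where $P_c$ is the Ornstein--Uhlenbeck (Mehler) operator, which is diagonal in the Hermite basis with eigenvalues $c^n$, so the series identity holds immediately in $L^2(\gamma)$ for all $c\in[0,1]$ without any pointwise bounds on $H_n$.

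As for comparison: the paper does not give its own proof of this lemma --- it simply quotes \cite[Lemma C.1]{LASSO}. So there is no in-paper argument to weigh against yours. Your approach is the standard one for this fact and is exactly what the cited reference relies on; the alternative you outline via Plackett's identity $\partial_c p_c=\partial_x\partial_y p_c$ is also classical, and you correctly identify that for general measurable $\mathcal{I}$ the Hermite-coefficient route is the cleaner of the two because it sidesteps the total-positivity verification.
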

\begin{lemma}[\hspace{-0.001cm}{\cite[Theorem 4.2]{LASSO}}]\label{lemma:Rst}  Define the following recursion for $\{R_{s,t}\}_{s,t\geq 0}$:
\begin{equation}
\begin{aligned}
R_{0,t}&=1,~~\forall~t\geq 0,\\
R_{s+1,t+1}&=1+\frac{1}{\delta}\mathbb{E}\left[\eta_a(Z_s;\gamma_a)\eta_a(Z_t;\gamma_a)\right],\\
\end{aligned}
\end{equation}
where $(Z_s,Z_t)$ are jointly Gaussian with covariance given by $\mathbb{E}[Z_s^2]=R_{s,s}$, $\mathbb{E}[Z_t^2]=R_{t,t}$, $\mathbb{E}[Z_sZ_t]=R_{s,t}$.
Let $\varphi: \R^2\rightarrow \R$ be a pseudo-Lipschitz function. Then, for all $s,t\geq 0$, 
\begin{equation}
\begin{aligned}
&\lim_{N\to\infty}\frac{1}{N}\sum_{i=1}^N\varphi(x_{s+1,i},x_{t+1,i})\overset{a.s.}{=}\,\mathbb{E}\left[\varphi\left(\eta_a(Z_s;\gamma_a),\eta_a(Z_t;\gamma_a)\right)\right].
\end{aligned}
\end{equation}
\end{lemma}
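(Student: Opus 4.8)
The plan is to recognize Lemma~\ref{lemma:Rst} as the two–time–index state evolution for the AMP recursion \eqref{AMP_copy}, and to obtain it from the general AMP theory recorded in Proposition~\ref{AMP_property}, in exactly the way \cite[Theorem~4.2]{LASSO} (and its constrained counterpart in \cite{elasticnet}) is proved, with the soft-threshold denoiser replaced by the box denoiser $\eta_a(\cdot;\gamma_a)$. Recall from Appendix~\ref{appendix:AMP} that \eqref{AMP_copy} is the instance of \eqref{general_AMP} with $\x_0^*=\mathbf 0$, $\bw=\s$, $f_t(r,x_0^*)=\eta_a(x_0^*-r;\gamma_a)-x_0^*$, $g_t(s,w)=s-w$, and $\bq_0=\mathbf 0$, the vectors being related by \eqref{AMP_relation}. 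Since $X_0^*\equiv0$, $\mathbb{E}[S^2]=1<\infty$, and $\sigma_0^2=\lim_N\frac1{\delta N}\|\bq_0\|^2=0$ exists, all hypotheses of Proposition~\ref{AMP_property} are met (the degenerate initialization $\bq_0=\mathbf 0$ is the one already in force in Proposition~\ref{pro:4}).

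First I would pin down the joint limiting law of $(r_{s+1,i},r_{t+1,i})$. Fixing $T=\max\{s,t\}$ and applying Proposition~\ref{AMP_property}(i) with a pseudo-Lipschitz test function on $\R^{T+2}$ that depends only on the $r_{s+1,\cdot}$ and $r_{t+1,\cdot}$ coordinates gives $\frac1N\sum_i\phi(r_{s+1,i},r_{t+1,i})\xrightarrow{a.s.}\mathbb{E}[\phi(G_s,G_t)]$ for all pseudo-Lipschitz $\phi$, with $(G_s,G_t)$ a centered bivariate Gaussian. Taking $\phi(x,y)=xy$ identifies its covariance with $\lim_N\langle\br_{s+1},\br_{t+1}\rangle$, which via the inner-product identities of Proposition~\ref{AMP_property}(ii) and the relations $\bm_u=-\bz_u$, $\bz_u=\s-\bb_u$, $\bq_u=\x_u$ equals $1+\tfrac1\delta\lim_N\langle\x_s,\x_t\rangle$ (the cross terms $\langle\s,\bb_u\rangle$ vanish since each $\bb_u$ is asymptotically a centered Gaussian independent of $\s$, and $\langle\s,\s\rangle=1$ for BPSK). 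Since $x_{u,i}=\eta_a(-r_{u,i};\gamma_a)$, by the inductive hypothesis on the law of $(r_{s,i},r_{t,i})$ this becomes $1+\tfrac1\delta\mathbb{E}[\eta_a(Z_{s-1};\gamma_a)\eta_a(Z_{t-1};\gamma_a)]$ with $(Z_{s-1},Z_{t-1})$ distributed as in the statement (a centered Gaussian being symmetric, the minus sign is harmless). Comparing with the defining recursion for $\{R_{s,t}\}$ yields $\mathrm{Cov}(G_s,G_t)=R_{s,t}$, and the $s=t$ case recovers the diagonal $R_{s,s}=\tau_s^2$ of \eqref{state:AMP}; the base case is $\lim_N\langle\br_1,\br_{t+1}\rangle=\lim_K\langle\bz_0,\bz_t\rangle=\lim_K\langle\s,\s-\bb_t\rangle=1=R_{0,t}$ since $\bz_0=\s$. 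Thus the empirical law of $(r_{s+1,i},r_{t+1,i})$ converges to the centered Gaussian with $\mathbb{E}[G_u^2]=R_{u,u}$ and $\mathbb{E}[G_sG_t]=R_{s,t}$; this matrix is positive semidefinite (being an honest covariance), so the Gaussian pair $(Z_s,Z_t)$ of the lemma is well defined.

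Next I would push $\varphi$ through the denoiser. The map $r\mapsto\eta_a(-r;\gamma_a)=\mathcal{P}_{[-a,a]}\!\bigl(-r/(\gamma_a+1)\bigr)$ is $(\gamma_a+1)^{-1}$–Lipschitz and sends $0$ to $0$; hence $h(r,r'):=(\eta_a(-r;\gamma_a),\eta_a(-r';\gamma_a))$ is Lipschitz with $\|h(u)\|\le(\gamma_a+1)^{-1}\|u\|$, and a routine composition estimate shows $\Psi:=\varphi\circ h$ is pseudo-Lipschitz whenever $\varphi$ is. Plugging $\phi=\Psi$ into the convergence of the previous paragraph and using $x_{s+1,i}=\eta_a(-r_{s+1,i};\gamma_a)$, $x_{t+1,i}=\eta_a(-r_{t+1,i};\gamma_a)$ gives
\[
\lim_{N\to\infty}\frac1N\sum_{i=1}^N\varphi(x_{s+1,i},x_{t+1,i})\overset{a.s.}{=}\mathbb{E}\bigl[\varphi\bigl(\eta_a(G_s;\gamma_a),\eta_a(G_t;\gamma_a)\bigr)\bigr]=\mathbb{E}\bigl[\varphi\bigl(\eta_a(Z_s;\gamma_a),\eta_a(Z_t;\gamma_a)\bigr)\bigr],
\]
the second equality by the symmetry of $(G_s,G_t)$ and the fact that $(G_s,G_t)\overset{d}{=}(Z_s,Z_t)$. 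This is the asserted identity.

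The part I expect to demand the most care is the induction establishing $\mathrm{Cov}(G_s,G_t)=R_{s,t}$: one must advance the recursion on the pair $(s,t)$ in lockstep with the iteration-index induction hidden inside Proposition~\ref{AMP_property}, keep careful track of the index shift between $\br_{u+1}$ and $\x_u$ in \eqref{AMP_relation}, and check at each stage that the covariance matrices produced are positive semidefinite so that the Gaussians $(Z_s,Z_t)$ — and hence the recursion defining $R_{s+1,t+1}$ — are well posed (this is automatic, the matrices being limits of empirical second-moment matrices, but should be recorded). Everything else is either a direct quotation of Proposition~\ref{AMP_property} or an elementary Lipschitz / Gaussian-symmetry manipulation, so the whole argument is the verbatim specialization of the proof of \cite[Theorem~4.2]{LASSO} to the box denoiser $\eta_a(\cdot;\gamma_a)$.
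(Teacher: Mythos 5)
Your proof is correct and is essentially the same route as the paper's: the paper supplies no proof of Lemma~\ref{lemma:Rst}, importing it verbatim as \cite[Theorem 4.2]{LASSO}, and your argument is precisely the standard derivation of that result from the general state-evolution lemma (Proposition~\ref{AMP_property}) specialized to the box denoiser $\eta_a(\cdot;\gamma_a)$. The inner-product reduction $\lim\langle\bm_s,\bm_t\rangle=1+\delta^{-1}\lim\langle\x_s,\x_t\rangle$ (using $\langle\s,\bb_u\rangle\to 0$), the index shift between $\br_{u+1}$ and $\x_u$ in \eqref{AMP_relation}, the base case $\bz_0=\s$, and the pseudo-Lipschitz composition step are all handled correctly.
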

Now we are ready to prove Lemma \ref{difference}.
By \eqref{AMP_relation} and Proposition \ref{AMP_property} (ii),
\begin{equation}
\begin{aligned}
\lim_{t\to\infty}\lim_{K\to\infty}\frac{1}{K}\|\bz_{t}-\bz_{t-1}\|^2&=\lim_{t\to\infty}\lim_{K\to\infty}\frac{1}{K}\|\bb_{t}-\bb_{t-1}\|^2\\
&\hspace{-0.06cm}\overset{a.s.}{=}\frac{1}{\delta}\lim_{t\to\infty}\lim_{N\to\infty}\frac{1}{N}\|\bq_{t}-\bq_{t-1}\|^2\\
&=\frac{1}{\delta}\lim_{t\to\infty}\lim_{N\to\infty}\frac{1}{N}\|\x_{t+1}-\x_{t}\|^2.
\end{aligned}
\end{equation}
Hence it suffices to prove  Lemma \ref{difference} for variable $\{\x_t\}_{t\geq 0}$.
Applying  Lemma \ref{lemma:Rst} to $\varphi(x,y)=(x-y)^2$, we get  
\begin{equation}
\begin{aligned}
\lim_{N\to\infty}\frac{1}{N}\|\x_{t+1}-\x_t\|^2
&\overset{a.s.}{=}\mathbb{E}\left[(\eta_a(Z_t;\gamma_a)-\eta_a(Z_{t-1};\gamma_a))^2\right]\\
&< \mathbb{E}\left[(Z_t-Z_{t-1})^2\right]\\
&=R_{t,t}+R_{t-1,t-1}-2R_{t,t-1},
\end{aligned}
\end{equation} 
where the inequality follows from the fact that $\eta_a(\cdot;\gamma_a)$ is Lipschitz continuous with Lipschitz constant $(\gamma_a+1)^{-1}$ and $\gamma_a>0$.   
According to the definitions of $R_{s,t}$ in Lemma \ref{lemma:Rst} and $\tau_t^2$ in \eqref{state:AMP}, we have $R_{t,t}=\tau_t^2$ and $R_{t-1,t-1}=\tau_{t-1}^2$, and thus  
$R_{t-1,t-1}\to\tau_a^2$ and $R_{t,t}\to\tau_a^2$ due to Lemma \ref{R1:1.1}. The remaining task is to prove $R_{t,t-1}\to\tau_a^2$.

Note that $\{R_{t+1,t}\}_{t\geq 0}$ is defined through the following recursion:
\begin{equation}\label{recursion:R}
R_{t+1,t}=1+\frac{1}{\delta}\mathbb{E}\left[\eta_a(Z_t;\gamma_a)\eta_a(Z_{t-1};\gamma_a)\right],
\end{equation}
where $(Z_{t-1},Z_t)$ are Gaussian distributed with $\mathbb{E}[Z_{t-1}^2]=\tau_{t-1}^2$, $\mathbb{E}[Z_{t}^2]=\tau_t^2$, and $\mathbb{E}[Z_{t-1}Z_t]=R_{t,t-1}$, i.e., $R_{t+1,t}$ is a function of $\tau_t^2, \tau_{t-1}^2$, and $R_{t,t-1}$. To facilitate the analysis, we define another recursion as follows:
\begin{equation}\label{recursion:tildeR}
\bar{R}_{t+1,t}=1+\frac{1}{\delta}\mathbb{E}\left[\eta_a(\bar{Z}_t;\gamma_a)\eta_a(\bar{Z}_{t-1};\gamma_a)\right],
\end{equation}
with $\bar{R}_{1,0}=1$,
where 
 $(\bar{Z}_{t-1},\bar{Z}_t)$ are jointly Gaussian distributed with $\mathbb{E}[\bar{Z}_{t-1}^2]=\tau_a^2$, $\mathbb{E}[\bar{Z}_{t}^2]=\tau_a^2$, and $\mathbb{E}[\bar{Z}_{t-1}\bar{Z}_t]=\bar{R}_{t,t-1}$. 
 To prove that $R_{t,t-1}\to\tau_a^2,$ we proceed in two steps:
 first we prove that $\bar{R}_{t,t-1}\to \tau_a^2$ and then show $\bar{R}_{t,t-1}-R_{t,t-1}\to 0$.\vspace{0.2cm}
  
 \textbf{Step 1: $\bar{R}_{t,t-1}\to \tau_a^2$}. We claim that $\bar{R}_{t,t-1}\in[1,\tau_a^2]$. The upper bound uses  the Cauchy-Schwarz inequality:
 \begin{equation}
 \bar{R}_{t,t-1}=\mathbb{E}[\bar{Z}_{t}\bar{Z}_{t-1}]\leq \left(\mathbb{E}[\bar{Z}_{t}^2]\,\mathbb{E}[\bar{Z}_{t-1}^2]\right)^{1/2}= \tau_a^2.
 \end{equation} We next prove $\bar{R}_{t,t-1}\geq 1$ by induction. First, $\bar{R}_{1,0}=1$. Assume that $\bar{R}_{t,t-1}\geq 1$, i.e., $\bar{Z}_{t-1}$ and $\bar{Z}_t$ are positively correlated. Then since $\eta_a(\cdot;\gamma_a)$ is increasing, $\mathbb{E}\left[\eta_a(\bar{Z}_{t-1};\gamma_a)\eta_a(\bar{Z}_t;\gamma_a)\right] \geq 1$ \cite{associate}, which further implies that $\bar{R}_{t+1,t}\geq 1$. 
 
 For notation simplicity, let $\bar{y}_t:=\bar{R}_{t,t-1}$. Then $(\bar{Z}_{t-1},\bar{Z}_{t})$ can be represented as 
\begin{equation}
\begin{aligned}
\bar{Z}_{t-1}&=\sqrt{\frac{\tau_a^2}{2}+\frac{\bar{y}_t}{2}}Z-\sqrt{\frac{\tau_a^2}{2}-\frac{\bar{y}_t}{2}}W,~\bar{Z}_t&=\sqrt{\frac{\tau_a^2}{2}+\frac{\bar{y}_t}{2}}Z+\sqrt{\frac{\tau_a^2}{2}-\frac{\bar{y}_t}{2}}W,
\end{aligned}
\end{equation}
where $Z\sim\mathcal{N}(0,1)$ and $W\sim\mathcal{N}(0,1)$ are independent.  Given $y\in[1,\tau_a^2]$, define
\begin{equation}
\begin{aligned}
Z_1(y):&=\sqrt{\frac{\tau_a^2}{2}+\frac{{y}}{2}}Z-\sqrt{\frac{\tau_a^2}{2}-\frac{{y}}{2}}W,~Z_2(y):&=\sqrt{\frac{\tau_a^2}{2}+\frac{{y}}{2}}Z+\sqrt{\frac{\tau_a^2}{2}-\frac{{y}}{2}}W,
\end{aligned}
\end{equation}
 i.e., $Z_1(y)$ and $Z_2(y)$ are Gaussian distributed with 
$\mathbb{E}[Z_1(y)^2]=\mathbb{E}[Z_2(y)^2]=\tau_a^2$ and $\mathbb{E}[Z_1(y)Z_2(y)]=y$.
Let 
\begin{equation}\label{def:ly}
l(y):=1+\frac{1}{\delta}\mathbb{E}\left[\eta_a(Z_1(y);\gamma_a)\eta_a(Z_2(y);\gamma_a)\right],~~y\in[1,\tau_a^2].
\end{equation} Then \eqref{recursion:tildeR} can be expressed as  $\bar{y}_{t+1}=l(\bar{y}_t)$. Clearly, $\tau_a^2$ is a solution to $\tau_a^2=l(\tau_a^2)$ 
 from \eqref{conditiona}. We claim that 
 \begin{equation}\label{deriv:l}
\hspace{-0.1cm}\begin{aligned}
l'(y)&=\frac{1}{\delta}\mathbb{E}\left[\eta_a'(Z_1(y);\gamma_a)\eta_a'(Z_2(y);\gamma_a)\right]=\hspace{-0.1cm}\frac{1}{\delta(\gamma_a+1)^2}\mathbb{P}\bigg(\frac{{Z}_1(y)}{\gamma_a+1}\hspace{-0.05cm}\in\hspace{-0.05cm}[-a, a], \frac{Z_2(y)}{\gamma_a+1}\hspace{-0.05cm}\in\hspace{-0.05cm}[-a,a]\bigg),
\end{aligned}
\end{equation}
which we will show later. 
Hence, $l'(y)\geq 0$ and is increasing in $y\in[1,\tau_a^2]$ due to Lemma \ref{lem:D3}.  It follows that  
\begin{equation}\label{eq:convergey}
\tau_a^2-\bar{y}_{t+1}=l(\tau_a^2)-l(\bar{y}_t)\leq l'(\tau_a^2)(\tau_a^2-\bar{y}_t),
\end{equation}
where the first equation follows from $l(\tau_a^2)=\tau_a^2.$
 In addition, 
\begin{equation}
l'(\tau_a^2)=\frac{1}{\delta(\gamma_a+1)^2}\mathbb{P}\left(\frac{\tau_aZ}{\gamma_a+1} \in [-a,a]\right)<1,
\end{equation}
 where the inequality is  due to \eqref{condition0b}, $\rho>0$, and $\gamma_a>0$. Combining this with \eqref{eq:convergey} and noting that $\bar{y}_t\leq \tau_a^2$ for all $t$, we get $\bar{y}_t\to\tau_a^2$, i.e., $\bar{R}_{t,t-1}\to\tau_a^2$. 

Now we prove our claim in \eqref{deriv:l}.  By simple calculation,
\begin{equation}
\begin{aligned}
l'(y) & =\frac{1}{\delta}\mathbb{E}[\eta_a^{\prime}\left(Z_1(y);\gamma_a\right) \eta_a\left(Z_2(y);\gamma_a\right)Z_3(y)] + \frac{1}{\delta}\mathbb{E}\left[\eta_a\left(Z_1(y);\gamma_a\right) \eta_a'\left(Z_2(y);\gamma_a\right)Z_3(y)\right]\\
&:= T_1+T_2,
\end{aligned}
\end{equation}
where 
\begin{equation}
Z_3(y)=\frac{Z}{4\sqrt{\frac{\tau_a^2}{2}+\frac{y}{2}}}+\frac{W}{4\sqrt{\frac{\tau_a^2}{2}-\frac{y}{2}}}.
\end{equation}
We next calculate $T_1$. For notational simplicity, let $Z_i:=Z_i(y)$,~$i\in\{1,2,3\}$. 
{Clearly, $Z_1, Z_2, Z_3$ are Gaussian distributed with zero mean, and $Z_1$ and $Z_3$ are independent since  $\mathbb{E}[Z_1Z_3]=0$.  For further calculation, it is convenient to rewrite $Z_2$ as a scaling of $Z_1$ plus an independent Gaussian term: 
\begin{equation}\label{Z2}
Z_2=\xi Z_1+\tilde{Z}_2,
\end{equation}
where $\xi=\mathbb{E}[Z_1Z_2]/\mathbb{E}[Z_1^2]=y/\tau_a^2$ and $\tilde{Z}_2$ is independent of $Z_1$ with $\mathbb{E}[\tilde{Z}_2^2]=\tau_a^2-{y^2}/{\tau_a^2}$. 
With the above notations,  $T_1$ can be expressed as 
\begin{equation}
\hspace{-0.2cm}\begin{aligned}
T_1  =\,\frac{1}{\delta}\mathbb{E}\left[\eta_a^{\prime}(Z_1;\gamma_a)  \eta_a(Z_2;\gamma_a)  Z_3\right]
\overset{(a)}{=}&\frac{1}{\delta}\mathbb{E}\left[\mathbb{E}\left[\eta_a^{\prime}(Z_1;\gamma_a)  \eta_a(\xi Z_1+\tilde{Z}_2;\gamma_a)  Z_3|Z_1\right]\right] \\
=\,&\frac{1}{\delta}\mathbb{E}\left[\eta_a^{\prime}(Z_1;\gamma_a) \mathbb{E}\left[ \eta_a(\xi Z_1+\tilde{Z}_2;\gamma_a)  Z_3|Z_1\right]\right]\\
  \overset{(b)}{=}&\frac{1}{\delta}\mathbb{E}\left[\eta_a^{\prime}(Z_1;\gamma_a) \mathbb{E}[\tilde{Z}_2Z_3] \mathbb{E}[\eta_a'(\xi Z_1+\tilde{Z}_2;\gamma_a)|Z_1]\right]\\
\overset{(c)}{=}&\frac{1}{2\delta}  \mathbb{E}\left[\eta_a^{\prime}(Z_1;\gamma_a)  \eta_a^{\prime}(Z_2;\gamma_a)\right],
\end{aligned}
\end{equation}
where (a) uses the law of total probability and \eqref{Z2}, (b) follows from the Stein's lemma and the fact that $\tilde{Z}_2$ and ${Z}_3$ are independent of $Z_1$, and (c) holds since $\mathbb{E}[\tilde{Z}_2Z_3]=\mathbb{E}[(Z_2-\xi Z_1)Z_3]=\mathbb{E}[Z_2Z_3]=\frac{1}{2}$. }
Similarly, we can show that 
\begin{equation}
\begin{aligned}
T_2=\frac{1}{2\delta}  \mathbb{E}\left[\eta_a^{\prime}(Z_1;\gamma_a)  \eta_a^{\prime}(Z_2;\gamma_a)\right].
\end{aligned}
\end{equation}
This gives \eqref{deriv:l} and completes the proof of Step 1.
\vspace{0.1cm}

\textbf{Step 2: $R_{t,t-1}-\bar{R}_{t,t-1}\to0$}.  Let $y_t=R_{t,t-1}$ and 
\begin{equation}
h(\tau_{t-1}^2,\tau_t^2,y_t)=1+\frac{1}{\delta}\mathbb{E}\left[\eta_a(Z_t;\gamma_a)\eta_a(Z_{t-1};\gamma_a)\right].
\end{equation}
 Then the recursion in \eqref{recursion:R} can be written as 
\begin{equation}
y_{t+1}=h(\tau_{t-1}^2,\tau_{t}^2,y_t),\end{equation} and  $l(y)$ defined in \eqref{def:ly} can be expressed as 
 $l(y_t)=h(\tau_a^2,\tau_a^2,y_t)$. Clearly, $y_t\in[1,\tau_a^2]$, where the lower bound can be proved by similar arguments  to that of $\bar{y}_t$, and the upper bound is due to $\tau_t^2\leq \tau_a^2$ (see Appendix \ref{app:convergeA}) and the Cauchy-Schwarz inequality. 
Therefore,  the sequence $\{(\tau_{t-1}^2,\tau_t^2,y_t)\}$ is bounded.  It is easy to check that $h(\tau_{t-1}^2,\tau_t^2,y_t)$ has bounded derivation and is thus Lipschitz continuous on the considered bounded set. We denote the Lipschitz constant by $L_h$.  
In addition, since $l'(\tau_a^2):=c<1$ and $l'(y)$ is increasing in $[1,\tau_a^2]$, we have 
\begin{equation}\label{boundderivative}
l'(y_t)\leq c<1,~l'(\bar{y}_t)\leq c<1,~~\forall ~t.
\end{equation} 
Therefore,
\begin{equation}
\begin{aligned}
\left|y_{t+1}-\bar{y}_{t+1}\right|&=|h(\tau_{t-1}^2,\tau_{t}^2,y_t)-h(\tau_a^2,\tau_a^2,\bar{y}_t)|\\
&\leq |h(\tau_{t-1}^2,\tau_{t}^2,y_t)-h(\tau_a^2,\tau_a^2,{y}_t)|+|h(\tau_{a}^2,\tau_{a}^2,y_t)-h(\tau_a^2,\tau_a^2,\bar{y}_t)|\\
&\overset{(a)}{\leq} L_h\left(|\tau_{t-1}^2-\tau_{a}^2|+|\tau_{t}^2-\tau_{a}^2|\right)+|l(y_t)-l(\bar{y}_t)|\\
&\overset{(b)}{\leq} L_h\left(|\tau_{t-1}^2-\tau_{a}^2|+|\tau_{t}^2-\tau_{a}^2|\right)+c|y_t-\bar{y}_t|\\
&\overset{(c)}{\leq}  L_h\sum_{i=1}^{t}c^{i-1}\left(|\tau_{t-i}^2-\tau_{a}^2|+|\tau_{t-i+1}^2-\tau_{a}^2|\right)+c^{t}|y_{1}-\bar{y}_{1}|\\
&=L_h\sum_{i=1}^{\lfloor {t}/{2}\rfloor}c^{i-1}(|\tau_{t-i}^2-\tau_{a}^2|+|\tau_{t-i+1}^2-\tau_{a}^2|)\\
&~~\,~+L_h\hspace{-0.25cm}\sum_{i=\lfloor {t}/{2}\rfloor+1}^{t}\hspace{-0.25cm}c^{i-1}(|\tau_{t-i}^2\hspace{-0.05cm}-\hspace{-0.05cm}\tau_{a}^2|+\hspace{-0.05cm}|\tau_{t-i+1}^2-\tau_{a}^2|)\hspace{-0.05cm}+\hspace{-0.05cm}c^{t}|y_{1}-\bar{y}_{1}|\\
&\leq \frac{2L_h}{1-c}\sup_{\lceil {t}/{2}\rceil\leq i\leq t}|\tau_i^2-\tau_a^2|+\frac{2L_hc^{\lfloor {t}/{2}\rfloor}}{1-c}\sup_{0\leq i\leq \lceil {t}/{2}\rceil-1}|\tau_i^2-\tau_a^2|+c^{t}|y_{1}-\bar{y}_{1}|
\end{aligned}
\end{equation}
where (a) uses the Lipschitz continuity of $h$ and the definition of $l$, (b) follows from \eqref{boundderivative}  and the increasing of $l'$, and (c) is obtained by unfolding the inequality in (b) recursively. 
Let $t\to\infty$ in the above inequality and noting $c<1$ and $\lim_{t\to\infty}\tau_t^2=\tau_a^2$, we have $y_{t}-\bar{y}_t\to0$, i.e., ${R}_{t,t-1}-\bar{R}_{t,t-1}\to 0,$ which completes the proof.

\subsection{Proof of Proposition \ref{prob:converge}}\label{app:probconverge}
We now proceed to the proof of Proposition \ref{prob:converge}. {In contrast to \cite{LASSO, elasticnet}, the problem in \eqref{def:xa} involves constraints. To quantify the distance between a given point and the optimal solution of a constrained optimization problem like \eqref{def:xa}, we adopt the following error bound condition.
\begin{lemma}\label{Errorbound}
Consider the following problem:
\begin{equation}
\x^*=\arg\min_{\x\in\mathcal{C}}~f(\x),
\end{equation}
where $f(\x)$ is $\mu$-strongly convex with  an $L$-Lipschitz continuous gradient, and $\mathcal{C}$ is  a nonempty, compact, and  convex set. The following error bound condition holds for all $\omega>0$:
\begin{equation}
\|\x-\x^*\|\leq C_\omega\|\omega^{-1}(\x-\mathcal{P}_\mathcal{C}(\x-\omega\nabla f(\x))\|,~~\forall~\x\in\mathcal{C},
\end{equation}
where $C_\omega=(2\mu^{-1}+\omega)(1+\omega L)$.
\end{lemma}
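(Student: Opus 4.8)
The plan is to combine the first-order optimality (variational‑inequality) characterizations of the constrained minimizer $\x^*$ and of the projected point $\x^+:=\mathcal{P}_{\mathcal{C}}(\x-\omega\nabla f(\x))$ with the strong monotonicity and Lipschitz continuity of $\nabla f$. Write $g:=\omega^{-1}(\x-\x^+)$, i.e. the vector appearing inside the norm on the right-hand side, so the goal becomes $\|\x-\x^*\|\le C_\omega\|g\|$. If $\x=\x^*$ both sides vanish, so assume $\x\neq\x^*$. Since $\mathcal{C}$ is nonempty, compact, and convex and $f$ is $\mu$-strongly convex, $\x^*$ exists and is unique, and I would first record two variational inequalities: the optimality of $\x^*$ gives $\langle\nabla f(\x^*),\y-\x^*\rangle\ge0$ for all $\y\in\mathcal{C}$, and in particular (taking $\y=\x^+$) $\langle\nabla f(\x^*),\x^+-\x^*\rangle\ge0$; the variational characterization of the projection gives $\langle\x-\omega\nabla f(\x)-\x^+,\y-\x^+\rangle\le0$ for all $\y\in\mathcal{C}$, and in particular (taking $\y=\x^*$, dividing by $\omega$, and using $\x-\x^+=\omega g$) $\langle g-\nabla f(\x),\x^*-\x^+\rangle\ge0$. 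Adding these two inequalities cancels the normal-cone directions and yields $\langle\nabla f(\x)-\nabla f(\x^*)-g,\,\x^*-\x^+\rangle\ge0$.

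Next I would decompose $\x^*-\x^+=(\x^*-\x)+\omega g$ inside this inequality and bound term by term: strong monotonicity of $\nabla f$ gives $\langle\nabla f(\x)-\nabla f(\x^*),\x^*-\x\rangle\le-\mu\|\x-\x^*\|^2$; Lipschitz continuity of $\nabla f$ and Cauchy--Schwarz give $\langle\nabla f(\x)-\nabla f(\x^*),\omega g\rangle\le\omega L\|\x-\x^*\|\,\|g\|$; and $\langle g,\x^*-\x+\omega g\rangle\ge-\|g\|\,\|\x-\x^*\|+\omega\|g\|^2\ge-\|g\|\,\|\x-\x^*\|$. Substituting and rearranging yields $\mu\|\x-\x^*\|^2\le(1+\omega L)\|g\|\,\|\x-\x^*\|$, hence, dividing by $\|\x-\x^*\|>0$, $\|\x-\x^*\|\le\tfrac{1+\omega L}{\mu}\|g\|$. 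Since $1+\omega L\ge1$ and $\tfrac{2}{\mu}+\omega\ge\tfrac{1}{\mu}$, this in particular implies $\|\x-\x^*\|\le C_\omega\|g\|$ with $C_\omega=(2\mu^{-1}+\omega)(1+\omega L)$. The stated (slightly looser) constant can alternatively be produced directly via the triangle inequality $\|\x-\x^*\|\le\|\x-\x^+\|+\|\x^+-\x^*\|$: the first term equals $\omega\|g\|$, and the second is bounded by viewing $\x^+$ as the \emph{exact} minimizer over $\mathcal{C}$ of the linearly perturbed strongly convex objective $f(\cdot)-\langle g+e,\cdot\rangle$ with $e:=\nabla f(\x^+)-\nabla f(\x)$, $\|e\|\le L\|\x-\x^+\|=\omega L\|g\|$, and invoking the $\mu^{-1}$-Lipschitz dependence of minimizers of strongly convex problems on linear perturbations of the objective.

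The main obstacle is that the residual is built from $\nabla f(\x)$ evaluated at the base point rather than from $\nabla f(\x^+)$, so one cannot apply strong monotonicity directly between $\x^+$ and $\x^*$; the remedy is either to keep $\nabla f(\x)$ and route the decomposition of $\x^*-\x^+$ through $\x$ (as above), or to absorb the mismatch $\nabla f(\x^+)-\nabla f(\x)$ into a linear perturbation of the objective and control it by $L\|\x-\x^+\|=\omega L\|g\|$. Note that this argument never requires $\omega$ to be small, so it is valid for all $\omega>0$, and compactness of $\mathcal{C}$ is used only to guarantee that $\x^*$ is well defined; everything else (the choice of test points and the final accounting of constants) is routine.
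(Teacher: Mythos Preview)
Your argument is correct and in fact establishes the sharper bound $\|\x-\x^*\|\le\mu^{-1}(1+\omega L)\|g\|$, from which the stated constant $C_\omega=(2\mu^{-1}+\omega)(1+\omega L)\ge\mu^{-1}(1+\omega L)$ follows a fortiori. There is one sign slip in an intermediate step: from the projection inequality $\langle\x-\omega\nabla f(\x)-\x^+,\x^*-\x^+\rangle\le0$ together with $\x-\x^+=\omega g$ one obtains $\langle g-\nabla f(\x),\x^*-\x^+\rangle\le0$, not $\ge0$; nevertheless the combined inequality $\langle\nabla f(\x)-\nabla f(\x^*)-g,\x^*-\x^+\rangle\ge0$ that you subsequently use \emph{is} the correct consequence of the two variational inequalities (add $\langle\nabla f(\x)-g,\x^*-\x^+\rangle\ge0$ to $\langle-\nabla f(\x^*),\x^*-\x^+\rangle\ge0$), so the rest of the argument is unaffected.

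The paper itself does not supply a proof of this lemma: it simply invokes an external reference and omits all details. Your self-contained variational-inequality derivation is therefore strictly more informative than what appears in the paper, and it also reveals that the constant $C_\omega$ recorded there is not tight---the direct route already gives $\mu^{-1}(1+\omega L)$, while the alternative triangle-inequality/perturbation route you sketch is what would naturally produce the looser $(2\mu^{-1}+\omega)(1+\omega L)$.
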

\begin{proof}
This lemma follows directly from \cite[Corollary 3.6]{errorbound}, and we omit the details.
\end{proof}
Applying Lemma \ref{Errorbound} with $f(\x)=\frac{1}{2}\|\s-\bH\x\|^2+\frac{\rho}{2}\|\x\|^2$ and $\mathcal{C}=[-a,a]^N$, we have 
\begin{equation}\label{errorbound}
\begin{aligned}
&\frac{\|\x_t-\x_a\|}{\sqrt{N}}\leq \frac{\omega^{-1}C_\omega}{\sqrt{N}} \left\|\x_t-\mathcal{P}_{[-a,a]^N}((1-\omega\rho)\x_t+\omega\H^T(\s-\H\x_t))\right\|,
\end{aligned}
\end{equation}
where $C_{\omega}=(\rho^{-1}+\omega)\left(1+\omega(\|\bH\|^2+\rho)\right)$ by noting that $\|\nabla ^2f(\x)\|=\|\bH^T\bH+\rho\mathbf{I}\|\leq \|\bH\|^2+\rho$, i.e., $\|\bH\|^2+\rho$ is a Lipschitz constant of $\nabla f(\x)$.  With  \eqref{errorbound}, we can further bound $\|\x_t-\x_a\|/\sqrt{N}$ as 
\begin{equation}\label{errorbound2}
\begin{aligned}
&\frac{\|\x_t-\x_a\|}{\sqrt{N}}\leq \frac{\omega^{-1}C_\omega}{\sqrt{N}} \|\x_t-\x_{t+1}\|+ \frac{\omega^{-1}C_\omega}{\sqrt{N}}\|\x_{t+1}\hspace{-0.05cm}-\hspace{-0.05cm}\mathcal{P}_{[-a,a]^N}((1-\omega\rho)\x_t+\omega\H^T(\s-\H\x_t))\|.
\end{aligned}
\end{equation}
For any $\omega>0$, \begin{equation}\label{Comega}\lim_{N\to\infty}C_{\omega}=(\rho^{-1}+\omega)\bigg(1+\omega(1+1/\sqrt{\delta})^2+\omega\rho\bigg)<\infty\end{equation} due to Lemma \ref{lem:eigenvalue} (see further ahead).  Hence, according to Lemma \ref{difference},  the first term on the right-hand side of \eqref{errorbound2} satisfies 
\begin{equation}\label{firstterm}\lim_{t\to\infty}\lim_{N\to\infty} \frac{\omega^{-1}C_\omega}{\sqrt{N}} \|\x_t-\x_{t+1}\|\overset{a.s.}{=}0,~~\forall~ \omega>0.
\end{equation}
For the second term, we note that $\x_{t+1}$ can be rewritten as follows using the formula of the AMP algorithm in \eqref{AMP_copy}: 
\begin{equation}\label{xt+1}
\begin{aligned}
\x_{t+1}\hspace{-0.05cm}=\hspace{-0.05cm}\mathcal{P}_{[-a,a]^N}\bigg(&\frac{\x_t}{1+\gamma_a}\hspace{-0.05cm}+\hspace{-0.05cm}\frac{\H^T(\s-\H\x_t)}{(1\hspace{-0.05cm}+\hspace{-0.05cm}\gamma_a)(1\hspace{-0.05cm}-\hspace{-0.05cm}\alpha_{t})}+\frac{\alpha_{t}\H^T(\z_{t-1}-\z_t)}{(1+\gamma_a)(1-\alpha_{t})}\bigg),
\end{aligned}
\end{equation}
where \begin{equation}\label{def:alphat}
\alpha_{t}:=\frac{1}{\delta}\left<\eta_a'(\x_{t-1}+\H^T\bz_{t-1};\gamma_a)\right>.
\end{equation} 
By Proposition \ref{pro:4} and Lemma \ref{R1:1.1},  
\begin{equation}\label{alpha_to_alpha_a}
\hspace{-0.15cm}\begin{aligned}
\lim_{t\to\infty}\lim_{N\to\infty}\alpha_{t}&=\lim_{t\to\infty}\frac{1}{\delta(\gamma_a+1)}\mathbb{P}\bigg(\frac{\tau_t Z}{\gamma_a+1}\in[-a,a]\bigg)\\
&=\frac{1}{\delta(\gamma_a+1)}\mathbb{P}\bigg(\frac{\tau_a Z}{\gamma_a+1}\in[-a,a]\bigg)\\
&\hspace{-0.1cm}:=\alpha_a.
\end{aligned}
\end{equation}
To bound the second term on the right-hand side of \eqref{errorbound2}, we set $\omega=\frac{1}{(1+\gamma_a)(1-\alpha_a)}$, which yields} 
\begin{equation}\label{secondterm}
\begin{aligned} 
&\frac{1}{\sqrt{N}}\left\|\x_{t+1}-\mathcal{P}_{[-a,a]^N}\left((1-\omega\rho)\x_t+\omega\H^T(\s-\H\x_t)\right)\right\|\\
&\overset{(a)}{=}\frac{1}{\sqrt{N}} \left\|\x_{t+1}-\mathcal{P}_{[-a,a]^N}\bigg(\frac{\x_t}{1+\gamma_a}+\frac{\H^T(\s-\H\x_t)}{(1+\gamma_a)(1-\alpha_a)}\bigg)\right\|\\
&\overset{(b)}{\leq}\frac{1}{1+\gamma_a}\frac{\|\H^T(\s-\H\x_t)\|}{\sqrt{N}}\left(\frac{1}{1-\alpha_{t}}-\frac{1}{1-\alpha_a}\right)+\frac{\alpha_{t}}{(1+\gamma_a)(1-\alpha_{t})}\frac{\|\H^T(\bz_{t-1}-\bz_t)\|}{\sqrt{N}},
\end{aligned}
\end{equation}
where (a) is obtained by substituting $\omega=\frac{1}{(1+\gamma_a)(1-\alpha_a)}$ and applying \eqref{condition0b}, and (b) uses \eqref{xt+1} and the non-expansiveness of  the projection operator. Note that 
\begin{equation}
\begin{aligned}
\lim\sup_{N\to\infty}\frac{\|\H^T(\s-\H\x_t)\|}{\sqrt{N}}&\leq \lim\sup_{N\to\infty}\|\H\|\left(\frac{\|\s\|}{\sqrt{N}}+\frac{\|\H\|\|\x_t\|}{\sqrt{N}}\right)\\
&\leq \left(1+\frac{1}{\sqrt{\delta}}\right)\left(\sqrt{\delta}{B_s}+\left(1+\frac{1}{\sqrt{\delta}}\right) a\right)<\infty,
\end{aligned}
\end{equation}
where 
\begin{equation}\label{def:Bs}B_s:=\max_{s\in\mathcal{S}}|s| \end{equation}
is finite since $\mathcal{S}$ is a finite set,  the second inequality uses Lemma \ref{lem:eigenvalue}  and the facts that {$\s\in\mathcal{S}^K$} and $\x\in[-a,a]^N$.  Combining this with \eqref{Comega}, \eqref{alpha_to_alpha_a}, \eqref{secondterm}, and  Lemma \ref{difference}, we get 
\begin{equation}\label{xttoxa2}
\hspace{-0.1cm}\begin{aligned}
&\lim_{t\to\infty}\hspace{-0.05cm}\lim_{N\to\infty}\hspace{-0.05cm}\frac{1}{\sqrt{N}}\hspace{-0.05cm}\|\x_{t+1}\hspace{-0.1cm}-\hspace{-0.1cm}\mathcal{P}_{[-a,a]^{\hspace{-0.03cm}N}}\hspace{-0.03cm}((1\hspace{-0.05cm}-\hspace{-0.05cm}\omega\rho)\x_t\hspace{-0.05cm}+\hspace{-0.05cm}\omega\H^T\hspace{-0.05cm}(\s\hspace{-0.05cm}-\hspace{-0.05cm}\H\x_t))\|\overset{a.s.}=0
\end{aligned}
\end{equation}
for  $\omega=\frac{1}{(1+\gamma_a)(1-\alpha_a)}$, which, together with \eqref{errorbound2} and \eqref{firstterm}, 
 gives Proposition \ref{prob:converge} (i).

 Let $\varphi(\cdot)$ be pseudo-Lipschitz  with constant $L_{\varphi}$, then
\begin{equation}
\begin{aligned}
\left|\frac{1}{N}\sum_{i=1}^N\varphi(x_{a,i})-\frac{1}{N}\sum_{i=1}^N\varphi(x_{t+1,i})\right|&\leq \frac{L_{\varphi}}{N}\sum_{i=1}^N(1+|x_{t+1,i}|+|x_{a,i}|)|x_{a,i}-x_{t+1,i}|\\
&\leq L_{\varphi}(1+2a)\frac{\|\x_a-\x_{t+1}\|}{\sqrt{N}},
\end{aligned}
\end{equation}
where the last inequality follows from $|x_{t+1,i}|\leq a,$ $|x_{a,i}|\leq a$, and  $\frac{\|\x\|_1}{\sqrt{N}}\leq \|\x\|$.
Combining the above inequality with \eqref{xttoxa2}, we get
\begin{equation}
\begin{aligned}
\lim_{N\to\infty}\frac{1}{N}\sum_{i=1}^N \varphi(x_{a,i})&\overset{a.s.}{=}\lim_{t\to\infty}\lim_{N\to\infty}\frac{1}{N}\sum_{i=1}^N\varphi(x_{t+1,i})\\
&\underset{a.s.}{\overset{(a)}{=}}\lim_{t\to\infty}\mathbb{E}\left[\varphi\circ\eta_a(\tau_t Z;\gamma_a)\right]\\
&\overset{(b)}{=}\mathbb{E}\left[\varphi\circ\eta_a(\tau_aZ;\gamma_a)\right]=\mathbb{E}\left[\varphi(X_a)\right],
\end{aligned}
\end{equation}
where $(a)$ is due to Proposition \ref{pro:4} (ii), $(b)$ uses Lemma \ref{R1:1.1} and  the fact that $\varphi\circ\eta_a(\cdot;\gamma_a)$ is bounded and continuous such that the dominated  convergence theorem can be applied. This completes the proof of Proposition \ref{prob:converge} (ii).

\section{Corollaries of Proposition \ref{prob:converge}}
This appendix collects two  important corollaries of Proposition \ref{prob:converge}. In Appendix \ref{app:unique}, we prove  the uniqueness of $(\tau_a^2,\gamma_a)$ defined in Lemma \ref{lemma:unique}. In Appendix \ref{app:convergefN}, we give the proof of    Lemma \ref{convergefN}.

\subsection{Proof of Lemma \ref{lemma:unique}: Uniqueness of $(\tau_a^2,\gamma_a)$ }\label{app:unique}
We note that Proposition \ref{prob:converge} holds for any AMP algorithm defined by \eqref{AMP_copy} with $\gamma_a$ being a solution to \eqref{condition0}. 
Applying Proposition \ref{prob:converge} (ii) with $\varphi(x)=x^2$, we get 
\begin{equation}
\lim_{N\to\infty}\frac{\|\x_a\|^2}{N}\overset{a.s.}{=}\mathbb{E}\left[\eta_a^2(\tau_a Z;\gamma_a)\right]=(\tau_a^2-1)\delta,
\end{equation}
where the last equality follows from \eqref{conditiona}. Due to the uniqueness of $\x_a$ (as problem \eqref{def:xa} is strongly convex), $\tau_a^2$ is unique. 

Next, we prove the uniqueness of $\gamma_a$. Assume for contradiction that there exists $\gamma_{a,1}\neq\gamma_{a,2}$ such that $(\tau_a^2,\gamma_{a,1})$ and $(\tau_a^2,\gamma_{a,2})$ both satisfy \eqref{condition0}. Without loss of generality, we assume that $\gamma_{a,1}<\gamma_{a,2}$.  Following the notations in Appendix  \ref{app:lemma1}, \eqref{condition0b} can be written as $F_2(\tau^2,\gamma;a)=0$; see \eqref{condition0:2b}. According to \eqref{deriv},  
\begin{equation}
\begin{aligned}
\frac{\partial F_2}{\partial \gamma}(\tau^2,\gamma;a)&=-1+\frac{1}{\delta(\gamma+1)^2}u+\frac{2a\gamma}{\delta\sqrt{\tau^2}(\gamma+1)}v\\
&=-1+\frac{1}{\delta(\gamma+1)}u-\frac{\gamma}{\delta(\gamma+1)^2}\left(u-\frac{2a(\gamma+1)}{\sqrt{\tau^2}}v\right)\\
&\leq -1+\frac{1}{\delta(\gamma+1)}u,
\end{aligned}
\end{equation}
where $u$ and $v$ are defined in \eqref{def:uv} and the last inequality holds since $2\Phi(z)-2z\phi(z)-1\geq 0$ for all $z\geq 0$. Furthermore, it is easy to check that the function $\frac{u}{\delta(\gamma+1)}$ is  decreasing in $\gamma$. Hence, for $\gamma>\gamma_{a,1}$, we have 
\begin{equation}
\begin{aligned}
\frac{\partial F_2}{\partial \gamma}(\tau_a^2,\gamma;a)\leq -1+\frac{1}{\delta(\gamma_{a,1}+1)}u=-\frac{\rho}{\gamma_{a,1}}<0,
\end{aligned}
\end{equation}
where the equality uses the fact that $F_2(\tau_a^2,\gamma_{a,1};a)=0$. It follows that $F_2(\tau_a^2,\gamma;a)$ is strictly decreasing in $\gamma>\gamma_{a,1}$, and thus $F_2(\tau_a^2,\gamma_{a,2};a)<F_2(\tau_a^2,\gamma_{a,1};a)=0$, which gives a contradiction. 
\subsection{Proof of Lemma \ref{convergefN}}\label{app:convergefN}
The function $f_N(a;\s,\H)$ can be expressed as 
\begin{equation}
f_N(a;\s,\H)=\frac{1}{N}\|\s-\H\x_a\|^2+\frac{\rho}{N}\|\x_a\|^2.
\end{equation}
Applying Proposition \ref{prob:converge} (ii),  we have
\begin{equation}
\lim_{N\to\infty}\frac{\|\x_a\|^2}{N}\overset{a.s.}{=}\mathbb{E}\left[\eta_a^2(\tau_a Z;\gamma_a) \right]=\delta(\tau_a^2-1),
\end{equation}
where the last equality is due to \eqref{conditiona}. Note that 
\begin{equation}
\begin{aligned}
\left|\frac{1}{N}\|\s-\H\x_a\|^2-\frac{1}{N}\|\s-\bH\x_{t+1}\|^2\right|&{=}\left|\frac{2}{N}(\bH\x_\mu-\s)^T\bH(\x_a-\x_{t+1})\right| \\
&{\leq}\,\frac{2\|\bH\|(\|\bH\|\|\x_{\mu}\|+\|\s\|)}{\sqrt{N}}\frac{\|\x_a-\x_{t+1}\|}{\sqrt{N}},
\end{aligned}
\end{equation}
where $\x_\mu=\mu(1-\x_a)+(1-\mu)\x_{t+1}$ for some $\mu\in[0,1]$. Hence, according to Lemma \ref{difference} and Lemma \ref{lem:eigenvalue}, and by noting that    ${\|\s\|\leq\sqrt{K}B_s}$ and  $\|\x_{\mu}\|\leq \sqrt{N}a$, {where $B_s$ is defined in \eqref{def:Bs}}, we have
\begin{equation}\lim_{t\to\infty}\lim_{N\to\infty}\left|\frac{1}{N}\|\s-\H\x_a\|^2-\frac{1}{N}\|\s-\bH\x_{t+1}\|^2\right|=0.\end{equation}
It follows that 
\begin{equation}
\hspace{-0.2cm}\begin{aligned}
\lim_{N\to\infty}\frac{1}{N}\|\s-\H\x_a\|^2&\overset{a.s.}{=}\delta\lim_{t\to\infty}\lim_{K\to\infty}\frac{1}{K}\|\s-\H\x_{t+1}\|^2\\
&\overset{(a)}{=}\delta\lim_{t\to\infty}\lim_{K\to\infty}\frac{1}{K}\|\bz_{t+1}-\alpha_t\bz_t\|^2\\
&=\delta\lim_{t\to\infty}\lim_{K\to\infty}\bigg(\frac{\|\bz_{t+1}\|^2}{K}+\frac{\alpha_t^2\|\bz_t\|^2}{K}-2\alpha_t\left<\bz_{t+1},\bz_t\right>\bigg)\\
&\underset{a.s.}{\overset{(b)}{=}}\delta(\alpha_a-1)^2\tau_a^2\\
&\overset{(c)}{=}\delta \rho^2\tau_a^2/\gamma_a^2,
\end{aligned}
\end{equation}
where (a) uses the update rule of the AMP algorithm in \eqref{AMP_copy} and the definition of $\alpha_t$ in \eqref{def:alphat}, 
 (b) holds since  $\lim_{t\to\infty}\lim_{K\to\infty} \alpha_t\overset{a.s.}{=}\alpha_a$ (see \eqref{alpha_to_alpha_a}), 
\begin{equation}
\lim_{t\to\infty}\lim_{K\to\infty}\hspace{-0.05cm}\left<\bz_{t+1},\bz_t\right>\hspace{-0.05cm}=\hspace{-0.05cm}\lim_{t\to\infty}\lim_{K\to\infty}\hspace{-0.1cm}\frac{\|\bz_{t}\|^2}{K},\end{equation}
  which follows from Lemma \ref{difference}, and
\begin{equation}
\begin{aligned}
&\lim_{t\to\infty}\lim_{K\to\infty}\frac{\|\bz_{t}\|^2}{K}=\lim_{t\to\infty}1+\sigma_t^2=\lim_{t\to\infty}\tau_t^2=\tau_a^2,
\end{aligned}
\end{equation}
which follows from  Proposition \ref{pro:4} (ii) and Lemma \ref{R1:1.1}, and  (c) applies   \eqref{condition0b}.
Combining the above discussions gives the desired result.

\section{Proof of Lemma \ref{converge:a}, Theorem \ref{mainresult:ed}, and Proposition \ref{the1}}\label{app:the1}
\subsection{Proof of Lemma \ref{converge:a}}\label{proof:Lemma4}
\subsubsection{Useful Results}Before proving Lemma \ref{converge:a}, we first list a few results that will be used in the proof. 
\begin{lemma}[
{\hspace{-0.001cm}\cite[Section 3.2.5]{convexopt}}]\label{lem:convex}
 Let $f(\x,\y)$ be a  jointly convex function in $(\x, \y)$ and $\mathcal{C}$ be a convex nonempty set. The function 
 \begin{equation}
 g(\x) := \inf_{\y\in\mathcal{C}} f(\x,\y)
 \end{equation}
  is convex in $\x$.
\end{lemma}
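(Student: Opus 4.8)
The plan is to verify the two‑point convexity inequality for $g$ directly from the joint convexity of $f$ and the convexity of $\mathcal{C}$, using $\epsilon$‑approximate minimizers to handle the fact that the infimum defining $g(\x)=\inf_{\y\in\mathcal{C}}f(\x,\y)$ need not be attained. Fix $\x_1,\x_2$ in the domain of $g$, a scalar $\theta\in[0,1]$, and set $\x_\theta=\theta\x_1+(1-\theta)\x_2$. For an arbitrary $\epsilon>0$, the definition of the infimum furnishes points $\y_1,\y_2\in\mathcal{C}$ with $f(\x_1,\y_1)\le g(\x_1)+\epsilon$ and $f(\x_2,\y_2)\le g(\x_2)+\epsilon$; this is the only place where non‑attainment of the infimum needs to be accommodated, and it causes no difficulty (if some $g(\x_i)=-\infty$ one simply picks $\y_i$ so that $f(\x_i,\y_i)$ is arbitrarily negative, and the argument below still goes through).

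Next I would use the convexity of $\mathcal{C}$ to observe that $\y_\theta:=\theta\y_1+(1-\theta)\y_2\in\mathcal{C}$, so $\y_\theta$ is feasible for the minimization defining $g(\x_\theta)$. Consequently
$$g(\x_\theta)\;\le\;f(\x_\theta,\y_\theta)\;=\;f\bigl(\theta(\x_1,\y_1)+(1-\theta)(\x_2,\y_2)\bigr)\;\le\;\theta f(\x_1,\y_1)+(1-\theta)f(\x_2,\y_2),$$
where the second inequality is exactly the joint convexity of $f$ in the pair $(\x,\y)$. Substituting the $\epsilon$‑optimality bounds on $\y_1,\y_2$ then gives
$$g(\x_\theta)\;\le\;\theta g(\x_1)+(1-\theta)g(\x_2)+\epsilon.$$

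Finally, letting $\epsilon\downarrow0$ yields $g(\x_\theta)\le\theta g(\x_1)+(1-\theta)g(\x_2)$, i.e., convexity of $g$. I do not expect any genuine obstacle here: this is a standard fact from convex analysis (see Section~3.2.5 of the cited text), and the only point deserving a word of care --- the passage through a possibly unattained infimum --- is dispatched by the $\epsilon$‑argument above. If $g$ is required to be proper for later use, one additionally notes that $g(\x)>-\infty$ whenever $f$ is bounded below on $\{\x\}\times\mathcal{C}$, which holds in every application in this paper, where $f$ is a strongly convex quadratic minimized over a compact box.
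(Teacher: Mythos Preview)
Your proof is correct and is precisely the standard $\epsilon$-approximate-minimizer argument given in the cited reference \cite[Section 3.2.5]{convexopt}; the paper itself offers no independent proof of this lemma and simply defers to that source.
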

\begin{lemma}[\hspace{-0.001cm}{\cite[Theorem 2.7]{convex_lemma}}]\label{lem:convergeastar}
Let$\{f_N(a)\}_{N\geq0}$ be a sequence of convex functions that converges pointwise almost everywhere (a.e.) to a  function $f(a)$ on a convex set $\mathcal{C}$, i.e., 
\begin{equation}
\lim_{N\to\infty}f_N(a)\overset{a.s.}{=}f(a),~~\forall~a\in\mathcal{C}.
\end{equation} 
Let $\hat{a}_N\in\arg\min_{a\in\mathcal{C}}f_N(a).$ 
If  (i) $f(a)$ is uniquely minimized at $a^*$ on $\mathcal{C}$; (ii) $a^*$ is an element of the interior of $\mathcal{C}$, 
 then \begin{equation}
 \lim_{N\to\infty}\hat{a}_N=a^*.
 \end{equation}
\end{lemma}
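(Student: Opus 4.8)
The statement to prove is Lemma~\ref{lem:convergeastar}, a classical convex-analysis fact: pointwise a.e.\ convergence of convex functions on a convex set forces convergence of (interior) minimizers.

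The plan is to invoke the standard strengthening of pointwise-to-locally-uniform convergence for convex functions, and then argue by a compactness-plus-separation argument on a small closed ball around $a^*$. First I would recall the key classical fact: a sequence of convex functions on a convex open (or: on the interior of a convex) set $\mathcal C$ that converges pointwise on a dense subset converges \emph{locally uniformly} on the interior of $\mathcal C$ (see Rockafellar, \emph{Convex Analysis}, Thm.~10.8, or the reference \cite{convex_lemma} already cited). Since $a^*$ is an interior point, pick $r>0$ with $\bar B(a^*, 2r)\subseteq \operatorname{int}\mathcal C$; on this compact neighborhood $f_N \to f$ uniformly (almost surely, along the relevant realizations), and $f$ is convex, hence continuous there.

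Next I would set up the separation. Since $f$ is uniquely minimized at $a^*$ over $\mathcal C$, in particular over the compact annulus-type set $K_r := \{a\in\mathcal C : r \le \|a - a^*\| \le 2r\}$ (nonempty when $\mathcal C$ is not trivially small, which holds here since $\mathcal C=[0,\infty)$ and $a^*>0$), we have $\varepsilon_r := \min_{a\in K_r} f(a) - f(a^*) > 0$ by compactness and continuity. By uniform convergence on $\bar B(a^*,2r)$, for $N$ large enough $\sup_{\bar B(a^*,2r)}|f_N - f| < \varepsilon_r/3$, which gives, for all $a\in K_r$,
\[
f_N(a) \ge f(a) - \varepsilon_r/3 \ge f(a^*) + \varepsilon_r - \varepsilon_r/3 > f(a^*) + \varepsilon_r/3 + f_N(a^*) - f_N(a^*) > f_N(a^*) + \varepsilon_r/3.
\]
Wait — more cleanly: $f_N(a) \ge f(a^*) + 2\varepsilon_r/3$ while $f_N(a^*) \le f(a^*) + \varepsilon_r/3$, so $f_N(a) > f_N(a^*)$ for every $a \in K_r$. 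Now I invoke convexity of $f_N$: if the minimizer $\hat a_N$ over $\mathcal C$ satisfied $\|\hat a_N - a^*\| > r$, then the segment from $a^*$ to $\hat a_N$ (which lies in $\mathcal C$ by convexity) would cross $K_r$ at some point $\tilde a$, and convexity gives $f_N(\tilde a) \le \max\{f_N(a^*), f_N(\hat a_N)\} = f_N(a^*)$ (using $f_N(\hat a_N)\le f_N(a^*)$ since $\hat a_N$ is a global minimizer), contradicting $f_N(\tilde a) > f_N(a^*)$. Hence $\|\hat a_N - a^*\| \le r$ for all large $N$. Since $r>0$ is arbitrary, $\hat a_N \to a^*$.

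The main obstacle — really the only delicate point — is justifying the upgrade from pointwise a.e.\ convergence to locally uniform convergence on $\operatorname{int}\mathcal C$: pointwise convergence must first be extended from the a.e.\ set to all of $\operatorname{int}\mathcal C$ (using that a pointwise limit of convex functions on a dense set, being ``convex on a dense set'', extends to a convex function and the convergence propagates to every interior point), and then the local equi-Lipschitz bounds for convex functions bounded on a neighborhood give local uniformity; this is exactly the content of the cited \cite{convex_lemma} and I would simply quote it rather than reprove it. A minor side remark: one should note $K_r$ is nonempty for small $r$, which is immediate here since $\mathcal C = [0,\infty)$ and $a^*$ is interior, and that all ``almost surely'' qualifiers ride along harmlessly since we are on a fixed realization where $f_N(a)\to f(a)$ for a.e.\ $a$.
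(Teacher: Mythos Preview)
The paper does not actually prove this lemma; it is stated with a citation to \cite[Theorem 2.7]{convex_lemma} and used as a black box in the proof of Lemma~\ref{converge:a}. Your proposal therefore supplies a proof where the paper simply imports one, and your argument is correct: the upgrade from pointwise-a.e.\ convergence of convex functions to locally uniform convergence on $\operatorname{int}\mathcal{C}$ (Rockafellar, Theorem~10.8) is precisely the right tool, and the annulus-separation combined with the segment argument cleanly forces $\hat a_N$ into $B(a^*,r)$ for large $N$. One cosmetic point: your inequality $f_N(\tilde a)\le\max\{f_N(a^*),f_N(\hat a_N)\}$ does follow from convexity along the segment (the convex combination is bounded by the max of the endpoints), so the contradiction goes through as stated; and the self-correction midway (``Wait --- more cleanly'') should simply be edited out in a final version.
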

\begin{lemma}[\hspace{-0.001cm}{\cite[Theorem 9.1]{RMTbook}}]\label{lem:eigenvalue}
 Under Assumption \ref{ass} (i) -- (ii), the following results hold for the extreme eigenvalues of $\bH^T\bH$: \\
 (i) The largest eigenvalue satisfies 
 \begin{equation}
 \lim_{N\to\infty}\lambda_{\max}(\bH^T\bH)\overset{a.s.}{=}\bar{\lambda}_{\max}:=\left(1+\frac{1}{\sqrt{\delta}}\right)^2.
 \end{equation}
 (ii) The smallest eigenvalue satisfies 
 \begin{equation}
 \lim_{N\to\infty}\lambda_{\min}(\H^T\H)\overset{a.s.}{=}\bar{\lambda}_{\min}:=\left(\max\left\{0,1-\frac{1}{\sqrt{\delta}}\right\}\right)^2.
 \end{equation}
\end{lemma}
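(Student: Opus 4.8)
The statement is the classical Bai--Yin theorem on the extreme singular values of a rectangular Gaussian matrix, so the plan is to reconstruct its standard proof, exploiting that the entries are Gaussian so as to bypass the delicate fourth-moment truncation arguments needed in the general case. First I would normalize: write $\bH=\tfrac{1}{\sqrt K}\mathbf{G}$ with $\mathbf{G}\in\R^{K\times N}$ having i.i.d.\ $\mathcal{N}(0,1)$ entries, so that $\lambda_{\max}(\bH^T\bH)=\tfrac1K\|\mathbf{G}\|_{\mathrm{op}}^2$, while $\lambda_{\min}(\bH^T\bH)$ equals $\tfrac1K\sigma_{\min}(\mathbf{G})^2$ (the square of the smallest of the $\min(K,N)$ singular values of $\mathbf{G}$) when $N\le K$, and equals $0$ when $N>K$ since then the $N\times N$ matrix $\bH^T\bH$ has rank at most $K<N$. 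Because $K/N\to\delta$, the Marchenko--Pastur ratio is $y:=N/K\to 1/\delta$, and the two target limits are exactly the squared edges $b:=(1+\sqrt y)^2$ and $a:=(\max\{0,1-\sqrt y\})^2$ of the MP support.

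\textbf{The ``no outlier'' direction (upper bound on $\lambda_{\max}$, lower bound on $\lambda_{\min}$).} Here I would combine a Gaussian comparison inequality with Lipschitz concentration. Using the variational formulas $\|\mathbf{G}\|_{\mathrm{op}}=\max_{\|u\|=\|v\|=1}u^T\mathbf{G}v$ and, when $K\ge N$, $\sigma_{\min}(\mathbf{G})=\min_{\|v\|=1}\max_{\|u\|=1}u^T\mathbf{G}v$, the Sudakov--Fernique / Gordon min--max comparison against the process $(u,v)\mapsto g^Tu+h^Tv$ with $g\in\R^K$, $h\in\R^N$ standard Gaussian yields $\mathbb{E}\|\mathbf{G}\|_{\mathrm{op}}\le \sqrt K+\sqrt N$ and $\mathbb{E}\,\sigma_{\min}(\mathbf{G})\ge \mathbb{E}\|g\|-\mathbb{E}\|h\|\ge \sqrt K(1-o(1))-\sqrt N$. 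Since $\|\cdot\|_{\mathrm{op}}$ and $\sigma_{\min}$ are $1$-Lipschitz functions of the matrix entries (Weyl's inequality for singular values), Gaussian concentration gives deviations of sub-Gaussian rate $e^{-t^2/2}$ around these means; choosing $t=\epsilon\sqrt{K_N}$ and summing the resulting tail bounds over $N$ (the sum converges because $K_N\gtrsim \tfrac{\delta}{2} N$ eventually), Borel--Cantelli yields $\limsup_N\lambda_{\max}(\bH^T\bH)\le b$ almost surely and, when $\delta>1$ so that $N\le K$ for large $N$, $\liminf_N\lambda_{\min}(\bH^T\bH)\ge (1-\sqrt y)^2$ almost surely (after letting $\epsilon\downarrow0$ along a countable sequence). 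When $\delta\le 1$, eventually $N>K$, so $\lambda_{\min}(\bH^T\bH)=0=(\max\{0,1-\sqrt y\})^2$ identically, which settles part~(ii) in that regime.

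\textbf{The matching ``edge-attained'' direction.} For the reverse inequalities I would invoke the Marchenko--Pastur theorem (itself provable by the Stieltjes-transform self-consistent equation, or by the method of moments): the empirical spectral distribution of $\bH^T\bH$ converges weakly, almost surely, to the MP law with parameter $y$, whose absolutely continuous part has the explicit density $\tfrac{\sqrt{(b-x)(x-a_*)}}{2\pi y x}$ on $(a_*,b)$ with $a_*=(1-\sqrt y)^2$; this density is strictly positive on the open interval, so every neighbourhood of $b$ (and, when $y<1$, of $a_*$) carries positive MP mass. By the portmanteau theorem applied to the open set $(b-\epsilon,b)$, almost surely the fraction of eigenvalues of $\bH^T\bH$ in $(b-\epsilon,b)$ is eventually bounded below by a positive constant, whence $\liminf_N\lambda_{\max}(\bH^T\bH)\ge b-\epsilon$; combining with the upper bound of the previous step and letting $\epsilon\downarrow0$ gives $\lambda_{\max}(\bH^T\bH)\xrightarrow{a.s.}b=(1+1/\sqrt\delta)^2$. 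The same argument at the left edge, valid when $\delta>1$ (so $y<1$), gives $\lambda_{\min}(\bH^T\bH)\xrightarrow{a.s.}a_*=(1-1/\sqrt\delta)^2$.

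\textbf{Main obstacle.} The genuinely hard part is the ``no outlier'' direction: controlling the \emph{extreme} singular values rather than just the bulk. For arbitrary entry distributions this is the core content of Bai--Yin and demands finite fourth moments together with an intricate truncation/combinatorial estimate; the Gaussian hypothesis lets me replace it by the Gordon comparison inequality plus Gaussian Lipschitz concentration, which is where essentially all the effort concentrates. A secondary point needing care is the case split in part~(ii): for $\delta\le1$ the conclusion $\lambda_{\min}=0$ must come from rank deficiency rather than from an MP edge, and the boundary case $\delta=1$ (where $a=0$) must be checked to be consistent with both descriptions. Since the lemma is used here purely as a black box and is quoted verbatim from a standard monograph, one may alternatively simply appeal to the cited reference; the outline above merely records a self-contained route.
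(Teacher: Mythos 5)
The paper does not prove this lemma at all: it is imported verbatim as a black box from the cited random–matrix monograph (the Bai--Yin/Geman theorem on extreme eigenvalues of sample covariance matrices), so there is no in-paper argument to compare against. Your reconstruction is a correct and standard self-contained proof in the Gaussian case: the identification of the limits with the Marchenko--Pastur edges for aspect ratio $N/K\to 1/\delta$ is right, the Gordon/Sudakov--Fernique bounds $\mathbb{E}\|\mathbf{G}\|_{\mathrm{op}}\le\sqrt K+\sqrt N$ and $\mathbb{E}\,\sigma_{\min}(\mathbf{G})\ge\sqrt K(1-o(1))-\sqrt N$ combined with $1$-Lipschitz Gaussian concentration and Borel--Cantelli correctly give the ``no outlier'' direction, and the portmanteau argument against the open sets $(b-\epsilon,b)$ and $(a_*,a_*+\epsilon)$, using strict positivity of the MP density on the interior of its support, correctly forces the edges to be attained. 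The trade-off is the expected one: the citation route is appropriate for an auxiliary tool, while your route is self-contained but leans on the Gaussian hypothesis (via Gordon's inequality and Gaussian isoperimetry) and still quotes the MP theorem itself; since Assumption 1(ii) is Gaussian anyway, nothing is lost here.

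The one concrete slip is the claim that ``when $\delta\le1$, eventually $N>K$.'' This is true for $\delta<1$ but not for $\delta=1$, where $K/N\to1$ is compatible with $K\ge N$ for all $N$; in that sub-case $\bH^T\bH$ can be almost surely full rank, so $\lambda_{\min}$ is not identically zero and the conclusion must instead be that $\lambda_{\min}\to0$. You flag the boundary case yourself, and the fix is already contained in your own left-edge argument: for $y=1$ the MP density $\sqrt{4-x}/(2\pi\sqrt x)$ is positive on every interval $(0,\epsilon)$, so the portmanteau step yields eigenvalues accumulating at $0$ and hence $\limsup_N\lambda_{\min}(\bH^T\bH)\le\epsilon$ for every $\epsilon>0$, while the trivial bound $\lambda_{\min}\ge0$ gives the matching lower bound. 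With that adjustment the case analysis is complete and the proof is sound.
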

\subsubsection{Proof of Lemma \ref{converge:a}}
Let $\mathcal{F}:=\{(\x,a)\mid -a\leq x_i\leq a,~a\geq 0,~\forall \,i\}$. 
Then $f_N(a;\s,\H)$ can be equivalently expressed as 
\begin{equation}\label{def:fN2}f_N(a;\s,\H)=\min_{\x}~\frac{1}{N}\|\s-\H\x\|_2^2+\frac{\rho}{N}\|\x\|_2^2+\mathbb{I}_\mathcal{F}(\x,a),
\end{equation}where $\mathbb{I}_{\mathcal{F}}$ denotes the indicator function of set $\mathcal{F}$. Since $\mathcal{F}$ is a convex set,  $\mathbb{I}_{\mathcal{F}}$ is jointly convex in $(\x,a)$. Therefore,   $f_N(a;\s,\H)$ is convex in $a$ according to Lemma \ref{lem:convex}. 
According to Lemma \ref{lemma:astar}, $f(a)$ is strongly convex on $[0,\infty)$, whose unique minimizer $a^*$ satisfies $a^*>0$.   By Lemma \ref{lem:convergeastar}, we get the first result of Lemma \ref{converge:a}.

  We next prove the second result of Lemma  \ref{converge:a}.  In particular, we prove that  for any $a_1$ and $a_2$, the following inequality holds for sufficiently large $N$: 
\begin{equation}\label{eq:diffxa1xa2}
\frac{1}{\sqrt{N}}\|\x_{a_1}-\x_{a_2}\|\leq C|a_1-a_2|,~a.s.,
\end{equation}
 where $\x_{a_1}$ and $\x_{a_2}$ are the solutions to \eqref{def:xa} with $a=a_1$ and $a=a_2$, and $C$ is a constant independent of $N$. The desired result then follows immediately by replacing $a_1$ and $a_2$ in \eqref{eq:diffxa1xa2} with $\hat{a}_N$ and $a^*$ and using $\lim_{N\to\infty}\hat{a}_N\overset{a.s.}{=}a^*.$  Now we prove \eqref{eq:diffxa1xa2}. According to the first-order optimality condition of problem \eqref{def:xa},  the following holds for any $\omega>0$:
\begin{equation}
\begin{aligned}
{\x}_{a_1}&\hspace{-0.05cm}=\hspace{-0.05cm}\mathcal{P}_{[-{a}_1,{a}_1]^N}((1\hspace{-0.05cm}-\hspace{-0.05cm}\omega\rho)\x_{a_1}\hspace{-0.05cm}-\hspace{-0.05cm}\omega\H^T\H{\x}_{a_1}\hspace{-0.05cm}+\hspace{-0.05cm}\omega\H^T\s),\\
\x_{a_2}&\hspace{-0.05cm}=\hspace{-0.05cm}\mathcal{P}_{[-{a}_2,{a}_2]^N}((1\hspace{-0.05cm}-\hspace{-0.05cm}\omega\rho)\x_{a_2}\hspace{-0.05cm}-\hspace{-0.05cm}\omega\H^T\H{\x}_{a_2}\hspace{-0.05cm}+\hspace{-0.05cm}\omega\H^T\s).
\end{aligned}
\end{equation}
Hence, \begin{equation}
\begin{aligned}
\|{\x}_{a_1}-\x_{a_2}\|
&\leq \big\|\mathcal{P}_{[-{a}_1,{a}_1]^N}((1-\omega\rho)\x_{a_1}-\omega\H^T\H{\x}_{a_1}+\omega\H^T\s)\\
&\qquad-\mathcal{P}_{[-{a}_2,{a}_2]^N}((1-\omega\rho)\x_{a_1}-\omega\H^T\H{\x}_{a_1}+\omega\H^T\s)\big\|\\
&\quad+\big\|\mathcal{P}_{[-{a}_2,{a}_2]^N}((1-\omega\rho)\x_{a_1}-\omega\H^T\H{\x}_{a_1}+\omega\H^T\s)\\
&\qquad~\,-\mathcal{P}_{[-{a}_2,{a}_2]^N}((1-\omega\rho)\x_{a_2}-\omega\H^T\H{\x}_{a_2}+\omega\H^T\s)\big\|\\
&\leq\hspace{-0.1cm} \sqrt{N({a}_1-a_2)^2}+\hspace{-0.05cm}\|(1-\omega\rho)\mathbf{I}-\omega\H^T\H\|\|{\x}_{a_1}\hspace{-0.05cm}-\hspace{-0.05cm}\x_{a_2}\|,
\end{aligned}
\end{equation}
where the last inequality uses the fact that for all $\x\in\R^{N}$, 
\begin{equation}
\|\mathcal{P}_{[-a_1,{a}_1]^N}(\x)-\mathcal{P}_{[-a_2,a_2]^N}(\x)\|\leq \sqrt{N({a}_1-a_2)^2},
\end{equation} and the non-expansiveness  of the projection operator. 
  It follows that  
\begin{equation}\label{operator}
(1-\|(1-\omega\rho)\mathbf{I}-\omega\H^T\H\|)\frac{\|{\x}_{a_1}-\x_{a_2}\|}{\sqrt{N}}\leq  \left|{a}_1-a_2\right|,~\forall~\omega>0.
\end{equation}
According to Lemma \ref{lem:eigenvalue}, 
\begin{equation}\label{eq:coefficient}
\begin{aligned}
&\lim_{N\to\infty}\|(1-\omega\rho)\mathbf{I}-\omega\H^T\H\|\overset{a.s.}{=}\max\left\{\left|1-\omega\rho-\omega\bar{\lambda}_{\min}\right|,\left|\omega\bar{\lambda}_{\max}-(1-\omega\rho)\right|\right\}.
\end{aligned}
\end{equation}
Let 
$\omega=\left(\bar{\lambda}_{\max}+\rho\right)^{-1}.$
Then 
\begin{equation}\lim_{N\to\infty}\|(1-\omega\rho)\mathbf{I}-\omega\H^T\H\|\overset{a.s.}{=}1-\frac{\rho+\bar{\lambda}_{\min}}{\rho+\bar{\lambda}_{\max}}<1,
\end{equation}  which, together with \eqref{operator}, implies that for sufficiently large $N$, \eqref{eq:diffxa1xa2} holds with 
$C=\frac{2(\rho+\bar{\lambda}_{\max})}{\rho+\bar{\lambda}_{\min}}.$ This completes the proof of the second result.

\subsection{Proof and Extension of Theorem \ref{mainresult:ed}}\label{proof:theorem1}
With all the results obtained so far, we are now ready to prove Theorem \ref{mainresult:ed}.
\subsubsection{Proof of Theorem \ref{mainresult:ed}}
Since $\varphi(x)$ is pseudo-Lipschitz, we have
\begin{equation}\label{phixhat}
\begin{aligned}
\left|\frac{1}{N}\sum_{i=1}^N \varphi(\hat{x}_i)-\frac{1}{N}\sum_{i=1}^N\varphi(x^*_i)\right|&\leq\frac{L_{\varphi}}{N}\sum_{i=1}^N\left(1+|\hat{x}_i|+|x_i^*|\right)|\hat{x}_i-x_i^*| \\
&\leq L_\varphi\left(1+\hat{a}_N+{a^*}\right) \frac{1}{N}\sum_{i=1}^N|\hat{x}_i-x_i^*| \\
&\leq L_\varphi\left(1+\hat{a}_N+{a^*}\right)\frac{\|\hat{\x}-\x^*\|}{\sqrt{N}}.
\end{aligned}\end{equation}
By letting $N\to\infty$ and applying Lemma \ref{converge:a} to \eqref{phixhat} and noting Proposition \ref{prob:converge} (ii), we obtain Theorem \ref{mainresult:ed}. 
\subsubsection{From pseudo-Lipschitz to discontinuous function} \label{app:nonsmooth}
In this part, we show that Theorem \ref{mainresult:ed} can encompass a broader class of test function $\varphi(\cdot)$. Specifically, 
assume that $\varphi:\R\rightarrow \R$ is continuous except at a finite of points $\{x_1,x_2,\dots, x_M\}$, where $x_1,x_2,\dots, x_M\notin\{-a^*,a^*\}$.
  We next prove that 
\begin{equation}\label{converge:phi0}
\begin{aligned}
\lim_{N\to\infty}\frac{1}{N}\sum_{i=1}^N\varphi(\hat{x}_i)= \mathbb{E}\left[\varphi(\hat{X})\right],
\end{aligned}
\end{equation}
where $\hat{X}$ is defined in \eqref{def:hatX}.
 Note that for a function with finite discontinuous points, those points are isolated. Hence, there exists $\zeta>0$ such that $x_1,x_2,\dots,x_M\notin [-a^*-\zeta,-a^*+\zeta]\cup[a^*-\zeta,a^*+\zeta]$. Since $\hat{a}_N\xrightarrow{a.s.} a^*$, there exists $N_0>0$ such that 
\begin{equation}
|\hat{x}_i|\leq \hat{a}_N\leq a^*+\zeta/2,~i=1,2,\dots, N,~\forall\, N\geq N_0
\end{equation}
 with probability one.
Let $\varphi_0=\varphi|_{[-a^*-\zeta/2,a^*+\zeta/2]}$. It suffices to prove that 
\begin{equation}\begin{aligned}
\lim_{N\to\infty}\frac{1}{N}\sum_{i=1}^N\varphi_0(\hat{x}_i)= \mathbb{E}\left[\varphi_0(\hat{X})\right].
\end{aligned}
\end{equation}

In the following, we will prove \eqref{converge:phi0}. First, if $\varphi_0(x)$ is continuously differentiable, i.e., $x_1,x_2,\dots, x_M\notin\text{dom}\,{\varphi_0}:= [-a^*-\zeta/2,a^*+\zeta/2]$, then $\varphi_0(x)$ is Lipschitz continuous as $\text{dom}\,\varphi_0$ is compact. In this case, \eqref{converge:phi0} follows immediately from Theorem \ref{mainresult:ed}.

Now assume that  $\varphi_0$ is discontinuous.  Without loss of generality, we assume that there is a single discontinuous point of $\varphi_0$, denoted as $\omega$, and that  $\lim_{x\to\omega^-}\varphi_0(x)<\lim_{x\to\omega^+}\varphi_0(x)$. By assumption, $|\omega|\leq a^*-\zeta$. Consider the following lower and upperbound functions of $\varphi_0$ parameterized by $0<\epsilon<\zeta$, where $\varphi_\omega^{-}:=\lim_{x\to\omega^-}\varphi_0(x)$ and $\varphi_\omega^{+}:=\lim_{x\to\omega^+}\varphi_0(x)$:
\begin{equation}
\underline{\varphi}_{\epsilon}(x)=\left\{
\begin{aligned}
\varphi_{\omega}^{-}+\frac{\varphi_{\omega}^+-\varphi_{\omega}^-}{\epsilon}(x-\omega),~&\omega<x<\omega+\epsilon;\\
\varphi_0(x),~~~~~~~~~~&\text{otherwise},\end{aligned}
\right.
\end{equation}
\begin{equation}
\hspace{-0.05cm}\bar{\varphi}_{\epsilon}(x)=\left\{
\begin{aligned}
\varphi_{\omega}^{-}+\frac{\varphi_{\omega}^+-\varphi_{\omega}^-}{\epsilon}(x-\omega+\epsilon),~&\omega-\epsilon<x<\omega;\\
\varphi_0(x),~~~~~~~~~~~~~&\text{otherwise}.\end{aligned}
\right.
\end{equation}
Clearly,   both $\underline{\varphi}_{\epsilon}$ and $\bar{\varphi}_{\epsilon}$ are continuous. According to our previous discussions, 
\begin{equation}
\lim_{N\to\infty}\frac{1}{N}\sum_{i=1}^N\underline{\varphi}_{\epsilon}(\hat{x}_i)\overset{a.s.}{=}\mathbb{E}\left[\underline{\varphi}_{\epsilon}(\hat{X})\right],
\end{equation}
\begin{equation}
\lim_{N\to\infty}\frac{1}{N}\sum_{i=1}^N\bar{\varphi}_{\epsilon}(\hat{x}_i)\overset{a.s.}{=}\mathbb{E}\left[\bar{\varphi}_{\epsilon}(\hat{X})\right].
\end{equation}
In addition, since  $(\omega,\omega+\epsilon)\subseteq(-a^*,a^*)$ and noting the definition $\hat{X}=\eta_{a^*}(\frac{\tau_*Z}{\gamma_*+1})$, we have
\begin{equation}
\begin{aligned}
0&<\mathbb{E}\left[{\varphi}_{0}(\hat{X})\right]-\mathbb{E}\left[\underline{\varphi}_{\epsilon}(\hat{X})\right]\\
&<\mathbb{E}\left[\sup_{x\in\text{dom}\,\varphi_0}\left|{\varphi}_{0}(x)-\underline{\varphi}_{\epsilon}(x)\right|\mathbf{1}_{\{\frac{\tau_*Z}{\gamma_*+1}\in(\omega,\omega+\epsilon)\}}\right]\\
&<\frac{(\varphi_{\omega}^+-\varphi_{\omega}^-)(\gamma_*+1)}{\tau_*}\epsilon,
\end{aligned}
\end{equation}
and hence 
\begin{equation}
\lim_{\epsilon\to 0}\mathbb{E}\left[\underline{\varphi}_{\epsilon}(\hat{X})\right]=\mathbb{E}\left[{\varphi}_{0}(\hat{X})\right].\end{equation}
Similarly, we can show that 
\begin{equation}
\lim_{\epsilon\to 0}\mathbb{E}\left[\bar{\varphi}_{\epsilon}(\hat{X})\right]=\mathbb{E}\left[{\varphi}_{0}(\hat{X})\right].
\end{equation}
Using 
$\underline{\varphi}_{\epsilon}(x)\leq \varphi_0(x)\leq \bar{\varphi}_{\epsilon}(x)$,
 we obtain 
 \begin{equation}
 \begin{aligned}
\liminf_{N\to\infty}\frac{1}{N}\sum_{i=1}^N\varphi_0(\hat{x}_i)&\geq \lim_{\epsilon\to 0}\lim_{N\to\infty}\frac{1}{N}\sum_{i=1}^N\underline{\varphi}_{\epsilon}(\hat{x}_i)=\mathbb{E}\left[{\varphi}_{0}(\hat{X})\right]
\end{aligned}
 \end{equation}
\begin{equation}
 \begin{aligned}
\limsup_{N\to\infty}\frac{1}{N}\sum_{i=1}^N\varphi_0(\hat{x}_i)&\leq \lim_{\epsilon\to 0}\lim_{N\to\infty}\frac{1}{N}\sum_{i=1}^N\bar{\varphi}_{\epsilon}(\hat{x}_i)=\mathbb{E}\left[{\varphi}_{0}(\hat{X})\right].
\end{aligned}
\end{equation}
Combining the above gives \eqref{converge:phi0} and completes the proof.

\subsection{Proof of Proposition \ref{the1}}\label{proof:1B}
Now we are ready to prove Proposition \ref{the1}.
For any $\epsilon>0$, let $q_{\epsilon}(x)$ be a smoothing of $q(x)$, whose rigorous definition is given in Definition \ref{def:qeps} in Appendix \ref{app:smoothing}.  The following inequality holds:
\begin{equation}
\begin{aligned}
&\left|\frac{1}{K}\sum_{k=1}^K\psi\left( \h_k^Tq(\hat{\x}),s_k\right)-\frac{1}{K}\sum_{k=1}^K\psi\left(\h_k^Tq({\x}^*),s_k\right)\right|\\
\leq\,&\left|\frac{1}{K}\sum_{k=1}^K\psi\left(\h_k^Tq(\hat{\x}),s_k\right)-\frac{1}{K}\sum_{k=1}^K\psi\left(\h_k^Tq_{\epsilon}(\hat{\x}),s_k\right)\right|\\
&+\left|\frac{1}{K}\sum_{k=1}^K\psi\left( \h_k^Tq_{\epsilon}(\hat{\x}),s_k\right)-\frac{1}{K}\sum_{k=1}^K\psi\left(\h_k^Tq_{\epsilon}({\x}^*),s_k\right)\right|\\
&+\left|\frac{1}{K}\sum_{k=1}^K\psi\left( \h_k^Tq({\x}^*),s_k\right)-\frac{1}{K}\sum_{k=1}^K\psi\left(\h_k^Tq_{\epsilon}({\x}^*),s_k\right)\right|\\
:=&T_1+T_2+T_3.
\end{aligned}
\end{equation}
Using the pseudo-Lipschitz continuity of $\psi$, $T_1$  can be bounded as
\begin{equation}\label{boundT1}
\begin{aligned}
T_1\leq \,&\frac{L_\psi}{K}\sum_{k=1}^K\bigg(1+\left\|[\h_k^Tq(\hat{\x}),s_k]\right\|+\big\|[\h_k^T q_\epsilon(\hat{\x}),s_k]\big\|\bigg)|\h_k^T(q(\hat{\x})-q_{\epsilon}(\hat{\x}))|\\
\overset{(a)}{\leq}\,&\frac{L_\psi}{K}\sum_{k=1}^K(C_0+|\h_k^Tq(\hat{\x})|+\hspace{-0.05cm}|\h_k^T q_\epsilon(\hat{\x})|)|\h_k^T(q(\hat{\x})-q_{\epsilon}(\hat{\x}))|\\
\overset{(b)}{\leq}\,&{L_{\psi}}\left(C_0\hspace{-0.05cm}+\hspace{-0.05cm}{\frac{\|\H q(\hat{\x})\|}{\sqrt{K}}}\hspace{-0.05cm}+\hspace{-0.05cm}{\frac{\|\H q_{\epsilon}(\hat{\x})\|}{\sqrt{K}}}\right)\frac{\|\bH\left(q(\hat{\x})-q_{\epsilon}(\hat{\x})\right)\|}{{\sqrt{K}}}\\\
{\leq}\,&{L_{\psi}\|\H\|}\bigg(C_0+\hspace{-0.05cm}\frac{\|\bH\|(\|q(\hat{\x})\|\hspace{-0.05cm}+\hspace{-0.05cm}\|q_{\epsilon}(\hat{\x})\|)}{\sqrt{K}}\bigg)\frac{\left\|q(\hat{\x})-q_{\epsilon}(\hat{\x})\right\|}{\sqrt{K}}\\
\xrightarrow[a.s.]{(c)} &\frac{L_{\psi}\bar{\lambda}_{\max}^{\frac{1}{2}}}{\sqrt{\delta}}\bigg(C_0+\frac{\bar{\lambda}_{\max}^{\frac{1}{2}}}{\sqrt{\delta}}\big(\mathbb{E}[q(\hat{X})^2]^{\frac{1}{2}}+\mathbb{E}[q_\epsilon(\hat{X})^2]^{\frac{1}{2}}\big) \bigg){\mathbb{E}\left[\left(q(\hat{X})-q_\epsilon(\hat{X})\right)^2\right]^{\frac{1}{2}}},
\end{aligned}
\end{equation}
where $C_0=1+2B_s$,  (a) uses $\|\x\|\leq \|\x\|_1$ and $|s_k|\leq B_s$, (b) applies the Cauchy-Schwarz inequality,  and (c) applies Lemma \ref{lem:eigenvalue} and Theorem \ref{mainresult:ed} with $\varphi(x)=q(x)$, $\varphi(x)=q_\epsilon(x)$, and $\varphi(x)=(q(x)-q_\epsilon(x))^2$ (note that Theorem \ref{mainresult:ed} holds  for the discontinuous function $q(x)$ according to Appendix \ref{app:nonsmooth}). 
By further letting $\epsilon\to0$ in \eqref{boundT1} and applying the property of the smoothing function $q_\epsilon$ in Lemma \ref{lem:difference}  (see further ahead), we get 
\begin{equation}
\lim_{\epsilon\to 0}\lim_{N\to 0}T_1\overset{a.s.}{=}0.
\end{equation}
Using a similar argument as \eqref{boundT1}, we can show that 
\begin{equation}
\lim_{\epsilon\to 0}\lim_{N\to 0}T_3\overset{a.s.}{=}0.
\end{equation}
In addition, $T_2$ can be bounded as 
\begin{equation}\label{boundT2}
\begin{aligned}
&T_2\hspace{-0.05cm}\leq\hspace{-0.05cm} {L_{\psi}\|\H\|}\hspace{-0.05cm}\bigg(\hspace{-0.05cm}C_0\hspace{-0.05cm}+\hspace{-0.1cm}\frac{\|\bH\|(\|q_{\epsilon}(\hat{\x})\|\hspace{-0.05cm}+\hspace{-0.05cm}\|q_{\epsilon}({\x}^*)\|)}{\sqrt{K}}\hspace{-0.05cm}\bigg)\hspace{-0.05cm}\frac{\left\|q_\epsilon\hspace{-0.02cm}(\hat{\x})\hspace{-0.05cm}-\hspace{-0.05cm}q_{\epsilon}\hspace{-0.02cm}({\x}^*)\right\|}{\sqrt{K}}.
\end{aligned}
\end{equation}
Note that  $\{\hat{x}_i\}_{1\leq i\leq N}$ and $\{x_i^*\}_{1\leq i\leq N}$ are bounded. Hence, according to Lemma  \ref{lemma:boundderi} (see further ahead), there exists a constant $C$ such that 
\begin{equation}
\begin{aligned}
T_2\leq &\frac{{L_{\psi}C\|\H\|}}{\epsilon}\hspace{-0.05cm}\bigg(\hspace{-0.05cm}C_0+\hspace{-0.1cm}\frac{\|\bH\|(\|q_{\epsilon}(\hat{\x})\|\hspace{-0.05cm}+\hspace{-0.05cm}q_{\epsilon}({\x}^*)\|)}{\sqrt{K}}\hspace{-0.05cm}\bigg) 
\frac{\|\hat{\x}-\x^*\|}{\sqrt{K}}
\xrightarrow{a.s.}\,0,  ~~\text{ as } K\to\infty,
\end{aligned}
\end{equation}
where the convergence is due to Lemma \ref{converge:a} and  Lemma \ref{lem:eigenvalue}. Combining the above, we get  the desired result.
\section{Smoothing of $q(x)$}\label{app:smoothing}
  In this appendix, we introduce the smoothing technique applied to $q(\cdot)$ and present auxiliary results that characterize the properties of the smoothing function. We assume throughout this appendix that $q(\cdot)$ is continuously differentiable except at a finite number of points $x_1,x_2,\dots,x_M\notin\{-a,a\}$.
  
   The construction of the smoothing function is classical, which relies on the standard mollifier given as follows
\begin{equation}
 \zeta(x)=\left\{\begin{aligned}
 ce^{-1/(1-x^2)},~~&\text{if }|x|<1;\\
 0,~~~~~~~~&\text{otherwise},\end{aligned}
 \right.
\end{equation}
 where $c$ is a constant chosen such that $\int_{\R}\zeta(x)dx=1$. Given $\epsilon>0$, we define
\begin{equation}
\zeta_{\epsilon}(x)=\frac{1}{\epsilon}\zeta\left(\frac{x}{\epsilon}\right)
\end{equation}
 as  a scaling of the mollifier. Clearly, $\zeta_\epsilon(x)$ is  supported on $(-\epsilon,\epsilon)$ and is infinitely differentiable. Now we are ready to define the smoothing function $q_{\epsilon}(x)$. 
 \appendixdef
  \begin{definition}\label{def:qeps}
 For $\epsilon>0$, the $\epsilon$-mollification of  $q(x)$ is defined as the convolution $q_{\epsilon}=\zeta_{\epsilon}*q$ on $\R$, i.e.,
  \begin{equation}
  q_{\epsilon}(x)=\int_{\R}\zeta_{\epsilon}(x-y)q(y)dy.
  \end{equation}
  \end{definition}  
  The $\epsilon$-mollification $q_{\epsilon}(x)$ is infinitely differentiable since $\zeta_{\epsilon}(x)$ is infinitely differentiable. In addition, $q_{\epsilon}$ satisfies the following properties.
  \begin{lemma}\label{lemma:boundderi}
 Let $\mathcal{C}\subset\R$ be a compact set.  There exists a constant $C$ (independent of $\epsilon$) such the derivative of $q_\epsilon(x)$ satisfies 
                 \begin{equation}
 |q_\epsilon'(x)|\leq \frac{C}{\epsilon},~~\forall\, x\in\mathcal{C}.
 \end{equation}

  \end{lemma}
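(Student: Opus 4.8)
The plan is to move the derivative off of $q$ and onto the mollifier, and then estimate the resulting convolution directly in $L^1$. Since $\zeta_\epsilon\in C_c^\infty(\R)$ and $q$ is continuously differentiable except at finitely many points (hence locally bounded and locally integrable), the classical mollification property gives that $q_\epsilon=\zeta_\epsilon*q$ is infinitely differentiable and
$$q_\epsilon'(x)=(\zeta_\epsilon'*q)(x)=\int_\R \zeta_\epsilon'(x-y)\,q(y)\,dy.$$
First I would record the scaling $\zeta_\epsilon'(t)=\epsilon^{-2}\zeta'(t/\epsilon)$, which, via the substitution $u=t/\epsilon$, yields
$$\int_\R|\zeta_\epsilon'(t)|\,dt=\frac{1}{\epsilon}\int_\R|\zeta'(u)|\,du=\frac{\|\zeta'\|_{L^1}}{\epsilon}.$$

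Next I would fix $x\in\mathcal{C}$ and $\epsilon\in(0,1]$. Because $\zeta_\epsilon'$ is supported on $(-\epsilon,\epsilon)$, the integral above only involves $y\in(x-\epsilon,x+\epsilon)\subseteq\mathcal{C}+[-1,1]$. Letting $M:=\sup_{y\in\mathcal{C}+[-1,1]}|q(y)|<\infty$ — finite because a function that is piecewise $C^1$ with finitely many jumps (having finite one-sided limits) is bounded on any compact set — one obtains
$$|q_\epsilon'(x)|\le\int_\R|\zeta_\epsilon'(x-y)|\,|q(y)|\,dy\le M\int_\R|\zeta_\epsilon'(x-y)|\,dy=\frac{M\,\|\zeta'\|_{L^1}}{\epsilon}.$$
This is precisely the claimed bound with $C:=M\,\|\zeta'\|_{L^1}$, a constant depending only on $\mathcal{C}$ and $q$ but not on $\epsilon$. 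The restriction $\epsilon\le1$ is harmless, since the lemma is only invoked as $\epsilon\to0$; alternatively, if $q$ is globally bounded (as it is for the one-bit and multi-bit quantizers of interest) one may take $M=\|q\|_{L^\infty(\R)}$ and the estimate holds for every $\epsilon>0$.

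I do not expect any serious obstacle here; the only points needing care are (i) justifying the identity $q_\epsilon'=\zeta_\epsilon'*q$ for the discontinuous $q$ — which is the standard fact that convolution with a smooth, compactly supported kernel is smooth and commutes with differentiation of the kernel, valid since $q$ is locally integrable — and (ii) the uniform boundedness of $q$ on a fixed compact neighborhood of $\mathcal{C}$. It is worth noting that the $\epsilon^{-1}$ rate is sharp: integrating by parts across the discontinuity points of $q$ inside $(x-\epsilon,x+\epsilon)$ gives
$$q_\epsilon'(x)=\sum_j\big(q(x_{i_j}^+)-q(x_{i_j}^-)\big)\,\zeta_\epsilon(x-x_{i_j})+\int_\R\zeta_\epsilon(x-y)\,q'(y)\,dy,$$
where the last term is $O(1)$ (bounded by $\sup|q'|$ since $\|\zeta_\epsilon\|_{L^1}=1$) while the jump terms scale like $\epsilon^{-1}$; thus the bound cannot be improved to $O(1)$ unless $q$ is continuous.
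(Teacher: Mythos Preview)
Your proposal is correct and follows essentially the same approach as the paper: write $q_\epsilon'=\zeta_\epsilon'*q$, use the support of $\zeta_\epsilon'$ to localize, and bound by the product of a scaling factor for the mollifier's derivative and a local sup-norm of $q$. The only cosmetic difference is that the paper uses the pointwise bound $|\zeta_\epsilon'|\le C_{\zeta'}/\epsilon^2$ and integrates over an interval of length $2\epsilon$, whereas you compute $\|\zeta_\epsilon'\|_{L^1}=\|\zeta'\|_{L^1}/\epsilon$ directly; your version is in fact slightly more careful in taking the supremum of $|q|$ over the enlarged set $\mathcal{C}+[-1,1]$ rather than $\mathcal{C}$ itself.
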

     \begin{proof}
     First, it is easy to check that there exists a constant $C_{\zeta'}>0$ such that 
     $|\zeta'(x)|\leq C_{\zeta'}$ for all $x\in\R.$
     It follows that 
                \begin{equation}
|\zeta_\epsilon'(x)|\leq \frac{C_{\zeta'}}{\epsilon^2},~\forall\,x\in\R.
\end{equation}
          In addition, by the  assumption on $q(x)$, it is bounded on $\mathcal{C}$.  Let $C_q$ denote the bound of $|q(x)|$ on $\mathcal{C}$,  we have the following inequality:
                 \begin{equation}
           \begin{aligned}
           |q_{\epsilon}'(x)|&\leq \int_{|y-x|\leq \epsilon}|\zeta_{\epsilon}'(x-y)q(y)|dy\leq \frac{C_{\zeta'}C_{q}}{\epsilon},~~\forall~x\in\mathcal{C}.
           \end{aligned}
           \end{equation}
           Therefore, Lemma \ref{lemma:boundderi} holds with $C:=C_{\zeta'}C_q$.

     \end{proof}
   \begin{lemma}\label{lemma:lip}
   The function   $q_{\epsilon}\circ\eta_a(x;\gamma_a)$ is Lipschitz continuous, where $q_{\epsilon}$ is  given in Definition \ref{def:qeps} and $\eta_a(x;\gamma_a)$ is defined in \eqref{def:etaa}.
   \end{lemma}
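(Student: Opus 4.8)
The plan is to write $q_\epsilon\circ\eta_a(\cdot;\gamma_a)$ as a composition of two maps, show that each is Lipschitz on the domain that actually matters, and then chain the estimates. The key structural observation I would exploit is that the inner map $\eta_a(\cdot;\gamma_a)$ sends every real input into the compact interval $[-a,a]$, so only a \emph{local} (compact-set) Lipschitz property of $q_\epsilon$ is needed, even though $q_\epsilon$ itself need not be globally Lipschitz on all of $\R$.

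First I would establish that $\eta_a(\cdot;\gamma_a)$ is globally Lipschitz on $\R$: by \eqref{def:etaa} it equals $\mathcal{P}_{[-a,a]}\!\left(x/(\gamma_a+1)\right)$, which is the composition of the linear scaling $x\mapsto x/(\gamma_a+1)$ with the metric projection onto a convex set. Since the projection is non-expansive and $\gamma_a>0$, this gives $|\eta_a(x;\gamma_a)-\eta_a(y;\gamma_a)|\le(\gamma_a+1)^{-1}|x-y|$ for all $x,y\in\R$, with $(\gamma_a+1)^{-1}\le 1$; moreover the image of $\eta_a(\cdot;\gamma_a)$ lies in $[-a,a]$.

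Next I would invoke the regularity of $q_\epsilon$. By Definition \ref{def:qeps}, $q_\epsilon=\zeta_\epsilon*q$ is infinitely differentiable; applying Lemma \ref{lemma:boundderi} with $\mathcal{C}=[-a,a]$ yields a constant $C>0$ (independent of $\epsilon$) such that $|q_\epsilon'(x)|\le C/\epsilon$ for all $x\in[-a,a]$, so $q_\epsilon$ is Lipschitz on $[-a,a]$ with constant at most $C/\epsilon$. Chaining the two bounds, for any $x,y\in\R$ one has $\eta_a(x;\gamma_a),\eta_a(y;\gamma_a)\in[-a,a]$, whence
\[
|q_\epsilon\circ\eta_a(x;\gamma_a)-q_\epsilon\circ\eta_a(y;\gamma_a)|\le\frac{C}{\epsilon}\,|\eta_a(x;\gamma_a)-\eta_a(y;\gamma_a)|\le\frac{C}{\epsilon(\gamma_a+1)}|x-y|,
\]
which is the claimed Lipschitz estimate (with a constant depending on $\epsilon$, as expected since $q_\epsilon$ approaches the discontinuous $q$ as $\epsilon\to 0$).

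There is essentially no serious obstacle here; the only subtlety worth flagging — and the point the argument is designed around — is that one must not attempt to prove that $q_\epsilon$ is globally Lipschitz on $\R$, since its derivative can blow up at infinity if $q$ is unbounded. Restricting attention to the range $[-a,a]$ of $\eta_a$, where the $C^1$ regularity of $q_\epsilon$ already delivers a Lipschitz bound, resolves this cleanly.
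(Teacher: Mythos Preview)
Your proof is correct and follows essentially the same approach as the paper: both exploit that $\eta_a(\cdot;\gamma_a)$ is $(\gamma_a+1)^{-1}$-Lipschitz with range contained in $[-a,a]$, invoke Lemma~\ref{lemma:boundderi} to bound $|q_\epsilon'|$ on $[-a,a]$, and chain the two estimates. Your additional remark explaining why the argument must restrict to $[-a,a]$ rather than attempt a global Lipschitz bound on $q_\epsilon$ is a helpful clarification not present in the paper's more terse version.
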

   \begin{proof}
   Given $x_1, x_2$, we have 
        \begin{equation}
\begin{aligned}
   \left|q_{\epsilon}\circ\eta_a(x_1;\gamma_a)-q_{\epsilon}\circ\eta_a(x_2;\gamma_a)\right|
  &\leq \sup_{t\in[-a,a]}|q_{\epsilon}'(t)||\eta_a(x_1;\gamma_a)-\eta_a(x_2;\gamma_a)|\\
   &\leq \frac{\sup_{t\in[-a,a]}|q_{\epsilon}'(t)|}{\gamma_a+1}|x_1-x_2|.
   \end{aligned}
         \end{equation}
By Lemma \ref{lemma:boundderi}, $q_{\epsilon}\circ\eta_a(x;\gamma_a)$ is Lipschitz continuous.
   \end{proof}
    \begin{lemma}\label{lem:difference}
   Given $a>0$, let $X_a$ be a random variable defined in \eqref{def:Xa}. Then we have 
         \begin{equation}
\lim_{\epsilon\to 0}\mathbb{E}\left[(q_{\epsilon}(X_a)-q(X_a))^2\right]=0.
         \end{equation}
        \end{lemma}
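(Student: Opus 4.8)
The plan is to show that $q_\epsilon \to q$ in the $L^2$ sense with respect to the distribution of $X_a$ by combining pointwise convergence of the mollified functions with a uniform integrability / dominated convergence argument. First I would recall the two basic facts about mollifications on which everything rests: since $q$ is continuously differentiable except at the finite set $\{x_1,\dots,x_M\}$, standard mollifier theory gives $q_\epsilon(x)\to q(x)$ as $\epsilon\to 0$ for every continuity point $x$ of $q$, i.e.\ for every $x\notin\{x_1,\dots,x_M\}$; and on any compact set $\mathcal{C}\supset[-a,a]$, the mollifications are uniformly bounded, $\sup_{\epsilon\le 1}\sup_{x\in\mathcal{C}}|q_\epsilon(x)|\le \sup_{x\in\mathcal{C}'}|q(x)|=:C_q<\infty$, where $\mathcal{C}'$ is the $1$-neighborhood of $\mathcal{C}$ (this is immediate from $|q_\epsilon(x)|\le\int\zeta_\epsilon(x-y)|q(y)|\,dy$ and $\int\zeta_\epsilon=1$, plus the local boundedness of $q$). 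Note also $\sup_{x\in[-a,a]}|q(x)|\le C_q$ is finite.

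Next I would localize to $X_a$. By definition $X_a=\eta_a(\tau_a Z;\gamma_a)=\mathcal{P}_{[-a,a]}(\tau_a Z/(\gamma_a+1))$, so $X_a$ takes values in the compact interval $[-a,a]$ almost surely. Therefore the integrand $(q_\epsilon(X_a)-q(X_a))^2$ is, for all $\epsilon\le 1$, bounded above by the constant $(C_q+C_q)^2=4C_q^2$, which is integrable. Moreover, for the pointwise convergence $q_\epsilon(X_a)\to q(X_a)$ to hold almost surely, it suffices that $X_a\notin\{x_1,\dots,x_M\}$ with probability one; this holds because $X_a$ has an atom only at $\pm a$ (coming from the projection), and by Assumption~\ref{ass}(iv) together with the fact that here the relevant threshold is $a=a^*$ we have $x_1,\dots,x_M\notin\{-a,a\}$, while on the complement of $\{\pm a\}$ the law of $X_a$ is absolutely continuous (it is the push-forward of a Gaussian under the affine map $z\mapsto z/(\gamma_a+1)$ restricted to $(-a,a)$), so the finite set $\{x_1,\dots,x_M\}$ has probability zero. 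Hence $(q_\epsilon(X_a)-q(X_a))^2\to 0$ almost surely.

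With almost sure convergence and a uniform integrable dominating constant in hand, the dominated convergence theorem yields
\begin{equation}
\lim_{\epsilon\to 0}\mathbb{E}\left[(q_\epsilon(X_a)-q(X_a))^2\right]=\mathbb{E}\left[\lim_{\epsilon\to 0}(q_\epsilon(X_a)-q(X_a))^2\right]=0,
\end{equation}
which is the claim.

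I expect the only genuinely delicate point to be the measure-zero argument for the discontinuity set: one must verify carefully that the law of $X_a$, apart from its two atoms at $\pm a$, assigns zero mass to any finite set, and that the hypothesis $x_1,\dots,x_M\notin\{-a^*,a^*\}$ in Assumption~\ref{ass}(iv) is exactly what rules out the atoms being problematic. Everything else—uniform boundedness of $q_\epsilon$ on compacts, pointwise convergence at continuity points, and the invocation of dominated convergence—is routine mollifier theory and I would state it with brief justification rather than full detail.
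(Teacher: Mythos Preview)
Your proof is correct, and it takes a genuinely different route from the paper's argument. The paper proceeds by an explicit decomposition: it writes $\mathbb{E}[(q_\epsilon(X_a)-q(X_a))^2]$ as an integral over $[-a,a]$, splits the domain into $[-a,a]\setminus\mathcal{D}_{2\epsilon}$ and $[-a,a]\cap\mathcal{D}_{2\epsilon}$ (where $\mathcal{D}_\epsilon=\cup_i[x_i-\epsilon,x_i+\epsilon]$), and bounds each piece by hand---the first via the mean value theorem and the uniform derivative bound from Assumption~\ref{ass}(iv), yielding $O(\epsilon^2)$, the second via local boundedness of $q$ and the observation that the density of $X_a$ on $(-a,a)$ assigns mass $O(\epsilon)$ to $\mathcal{D}_{2\epsilon}$. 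Your dominated convergence argument is cleaner and avoids this casework entirely: uniform boundedness of $q_\epsilon$ on $[-a,a]$ (for $\epsilon\le1$) gives the dominating constant, pointwise convergence at continuity points plus the measure-zero observation for $\{x_1,\dots,x_M\}$ under the law of $X_a$ gives almost-sure convergence, and DCT finishes. The trade-off is that the paper's proof is quantitative (it extracts an explicit $O(\epsilon)$ rate), whereas yours is purely qualitative; for the purposes of the lemma as stated, your approach is sufficient and arguably more transparent. One minor point: for the uniform bound $\sup_{\epsilon\le1}\sup_{x\in[-a,a]}|q_\epsilon(x)|<\infty$ you need $q$ locally bounded on $[-a-1,a+1]$, which the paper also implicitly uses; this follows from the bounded-derivative part of Assumption~\ref{ass}(iv) on $[-a^*,a^*]$ only inside that interval, so strictly speaking both proofs rely on an unstated local boundedness of $q$ slightly beyond $[-a^*,a^*]$ (or one should take $\epsilon$ small enough that $[-a-\epsilon,a+\epsilon]\subset[-a^*,a^*]$, which is not possible since $a=a^*$). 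This is a shared gap in the hypotheses rather than a defect in your argument.
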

        \begin{proof}
         The proof follows a similar idea as \cite[Lemma 4]{Universality}. Denote the discontinuous points of $q$ by  $\{x_1,x_2,\dots, x_M\}$, and let $\mathcal{D}_{\epsilon}:=\cup_{i=1}^M[x_i-\epsilon,x_i+\epsilon]$. Using the definition of $q_\epsilon$, we have 
      \begin{equation}
      \begin{aligned}
      &\mathbb{E}\left[(q_{\epsilon}(X_a)-q(X_a))^2\right]
      =\int_{x\in[-a,a]}\hspace{-0.1cm}\bigg(\int_{|y-x|\leq \epsilon}(q(x)-q(y))\zeta_{\epsilon}(x-y)dy\bigg)^2p_{X_a}(x)dx,
      \end{aligned}
       \end{equation}
      where we have used the fact that $\int_{|y-x|\leq \epsilon}\zeta_{\epsilon}(x-y)dy=1$ to express $q(x)$ as $q(x)=\int_{|y-x|\leq \epsilon}q(x)\zeta_\epsilon(x-y)dy$, and $p_{X_a}(x)$ denotes the PDF of the random variable $X_a$. 
        We next analyze $\mathbb{E}\left[(q_{\epsilon}(X_a)-q(X_a))^2\right]$ by splitting the above integral into two parts as follows:
      \begin{equation}
 \begin{aligned}
 \mathbb{E}\left[(q_{\epsilon}(X_a)-q(X_a))^2\right]
 =&\hspace{-0.15cm}\int_{x\in[-a,a]\backslash \mathcal{D}_{2\epsilon}}\hspace{-0.15cm}\bigg(\int_{|y-x|\leq \epsilon}\hspace{-0.1cm}(q(x)\hspace{-0.05cm}-\hspace{-0.05cm}q(y))\zeta_{\epsilon}(x\hspace{-0.05cm}-\hspace{-0.05cm}y)dy\bigg)^{\hspace{-0.1cm}2}p_{X_a}(x)dx\\
 &\hspace{-0.25cm}+\hspace{-0.15cm}\int_{x\in[-a,a]\cap \mathcal{D}_{2\epsilon}}\hspace{-0.2cm}\bigg(\int_{|y-x|\leq \epsilon}\hspace{-0.1cm}(q(x)\hspace{-0.05cm}-\hspace{-0.05cm}q(y))\zeta_{\epsilon}(x\hspace{-0.05cm}-\hspace{-0.05cm}y)dy\bigg)^{\hspace{-0.1cm}2}\hspace{-0.03cm}p_{X_a}(x)dx\\
:=&\,T_1+T_2.
 \end{aligned}
        \end{equation}
 Since the function $q(\cdot)$ is continuously differentiable on $[-a,a]\backslash\{x_1,x_2,\dots, x_M\}$, the first term can be bounded as
\begin{equation}
 \begin{aligned}
 T_1\hspace{-0.1cm}&\overset{(a)}{=} \hspace{-0.2cm}\int_{\hspace{-0.02cm}x\in[-a,a]\backslash \mathcal{D}_{2\epsilon}}\hspace{-0.2cm}\bigg(\hspace{-0.1cm}\int_{|y-x|\leq \epsilon}\hspace{-0.14cm}q'\hspace{-0.05cm}(t_{xy})(x\hspace{-0.05cm}-\hspace{-0.05cm}y)\zeta_{\epsilon}(x\hspace{-0.05cm}-\hspace{-0.05cm}y)dy\hspace{-0.05cm}\bigg)^{\hspace{-0.1cm}2}\hspace{-0.05cm}p_{X_a}\hspace{-0.05cm}(x)dx\\
 &\overset{(b)}{\leq} \left(\sup_{t\in[-a,a]\backslash\mathcal{D}_{\epsilon}} |q'(t)|\right)^2\epsilon^2 \int_{x\in[-a,a]\backslash \mathcal{D}_{2\epsilon}}p_{X_a}(x)dx\\
 &\leq  \left(\sup_{t\in[-a,a]\backslash\mathcal{D}_{\epsilon}} |q'(t)|\right)^2\epsilon^2, \end{aligned}
\end{equation}
 where in (a), $t_{xy}=(1-\mu)x+\mu y$ for some $\mu\in[0,1]$, and (b) follows from the fact that $t_{xy}\notin \mathcal{D}_{\epsilon}$ as $x\notin \mathcal{D}_{2\epsilon}$ and $|y-x|\leq \epsilon$. By 
 Assumption \ref{ass} (iv), we obtain  $\lim_{\epsilon\to0}T_1= 0$. 
  
 Next we analyze $T_2$. Since $x_1,x_2,\dots,x_M\notin\{-a,a\}$, we have $\pm a\notin \mathcal{D}_{2\epsilon}$ for sufficiently small $\epsilon$.   Hence, the second term $T_2$ can be bounded as follows
\begin{equation}
 \begin{aligned}
 T_2&\leq 4\left(\sup_{t\in[-a-\epsilon,a+\epsilon]}|q(t)|\right)^2\int_{x\in[-a,a]\cap \mathcal{D}_{2\epsilon}}p_{X_a}(x)dx\\
 &\leq4\left(\sup_{t\in[-a-\epsilon,a+\epsilon]}|q(t)|\right)^2\int_{\frac{\tau_a Z}{\gamma_a+1}\in(-a,a)\cap\mathcal{D}_{2\epsilon}}p_Z(z)dz\\
 &\leq 4\left(\sup_{t\in[-a-\epsilon,a+\epsilon]}|q(t)|\right)^2\frac{4L(\gamma_a+1)}{\tau_a}\epsilon,
 \end{aligned}
 \end{equation}
 where the second inequality holds  since $X_a\in[-a,a]\cap\mathcal{D}_{2\epsilon}$ implies  that $X_a=\frac{\tau_aZ}{\gamma_a+1}\in(-a,a)\cap\mathcal{D}_{2\epsilon}$ as  $a\not\in\mathcal{D}_{2\epsilon}$. By the assumption on $q(\cdot)$, it is bounded on any compact set, and thus $\lim_{\epsilon\to 0}T_2=0$.
    \end{proof}

\section{Proof of Proposition \ref{pro:3} and Proposition \ref{pro:convergexa}}\label{app:AMP2}
\subsection{Proof of Proposition \ref{pro:3}}\label{app:pro3}
To prove Proposition~\ref{pro:3}, we begin with an auxiliary lemma that will be used in the proof.
{
\begin{lemma}\label{tilde_property}
Let $(\tilde{\x}_{t+1},\tilde{\z}_{t+1})$ be the outcome of the post-processing steps applied to $(\x_t,\z_t)$ as defined in \eqref{AMP2}, and let  $\tilde{\bb}_{t+1}=\s-\tilde{\bz}_{t+1}$ and $\bb_t=\s-\bz_t$. The following results hold under Assumption \ref{ass}:
\begin{itemize}
\item[(i)]
$\displaystyle\lim_{K\to\infty}\langle \bb_t,\tilde{\bb}_{t+1}\rangle=\frac{1}{\delta}\lim_{N\to\infty}\langle \x_t,\tilde{\x}_{t+1}\rangle$.\item[(ii)] For any pseudo-Lipschitz function $\varphi:\R^{3}\rightarrow\R$,
\begin{equation}
\hspace{-0.05cm}\lim_{K\to\infty}\hspace{-0.05cm}\frac{1}{K}\hspace{-0.1cm}\sum_{i=1}^K\varphi(b_{t,k},\tilde{b}_{t+1,k},s_k)\hspace{-0.08cm}=\hspace{-0.05cm}\mathbb{E}[\varphi(\sigma_tZ_t,\tilde{\sigma}_{t+1}\tilde{Z}_{t+1},S)],
\end{equation}
where $(Z_{t},\tilde{Z}_{t+1})$ are jointly Gaussian distributed and are independent of $S\sim\text{\normalfont Unif}(\mathcal{S})$ with $Z_{t}\sim\mathcal{N}(0,1)$ and $\tilde{Z}_{t+1}\sim\mathcal{N}(0,1)$, $\sigma_t$ is defined in Proposition \ref{pro:4}, and $\tilde{\sigma}_{t+1}^2=\frac{1}{\delta}\mathbb{E}\left[(q_\epsilon\circ\eta_a)^2(\tau_tZ;\gamma_a)\right]$. \end{itemize}
\end{lemma}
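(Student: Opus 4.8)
The plan is to recognize the post-processed pair $(\tilde{\x}_{t+1},\tilde{\bz}_{t+1})$ from \eqref{AMP2} as the outcome of one extra iteration of a \emph{time-varying} AMP recursion all of whose denoisers are Lipschitz, and then to read off both claims directly from the general AMP state-evolution result recorded in Proposition~\ref{AMP_property}. Concretely, using the update rule of $\x_{t+1}$ in \eqref{AMP_copy} one has $\tilde{\x}_{t+1}=q_\epsilon\circ\eta_a(\x_t+\bH^T\bz_t;\gamma_a)$, so the finite sequences $(\x_0,\dots,\x_t,\tilde{\x}_{t+1})$ and $(\bz_0,\dots,\bz_t,\tilde{\bz}_{t+1})$ are exactly the iterates of \eqref{AMP_copy} run with the schedule of denoisers in \eqref{etat} but with $q_\epsilon$ in place of $q$; i.e.\ $\eta_i(\cdot)=\eta_a(\cdot;\gamma_a)$ for $i<t$ and $\eta_t(\cdot)=q_\epsilon\circ\eta_a(\cdot;\gamma_a)$. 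Since $\eta_a(\cdot;\gamma_a)$ is Lipschitz with constant $(\gamma_a+1)^{-1}$ and $q_\epsilon\circ\eta_a(\cdot;\gamma_a)$ is Lipschitz by Lemma~\ref{lemma:lip}, this is a bona fide AMP recursion with Lipschitz denoisers, so the general theory applies.

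Next I would translate this recursion into the canonical form \eqref{general_AMP} via the dictionary \eqref{AMP_relation}: take $\x_0^*=\mathbf{0}$, $\bw=\s$, $g_i(s,w)=s-w$, $f_i(s,x_0^*)=\eta_i(x_0^*-s)-x_0^*$, and identify $\bq_i=\x_i$ (so $\bq_{t+1}=\tilde{\x}_{t+1}$), $\bb_i=\s-\bz_i$ (so $\bb_{t+1}=\s-\tilde{\bz}_{t+1}=\tilde{\bb}_{t+1}$), $\bm_i=-\bz_i$, and $\br_{i+1}=-(\bH^T\bz_i+\x_i)$. The hypotheses of Proposition~\ref{AMP_property} are met: $\mathbb{E}[(X_0^*)^2]=0<\infty$, $\mathbb{E}[W^2]=\mathbb{E}[S^2]=1<\infty$, $\bq_0=\mathbf{0}$ is independent of $\bH$, and $\sigma_0^2=\lim_{N\to\infty}\frac{1}{\delta N}\|\bq_0\|^2=0$ exists almost surely. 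Using that $Z$ is symmetric and independent of $S$, the state-evolution recursion \eqref{general:SE} reduces to $\tau_i^2=\mathbb{E}[(\sigma_iZ-S)^2]=\sigma_i^2+1$ together with $\sigma_i^2=\frac{1}{\delta}\mathbb{E}[\eta_a^2(\tau_{i-1}Z;\gamma_a)]$ for $i\le t$, which matches \eqref{state:AMP} and the $\sigma_i$ of Proposition~\ref{pro:4}, and with $\sigma_{t+1}^2=\frac{1}{\delta}\mathbb{E}[(q_\epsilon\circ\eta_a)^2(\tau_tZ;\gamma_a)]=\tilde{\sigma}_{t+1}^2$ for the last step.

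With this embedding in place, part~(i) is just Proposition~\ref{AMP_property}(ii) applied (with the running index there set to $t+1$) to the index pair $(s,n)=(t,t+1)$, which gives $\lim_{K\to\infty}\langle\bb_t,\bb_{t+1}\rangle\overset{a.s.}{=}\frac{1}{\delta}\lim_{N\to\infty}\langle\bq_t,\bq_{t+1}\rangle$, i.e.\ $\lim_{K\to\infty}\langle\bb_t,\tilde{\bb}_{t+1}\rangle\overset{a.s.}{=}\frac{1}{\delta}\lim_{N\to\infty}\langle\x_t,\tilde{\x}_{t+1}\rangle$. For part~(ii), I would apply Proposition~\ref{AMP_property}(i) with a test function that depends only on the two coordinates indexed $t$ and $t+1$ and on the disorder $w$: given pseudo-Lipschitz $\varphi:\R^3\to\R$, set $\varphi_b(b_0,\dots,b_{t+1},w)=\varphi(b_t,b_{t+1},w)$, which is again pseudo-Lipschitz, to obtain $\lim_{K\to\infty}\frac{1}{K}\sum_k\varphi(b_{t,k},\tilde{b}_{t+1,k},s_k)\overset{a.s.}{=}\mathbb{E}[\varphi(\sigma_t\hat{Z}_t,\tilde{\sigma}_{t+1}\hat{Z}_{t+1},S)]$, where $(\hat{Z}_t,\hat{Z}_{t+1})$ is the two-dimensional marginal of the zero-mean Gaussian vector furnished by Proposition~\ref{AMP_property}(i); it is jointly Gaussian with unit-variance marginals and independent of $W=S$, so renaming it $(Z_t,\tilde{Z}_{t+1})$ yields the claim. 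The only genuine obstacle here is the first step — arguing cleanly that the post-processing in \eqref{AMP2} falls within the scope of the general AMP theory, which hinges precisely on the Lipschitz continuity of the last-step denoiser $q_\epsilon\circ\eta_a(\cdot;\gamma_a)$ (the very reason the smoothed $q_\epsilon$ was introduced, cf.\ Lemma~\ref{lemma:lip}) and on the fact that appending a single iteration with a different Lipschitz denoiser is covered by Proposition~\ref{AMP_property}. The rest is routine bookkeeping: matching \eqref{general:SE} to the stated $\sigma_t$ and $\tilde{\sigma}_{t+1}$, which uses only Gaussian symmetry, \eqref{state:AMP}, and Lemma~\ref{R1:1.1}.
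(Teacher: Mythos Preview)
Your proposal is correct and follows exactly the paper's approach: the paper's proof is a two-sentence remark that $(\tilde{\x}_{t+1},\tilde{\bz}_{t+1})$ is obtained by performing one additional AMP iteration with denoiser $q_\epsilon\circ\eta_a(\cdot;\gamma_a)$ and that the claims then follow immediately from Proposition~\ref{AMP_property}. You have simply unpacked this remark in full detail --- verifying the Lipschitz hypothesis via Lemma~\ref{lemma:lip}, matching the dictionary \eqref{AMP_relation}, and identifying the state-evolution variances --- which is exactly what the paper leaves implicit. (The closing reference to Lemma~\ref{R1:1.1} is not actually needed here, since the statement concerns fixed finite $t$ rather than the $t\to\infty$ limit.)
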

\begin{proof}
Note that $(\tilde{\x}_{t+1},\tilde{\bz}_{t+1})$  are obtained by performing another AMP iteration after obtaining $(\x_t,\bz_t)$ by \eqref{AMP_copy}, with $\eta_t(x)=q_\epsilon\circ\eta_a(x,\gamma_a)$. Hence, the desired results in Lemma \ref{tilde_property} follow immediately from Proposition \ref{AMP_property}.
\end{proof}}

Now we are ready to prove Proposition \ref{pro:3}. First, 
 \begin{equation}
 \begin{aligned}
 \lim_{N\to\infty}\tilde{\alpha}_t&=\frac{1}{\delta}\lim_{N\to\infty}\left<(q_\epsilon\circ\eta_a)'({\x}_{t}+\H^T{\z}_{t};\gamma_a)\right>\\
 &\overset{(a)}{=}\frac{1}{\delta}\lim_{N\to\infty}\left<(q_\epsilon\circ\eta_a)'(-{\br}_{t+1};\gamma_a)\right>\\
 &\overset{(b)}{\underset{a.s.}{=}}\frac{1}{\delta}\mathbb{E}\left[(q_\epsilon\circ\eta_a)'(\tau_tZ;\gamma_a)\right],
 \end{aligned}
 \end{equation}
 where (a) uses the definition in \eqref{AMP_relation}, and (b) is obtained by  applying Proposition \ref{pro:4} (i). Since $(q_\epsilon\circ\eta_{a})'(x;\gamma_a)=q_{\epsilon}'(\eta_a(x;\gamma_a))\eta_a'(x;\gamma_a)$ is bounded for given $\epsilon>0$ and $\lim_{t\to\infty}\tau_{t}^2\overset{a.s.}=\tau_a^2$,  we can apply the dominated  convergence theorem to obtain 
 \begin{equation}\label{tildealpha_converge}
 \lim_{t\to\infty}\lim_{N\to\infty}\tilde{\alpha}_t\overset{a.s.}{=}\frac{1}{\delta}\lim_{t\to\infty}\mathbb{E}\left[(q_\epsilon\circ\eta_a)'(\tau_tZ;\gamma_a)\right]=\tilde{\alpha}_*.
 \end{equation}
Note that 
\begin{equation}\label{pro3:1}
\begin{aligned}
&\bigg|\frac{1}{K}\sum_{k=1}^K\psi\left(\tilde{\alpha}_{t}s_k+\tilde{\alpha}_t{b}_{t,k}-\tilde{b}_{t+1,k},s_k\right)-\frac{1}{K}\sum_{k=1}^K\psi\left(\tilde{\alpha}_*s_k+\tilde{\alpha}_* {b}_{t,k}-\tilde{b}_{t+1,k},s_k\right)\bigg|\\
\overset{(a)}{\leq}\,&\frac{L_\psi}{K}\sum_{k=1}^K\bigg(1+\|[\tilde{\alpha}_{t}s_k+\tilde{\alpha}_t{b}_{t,k}-\tilde{b}_{t+1,k},s_k]\|+\|[\tilde{\alpha}_*s_k+\tilde{\alpha}_*{b}_{t,k}-\tilde{b}_{t+1,k},s_k]\|\bigg)|s_k+{b}_{t,k}||\tilde{\alpha}_t-\tilde{\alpha}_*|\\
\overset{(b)}{\leq}\,&\frac{L_\psi}{K}\sum_{k=1}^K\bigg(1+2B_s+(|\tilde{\alpha}_t|+|\tilde{\alpha}_*|)(B_s+|{b}_{t,k}|)+2|\tilde{b}_{t+1,k}|\bigg)(B_s+|{b}_{t,k}|)|\tilde{\alpha}_t-\tilde{\alpha}_*|\\
\overset{(c)}{\leq} \,&{L_\psi}|\tilde{\alpha}_t-\tilde{\alpha}_*|\bigg(1+2B_s+(|\tilde{\alpha}_t|+|\tilde{\alpha}_*|)\big(B_s+\frac{\|{\bb}_t\|}{\sqrt{K}}\big)+\frac{2\|\tilde{\bb}_{t+1}\|}{\sqrt{K}}\bigg)\bigg(B_s+\frac{\|{\bb}_{t}\|}{\sqrt{K}}\bigg)\\
\overset{(d)}{\underset{a.s.}{\longrightarrow}}\,&0~~(\text{as }t\to\infty, K\to\infty),
\end{aligned}
\end{equation}
where (a) follows from the pseudo-Lipschitz continuity of $\psi(\cdot,\cdot)$,  (b) uses the fact that $|s_k|\leq B_s$, and $\|\x\|\leq \|\x\|_1$, (c) applies the Cauchy-Schwarz inequality, and (d) is due to \eqref{tildealpha_converge} and the boundedness of $\|\bb_t\|/\sqrt{K}$ and $\|\tilde{\bb}_{t+1}\|/\sqrt{K}$, which we show next. Specifically, by  Lemma \ref{tilde_property}, we have
\begin{equation}\label{bt}
\begin{aligned}
\lim_{K\to\infty}\frac{\|{\bb}_{t}\|^2}{K}&\overset{a.s.}{=}{\sigma}_{t}^2=\frac{1}{\delta}\mathbb{E}\left[\left(\eta_a(\tau_{t-1}Z;\gamma_a)\right)^2\right],\\
\lim_{K\to\infty}\frac{\|\tilde{\bb}_{t+1}\|^2}{K}&\overset{a.s.}{=}\tilde{\sigma}_{t+1}^2=\frac{1}{\delta}\mathbb{E}\left[\left(q_\epsilon\circ\eta_a({\tau}_{t}Z;\gamma_a)\right)^2\right].
\end{aligned}
\end{equation}
Since $\eta_a(\cdot;\gamma_a)$ and $q_{\epsilon}\circ\eta_a(\cdot;\gamma_a)$ are both bounded, we can apply the dominated convergence theorem to obtain
   \begin{equation}\label{converge:sigma}
   \begin{aligned}
   &\lim_{t\to\infty}{\sigma}_{t}^2=\frac{1}{\delta}\mathbb{E}\left[(\eta_a(\tau_a Z;\gamma_a))^2\right]=\frac{1}{\delta}\mathbb{E}\left[X_a^2\right],\\
   &\lim_{t\to\infty}\tilde{\sigma}_{t+1}^2=\frac{1}{\delta}\mathbb{E}\left[(q_{\epsilon}\circ\eta_a(\tau_a Z;\gamma_a))^2\right]=\frac{1}{\delta}\mathbb{E}\left[(q_{\epsilon}(X_a))^2\right],
   \end{aligned}
\end{equation}
where  we have used the fact that $\lim_{t\to\infty}\tau_t^2=\tau_a^2$ and the definition of $X_a$ in \eqref{def:Xa}. The boundedness of ${\|{\bb}_{t}\|}/{\sqrt{K}}$ and $\|\tilde{\bb}_{t+1}\|/\sqrt{K}$ then follows from \eqref{bt} and \eqref{converge:sigma}.

In the following,  we will  focus on  
\begin{equation}\label{def:phi}
\begin{aligned}
\lim_{t\to\infty}\lim_{K\to\infty}\frac{1}{K}\sum_{k=1}^K\psi\left(\tilde{\alpha}_* s_k+\tilde{\alpha}_*\,{b}_{t,k}-\tilde{b}_{t+1,k},s_k\right).
\end{aligned}\end{equation}
Applying Lemma  \ref{tilde_property} (ii) with $\varphi(b,\tilde{b},s)=\psi(\tilde{\alpha}_*s+\tilde{\alpha}_*b-\tilde{b},s)$, we have 
\begin{equation}\label{pro3:2}
\begin{aligned}
\lim_{K\to\infty}\frac{1}{K}\sum_{k=1}^K\psi\left(\tilde{\alpha}_* s_k+\tilde{\alpha}_*\,{b}_{t,k}-\tilde{b}_{t+1,k},s_k\right)
&\overset{a.s.}{=}\, \mathbb{E}\left[\psi(\tilde{\alpha}_*S+\tilde{\alpha}_*{\sigma}_{t}Z_{t}-\tilde{\sigma}_{t+1}\tilde{Z}_{t+1},S)\right].
\end{aligned}
\end{equation}
   Next, we characterize the covariance between $Z_{t}$ and $\tilde{Z}_{t+1}$. On the one hand, 
 \begin{equation}\label{bbT1}
 \begin{aligned}
\lim_{t\to\infty}\lim_{K\to\infty}\left<\tilde{\bb}_{t+1},{\bb}_{t}\right>
&\overset{(a)}{\underset{a.s.}{=}}\lim_{t\to\infty}\tilde{\sigma}_{t+1}{\sigma}_t\mathbb{E}\left[\tilde{Z}_{t+1}Z_t\right]\\
 &\overset{(b)}{=}\frac{1}{\delta}\mathbb{E}\left[X_a^2\right]^{\frac{1}{2}}\mathbb{E}\left[(q_{\epsilon}(X_a))^2\right]^{\frac{1}{2}}\lim_{t\to\infty}\mathbb{E}[\tilde{Z}_{t+1}Z_t],
 \end{aligned}
 \end{equation}
 where (a) applies Lemma  \ref{tilde_property} (ii), and (b) is due to \eqref{converge:sigma}.  On the other hand, 
 \begin{equation}\label{bbT}
 \begin{aligned}
\lim_{t\to\infty}\lim_{K\to\infty}\left<\tilde{\bb}_{t+1},{\bb}_{t}\right>
&\overset{(a)}{\underset{a.s.}{=}}\frac{1}{\delta}\lim_{t\to\infty}\lim_{N\to\infty}\left<\tilde{\x}_{t+1},{\x}_{t}\right>\\
 &\overset{(b)}{=}\frac{1}{\delta}\lim_{t\to\infty}\lim_{N\to\infty}\left<q_\epsilon\circ\eta_a(-{\br}_{t+1};\gamma_a),\eta_{a}(-{\br}_{t};\gamma_a)\right>\\
 &   =\frac{1}{\delta}\lim_{t\to\infty}\lim_{N\to\infty}\left<q_\epsilon\circ\eta_a(-{\br}_{t+1};\gamma_a),\eta_{a}(-{\br}_{t+1};\gamma_a)\right>\\
    &~~~+\frac{1}{\delta}\lim_{t\to\infty}\lim_{N\to\infty}\left<q_\epsilon\circ\eta_a(-{\br}_{t+1};\gamma_a),
\eta_{a}(-{\br}_{t};\gamma_a)-\eta_{a}(-{\br}_{t+1};\gamma_a)\right>,
    \end{aligned}
    \end{equation}
     where (a) applies Lemma \ref{tilde_property} (i),  and (b) uses the formula of $\tilde{\x}_{t+1}$ in \eqref{AMP2} and the definition of $\br_t$ in \eqref{AMP_relation}. For the first term in \eqref{bbT}, we have
\begin{equation}
    \begin{aligned}
    &\frac{1}{\delta}\lim_{t\to\infty}\lim_{N\to\infty}\left<q_\epsilon\circ\eta_a(-{\br}_{t+1};\gamma_a),\eta_{a}(-{\br}_{t+1};\gamma_a)\right>\\
    &\overset{(a)}{=}\frac{1}{\delta}\lim_{t\to\infty}\mathbb{E}\left[\eta_{a}(\tau_tZ;\gamma_a)\,q_\epsilon\circ\eta_a(\tau_tZ;\gamma_a)\right]\\
   & \overset{(b)}{=}\frac{1}{\delta}\mathbb{E}\left[X_a\,q_{\epsilon}(X_a)\right],
    \end{aligned}
    \end{equation}
    where (a) applies Proposition \ref{pro:4} (i), and (b) follows from  the dominated convergence theorem. 
Regarding the second term in \eqref{bbT}, the following result holds:
\begin{equation}
\begin{aligned}
\left|\left<q_\epsilon\circ\eta_a(-{\br}_{t+1};\gamma_a),\eta_{a}(-{\br}_{t};\gamma_a)-\eta_{a}(-{\br}_{t+1};\gamma_a)\right>\right|
&\leq \,\frac{1}{N}\|q_\epsilon\circ\eta_a(-{\br}_{t+1};\gamma_a)\|\|\eta_{a}(-{\br}_{t};\gamma_a)-\eta_{a}(-{\br}_{t+1};\gamma_a)\|\\
&\overset{(a)}{\leq}\, \frac{\sup_{x\in[-a,a]}|q_\epsilon(x)|}{\gamma_a+1}\frac{\|{\br}_t-{\br}_{t+1}\|}{\sqrt{N}}\\
&\overset{(b)}{\leq}\, \frac{\sup_{x\in[-a,a]}|q_\epsilon(x)|}{\gamma_a+1}\frac{\|\bH\|\|\bz_{t}-\bz_{t-1}\|+\|\x_{t}-\x_{t-1}\|}{\sqrt{N}}\\
&\overset{(c)}{\underset{a.s.}{\longrightarrow}}0,~~~\text{as }t, N\to\infty,
\end{aligned}
\end{equation}
where (a) holds since $\eta_a(\cdot;\gamma_a)$ is Lipschitz continuous with constant $(\gamma_a+1)^{-1}$, (b) follows from \eqref{AMP_relation}, and (c) applies  Lemma \ref{difference}.
 Combining the above, we get
  \begin{equation}
\lim_{t\to\infty}\lim_{K\to\infty}\left<\tilde{\bb}_{t+1},{\bb}_{t}\right>=\frac{1}{\delta}\mathbb{E}\left[X_a\,q_{\epsilon}(X_a)\right].
  \end{equation}
 This, together with \eqref{bbT1}, gives
 \begin{equation}
 \frac{1}{\delta}\mathbb{E}[X_a^2]^{\frac{1}{2}}\mathbb{E}[(q_{\epsilon}(X_a))^2]^{\frac{1}{2}}\lim_{t\to\infty}\mathbb{E}[\tilde{Z}_{t+1}Z_t]=\frac{1}{\delta}\mathbb{E}[X_aq_{\epsilon}(X_a)].
 \end{equation}
It follows that 
 \begin{equation}
({\sigma}_{t}Z_{t}, \tilde{\sigma}_{t+1}\tilde{Z}_{t+1},S)\xrightarrow[t\to\infty]{d}({Z}_{1}, {Z}_2,S),
\end{equation}  where  $({Z}_1,{Z}_2)$ are jointly Gaussian distributed with zero-mean and covariance matrix
\begin{equation}
\mathbf{C}_{{Z}_1{Z}_2}=\left(\begin{matrix}\frac{1}{\delta}\mathbb{E}\left[X_a^2\right]&\frac{1}{\delta}\mathbb{E}\left[X_aq_{\epsilon}(X_a)\right]\\\frac{1}{\delta}\mathbb{E}\left[X_aq_{\epsilon}(X_a)\right]&\frac{1}{\delta}\mathbb{E}\left[(q_{\epsilon}(X_a))^2\right]\end{matrix}\right).
\end{equation}
Define 
 \begin{equation}
X_t:=\tilde{\alpha}_*\tilde{\sigma}_{t+1}Z_{t+1}-{\sigma}_{t}Z_{t}\sim\mathcal{N}(0,\tilde{\beta}_t)
\end{equation}
 and 
\begin{equation}
X:=\tilde{\alpha}_*{Z}_1-{Z}_2\sim\mathcal{N}(0,\tilde{\beta}_*),
\end{equation}
 where $\tilde{\beta}_t=\mathbb{E}[(\tilde{\alpha}_*\tilde{\sigma}_{t+1}\tilde{Z}_{t+1}-{\sigma}_{t}Z_{t})^2]$ and  $\tilde{\beta}_*$ is defined in Proposition \ref{pro:3}. 
 It follows from the above discussions that $\lim_{t\to\infty}\tilde{\beta}_t=\tilde{\beta}_*$, and thus $X_t\xrightarrow{d} X$. The remaining task is to prove that 
 \begin{equation}\label{pro3:3}
 \lim_{t\to\infty}\mathbb{E}[\psi(\tilde{\alpha}_*S+X_t,S)]=\mathbb{E}[\psi(\tilde{\alpha}_*S+X,S)],
 \end{equation} which, together with \eqref{pro3:1} and \eqref{pro3:2}, gives the desired result. 
 
 We next prove \eqref{pro3:3}.  Since $X_t$ and $X$ are independent of $S$, 
 it suffices to prove $\lim_{t\to\infty}\mathbb{E}[\psi(\tilde{\alpha}_*s+X_t,s)]=\mathbb{E}[\psi(\tilde{\alpha}_*s+X,s)]$ for $s=1$ and $s=-1$. Without loss of generality, we prove for the case $s=1$. 
Let $f(x):=\psi(\tilde{\alpha}_*+x;1)$. 
Our goal is to show that 
\begin{equation}\label{XTtoX}
\lim_{t\to\infty}\mathbb{E}[f(X_t)]=\mathbb{E}[f(X)].
\end{equation} 
Given $M>0$, define $f_M(x)=f(x)\mathbf{1}_{[-M,M]}(x)$. Since $\psi(x,s)$ is pseudo-Lipschitz, $f(x)$ is pseudo-Lipschitz, and hence $f_M(x)$ is continuous and bounded. Therefore, it follows from $X_t\xrightarrow{d}X$ that 
\begin{equation}
\lim_{t\to\infty}\mathbb{E}[f_M(X_t)]=\mathbb{E}[f_M(X)].
\end{equation}We next investigate $\mathbb{E}[|f(X_t)-f_M(X_t)|]$ and $\mathbb{E}[|f(X)-f_M(X)|]$.  Since $f(x)$ is pseudo-Lipschitz, there exists a constant $C$ such that 
\begin{equation}
f(x)\leq C(1+x^2).
\end{equation}
Hence, 
\begin{equation}
\begin{aligned}
\mathbb{E}[|f(X_t)-f_M(X_t)|]&=\mathbb{E}[|f(X_t)|\mathbf{1}_{|X_t|\geq M}]\\
&\leq C\,\mathbb{P}(|X_t|\geq M)+C\,\mathbb{E}[X_t^2\mathbf{1}_{|X_t|\geq M}]\\
&\leq C\frac{\mathbb{E}[|X_t|]}{M}+C\frac{\mathbb{E}[|X_t|^3]}{M},
\end{aligned}
\end{equation}
where the last inequality uses the Markov inequality and the fact that $X_t^2\mathbf{1}_{|X_t|\geq M}\leq\frac{|X_t|^3}{M}$. A similar result can be established for $\mathbb{E}[|f(X)-f_M(X)|]$. 
Clearly, $\mathbb{E}[|X_t|]\rightarrow\mathbb{E}[|X|]$ and $\mathbb{E}[|X_t|^3]\rightarrow\mathbb{E}[|X|^3]$ as $\tilde{\beta}_t\to\tilde{\beta}_*$.
By further noting the following inequality
\begin{equation}
\begin{aligned}
|\mathbb{E}[f(X_t)]-\mathbb{E}[f(X)]|&\leq \mathbb{E}[|f(X)-f_M(X)|] + \left|\mathbb{E}[f_M(X_t)]-\mathbb{E}[f_M(X)]\right| \\
&~~~+ \mathbb{E}[|f(X_t)-f_M(X_t)|],
\end{aligned}
\end{equation}
we get 
\begin{equation}
\lim_{t\to\infty}|\mathbb{E}[f(X_t)]-\mathbb{E}[f(X)]|\leq2C\frac{\mathbb{E}[|X|]+\mathbb{E}[|X|^3]}{M}.
\end{equation}
Since the above result holds for any $M>0$, our desired result in \eqref{XTtoX} can be obtained by letting $M\to\infty$, which completes the proof.  
  
  \subsection{Proof of Proposition \ref{pro:convergexa}}\label{app:convergexa}
Following a similar procedure to the proof of Proposition \ref{the1} in Appendix \ref{proof:1B}, we have
\begin{equation}\label{hx-hx_T_2}
\begin{aligned}
&\left|\frac{1}{K}\sum_{k=1}^K\psi({\h_k^Tq(\x_a)},s_k)-\frac{1}{K}\sum_{k=1}^K\psi(\h_k^Tq_{\epsilon}({\x}_{t+1}),s_k)\right|\\
&\leq \left|\frac{1}{K}\sum_{k=1}^K\psi({\h_k^Tq(\x_a)},s_k)-\frac{1}{K}\sum_{k=1}^K\psi(\h_k^Tq_{\epsilon}({\x}_a),s_k)\right|\\
&~~~+\left|\frac{1}{K}\sum_{k=1}^K\psi({\h_k^Tq_\epsilon(\x_a)},s_k)-\frac{1}{K}\sum_{k=1}^K\psi(\h_k^Tq_{\epsilon}({\x}_{t+1}),s_k)\right|\\
&\leq C\left(\frac{1}{\sqrt{K}}\left\|q(\x_a)-q_{\epsilon}(\x_a)\right\|+\frac{1}{\sqrt{K}\epsilon}\|\x_a-{\x}_{t+1}\|\right),~~a.s.
\end{aligned}
\end{equation}
where $C$ is a constant independent of $K$. 
Applying Proposition \ref{prob:converge} and Lemma \ref{lem:difference}, we get  
\begin{equation}
\begin{aligned}
\lim_{\epsilon\to 0}\lim_{t\to\infty}\lim_{K\to\infty}\bigg|&\frac{1}{K}\sum_{k=1}^K\psi({\h_k^Tq(\x_a)},s_k)
-\frac{1}{K}\sum_{k=1}^K\psi(\h_k^Tq_{\epsilon}({\x}_{t+1}),s_k)\bigg|\overset{a.s.}{=}0,\end{aligned}
\end{equation}
which implies 
\begin{equation}
\begin{aligned}
&\lim_{K\to\infty}\frac{1}{K}\sum_{k=1}^K\psi\left({\h_k^Tq(\x_a)},s_k\right)=\lim_{\epsilon\to 0}\lim_{t\to\infty}\lim_{K\to\infty}\frac{1}{K}\sum_{k=1}^K\psi\left(\h_k^Tq_{\epsilon}({\x}_{t+1}),s_k\right).
\end{aligned}
\end{equation}
Based on  Proposition \ref{pro:3}, the remaining task is to prove
\begin{equation}\label{converge:epsilonto0}
\lim_{\epsilon\to 0}\mathbb{E}[\psi(\tilde{\alpha}_* S+\tilde{\beta}_* Z, S)]=\mathbb{E}[\psi(\bar{\alpha}_* S+\bar{\beta}_* Z, S)].
\end{equation}  For this purpose, it suffices to prove 
\begin{equation}\label{converge:alphabeta}
\lim_{\epsilon\to0}\tilde{\alpha}_*=\bar{\alpha}_* ,~~\lim_{\epsilon\to0}\tilde{\beta}_*=\bar{\beta}_*.
\end{equation}
Then following the same proof as that of  \eqref{XTtoX}, we get \eqref{converge:epsilonto0}. 

Next we prove \eqref{converge:alphabeta}. By Stein's Lemma, $\tilde{\alpha}_*$ can be expressed as 
\begin{equation}
\tilde{\alpha}_*=\frac{1}{\delta\tau_a}\mathbb{E}[Z q_\epsilon(\eta_a(\tau_aZ;\gamma_a))].
\end{equation}
It follows that 
\begin{equation}
\begin{aligned}
|\tilde{\alpha}_*-\bar{\alpha}_*|&\leq \frac{1}{\delta\tau_a}\mathbb{E}\left[|Z||q_{\epsilon}(\eta_a(\tau_aZ;\gamma_a))-q(\eta_a(\tau_aZ;\gamma_a))|\right]\\
&\overset{(a)}{\leq}\frac{1}{\delta\tau_a}\mathbb{E}[Z^2]^{\frac{1}{2}}\mathbb{E}[(q_\epsilon(X_a)-q(X_a))^2]^{\frac{1}{2}}\\
&\overset{(b)}{\rightarrow}0~~(\text{as }\epsilon\to 0),
\end{aligned}
\end{equation}
where (a) uses the Cauchy-Schwarz inequality and the definition of $X_a$ in \eqref{def:Xa}, and (b) applies Lemma \ref{lem:difference}. This proves \eqref{converge:alphabeta} for $\tilde{\alpha}_*$. Similarly, we have 
 \begin{equation}
 \begin{aligned}
|\mathbb{E}[X_a q_\epsilon(X_a)]-\mathbb{E}[X_aq(X_a)]|
&\leq \mathbb{E}[|X_a| |q_\epsilon(X_a)-q(X_a)|]\\
&\leq\mathbb{E}[X_a^2]^{\frac{1}{2}}\mathbb{E}[(q_\epsilon(X_a)-q(X_a))^2]^{\frac{1}{2}}\rightarrow 0
\end{aligned}
\end{equation}
 and 
\begin{equation}
 \begin{aligned}
 |\mathbb{E}[(q_\epsilon(X_a))^2]-\mathbb{E}[(q(X_a))^2]|
&\leq \mathbb{E}\left[|q_\epsilon(X_a)+q(X_a)||q_\epsilon(X_a)-q(X_a)|\right]\\
&\leq \mathbb{E}[(q_\epsilon(X_a)+q(X_a))^2]^{\frac{1}{2}}\mathbb{E}[(q_\epsilon(X_a)-q(X_a))^2]^{\frac{1}{2}}\\
 &\rightarrow 0.
 \end{aligned}
 \end{equation}
 Hence, $\lim_{\epsilon\to 0}\mathbb{E}[X_a q_\epsilon(X_a)]=\mathbb{E}[X_aq(X_a)]$ and $\lim_{\epsilon\to 0}\mathbb{E}[(q_\epsilon(X_a))^2]=\mathbb{E}[(q(X_a))^2]$, which further gives $\lim_{\epsilon\to 0}\tilde{\beta}_*=\bar{\beta}_*$ and completes the proof.
\bibliographystyle{IEEEtran}
\bibliography{IEEEabrv,reference_dce}

\end{document}